\numberwithin{equation}{section}
\pgfplotsset{compat=1.16}
\newcommand{\im}{\mathrm{i}}
\newcommand{\R}{\mathbb{R}}
\newcommand{\bC}{\mathbb{C}}
\newcommand{\defeq}{\coloneqq}
\newcommand{\tens}{\otimes}
\DeclareMathOperator{\id}{id}
\newcommand{\one}{\mathbf{1}}
\newcommand{\toi}{\hookrightarrow}
\newcommand{\xd}{\mathrm{d}}
\newcommand{\xD}{\mathcal{D}}
\newcommand{\cH}{\mathcal{H}}
\newcommand{\cS}{\mathbf{S}}
\newcommand{\cA}{\mathbf{A}}
\newcommand{\po}{\mathsf{P}}
\DeclareMathOperator{\tr}{\mathrm{tr}}
\newcommand{\pol}{{\rm P}}
\newcommand{\cpol}{{\overline{\rm P}}}
\newcommand{\ipol}{{\rm int}}
\newcommand{\epol}{{\rm ext}}
\newcommand{\rL}{\mathrm{L}}
\newcommand{\no}[1]{{:}#1{:}}
\newcommand{\cp}{\bullet}
\newcommand{\qp}{\star}
\newcommand{\mpol}{\text{m}}
\newcommand{\tord}{\mathbf{T}}
\newcommand{\rprt}{\mathrm{R}}
\newcommand{\iprt}{\mathrm{I}}
\newcommand{\sst}{*}
\newcommand{\qsoa}{\mathcal{A}}
\newcommand{\qsoi}{\mathcal{I}}
\newcommand{\pols}{\mathcal{P}}
\newcommand{\ds}{\circ}
\newcommand{\corr}{\mathcal{C}}
\newcommand{\qcomp}{\diamond}
\newcommand{\coh}{\mathsf{K}}
\newcommand{\ncoh}{\mathsf{k}}
\newcommand{\dop}{\mathsf{D}}
\theoremstyle{definition}
\newtheorem{dfn}{Definition}[section]}
\newtheorem{lem}[dfn]{Lemma}
\newtheorem{prop}[dfn]{Proposition}
\newtheorem{thm}[dfn]{Theorem}
\newtheorem{cor}[dfn]{Corollary}
\begin{document}
\allowdisplaybreaks

\newcommand{\arXivNumber}{2009.12342}

\renewcommand{\PaperNumber}{073}

\FirstPageHeading

\ShortArticleName{Locality and General Vacua in Quantum Field Theory}

\ArticleName{Locality and General Vacua in Quantum Field Theory}

\Author{Daniele COLOSI~$^{\rm a}$ and Robert OECKL~$^{\rm b}$}

\AuthorNameForHeading{D.~Colosi and R.~Oeckl}

\Address{$^{\rm a)}$~Escuela Nacional de Estudios Superiores, Unidad Morelia,\\
\hphantom{$^{\rm a)}$}~Universidad Nacional Aut\'onoma de M\'exico, C.P.~58190, Morelia, Michoac\'an, Mexico}
\EmailD{\href{mailto:dcolosi@enesmorelia.unam.mx}{dcolosi@enesmorelia.unam.mx}} 

\Address{$^{\rm b)}$~Centro de Ciencias Matem\'aticas, Universidad Nacional Aut\'onoma de M\'exico,\\
\hphantom{$^{\rm b)}$}~C.P.~58190, Morelia, Michoac\'an, Mexico}
\EmailD{\href{mailto:robert@matmor.unam.mx}{robert@matmor.unam.mx}}

\ArticleDates{Received September 28, 2020, in final form July 13, 2021; Published online July 25, 2021}

\Abstract{We extend the framework of general boundary quantum field theory (GBQFT) to achieve a fully local description of realistic quantum field theories. This requires the quantization of non-K\"ahler polarizations which occur generically on timelike hypersurfaces in Lorentzian spacetimes as has been shown recently. We achieve this in two ways: On the one hand we replace Hilbert space states by observables localized on hypersurfaces, in the spirit of algebraic quantum field theory. On the other hand we apply the GNS construction to twisted star-structures to obtain Hilbert spaces, motivated by the notion of reflection positivity of the Euclidean approach to quantum field theory. As one consequence, the well-known representation of a vacuum state in terms of a sea of particle pairs in the Hilbert space of another vacuum admits a vast generalization to non-K\"ahler vacua, particularly relevant on timelike hypersurfaces.}

\Keywords{quantum field theory; general boundary formulation; quantization; LSZ reduction formula; symplectic geometry; Feynman path integral; reflection positivity}

\Classification{81P16; 81S10; 81S40; 81T20; 81T70}

{\small \tableofcontents}

\section{Introduction}

\subsection[The $S$-matrix]{The $\boldsymbol S$-matrix}
\label{sec:intro_smatrix}

A central construction on which much of the impressive predictive power of quantum field theory rests is the \emph{S-matrix}. From this, collision cross sections may be directly calculated. The $S$-matrix arises as an asymptotic transition amplitude. We briefly recall this in the following using a convenient language and notation. We refer the reader to standard textbooks such as~\cite{ItZu:qft}.

Consider a quantum process between an initial time $t_1$ and a final time $t_2$. Let $\cH$ be the Hilbert space of states of the system. Denote by $U_{[t_1,t_2]}\colon \cH\to\cH$ the unitary time-evolution operator.
The \emph{transition amplitude} between an initial state $\psi_1\in\cH$ and a final state $\psi_2\in\cH$ may be represented via the Feynman \emph{path integral} as\footnote{For convenience, we use a notation that suggests a scalar field. However, where not explicitly indicated otherwise, our considerations apply to any type of bosonic field.}
\begin{equation}
 \langle \psi_2, U_{[t_1,t_2]}\psi_1\rangle =
 \int_{K_{[t_1,t_2]}} \xD\phi\, \psi_1(\phi|_{t_1}) \overline{\psi_2(\phi|_{t_2})}
 \exp\big(\im S_{[t_1,t_2]}(\phi)\big).
 \label{eq:tamplpi}
\end{equation}
Here, the integral is over field configurations $\phi\in K_{[t_1,t_2]}$ in the spacetime region spanned by the time interval $[t_1,t_2]$. Field configurations restricted to the initial and final time are represented by $\phi|_{t_1}$ and $\phi|_{t_2}$ respectively. As factors in the integrand appear the Schrödinger \emph{wave functions} of the initial and the final state.\footnote{While ubiquitous in non-relativistic quantum mechanics, the Schrödinger representation is not commonly used in quantum field theory. But see for example~\cite{Jac:schroedinger}. For a rigorous definition in the linear case see~\cite{Oe:schroedhol}.} The \emph{action} in the same spacetime region is denoted by $S_{[t_1,t_2]}$. The predictive content of the transition amplitude lies in providing the \emph{probability} $P(\psi_2|\psi_1)$ for measuring the final state $\psi_2$ (as opposed to an orthogonal state), given the initial state $\psi_1$ was prepared. This is the modulus square of the transition amplitude,
\begin{equation}
 P(\psi_2|\psi_1)=\big|\big\langle \psi_2, U_{[t_1,t_2]}\psi_1\big\rangle\big|^2.
 \label{eq:probta}
\end{equation}
We can think of the transition amplitude as encoding the physics in the spacetime region \mbox{$[t_1,t_2]\times\R^3$}, as illustrated in Figure~\ref{fig:tampl}.

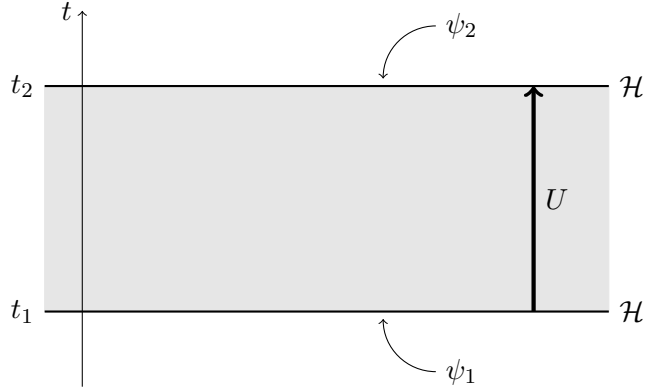
\begin{figure} \centering
 \begin{tikzpicture}
\filldraw[color=gray!20] (-0.5,1) rectangle (7,4);
\draw[->] (0,0) -- (0,5) node [left] {$t$};
\draw[thick] (-0.5,1) node [left] {$t_1$} -- (7,1) node [right] {$\cH$};
\draw[thick] (-0.5,4) node [left] {$t_2$} -- (7,4) node [right] {$\cH$};
\draw[ultra thick,->] (6,1) -- (6,4) node [midway, right] {$U$};
\draw (4.7,0.2) node [right] {$\psi_1$} edge[out=180,in=270,->] (4,0.9);
\draw (4.7,4.8) node [right] {$\psi_2$} edge[out=180,in=90,->] (4,4.1);
\end{tikzpicture}
 \caption{Spacetime picture of a transition amplitude.} \label{fig:tampl}
\end{figure}

If the field theory in question is \emph{free}, i.e., has a \emph{linear} phase space, the Hilbert space $\cH$ is easily constructed as a \emph{Fock space}. Also, the action $S$ is then quadratic and the path integral~(\ref{eq:tamplpi}) can be computed in a straightforward manner.
Physically realistic quantum field theories (such as those of the Standard Model) are non-linear, however. With few exceptions, they can at present only be handled \emph{perturbatively}. To this end, the action $S=S_0+S_{\rm int}$ is split into a free part~$S_0$ that is quadratic, and an \emph{interacting} part $S_{\rm int}$.
We then make the assumption that at very early and very late times in a collision experiment the particles behave as free particles and are in these regimes well described by the linear theory with action $S_0$. In the linear theory the Hilbert space $\cH$ of states is a Fock space, and we have a good understanding of how its states encode asymptotic particle configurations. The $S$-matrix is then a unitary operator from an initial copy of the Hilbert space to a final copy of this Hilbert space, encoding interactions that are idealized to only happen at intermediate times (by switching on $S_{\rm int}$).

In spite of their empirical success, the perturbative methods of quantum field theory have serious limitations. They work only in certain regimes, while in general the perturbation expan\-sion does not converge. For example, the understanding of the proton as a bound state of quarks and gluons is beyond their reach.\footnote{There are alternative methods that work better for bound state systems, such as lattice gauge theory. However, this approach is also approximate and has its own limitations.} At the same time these methods do only partially generalize from Minkowski space to general curved spacetime and have essentially nothing to say in a~regime where quantum properties of gravity would become important.\footnote{Significant perturbative inroads into quantum gravity can be made using effective field theory~\cite{DoHo:qgeffective}.}

\subsection{Topological quantum field theory (TQFT)}

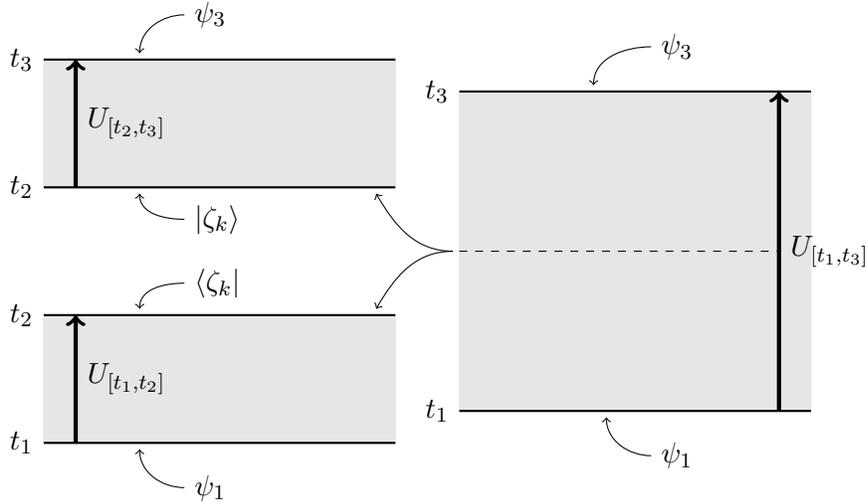
\begin{figure}
 \centering
 \begin{tikzpicture}[scale=0.85]
\filldraw[color=gray!20] (-0.5,0) rectangle (5,2);
\draw[thick] (-0.5,0) node [left] {$t_1$} -- (5,0);
\draw[thick] (-0.5,2) node [left] {$t_2$} -- (5,2);
\draw[ultra thick,->] (0,0) -- (0,2) node [midway, right] {$U_{[t_1,t_2]}$};
\draw (1.7,-0.7) node [right] {$\psi_1$} edge[out=180,in=270,->] (1,-0.1);
\draw (1.7,2.5) node [right] {$\langle \zeta_k |$} edge[out=180,in=90,->] (1,2.1);
\filldraw[color=gray!20] (-0.5,4) rectangle (5,6);
\draw[thick] (-0.5,4) node [left] {$t_2$} -- (5,4);
\draw[thick] (-0.5,6) node [left] {$t_3$} -- (5,6);
\draw[ultra thick,->] (0,4) -- (0,6) node [midway, right] {$U_{[t_2,t_3]}$};
\draw (1.7,3.5) node [right] {$| \zeta_k \rangle$} edge[out=180,in=270,->] (1,3.9);
\draw (1.7,6.7) node [right] {$\psi_3$} edge[out=180,in=90,->] (1,6.1);
\filldraw[color=gray!20] (6,0.5) rectangle (11.5,5.5);
\draw[thick] (6,0.5) node [left] {$t_1$} -- (11.5,0.5);
\draw[thick] (6,5.5) node [left] {$t_3$} -- (11.5,5.5);
\draw[dashed] (6,3) -- (11,3);
\draw[ultra thick,->] (11,0.5) -- (11,5.5) node [midway, right] {$U_{[t_1,t_3]}$};
\draw (9,-0.2) node [right] {$\psi_1$} edge[out=180,in=270,->] (8.3,0.4);
\draw (9,6.2) node [right] {$\psi_3$} edge[out=180,in=90,->] (8.1,5.6);
\draw (5.9,3) edge[out=180,in=300,->] (4.7,3.9);
\draw (5.9,3) edge[out=180,in=60,->] (4.7,2.1);
\end{tikzpicture}
 \caption{Temporal composition of transition amplitudes.}
 \label{fig:t-comp}
\end{figure}

One of the obstacles to making quantum field theory non-perturbatively well-defined lies in the notorious problem of making mathematical sense of the path integral~(\ref{eq:tamplpi}). However, examining the empirically successful methods of perturbative quantum field theory one realizes that what is used is not an actual measure on some measurable space of configurations, but rather certain properties that such a measure, if it existed, would induce in transition amplitudes and related objects. This suggests that instead of postulating the existence of such a measure we should directly axiomatize the relevant properties. One of the most fundamental properties of the path integral is its \emph{composition property}. In Feynman's original non-relativistic setting~\cite{Fey:stnrqm} this is simply the analogue of the evolution operator composition identity $U_{[t_1,t_3]}=U_{[t_2,t_3]}\circ U_{[t_1,t_2]}$ for~times $t_1<t_2<t_3$. In terms of transition amplitudes, let $\{\zeta_k\}_{k\in I}$ be an orthonormal basis of~the Hilbert space $\cH$, so we have
\begin{equation}
 \big\langle \psi_3, U_{[t_1,t_3]} \psi_1\big\rangle
 = \sum_{k\in I} \big\langle \psi_3, U_{[t_2,t_3]} \zeta_k\big\rangle
 \big\langle \zeta_k, U_{[t_1,t_2]} \psi_1\big\rangle.
 \label{eq:tcompax}
\end{equation}
This is illustrated in Figure~\ref{fig:t-comp}.
In terms of the path integral~(\ref{eq:tamplpi}) this means that the integrals on the right-hand side over the configuration spaces $K_{[t_1,t_2]}$ and $K_{[t_2,t_3]}$ are ``glued together'' to an integral over the joint configuration space $K_{[t_1,t_3]}$. While this \emph{temporal composition property} holds in the non-relativistic as in~the relativistic setting, the equal footing of space and time in the latter suggests a vast generalization.

At the end of the 1980s work of Edward Witten on understanding aspects of geometry and topology through quantum field theoretic methods (the path integral in particular) and vice versa was picked up by mathematicians, notably Graeme Segal and Michael Atiyah. The latter extracted from this an axiomatic system known as \emph{topological quantum field theory $($TQFT$)$} \mbox{\cite{Ati:tqft,Seg:cftproc}}. This realizes precisely an axiomatic implementation of the composition property of the path integral, without mention of any actual path integral or measure.
Concretely, fix a~dimen\-sion $n$ of ``spacetime''. The basic objects encoding spacetime are (usually just topological) cobordisms of dimension $n$ and their boundary components. A \emph{cobordism} is a manifold~$M$ of~dimension $n$ with a boundary $\partial M$ presented as the disjoint union of an ``incoming'' and an~``outgoing'' closed manifold of dimension $n-1$, $\partial M=\partial M_{\rm in} \sqcup \partial M_{\rm out}$. These cobordisms are the analogues of the time-interval regions considered previously in the context of transition amplitudes. Now to each closed manifold $\Sigma$ of dimension $n-1$ we associate a complex Hilbert (or perhaps just vector) space $\cH_{\Sigma}$ of ``states''. In the special case that $\Sigma$ is the empty set the associated space is the one-dimensional (Hilbert) space $\bC$. We should think of $\Sigma$ as analogous to an equal-time hypersurface with associated copy of the Hilbert space $\cH$. When a closed $n-1$ manifold $\Sigma$ decomposes into a disjoint union $\Sigma=\Sigma_1 \sqcup \Sigma_2$, the associated state space decomposes into a corresponding tensor product, $\cH_{\Sigma}=\cH_{\Sigma_1}\tens\cH_{\Sigma_2}$.
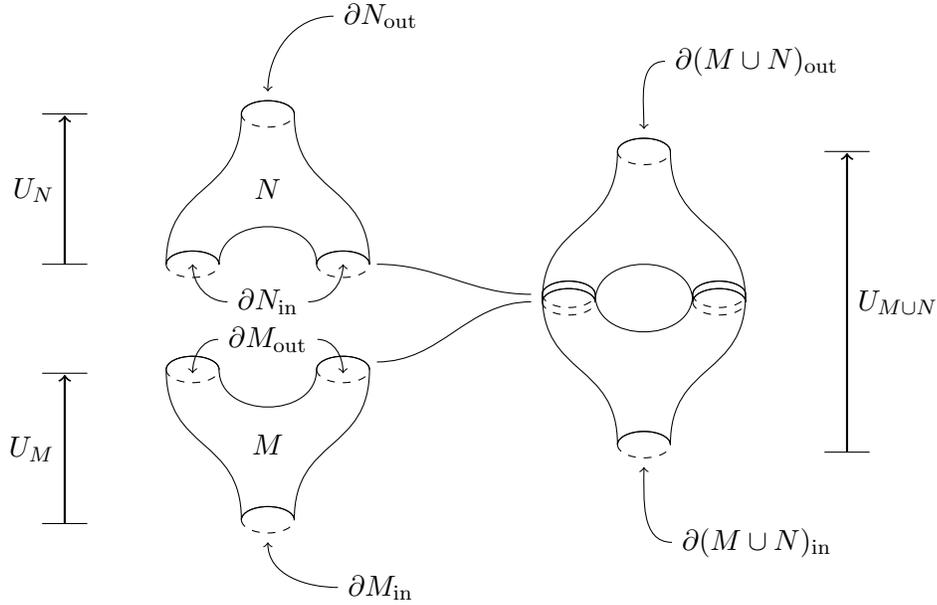
\begin{figure}
 \centering
 \begin{tikzpicture}[
  tqft/.cd,
  cobordism/.style={draw},
  every upper boundary component/.style={draw},
  every lower boundary component/.style={draw,dashed},
]
\pic[tqft/pair of pants,name=a];
\pic[tqft/pair of pants,name=c, at={(5,-0.5)}];
\pic[tqft/reverse pair of pants,upper boundary component/.style={draw,dashed},at ={(4,-2.4)} ] ;
\pic[tqft/reverse pair of pants,at ={(-1,-3.4) } ] ;
\node at (0,-1)  {$N$};
\node at (0,-4.4)  {$M$};
\node at (1.5,1.3) (1) {$\partial N_{\text{out}}$};
\node at (0,-2.5) (2) {$\partial N_{\text{in}}$};
\node at (1.5,-6.3) (3) {$\partial M_{\text{in}}$};
\node at (0,-3) (4) {$\partial M_{\text{out}}$};
\draw [->] (1.west) to [out=180,in=90] (0,0.3);
\draw [->] (2.west) to [out=180,in=270] (-1,-2);
\draw [->] (2.east) to [out=0,in=270] (1,-2);
\draw [->] (3.west) to [out=180,in=270] (0,-5.7);
\draw [->] (4.west) to [out=180,in=90] (-1,-3.45);
\draw [->] (4.east) to [out=0,in=90] (1,-3.45);
\draw[-] (-3,-3.45) -- (-2.4,-3.45);
\draw[-] (-3,-5.45) -- (-2.4,-5.45);
\draw[->,thick] (-2.7,-5.45) -- node [midway,left] {$U_M$} (-2.7,-3.47);
\draw[-] (-3,0) -- (-2.4,0);
\draw[-] (-3,-2) -- (-2.4,-2);
\draw[->,thick] (-2.7,-2) -- node [midway,left] {$U_N$} (-2.7,-0.02);
\node at (6.5,0.7) (4) {$\partial (M \cup N)_{\text{out}}$};
\node at (6.5,-5.7) (5) {$\partial (M \cup N)_{\text{in}}$};
\draw [->] (4.west) to [out=180,in=90] (5,-0.2);
\draw [->] (5.west) to [out=180,in=270] (5,-4.7);
\draw[-] (8,-0.5) -- (7.4,-0.5);
\draw[-] (8,-4.5) -- (7.4,-4.5);
\draw[->,thick] (7.7,-4.5) -- node [midway,right] {$U_{M \cup N}$} (7.7,-0.52);
\draw[-] (1.45,-2) to [out=0,in=180] (3.5,-2.4);
\draw[-] (1.45,-3.3) to [out=0,in=180] (3.5,-2.5);
\end{tikzpicture}
 \caption{Composition of cobordisms and associated morphisms in TQFT.}
 \label{fig:TQFT-comp}
\end{figure}
This is the analogue of the usual rule for combining independent systems in quantum theory via the tensor product. To each cobordism~$M$ with boundary $\partial M=\partial M_{\rm in} \sqcup \partial M_{\rm out}$ we associate a linear map (that is required to be unitary in the Hilbert space setting) $U_M\colon \cH_{\partial M_{\rm in}}\to\cH_{\partial M_{\rm out}}$. Finally, suppose two cobordisms $M$, $N$ can be concatenated, i.e., \emph{glued} to a single cobordism $M\cup N$ by identifying~$\partial M_{\rm out}$ with~$\partial N_{\rm in}$, see Figure~\ref{fig:TQFT-comp}. Then, the associated linear maps compose as $U_{M\cup N}=U_N\circ U_M$. This is the \emph{composition property} motivated from the path integral.\footnote{An attentive reader might complain that this composition axiom is also analogous to the simple temporal composition of evolution of the standard operator picture and does not necessarily require inspiration from the path integral. Indeed, it is only with the generalization to be discussed below that the path integral picture becomes compelling.}

Topological quantum field theory and related developments have been extremely fruitful for mathematics, leading to a revolution of algebraic topology and low dimensional topology, while also involving the areas of knot theory, operator algebras, monoidal category theory and quantum groups to name a few, see, e.g.,~\cite{Tur:qinv}.
However, the theories typically described by TQFTs involve ``spacetimes'' that are topological manifolds with non-trivial topology while admitting only finitely many degrees of freedom. For realistic quantum field theories we need spacetime to~carry a Lorentzian metric, and we crucially need the ability to deal with infinitely many degrees of freedom. On the other hand we are not much interested in non-trivial spacetime topologies, except to a very limited extend in black hole physics and in cosmology. The possibility of TQFT to work with spacetimes that do not carry a metric turns into an attractive feature, however, once we are interested in quantum gravity rather than quantum field theory. This has led to~various suggestions that ``quantum gravity should be a TQFT''~\cite{Bar:qgqft,Cra:tqftqg,Smo:tqftnpgrav}.

\subsection{General boundary quantum field theory (GBQFT)}
\label{sec:intro_gbqft}

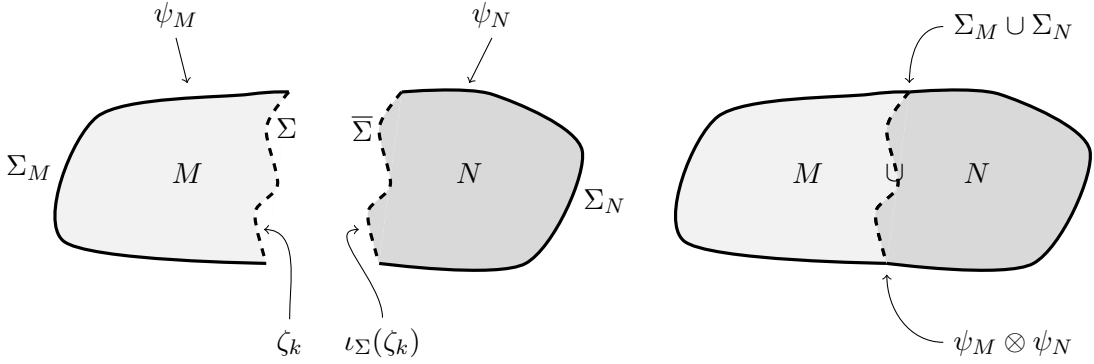
\begin{figure}
 \centering
 \begin{tikzpicture}[scale=1.5]
\begin{scope}
\draw [very thick,xshift=0cm,fill=gray!10] plot [smooth ] coordinates {(2,0)(0.2,0.2) (0.5,1.3) (1.85,1.5) (2.2,1.52)};
\draw [dashed,very thick,xshift=0cm,fill=white] plot [smooth ] coordinates {(2.2,1.52) (2,1.2)(2.1,0.7)(1.9,0.45)(2,0)};
\draw [very thick,xshift=1cm,fill=gray!30] plot [smooth ] coordinates {(2,0)(3.3,0) (3.8,1)(3,1.5) (2.2,1.52)};
\draw [dashed,very thick,xshift=1cm,fill=gray!30] plot [smooth ] coordinates {(2.2,1.52) (2,1.2)(2.1,0.7)(1.9,0.45)(2,0)};
\node at (1.3,0.8) {$M$};
\node at (3.8,0.8) {$N$};
\node at (2.18,1.2) {$\Sigma$};
\node at (2.85,1.2) {$\overline{\Sigma}$};
\node at (-0.1,0.85) {$\Sigma_M$};
\node at (5,0.55) {$\Sigma_N$};
\draw[->] (1.2,2) node [above] {$\psi_M$} -- (1.3,1.6);
\draw[->] (4,2) node [above] {$\psi_N$} -- (3.8,1.6);
\draw (2,-0.7) node [right] {$\zeta_k$} edge[out=90,in=0,->] (1.98,0.3);
\draw (2.6,-0.7) node [right] {$\iota_{\Sigma}(\zeta_k)$} edge[out=90,in=180,->] (2.82,0.3);
\end{scope}
\begin{scope}[xshift=5.5cm]
\draw [very thick,xshift=0cm,fill=gray!10] plot [smooth ] coordinates {(2,0)(0.2,0.2) (0.5,1.3) (1.85,1.5) (2.2,1.52)};
\draw [dashed,very thick,xshift=0cm,fill=white] plot [smooth ] coordinates {(2.2,1.52) (2,1.2)(2.1,0.7)(1.9,0.45)(2,0)};
\draw [very thick,xshift=0cm,fill=gray!30] plot [smooth ] coordinates {(2,0)(3.3,0) (3.8,1)(3,1.5) (2.2,1.52)};
\draw [dashed,very thick,xshift=0cm,fill=gray!30] plot [smooth ] coordinates {(2.2,1.52) (2,1.2)(2.1,0.7)(1.9,0.45)(2,0)};
\node at (1.3,0.8) {$M$};
\node at (2.07,0.8) {$\cup$};
\node at (2.8,0.8) {$N$};
\draw (2.5,-0.7) node [right] {$\psi_M \otimes \psi_N$} edge[out=180,in=270,->] (2,-0.1);
\draw (2.5,2.1) node [right] {$\Sigma_M \cup \Sigma_N$} edge[out=180,in=90,->] (2.2,1.6);
\end{scope}
\end{tikzpicture}
 \caption{Composition of regions and associated amplitudes in GBQFT.}
 \label{fig:GBQFT-comp}
\end{figure}
Taking into account the properties of realistic quantum field theories, the success of textbook methods to extract physical predictions from them~\cite{Oe:reveng}, and motivations from quantum gra\-vity~\cite{Oe:catandclock} lead to \emph{general boundary quantum field theory $($GBQFT$)$}~\cite{Oe:gbqft} as a modern incarnation of~this axiomatic program. (For a more comprehensive perspective, including from the foundations of quantum theory, see~\cite{Oe:posfound}.)
In contrast to (non-extended\footnote{There are also \emph{extended} versions of TQFT that also implement stronger composition axioms~\cite{Wal:tqftnotes}.}) TQFT, a much stronger version of the composition property of the path integral is axiomatized. This reflects the physical principle of \emph{locality} as we shall explain. To implement this we drop the in-out structure of~TQFT. At~the same time we emphasize that all manifolds are \emph{oriented}. While this is usually also required in TQFT, in the interest of simplicity we have omitted to mention this previously. Thus, the oriented $n$-manifolds representing pieces of spacetime, called \emph{regions} in the following, are no longer presented as cobordisms. That is, the boundary $\partial M$ of a region $M$ is no longer equipped with a decomposition into an ``incoming'' and an ``outgoing'' part. As before, we associate to an oriented $n-1$ manifold $\Sigma$, now called a \emph{hypersurface}, a Hilbert space $\cH_{\Sigma}$ of~``states''. However, we allow $\Sigma$ to have a boundary, called \emph{corner}. To a region $M$ we associate a linear \emph{amplitude} map $\rho_{M}\colon \cH_{\partial M}\to\bC$.\footnote{Technically this is as in TQFT if we set $\partial M_{\rm in}=\partial M$ and $\partial M_{\rm out}=\varnothing$.} The \emph{composition axiom} takes the following form.
Let~$M$ and~$N$ be regions with boundaries decomposing as $\partial M=\Sigma_M\cup\Sigma$ and $\partial N=\overline{\Sigma}\cup\Sigma_N$. Here, $\overline{\Sigma}$ deno\-tes a copy of $\Sigma$ with opposite orientation.\footnote{Since the orientations of $\Sigma$ and $\overline{\Sigma}$ are induced from the orientations of $M$ and $N$ their opposite orientations ensure that the orientations of $M$ and $N$ match upon gluing.}
We glue $M$ and $N$ together along $\Sigma$, as illustrated in~Figure~\ref{fig:GBQFT-comp}. Then, the amplitude of the composite region satisfies
\begin{equation}
 \rho_{M\cup N} (\psi_M\tens \psi_N)=\sum_{k\in I} \rho_{M}(\psi_M\tens\zeta_k)
 \rho_{N}(\iota_{\Sigma}(\zeta_k)\tens\psi_N).
 \label{eq:stcompax}
\end{equation}
Here, $\{\zeta_k\}_{k\in I}$ denotes an orthonormal basis of $\cH_\Sigma$ and $\{\iota_{\Sigma}({\zeta_k})\}_{k\in I}$ denotes the dual basis of $\cH_{\overline{\Sigma}}$ (see Section~\ref{sec:stdstate} for notation).
\begin{figure}
 \centering
 \includegraphics[width=0.43\textwidth]{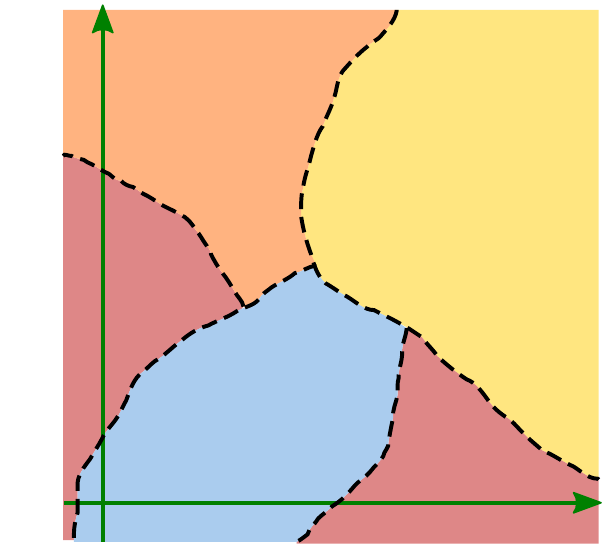}
\put(-190,157){\makebox(0,0)[lb]{time}}
\put(-40,5){\makebox(0,0)[lb]{space}}
 \caption{Illustration of spacetime locality. The physics in spacetime can be entirely reconstructed from the physics in regions, into which it may arbitrarily be decomposed.}
 \label{fig:st-decomp}
\end{figure}
The axiom~(\ref{eq:stcompax}) is the direct generalization of the axiom~(\ref{eq:tcompax}), justified in the same way from the (formal) properties of the path integral. Conceptually, as compared to the evolution picture or to the TQFT picture, we replace ``locality in time'' by the much stronger ``locality in spacetime'' (illustrated in Figure~\ref{fig:st-decomp}). The amplitude for a~region completely encodes the possible physics in that region and any potential interaction with physics in adjacent regions. Moreover, no interaction can take place that is not mediated through adjacency (direct or indirect). This may be seen as a quantum analog of the powerful principle of locality in classical field theory, where any interaction must be mediated by a field traveling through spacetime and thus connecting the interacting systems.

\looseness=1 Besides the restrictive in-out structure there is another crucial obstacle to making TQFT applicable to realistic QFT. That is the restriction to finitely many degrees of freedom. More precisely, the Hilbert (or vector) spaces in TQFT have to be finite-dimensional. To illustrate this, imagine a 2-dimensional TQFT. Associate a vector space $V$ to the circle $S^1$. Consider a~cylinder $C$ as a cobordism from $S^1$ to $S^1$. In a topological setting there is no further structure and the associated linear map $U_C\colon V\to V$ basically has to be the identity $U_C=\id_V$ (or a~projector, but then we may replace $V$ with a quotient) by self-composition. So if we glue the two ends of the cylinder together the associated map is just a complex number (as there is no boundary anymore) and this number is the trace of the identity, i.e., the dimension of~$V$, $\tr(U_C)=\tr(\id_V)=\dim V$. Now, if $V$ was infinite-dimensional this would make no sense, so we have to exclude this possibility.
The way to get around this in GBQFT is to simply exclude certain (large) classes of manifolds and of gluings. From a physical point of view this is no loss. As already emphasized we are not interested in regions or whole spacetimes with non-trivial topologies, except possibly in very special cases. In this way we gain the ability to work with infinite-dimensional Hilbert spaces at the expense of loosing some mathematically interesting (but unphysical) application of the formalism, such as constructing invariants of manifolds.

To model QFT we need manifolds equipped with a metric. In GBQFT manifolds are at least topological, but may carry any additional structure, depending on the theory to be modeled. (For example, for quantum gravity we might want merely topological or differentiable manifolds.) Of course the operations of gluing are required to be compatible with the additional structure. What is more, in general hypersurfaces are not really 1-codimensional manifolds, but rather germs of full-dimensional manifolds around 1-codimensional manifolds. Here, any additional structure on the manifolds also comes into play. Note that the possibility of the inclusion of additional structure is not at all special to GBQFT, but was already considered by Atiyah in his foundational article on TQFT~\cite{Ati:tqft}. However, the restriction to finitely many degrees of freedom in TQFT severely limits the role that this additional structure might play.

\subsection{Predictions in GBQFT}

It is not enough, for doing physics, to just propose some mathematical formalism. Only when the formalism is supplemented by prescriptions of how to extract predictions from it can it potentially serve to encode physics. In QFT the predictive power of the $S$-matrix rests on the simple probability rule~(\ref{eq:probta}) for the transition amplitude. TQFT on its own, being a purely mathematical framework, lacks any such prescription. In GBQFT on the other hand, the transition probability rule~(\ref{eq:probta}) is subject to a vast generalization. We recall here only the most basic version of this rule, originally proposed in~\cite{Oe:gbqft}. Thus, let $M$ be a spacetime region and~$\cH_{\partial M}$ the Hilbert space of states associated to its boundary. The type of prediction we consider here concerns measurements that can be performed in principle at or near the boundary of $M$. This involves the specification of two types of ingredients: On the one hand this is what we ``know'' or ``prepare''. On the other hand this is the ``question'' we want to ask. In the special case of a~transition amplitude such as the $S$-matrix, we usually consider the knowledge to be encoded in~an~initially prepared state ($\psi_1$ in~(\ref{eq:probta})), while the question is associated with the final state ($\psi_2$~in~(\ref{eq:probta})).

In general, the ``knowledge'' or ``preparation'' is encoded in terms of a closed subspace \mbox{$\cS\subseteq \cH_{\partial M}$}. We encode the ``question'' in another closed subspace $\cA\subseteq\cS\subseteq\cH_{\partial M}$. (The subspace relation $\cA\subseteq\cS$ expresses the fact that when asking a question we take into account what we already know.) Let $\po_{\cA}$, $\po_{\cS}$ be the corresponding orthogonal projection operators. Note that they are positive operators satisfying the inequalities $0\le \po_{\cA} \le \po_{\cS}$. The probability $P(\cA|\cS)$ for an affirmative answer is the quotient
\begin{equation}
 P(\cA|\cS)=\frac{\sum_{k\in I} |\rho_M(\po_{\cA}\zeta_k)|^2}{\sum_{k\in I} |\rho_M(\po_{\cS}\zeta_k)|^2}.
 \label{eq:probgbqft}
\end{equation}
Here $\{\zeta_k\}_{k\in I}$ is an orthonormal basis of $\cH_{\partial M}$.
For details, including how the probability rule~(\ref{eq:probta}) arises as a special case, we refer the reader to~\cite{Oe:gbqft}. A deeper understanding of this rule and its derivation from first principles can be found in~\cite{Oe:posfound}. The application of this probability rule in a particle scattering context, with particles coming in from and going out to spatial rather than temporal infinity was discussed for the first time in~\cite{Oe:kgtl}. Consider a ball of radius $R$ in Minkowski space, extended over all of time, see Figure~\ref{fig:hcscatter}. We call this a \emph{hypercylinder}~$M$. A~scattering process is described in terms of incoming and outgoing particles crossing the boundary $\partial M$. In the quantum theory, the Hilbert space of states $\cH_{\partial M}$ contains both incoming and outgoing particles. The rule~(\ref{eq:probgbqft}) then allows to predict for example what the probability for certain particles with certain quantum numbers is to go out given that certain other particles with certain other quantum numbers come in.

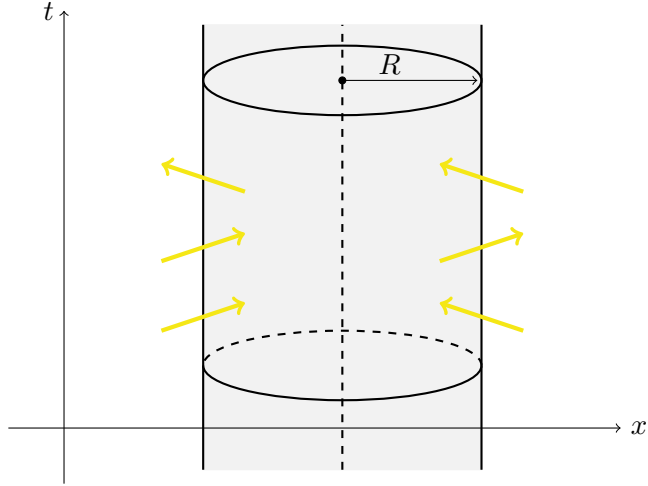
\begin{figure}
 \centering
 \begin{tikzpicture}[scale=1.85]
\filldraw[color=gray!10] (1,-0.3) rectangle (3,2.9);
\draw[->] (-0.4,0) -- (4,0) node [right] {$x$};
\draw[->] (0,-0.4) -- (0,3) node [left] {$t$};
\draw[thick] (1,-0.3) -- (1,2.9);
\draw[dashed,thick] (2,-0.3) -- (2,2.9);
\draw[thick] (3,-0.3) -- (3,2.9);
\draw[thick] (2,2.5) ellipse (1 cm and 0.25 cm);
\draw[fill=black] (2,2.5) circle (0.025 cm);
\draw[->] (2,2.5) -- (2.97,2.5);
\node at (2.34,2.62) {$R$};
\draw[thick,rotate=0] (3,0.45) arc (0:-180:1 and 0.25);
\draw[dashed,thick,rotate=0] (3,0.45) arc (0:180:1 and 0.25);
\draw[ultra thick,color=yellow!95!black,->] (0.7,0.7) -- (1.3,0.9);
\draw[ultra thick,color=yellow!95!black,->] (0.7,1.2) -- (1.3,1.4);
\draw[ultra thick,color=yellow!95!black,->] (1.3,1.7) -- (0.7,1.9);
\draw[ultra thick,color=yellow!95!black,->] (3.3,0.7) -- (2.7,0.9);
\draw[ultra thick,color=yellow!95!black,->] (3.3,1.7) -- (2.7,1.9);
\draw[ultra thick,color=yellow!95!black,->] (2.7,1.2) -- (3.3,1.4);
\end{tikzpicture}
 \caption{Spacetime illustration of scattering in a region given by a hypercylinder, i.e., a ball of ra\-dius~$R$, extended over all of time. Incoming and outgoing particles (in yellow) cross the boundary of~the hypercylinder, a sphere extended over all of time.}
 \label{fig:hcscatter}
\end{figure}

\subsection{GBQFT and quantization}
\label{sec:intro_gbqftquant}

While the axioms of GBQFT concern purely the quantum theory, the path integral~(\ref{eq:tamplpi}) is foremost meant as a \emph{quantization} prescription. That is, it is meant to be used to construct a~quantum theory starting from a classical field theory. The formal adaptation of the quantization formula~(\ref{eq:tamplpi}) from a time-interval to a general region $M$ in GBQFT is straightforward,
\begin{equation*}
 \rho_M(\psi)=\int_{K_M} \xD\phi\, \psi(\phi|_{\partial M}) \exp(\im S(\phi)).
\end{equation*}
Here, the integral is over field configurations $\phi\in K_M$ in the \emph{region} $M$, $S$ is the \emph{action} in $M$, $\psi(\phi|_{\partial M})$ is the Schrödinger \emph{wave function} of the state $\psi\in\cH_{\partial M}$ evaluated on the restriction of the field configuration $\phi$ to the \emph{boundary} of $M$. While a formula like this has heuristic value, it~does not constitute a rigorous quantization scheme. Such a rigorous scheme was developed for~linear bosonic field theory in~\cite{Oe:holomorphic} as we briefly lay out in the following.

To work in a manner independent of any particular choice of bosonic field theory or metric background, it is convenient to capture the relevant data of the classical theory in terms of an axiomatic system~\cite{Oe:holomorphic}, see Appendix~\ref{sec:caxioms} (for a slightly generalized version). The data includes spaces $L_M$ of \emph{solutions of the equations of motion} in spacetime \emph{regions} $M$ and spaces $L_{\Sigma}$ of~\emph{germs} of such solutions on \emph{hypersurfaces} $\Sigma$. The latter carry in addition a \emph{symplectic form} $\omega_{\Sigma}\colon L_{\Sigma}\times L_{\Sigma}\to\R$ arising from a second variation of the Lagrangian. A symplectic form is a non-degenerate anti-symmetric bilinear map. The spaces $L_M$ and $L_{\Sigma}$ are real vector spaces since we work with linear field theory. These space are not sensitive to the orientation of the underlying region or hypersurface. However, the symplectic structure is, and changes sign under orientation reversal, i.e., $\omega_{\overline{\Sigma}}=-\omega_{\Sigma}$.
For a spacetime region $M$, the restriction of solutions to germs on the boundary $\partial M$ gives rise to a map $L_M\to L_{\partial M}$. We denote the image of this map by $L_{\tilde{M}}\subseteq L_{\partial M}$. We also assume this map to be surjective as well as injective.\footnote{The injectivity condition can be relaxed at the cost of restricting allowed observables (see later sections). This does not affect boundary observables. We shall not further elaborate on this possibility.} Consequently, we frequently allow ourselves to not notationally distinguish between an element $\phi\in L_M$ and its image $\phi\in L_{\tilde{M}}$ under this map. A central property of (well-behaved) Lagrangian field theory is that $L_{\tilde{M}}$ is a~\emph{Lagrangian subspace} of $L_{\partial M}$. This means that $L_{\tilde{M}}$ is \emph{isotropic}, i.e.,
\begin{gather*}
 \omega_{\partial M}(\phi,\eta) =0,\qquad\forall \phi,\eta\in L_{\tilde{M}},
 \end{gather*}
as well as \emph{coisotropic},
\begin{gather*}
\omega_{\partial M}(\phi,\eta) =0,\qquad\forall\phi\in L_{\tilde{M}} \Rightarrow \eta\in L_{\tilde{M}}.
\end{gather*}

Quantization of field theory in curved spacetime involves the selection of a set of complex ``positive frequency'' modes~\cite{BiDa:qftcurved}. Viewed on a spacelike hypersurface $\Sigma$, the germs of these modes form a complex subspace $L_{\Sigma}^+$ of the space $L_{\Sigma}^{\bC}=L_{\Sigma}\oplus \im L_{\Sigma}$ of \emph{complexified} germs of solutions on~$\Sigma$. What is more, $L_{\Sigma}^+\subseteq L_{\Sigma}^{\bC}$ is a \emph{positive-definite Lagrangian subspace}. That is, in addition to being a Lagrangian subspace, the \emph{inner product}, given for $\phi,\eta\in L_{\Sigma}^{\bC}$ by
\begin{equation}
 (\phi,\eta)_{\Sigma}\defeq 4\im\omega_{\Sigma}\big(\overline{\phi},\eta\big),
 \label{eq:stdipc}
\end{equation}
is \emph{positive-definite} on $L_{\Sigma}^+$. This makes $L_{\Sigma}^+$ (possibly after completion) into a Hilbert space. The \emph{Fock space} $\cH_{\Sigma}$ over $L_{\Sigma}^+$ is then the Hilbert space of states on $\Sigma$ of the quantum field theory. The choice of the subspace $L_{\Sigma}^+\subseteq L_{\Sigma}^{\bC}$ has the physical interpretation of a choice of \emph{vacuum}.
In this work we will frequently use the term \emph{polarization} in order to refer to the choice of a~Lagrangian subspace. More specifically, we refer to a polarization corresponding to a positive-definite Lagrangian subspace as a \emph{K\"ahler polarization}. Note that a Lagrangian subspace of a~real symplectic vector space, upon complexification, leads to a~complex Lagrangian subspace of a~complex symplectic vector space. We refer to this as a \emph{real polarization}. The main example is that of solution spaces $L_{\tilde{M}}\subseteq L_{\partial M}$ in regions $M$ and their complexification, $L_{\tilde{M}}^\bC\subseteq L_{\partial M}^\bC$. The terminology of polarizations is inspired from geometric quantization, where it generalizes to non-linear theories~\cite{Woo:geomquant}.

In~\cite{Oe:holomorphic} a \emph{rigorous} and (essentially) \emph{functorial} quantization scheme was presented that outputs a GBQFT given a linear classical field theory as input. The input is in axiomatic form (see Appendix~\ref{sec:caxioms}) in the sense just laid out. In addition, the input includes a consistent choice of K\"ahler polarization for each hypersurface. As expected, when restricting to the context of time-intervals in globally hyperbolic spacetimes, the resulting quantization reproduces the well known textbook results.

\subsection{Perturbation theory and LSZ reduction in standard QFT}
\label{sec:intro_pertlsz}

While one might ultimately aspire at a non-perturbative construction of non-linear QFTs based on GBQFT, a more modest goal is to extend the powerful and empirically successful methods of standard QFT to GBQFT. We start by recalling how non-linear theories are handled by \emph{perturbation theory}, going back to the path integral setting of Section~\ref{sec:intro_smatrix}. Consider an \emph{observable} $F\colon K_{[t_1,t_2]}\to\bC$, i.e., a function on the field configuration space $K_{[t_1,t_2]}$. Inserting this into the path integral~(\ref{eq:tamplpi}) we define the \emph{(time-ordered) correlation function},
\begin{equation}
 \langle \psi_2 | \tord F |\psi_1\rangle =
 \int_{K_{[t_1,t_2]}} \xD\phi\, \psi_1(\phi|_{t_1}) \overline{\psi_2(\phi|_{t_2})}
 F(\phi)
 \exp\big(\im S_{[t_1,t_2]}(\phi)\big).
 \label{eq:taobspi}
\end{equation}
In contrast to the notation~(\ref{eq:tamplpi}), here the time-evolution is implicit, suggesting a Heisenberg picture. This notation is similar to textbook notation, where $\tord$ stands for ``time ordering''. (The ``ordering'' aspect makes more sense when $F$ is a product of field evaluations and the left-hand side is viewed as if it was a matrix element of a product of corresponding operators.) An~important special case arises if the observable is a product of the field evaluated at spacetime points $x_1,\ldots,x_n$,\footnote{This explains better the terminology ``correlation function''. Note also that in textbook QFT this terminology is often reserved for vacuum correlation functions, see below.}
\begin{equation}
 F(\phi)=\phi(x_1)\cdots \phi(x_n).
 \label{eq:npointobs}
\end{equation}
Another important case is a \emph{Weyl observable}. Thus, let $D\colon K_{[t_1,t_2]}\to\R$ be a \emph{linear} observable. Then $F(\phi)=\exp\left(\im\, D(\phi)\right)$ is the corresponding Weyl observable. In particular, we might let $D$ be defined by integration of the field in spacetime with a weight function $j\colon [t_1,t_2]\times\R^3\to\R$,
\begin{equation}
 D_j(\phi)=\int_{[t_1,t_2]\times\R^3} \xd^4x\, \phi(x) j(x).
 \label{eq:srcobs}
\end{equation}
The corresponding Weyl observable $F_j(\phi)=\exp\left(\im\, D_j(\phi)\right)$ is said to encode a \emph{source} determined by $j$.
This terminology originates in the free theory as follows. The equations of motions are then homogeneous partial differential equations of the form $\dop \phi=0$, where $\dop$ is the corresponding differential operator. Replacing the action $S$ with $S+D_j$ leads to modified equations of motions which are precisely the inhomogeneous equations $\dop\phi=j$.

To do perturbation theory, recall that the action $S=S_0+S_{\rm int}$ is split into a free part~$S_0$ and an~interacting part $S_{\rm int}$. We can now replace the action $S$ in the path integral~(\ref{eq:taobspi}) by the free action~$S_0$ and encode the interacting part $S_{\rm int}$ instead through the observable $F_{\rm int}(\phi)=\exp\left(\im S_{\rm int}(\phi)\right)$. The term $S_{\rm int}$ depends on one (or several) \emph{coupling constant$($s$)$}, in such a way that the observable $F_{\rm int}$ may be expanded as a power series in the coupling constant(s). The~terms of this expansion are expressible as polynomial observables which are accessible to explicit evaluation. It~is convenient, however, to proceed via Weyl observables encoding sources. We~recall how this works with the simple example of a self-interacting scalar field. Thus, we take the interaction term to be given by
\begin{equation}
 S_{\rm int}(\phi)=\lambda \int\xd^4 x\, V(\phi(x)),
 \label{eq:sintv}
\end{equation}
where $V$ plays the role of a potential.
Then, the transition amplitude of the interacting theory may be written as
\begin{equation*}
 \sum_{n=0}^\infty \frac{1}{n!}\lambda^n \bigg(\im \int\xd^4 x\, V\bigg({-}\im\frac{\delta}{\delta j(x)}\bigg)\bigg)^n \big\langle\psi_2|\tord F_j|\psi_1\big\rangle \bigg|_{j=0}.
\end{equation*}
The terms of this expansion correspond to \emph{Feynman diagrams} with $V$ determining the vertices. Crucially, the correlation function $\langle\psi_2|\tord F_j|\psi_1\rangle$ can be explicitly evaluated. What is more, it remains unchanged when we take the limit $t_1\to-\infty$ and $t_2\to\infty$, as long as we evolve the states~$\psi_1$,~$\psi_2$ as prescribed by the free theory. Recalling the previous discussion of the $S$-matrix, this provides the missing step of extending the interaction over all intermediate times. With this, the full spacetime integration indicated in~(\ref{eq:sintv}) can be performed.
To really make perturbative QFT work one also needs to implement \emph{renormalization}, but this is beyond the scope of the present work, where we are only concerned with the basic structures.

A very important technique for making the $S$-matrix more accessible, and which has had a~profound impact on the development of QFT is \emph{LSZ reduction}~\cite{LSZ:reduction}. What this achieves is a~reformulation of the $S$-matrix in terms of
correlation functions of the type~(\ref{eq:taobspi}), but with initial and final states taken to be the vacuum. Consequently, these are called \emph{$($time-ordered$)$ vacuum correlation functions}. Sometimes they are also called \emph{vacuum expectation values}, even though in general they do not correspond to expectation values of any measurement. We briefly recall the LSZ reduction formula for the case of a real scalar field in Minkowski space. The free theory here is the Klein--Gordon theory. With the differential operator $\dop\defeq\square+m^2$ the equations of~motion are $\dop\eta=0$. We deal with momentum eigenstates labeled by 3-momenta $p\in\R^3$ and satisfying a normalization condition of the form
\begin{equation*}
 \langle p,q\rangle = (2\pi)^3 2 E \delta^3(p-q).
\end{equation*}
We are interested in the $S$-matrix element corresponding to incoming particles with momenta $q_1,\ldots,q_n$ and outgoing particles with momenta $p_1,\ldots,p_m$. The reduction formula then takes the form, compare formula (5-28) in \cite[p.~207]{ItZu:qft} (except for the renormalization constants),
\begin{gather}
 \langle p_1,\ldots,p_m| q_1,\ldots,q_n \rangle= \text{disconnected terms}\nonumber
 \\ \qquad
{} + \im^{n+m}\int \xd^4 x_1\cdots\xd^4 x_n \xd^4 y_1\cdots\xd^4 y_m
 \exp\bigg(\im \sum_{l=1}^m p_l\cdot y_l - \im \sum_{k=1}^n q_k\cdot x_k \bigg)\nonumber
 \\ \qquad\hphantom{+}
{} \times \dop_{x_1}\cdots \dop_{x_n} \dop_{y_1}\cdots \dop_{y_m}
 \langle 0 | \tord \phi(x_1)\cdots\phi(x_n)\phi(y_1)\cdots\phi(y_m)|0\rangle.
 \label{eq:stdlsz}
\end{gather}
Here the notation $\dop_x$ refers to the operator $\dop$ as a differential operator with respect to the $x$ coordinate.
The ``disconnected terms'' encode the contributions where some of the particles do not participate in the scattering process.
The ubiquity of LSZ reduction in QFT is such that it has generally been accepted that the physical content of a QFT in Minkowski space is completely determined by its $n$-point functions, i.e., by its vacuum correlation functions of observables of the type~(\ref{eq:npointobs}). This has even led to an axiomatization of QFT based on $n$-point functions (although the non-time ordered variant) in the form of the famous Wightman axioms~\cite{StWi:pct}, forming the basis of the program of \emph{constructive quantum field theory}.

\subsection{GBQFT with observables}
\label{sec:intro_gbqftobs}

Including sources in GBQFT has allowed constructing amplitudes for regions that are not time-intervals (and thus beyond the means of standard QFT) also in interacting quantum theory, via~perturbation theory. In particular, it was shown that the perturbative $S$-matrix is equivalent to~an~analogous amplitude with asymptotic free states at spatial rather than temporal infinity in~Minkowski space~\cite{CoOe:spsmatrix,CoOe:smatrixgbf}. Concretely, rather than having initial and final states at early and late times (compare Figure~\ref{fig:tampl}) we have a state with incoming and outgoing particles at large spatial distance from the center (compare Figure~\ref{fig:hcscatter}). Instead of taking initial and final times to infinity, the radius determining the distance is taken to infinity. An attractive feature of this setting is that \emph{crossing symmetry} becomes manifest and can thus be seen as an inherent prediction, rather than a property emerging from additional assumptions as in standard QFT.

Subsequently, more general observables were included in GBQFT~\cite{Oe:obsgbf}, motivated by the obvious generalization of the correlation function~(\ref{eq:taobspi}),
\begin{equation}
 \rho_M^F(\psi) =\int_{K_M} \xD\phi\, \psi(\phi|_{\partial M}) F(\phi) \exp(\im S(\phi)).
 \label{eq:obsampl}
\end{equation}
As before, $M$ is a region, $F\colon K_M\to\bC$ is the observable. This was developed into an extended axiomatic system for GBQFT (see Appendix~\ref{sec:qobsaxioms}) and a corresponding rigorous and functorial quantization scheme with observables~\cite{Oe:feynobs}, see Section~\ref{sec:kquant}. With this we can deal in principle with perturbative interacting quantum field theory.

However, while mathematically consistent and convincing, this framework still suffers from serious shortcomings. As mentioned in~Section~\ref{sec:intro_gbqftquant}, the construction of the Hilbert spaces of states on each hypersurface relies on a choice of K\"ahler polarization representing the vacuum. The problem with this is that the standard vacuum of QFT in Minkowski space on a non-spacelike hypersurface generically corresponds to a polarization that is not K\"ahler~\cite{CoOe:vaclag}.\footnote{This problem is already manifest in previous works such as~\cite{Oe:timelike}, where it is addressed by excluding what are there called ``unphysical modes''.} With this, the standard quantization prescription to obtain a Hilbert space of states breaks down. This limits the description of interesting physics on timelike hypersurfaces. More concretely, consider again the example of the hypercylinder $M$ in Minkowski space, see Figure~\ref{fig:hcscatter}. Then, in massive Klein--Gordon theory the space of modes (germs of solutions) on the hypercylinder boundary $\partial M$ splits into two components, $L_{\partial M}=L_{\partial M}^{\rm p}\oplus L_{\partial M}^{\rm e}$. The first, $L_{\partial M}^{\rm p}$ consists of the \emph{propagating} oscillatory solutions. The second component, $L_{\partial M}^{\rm e}$ is formed by the \emph{evanescent} solutions that show an exponential behavior in the radial direction. The standard vacuum yields a K\"ahler polarization on the propagating modes in $L_{\partial M}^{\rm p}$ and there is no problem in constructing the corresponding Hilbert space. However, on the evanescent modes in $L_{\partial M}^{\rm e}$ the polarization is real~\cite{CoOe:vaclag}. In the mentioned work~\cite{CoOe:spsmatrix,CoOe:smatrixgbf} on the $S$-matrix the problem of not being able to construct a Hilbert space for the evanescent modes was noted, but did not affect the result, because the modes are absent asymptotically due to their exponential decay. However, one may very well be interested in situations were measurements take place at finite (or even small) distances. A~quantum theoretical description of evanescent modes then becomes a necessity. So far, this has been beyond the reach of the methods of quantum field theory.

In order to use the GBQFT framework for a truly local description of QFT we absolutely need to be able to decompose spacetime into regions that are ``small'' and certainly compact (recall Figure~\ref{fig:st-decomp}). But a compact region in Minkowski space has a boundary on which the polarization corresponding to the standard vacuum is generically not of K\"ahler type~\cite{CoOe:vaclag}. If we want to describe physical processes in such a region with realistic boundary conditions, we need to be able to deal with non-K\"ahler polarizations.

A related problem arises even for spacelike hypersurfaces when we want to decompose them. Say we want to cut an equal-time hypersurface $\Sigma$ into two pieces $\Sigma_1$ and $\Sigma_2$ along a coordinate axis. The space of germs of solutions and the symplectic structure nicely decompose into a~direct sum, $L_{\Sigma}=L_{\Sigma_1}\oplus L_{\Sigma_2}$ and $\omega=\omega_1+\omega_2$ (as before we assume for simplicity the absence of~gauge symmetries). However, the polarization $L_{\Sigma}^+\subseteq L_{\Sigma}^\bC$ corresponding to the standard vacuum does not. That is, there are no Lagrangian subspaces $L_{\Sigma_1}^+\subseteq L_{\Sigma_1}^\bC$ and $L_{\Sigma_2}^+\subseteq L_{\Sigma_2}^\bC$ such that $L_{\Sigma}^+=L_{\Sigma_1}^+\oplus L_{\Sigma_2}^+$. The reason is that $L_{\Sigma}^+$, being related to global properties of the solution space, is non-local on the hypersurface $\Sigma$, see also remarks at the end of Section~\ref{sec:stdstate}. In~a~dif\-ferent guise this takes the form of the Reeh--Schlieder theorem~\cite{ReSc:unitequiv}. In the fermionic case these problems can be partially solved by going to a mixed state formalism and at the same time selectively dropping the polarization information~\cite{Oe:locqft}. In the bosonic case one could use ``auxiliary'' K\"ahler polarizations at the price of a direct physical interpretation of the respective state spaces. We~return to this example in~Section~\ref{sec:splitmink}.

\subsection{The present work}

With the present work we address the problem of quantization for non-K\"ahler polarizations as well as the problem of dealing with non-decomposable vacua. We do so by learning a lesson from \emph{algebraic quantum field theory $($AQFT$)$}~\cite{Haa:lqp}, which is probably the furthest developed axiomatic approach to QFT to date, based on the axioms of Haag and Kastler~\cite{HaKa:aqft}. The lesson is that Hilbert spaces (of states) are always tied to a specific choice of vacuum, and if we want to work in a way independent of such choices we should consider observables rather than states as primary objects. LSZ reduction can also be seen as pointing to the feasibility of doing so. As~it turns out, we need observables localized on hypersurfaces to do that. These were introduced in~\cite{Oe:feynobs}, and we call them \emph{slice observables} here, see Section~\ref{sec:sliceobs}. In contrast to AQFT, ordinary observables do not form algebras in GBQFT, because the composition of observables cannot be separated from the composition of underlying spacetime regions. But slice observables do form algebras, because the underlying slice regions auto-compose. This turns out to provide a point of contact with AQFT.

We start in~Section~\ref{sec:cgenvac} with elementary considerations of the path integral and correlation functions, highlighting a simple but powerful formula for the path integral that underlies much of the subsequent work. We proceed to review and further develop the framework of GBQFT with observables and K\"ahler polarizations~\cite{Oe:feynobs} in~Section~\ref{sec:kquant}. The notion of slice observable is elaborated in~Section~\ref{sec:sliceobs}, first in a classical and then in a quantum setting. The role of slice observables in the K\"ahler quantization setting of Section~\ref{sec:kquant} is elucidated in~Section~\ref{sec:kobs}. Inparticular, we show how the Hilbert spaces of the K\"ahler quantization scheme are recovered via the application of the GNS construction to the algebra of slice observables, thus deepening the contact with AQFT.

We also establish a correspondence between slice observables and states on the boundary of spacetime regions. This correspondence emboldens us to do away completely with Hilbert space and focus instead on slice observables as primary objects. Consequently, we present in~Section~\ref{sec:genquant} a~quantization scheme analogous to that of Section~\ref{sec:kquant}, but based entirely on observables rather than states. Crucially, it is much more general in that it does away with the restriction of polarizations to be of K\"ahler type. We have emphasized the significance of this in the previous section.
Coherent states play a special role, here in the guise of Weyl slice observables. Composition can be accomplished in a surprising new way via a joint observable (Section~\ref{sec:compobs}), but also (as seen later) in the ``old'' way with a sum over a complete basis (Section~\ref{sec:corrak}). We also go considerably beyond~\cite{Oe:feynobs} in another direction, generalizing further tools of textbook QFT to our setting. Crucially this includes the LSZ reduction formula (Section~\ref{sec:genlsz}), but also ``normal ordering'' (Section~\ref{sec:nordsc}), Wick's theorem (Section~\ref{sec:genwick}), connected amplitudes (Section~\ref{sec:pstates}) etc.

\looseness=1
While our results show that working without Hilbert spaces is fine in many situations, sometimes it is useful to have a concrete Hilbert space of states.
In Section~\ref{sec:iprod} we present a new quantization scheme for constructing Hilbert spaces even for non-K\"ahler pola\-ri\-za\-tions. While for K\"ahler polarizations this just recovers the GNS construction of Section~\ref{sec:ssgns}, for other polarizations this is accomplished by introducing a real structure that manifests as a modified $*$-structure of the corresponding Weyl algebra. This construction is motivated in part from the \emph{reflection positivity} condition arising in Euclidean approach to constructive quantum field theory, as we explain in~Section~\ref{sec:refpos}.
Recall that a change of vacuum (as relevant in particular in curved spacetime QFT) gives rise to a (generally non-normalizable) state consisting of a~``sea'' of particle pairs representing one vacuum in the Hilbert space of the other vacuum~\cite{BiDa:qftcurved}. We show in~Sections~\ref{sec:wfvac} and \ref{sec:bogvac} that this phenomenon generalizes to the novel non-K\"ahler vacua.

With its focus on development of the conceptual and mathematical framework, applications are outside the scope of the present paper. Nevertheless, we make an exception with Section~\ref{sec:minkrind}, where we reexamine the question of the splitting of the Minkowski vacuum along partial hyper\-sur\-fa\-ces as well as the relation between Rindler and Minkowski vacuum in Rindler space. On~the one hand this serves to confirm that well-established standard results are recovered with our methods.~On the other hand a few new insights arise on the problems in question.

A key characteristic of our approach lies in the aim of reducing quantum (field) theory to its structural essence. Naturally, this results in a high degree of abstraction. On the one hand this means that making it work for any particular field theory requires additional effort in dealing with concrete partial differential equations, boundary value problems, topologies on solution spaces etc. On the other hand this means that the potential applicability of the framework is vast, including not only all kinds of scenarios involving curved spacetimes, but possibly even in~contexts with theories living on manifolds without metric backgrounds, as one would expect in~quantum gravity.

\section{Correlation functions in generalized vacua}
\label{sec:cgenvac}

\subsection{Path integral formula}
\label{sec:piwobs}

In this section we introduce a formula for the vacuum correlation functions of observables in GBQFT that is central for the subsequent considerations in this work. We first recall the standard case of QFT in Minkowski space. The correlation function~(\ref{eq:taobspi}) for the case of initial and final vacuum states is customarily written as
\begin{equation}
 \langle 0 | \tord F | 0\rangle =
 \int_{K} \xD\phi\, F(\phi) \exp\left(\im S(\phi)\right).
 \label{eq:stdvev}
\end{equation}
The integration is here formally over field configurations $\phi\in K$ in all of Minkowski space. This notation hides the fact that the path integral is evaluated with specific boundary conditions for the field configuration $\phi$ in the infinite past and future, encoding the vacuum. As mentioned in~Section~\ref{sec:intro_pertlsz}, the path integral can be explicitly evaluated if $F$ is a Weyl observable. Thus, let $D\colon K\to\R$ be a linear observable and set $F=\exp(\im D)$. Now consider $S+D$ as a modified action. Let $\eta$ be the solution of the equations of motion for the modified action satisfying the standard boundary conditions. Recall that this means that $\eta$ is a ``positive energy'' solution in the far future and a ``negative energy'' solution in the distant past. Then,
\begin{equation}
 \langle 0 | \tord F | 0\rangle =\exp\bigg(\frac{\im}{2}D(\eta)\bigg).
 \label{eq:stdpiwobs}
\end{equation}
In the special case that $D=D_j$ is determined by a source $j$ via~(\ref{eq:srcobs}) we have
\begin{equation}
 \langle 0 | \tord F | 0\rangle =\exp\bigg(\frac{\im}{2}\int\xd^4 x\, \eta(x) j(x)\bigg) =\exp\bigg(\frac{\im}{2}\int \xd^4 x\,\xd^4 y\, j(x) G_F(x,y) j(y) \bigg).
 \label{eq:stdwobsfeyn}
\end{equation}
Here, $G_F$ is the \emph{Feynman propagator}.

In GBQFT we consider the analogue of~(\ref{eq:stdvev}) for a general region $M$. We use a notation analogous to~(\ref{eq:obsampl}) to write the vacuum correlation function as
\begin{equation}
 \rho^F_M\big(W^\pol\big)=\int_{K_M} \xD\phi\, F(\phi) \exp(\im S(\phi)).
 \label{eq:amplgvac}
\end{equation}
While this notation deliberately suggests that $W^\pol$ is a state and that $\rho_M^F$ is a function on the corresponding state space we do neither define such a state space nor such a map for the moment.
Here, the boundary conditions are determined by a \emph{polarization} $\pol$, i.e., a \emph{Lagrangian subspace} $L_{\partial M}^\pol\subseteq L_{\partial M}^\bC$ of the complexified space of germs of solutions on the boundary $\partial M$~\cite{CoOe:vaclag}. We make the choice of polarization explicit with our notation. Thus, we take $W^\pol$ to mean the vacuum determined by the polarization $\pol$.
The standard QFT case is recovered by considering Minkowski space with the K\"ahler polarization of the standard vacuum (i.e., positive and negative energy solutions) at positive and negative temporal infinity. Here, in contrast, the polarization need not be K\"ahler and the term \emph{vacuum} is to be understood in the corresponding generalized sense~\cite{CoOe:vaclag}. As~a~technical condition, the polarization has to be \emph{transversal} to the real polarization $L_{\tilde{M}}^\bC\subseteq L_{\partial M}^\bC$ given by the complexified solutions on the boundary that come from interior solutions (recall Section~\ref{sec:intro_gbqftquant}). This means that the two subspaces satisfy $L_{\tilde{M}}^\bC\oplus L_{\partial M}^\pol=L_{\partial M}^\bC$. Transversality is guaranteed if $L_{\partial M}^\pol$ is a K\"ahler polarization \cite[Proposition~B.11]{CoOe:vaclag}.

Consider a real \emph{linear observable} $D$ in $M$, i.e., a linear map $D\colon K_M\to\R$. We may add such a linear observable to the action $S\colon K_M\to\R$ to obtain a modified action $S+D$. We denote the space of solutions in $M$ of the equations of motion for this modified action by $A_M^D$.
For later use we note that the evaluation of the observable $D$ on any $\phi\in L_M$ is given by a very simple formula. To this end let $\eta$ be any element of $A_M^D$. Then, \cite[equation~(52)]{Oe:feynobs}\footnote{Note that we adopt the sign conventions for the symplectic form relative to the action as in~\cite{Oe:feynobs} and not as in the more recent paper~\cite{CoOe:vaclag} which would invert the sign in this equation.}
\begin{equation}
 D(\phi)=2\omega_{\partial M}(\phi,\eta).
 \label{eq:obssol}
\end{equation}
$A_M^D$ is an affine space, admitting translations by elements of $L_M$. Typically, the solutions for $S$ are \emph{homogeneous} partial differential equations, while those for $S+D$ are associated \emph{inhomogeneous} ones.
Let $F=\exp(\im D)$ be the corresponding \emph{Weyl observable}.
Consider the complexification $A_M^{D,\bC}\defeq A_M^{D}\oplus\im L_M$ of the affine space $A_M^D$. The transversality condition ensures that there is exactly one complexified solution $\eta$ in the intersection $A_M^{D,\bC}\cap L_{\partial M}^\pol$.\footnote{To see this consider an arbitrary element $\xi\in A_M^{D,\bC}$. Using transversality on the boundary $\partial M$ decompose this uniquely into $\xi=\tau+\eta$ with $\tau\in L_{\tilde{M}}^{\bC}$ and $\eta\in L_{\partial M}^\pol$. Starting with a different element $\xi'=\tau'+\eta'$, it is easy to see that $\eta=\eta'$.} The path integral~(\ref{eq:amplgvac}) evaluates to~\cite{CoOe:vaclag}
\begin{equation}
 \rho_M^F\big(W^\pol\big)=\exp\bigg(\frac{\im}{2} D(\eta)\bigg),
 \label{eq:piwobs}
\end{equation}
directly generalizing formula~(\ref{eq:stdpiwobs}).
In fact, rather than further reasoning about the path integral, we shall treat formula~(\ref{eq:piwobs}) from here onwards as a \emph{definition} (of the path integral).

Note that while we have originally assumed the observable $D$ to be real in order to use it as a~source to modify the action, the resulting formula~(\ref{eq:piwobs}) extends perfectly well to complex~$D$. Of~course, in this case the space $A_M^D$ does not consist exclusively of real solutions. Nevertheless, we~will continue to use the notation $A_M^{D,\bC}$ for the ``complexification'' which can now be understood as $A_M^{D,\bC}= A_M^{D}+ L_M^\bC$. (Note that the sum is not direct.) Similarly, we recall that the (generalized) space of solutions $L_M$ in a region $M$ is not always naturally a real Lagrangian subspace of the space $L_{\partial M}$ of germs on the boundary, but may similarly contain complex solutions, in particular if $M$ is not compact~\cite{CoOe:vaclag}. We still write $L_{M}^\bC$, even if the space does not arise as the complexification of a real vector space.

Besides having complex values on real solutions, formula~(\ref{eq:piwobs}) also implies that we need to evaluate observables on complex solutions. In the case of a linear observable $D$ as above this is simply done by extending real to complex linearity. The obvious generalization to the non-linear case is achieved by demanding \emph{holomorphicity}. To be precise, we say that a complex function $f\colon V\to\bC$ on a complex vector space $V$ is \emph{holomorphic} iff for any two elements $a,b\in V$ the function $\bC\to\bC$ given by $z\mapsto f(a+z b)$ is everywhere holomorphic (i.e., entire). In particular, the Weyl observable induced by a complex linear observable is holomorphic. From here onwards it is understood that all observables are required to be holomorphic, if not explicitly stated otherwise.

While arguments made so far about solution spaces, observables and their relations were justified by appeal to differential analytic contexts involving partial differential equations in manifolds, such contexts are not actually necessary and may not even be desired for the results that are going to be discussed. Instead, key structures (such as solution spaces) are understood as objects in their own right and their relations, previously thought of as derived, are axiomatized. In particular, we shall take for granted the axioms for classical field theory (Appendix~\ref{sec:caxioms}) as well as those involving observables as well \cite[Section~4.6]{Oe:feynobs}. This will not necessarily be evident in our discourse which is aimed primarily at an intuitive understanding, but will be evident in relevant proofs when appeal is made to these axioms rather than to a differential analytic context.

\subsection{Vacuum correlation functions for general observables}
\label{sec:vcgenobs}

Crucially, evaluating the vacuum correlation function of an observable that is not a Weyl obser\-vable can also be reduced to formula~(\ref{eq:piwobs}), by using Weyl observables as \emph{generators}. Consider the case of an observable $D\colon K_M^\bC\to\bC$ that is a product of linear observables $D=D_1\cdots D_n$. Define the family of linear observables
\begin{equation*}
 D_{\lambda_1,\ldots,\lambda_n}=\lambda_1 D_1+\cdots +\lambda_n D_n,
\end{equation*}
parametrized by real numbers $\lambda_1,\ldots,\lambda_n$. We may then obtain $D$ from the family of Weyl observables
\begin{equation*}
 F_{\lambda_1,\ldots,\lambda_n}=\exp\left(\im D_{\lambda_1,\ldots,\lambda_n}\right),
\end{equation*}
by variation,
\begin{equation}
 D=(-\im)^n\frac{\partial}{\partial \lambda_1}\cdots\frac{\partial}{\partial \lambda_n} F_{\lambda_1,\ldots,\lambda_n} \bigg|_{\lambda_1,\ldots,\lambda_n=0}.
 \label{eq:polydweyl}
\end{equation}
Exploiting linearity of the amplitude we obtain the vacuum correlation function of $D$ from that of the family of Weyl observables,
\begin{equation}
 \rho_M^D\big(W^\pol\big) =(-\im)^n\frac{\partial}{\partial \lambda_1}\cdots\frac{\partial}{\partial \lambda_n} \rho_M^{F_{\lambda_1,\ldots,\lambda_n}}\big(W^\pol\big) \bigg|_{\lambda_1,\ldots,\lambda_n=0}.
 \label{eq:polydwva}
\end{equation}

We proceed to a more explicit evaluation of this correlation function depending on the degree of $D$. We start with the case of a quadratic observable $D= D_1 D_2$. With~(\ref{eq:polydwva}) we get in this case
\begin{align*}
 \rho_M^{D_1 D_2}\big(W^\pol\big)&=- \frac{\partial}{\partial \lambda_1}\frac{\partial}{\partial \lambda_2} \rho_M^{F_{\lambda_1,\ldots,\lambda_n}}\big(W^\pol\big) \bigg|_{\lambda_1,\dots,\lambda_n=0} \\
 &=- \frac{\partial}{\partial \lambda_1}\frac{\partial}{\partial \lambda_2}
 \exp\bigg(\frac{\im}{2} D_{\lambda_1,\lambda_2}(\eta_{D_{\lambda_1,\lambda_2}})\bigg) \bigg|_{\lambda_1,\dots,\lambda_n=0}.
\end{align*}
Here $\eta_{D_{\lambda_1,\lambda_2}}\in A_M^{D_{\lambda_1,\lambda_2},\bC}\cap L_{\partial M}^\pol$. Now notice that by linearity
\begin{equation*}
 \eta_{D_{\lambda_1,\lambda_2}} = \lambda_1 \eta_1 + \lambda_2 \eta_2,
\end{equation*}
where $\eta_k\in A_M^{D_k,\bC}\cap L_{\partial M}^\pol$. Thus,
\begin{gather}
 \rho_M^{D_1 D_2}\big(W^\pol\big) \nonumber
 \\ \qquad
 {}= - \frac{\partial}{\partial \lambda_1}\frac{\partial}{\partial \lambda_2}
 \exp\bigg(\frac{\im}{2} \left(\lambda_1\lambda_2 (D_1(\eta_2)+D_2(\eta_1)) + \lambda_1^2 D_1(\eta_1)+\lambda_2^2 D_2(\eta_2)\right)\bigg) \bigg|_{\lambda_1,\dots,\lambda_n=0} \nonumber
 \\\qquad
 {} = -\frac{\im}{2}\big(D_1(\eta_2)+D_2(\eta_1)\big).
 \label{eq:twopoint}
\end{gather}
In the example of Klein--Gordon theory in Minkowski space let $D_1$, $D_2$ be determined by sources $j_1$, $j_2$ according to~(\ref{eq:srcobs}). Using~(\ref{eq:stdwobsfeyn}) we get
\begin{equation}
 \rho_M^{D_1 D_2}\big(W^\pol\big) =-\im \int \xd^4 x\,\xd^4 y\, j_1(x) G_F(x,y) j_2(y).
 \label{eq:fpropsrc}
\end{equation}
Taking the sources to be delta functions so that $D_1(\phi)=\phi(x_1)$, $D_2(\phi)=\phi(x_2)$ we have (also exhibiting the textbook notation)
\begin{equation*}
 \langle 0 | \tord \phi(x_1) \phi(x_2) | 0\rangle = \rho_M^{D_1 D_2}\big(W^\pol\big) =-\im G_F(x_1,x_2).
\end{equation*}

We return to the general setting and consider the product $D=D_1\cdots D_n$,
\begin{align*}
 \rho_M^{D_1 \cdots D_n}\big(W^\pol\big)
 & = \bigg(\prod_{k=1}^n \bigg({-}\im \frac{\partial}{\partial \lambda_k}\bigg)\bigg) \rho_M^F\big(W^\pol\big) \bigg|_{\lambda_1,\dots,\lambda_n=0} \nonumber
 \\
 & = \bigg(\prod_{k=1}^n \bigg({-}\im \frac{\partial}{\partial \lambda_k}\bigg)\bigg) \exp\bigg(\frac{\im}{2} \sum_{k,l=1}^n \lambda_k \lambda_l D_k(\eta_l)\bigg) \bigg|_{\lambda_1,\dots,\lambda_n=0}.
\end{align*}
Since any term in the expansion of the exponential involves the product of an even number of parameters $\lambda_k$ this expression vanishes if $n$ is odd. If $n$ is even, set $n=2m$. This yields
\begin{align*}
 \rho_M^{D_1 \cdots D_{2m}}\big(W^\pol\big)
 & = \bigg(\prod_{k=1}^{2m} \bigg({-}\im \frac{\partial}{\partial \lambda_k}\bigg)\bigg) \frac{1}{m!} \bigg(\frac{\im}{2} \sum_{k,l=1}^{2m} \lambda_k \lambda_l D_k(\eta_l)\bigg)^{m} \bigg|_{\lambda_1,\dots,\lambda_n=0} \nonumber
 \\
 & = \frac{1}{m!} \sum_{\sigma\in S^{2m}} \prod_{j=1}^m \bigg({-}\frac{\im}{2} D_{\sigma(2j-1)}(\eta_{\sigma(2j)})\bigg) \nonumber
 \\
 & = \frac{1}{ 2^m\, m!} \sum_{\sigma\in S^{2m}} \prod_{j=1}^m\, \rho_M^{D_{\sigma(2j-1)} D_{\sigma(2j)}}\big(W^\pol\big).
\end{align*}
Here, $S^k$ denotes the group of permutations $\sigma$ of $k$ elements. We recover a well known formula expressing the correlation function as a sum over products of correlation functions corresponding to all possible partitions of the monomial observable into pairs of linear observables.

\section{Quantization with K\"ahler vacua}
\label{sec:kquant}

As recalled in the introduction (Section~\ref{sec:intro_gbqftobs}), the framework of linear GBQFT with observables has been fully worked out for the case that a K\"ahler polarization (Section~\ref{sec:intro_gbqftquant}) is chosen on each hypersurface. We give a short review of this framework here. The axioms of the quantum theory to be satisfied by the quantization scheme are included in Appendix~\ref{sec:qobsaxioms} to which we shall refer occasionally. For a full account we refer the reader to~\cite{Oe:feynobs}. Some terminology used is from~\cite{CoOe:vaclag}.

\subsection{State spaces and coherent states}
\label{sec:stdstate}

Let $\Sigma$ be a hypersurface, $L_{\Sigma}$ the associated space of germs of solutions and $\omega_{\Sigma}$ the corresponding symplectic form. We also assume a choice $L_{\Sigma}^+\subseteq L_{\Sigma}^\bC$ of Lagrangian subspace, positive-definite with respect to the inner product~(\ref{eq:stdipc}), i.e., a K\"ahler polarization.
The complex conjugate subspace $L_{\Sigma}^-\defeq \overline{L_{\Sigma}^+}$ is then also a Lagrangian subspace of $L_{\Sigma}^\bC$. Moreover, it is negative-definite with respect to the inner product~(\ref{eq:stdipc}) and it is \emph{transversal} to $L_{\Sigma}^+$, i.e., $L_{\Sigma}^\bC=L_{\Sigma}^+\oplus L_{\Sigma}^-$.
A~more traditional way to specify the subspace $L_{\Sigma}^+\subseteq L_{\Sigma}^\bC$ is by providing an orthonormal basis~$\{u_k\}_{k\in I}$ of it with respect to the inner product~(\ref{eq:stdipc}). This is also referred to as a choice of \emph{negative energy modes}. The complex conjugate elements $\{\bar{u}_k\}_{k\in I}$ then provide an orthonormal basis of $L_{\Sigma}^-\subseteq L_{\Sigma}^\bC$ with respect to the negative of the inner product~(\ref{eq:stdipc}). These are referred to as \emph{positive energy modes}. In total we have
\begin{equation}
 (u_k,u_l)_{\Sigma}=\delta_{k,l},\qquad
 (\bar{u}_k,\bar{u}_l)_{\Sigma}=-\delta_{k,l}, \qquad
 (u_k,\bar{u}_l)_{\Sigma}=0.
 \label{eq:modeip}
\end{equation}

It will be useful to recall an equivalent way to characterize a K\"ahler polarization. Let $J_{\Sigma}\colon L_{\Sigma}\to L_{\Sigma}$ be a \emph{complex structure}, i.e., a real linear operator such that $J_{\Sigma}^2\phi=-\phi$. Also require \emph{compatibility} with the symplectic form, i.e., $\omega_{\Sigma}\big(J_{\Sigma}\phi,J_{\Sigma}\eta\big)=\omega_{\Sigma}(\phi,\eta)$. Finally, consider the complex valued bilinear form
\begin{equation}
 \{\phi,\eta\}_{\Sigma}\defeq 2\omega_{\Sigma}\big(\phi,J_{\Sigma} \eta\big) +2\im\omega_{\Sigma}(\phi,\eta).
 \label{eq:stdjip}
\end{equation}
This is in fact \emph{hermitian} and \emph{sesquilinear} with respect to the complex structure $J_{\Sigma}$, i.e., by taking multiplication with $\im$ to be given by $J_{\Sigma}$. We require it to be \emph{positive-definite} as well. This makes $L_{\Sigma}$ into a complex Hilbert space (possibly upon completion). We call a complex structure~$J_{\Sigma}$ with these properties \emph{positive-definite complex structure}. For later use we also introduce a~notation for the real part of this inner product,
\begin{equation}
 g_{\Sigma}(\phi,\eta)\defeq 2\omega_{\Sigma}\big(\phi,J_{\Sigma} \eta\big).
 \label{eq:gip}
\end{equation}
There is a one-to-one correspondence between positive-definite Lagrangian subspaces of $L_{\Sigma}^\bC$ and positive-definite complex structures $J_{\Sigma}$ on $L_{\Sigma}$ as follows. First note that by complexifica\-tion~$J_{\Sigma}$ extends to a complex linear map $L_{\Sigma}^\bC\to L_{\Sigma}^\bC$ (that we also denote by $J_{\Sigma}$). Then, $L_{\Sigma}^+$ and $L_{\Sigma}^-$ are precisely the eigenspaces of the operator $J_{\Sigma}$, with eigenvalues $\im$ and $-\im$. We note that the projectors on the eigenspaces can be written as
\begin{equation}
 P_{\Sigma}^+(\phi)=\frac{1}{2} \big(\phi-\im J_{\Sigma}\phi\big),\qquad
 P_{\Sigma}^-(\phi)=\frac{1}{2} \big(\phi+\im J_{\Sigma}\phi\big).
 \label{eq:projkp}
\end{equation}
The relation between the inner product~(\ref{eq:stdipc}) on $L_{\Sigma}^+$ and the inner product~(\ref{eq:stdjip}) is given by
\begin{equation}
 \big(P_{\Sigma}^+(\phi),P_{\Sigma}^+(\eta)\big)_{\Sigma}=\{\phi,\eta\}_{\Sigma},
 \label{eq:relkip}
\end{equation}
where $\phi,\eta\in L_{\Sigma}$. We note that $P_{\Sigma}^\pm$ act as a real vector space isomorphisms between $L_{\Sigma}$ and~$L^\pm_{\Sigma}$. These isomorphisms also serve to bring into one-to-one correspondence orthonormal bases $\{w_k\}_{k\in I}$ of~$L_{\Sigma}$ with respect to the inner product~(\ref{eq:stdjip}) with orthonormal bases $\{u_k\}_{k\in I}$ of~$L^+_{\Sigma}$ with respect to the inner product~(\ref{eq:stdipc}) via
\begin{equation}
 w_k=u_k+\bar{u}_k.
 \label{eq:kbaserel}
\end{equation}

The Hilbert space $\cH_{\Sigma}$ of states can be constructed as the \emph{Fock space} over $L_{\Sigma}$ considered as a~Hilbert space with the inner product~(\ref{eq:stdjip}). Equivalently, we may view $\cH_{\Sigma}$ as a space of square-integrable holomorphic functions on $L_{\Sigma}$ with respect to a Gaussian measure $\nu_{\Sigma}$ determined by the inner product~(\ref{eq:stdjip})~\cite{Oe:holomorphic}.\footnote{Strictly speaking the measure $\nu_{\Sigma}$ and corresponding integrals are not over the space $L_{\Sigma}$, but over an extension~$\hat{L}_{\Sigma}$ of this space~\cite{Oe:holomorphic}. However, this detail is inconsequential for the present considerations. While we make it apparent in the notation for integrals, we will not provide any further discussion here.}
 Here, \emph{holomorphic} is understood with respect to the complex structure $J_{\Sigma}$. This is called the \emph{holomorphic representation}.
We denote the inner product in~$\cH_{\Sigma}$ by~$\langle\cdot,\cdot\rangle_{\Sigma}$.
Creation and annihilation operators are labeled by elements of the phase space~$L_{\Sigma}$, which at the same time can be identified with the subspace of $\cH_{\Sigma}$ of one-particle states. Given $\xi\in L_{\Sigma}$ we denote the associated creation operator by $a^\dagger_\xi$ and the associated annihilation operator by $a_\xi$. These satisfy the commutation relations
\begin{equation}
 \big[a_\xi,a^\dagger_\eta\big]=\{\eta,\xi\}_{\Sigma}.
 \label{eq:comrelkip}
\end{equation}
Their actions on holomorphic wave functions are given by~\cite{Oe:obsgbf}
\begin{gather}
 \big(a^\dagger_\xi \psi\big)(\phi) =\frac{1}{\sqrt{2}} \,\{\xi,\phi\}_\Sigma \psi(\phi), \label{eq:coact}
 \\
 (a_\xi \psi)(\phi) = \sqrt{2}\,\frac{\partial}{\partial \lambda}
 \psi(\phi+\lambda\xi) \bigg|_{\lambda=0}.
\end{gather}

A particularly important class of states are the \emph{coherent states}, which generate a dense subspace of $\cH_{\Sigma}$. Thus, associated to each element of $\xi\in L_{\Sigma}$, there is a coherent state $\coh_{\xi}\in \cH_{\Sigma}$. In the holomorphic representation its wave function is given by
\begin{equation}
 \coh_\xi(\phi)=\exp\bigg(\frac{1}{2}\{\xi,\phi\}_{\Sigma}\bigg).
 \label{eq:cohwf}
\end{equation}
The coherent states are eigenstates of the annihilation operators,
\begin{equation}
 a_\eta \coh_\xi = \frac{1}{\sqrt{2}}\,\{\xi,\eta\}_{\Sigma} \coh_\xi.
 \label{eq:anncoh}
\end{equation}
The wave function of a state $\psi\in \cH_{\Sigma}$ may be obtained via the \emph{reproducing property},
\begin{equation}
 \psi(\phi)=\langle \coh_\phi,\psi\rangle_{\Sigma}.
 \label{eq:reprod}
\end{equation}
Consequently, the inner product between coherent states is given by
\begin{equation}
 \langle \coh_\xi,\coh_\phi\rangle_{\Sigma}=\exp\bigg(\frac{1}{2} \{\phi,\xi\}_{\Sigma}\bigg).
 \label{eq:ipcohk}
\end{equation}
Moreover, they satisfy a \emph{completeness relation},
\begin{equation}
 \langle\eta,\psi\rangle_{\Sigma} = \int_{\hat{L}_{\Sigma}} \langle \eta, \coh_{\phi}\rangle_{\Sigma} \langle \coh_{\phi}, \psi\rangle_{\Sigma}\,\xd\nu_{\Sigma}(\phi).
 \label{eq:cohcompl}
\end{equation}
For later use we note that expanding coherent states to first order in their parameters in this formula leads to the following completeness relation for $L_{\Sigma}$,
\begin{equation}
 \{\eta,\xi\}_{\Sigma}=\frac12\int_{\hat{L}_{\Sigma}} \{\eta,\phi\}_{\Sigma} \{\phi,\xi\}_{\Sigma}\,
 \xd\nu_{\Sigma}(\phi).
 \label{eq:lcomplint}
\end{equation}
In the following we also consider the \emph{normalized coherent states} defined as
\[
\ncoh_\xi\defeq\exp\bigg({-}\frac{1}{4}\{\xi,\xi\}_{\Sigma}\bigg) \coh_\xi.
\]
The \emph{vacuum state} is the coherent state $\ncoh_0=\coh_0$. To emphasize that a coherent state lives on a~hypersurface $\Sigma$ we sometimes use the more explicit notation $\coh_{\Sigma,\xi}$ instead of $\coh_{\xi}$.

Concerning states with definite particle number we remark that an $n$-particle state $\psi$ is represented by an $n$-linear wave function, i.e., $\psi(\lambda\phi)=\lambda^n\psi(\phi)$. More specifically, a state $\psi$ encoding $n$ particles characterized by elements $\xi_1,\ldots,\xi_n\in L_{\Sigma}$ has (up to normalization) a wave function (see, e.g.,~\cite{Oe:freefermi})
\begin{equation}
 \psi(\phi)=\{\xi_1,\phi\}_{\Sigma}\cdots\{\xi_n,\phi\}_{\Sigma}.
 \label{eq:mswf}
\end{equation}
Note that this wave function may be obtained from that of a coherent state by applying suitable derivatives as follows
\begin{equation}
 \psi=\bigg(\prod_{k=1}^n 2 \frac{\partial}{\partial \lambda_k}\bigg) \coh_{\lambda_1\xi_1+\cdots+\lambda_n\xi_n} \bigg|_{\lambda_1,\dots,\lambda_n=0}.
 \label{eq:msderivcohs}
\end{equation}

In the presented quantization, the K\"ahler vacua on the two differently oriented versions of a hypersurface $\Sigma$ are required to be related as follows. Concretely, taking the positive-definite Lagrangian subspace $L_{\Sigma}^+\subseteq L_{\Sigma}^\bC$ associated to $\Sigma$, its complex conjugate $\overline{L_{\Sigma}^+}$ will be a positive-definite Lagrangian subspace of $L_{\overline{\Sigma}}^\bC$. This is because the notion of Lagrangian subspace is the same on $\Sigma$ and $\overline{\Sigma}$ while the relevant inner product~(\ref{eq:stdipc}) behaves as
\begin{equation*}
 \big(\overline{\phi},\overline{\eta}\big)_{\overline{\Sigma}}=\overline{(\phi,\eta)_{\Sigma}}.
\end{equation*}
Both stem from the fact that $\omega_{\overline{\Sigma}}=-\omega_{\Sigma}$ (see Axiom~(C2) of Appendix~\ref{sec:caxioms}). It is thus natural to set $L_{\overline{\Sigma}}^+\defeq\overline{L_{\Sigma}^+}$. This also agrees with the standard notion of vacuum in curved spacetime~\cite{CoOe:vaclag}. In~terms of complex structures this means, $J_{\overline{\Sigma}}=-J_{\Sigma}$. As a consequence, the state spaces on the same hypersurface, but with opposite orientation are related by a specific complex conjugate linear involution $\iota_{\Sigma}\colon \cH_{\Sigma}\to\cH_{\overline{\Sigma}}$. In terms of wave functions this takes the form
\begin{equation*}
 \left(\iota_{\Sigma}(\psi)\right)(\phi)=\overline{\psi(\phi)}.
\end{equation*}
For coherent states this map is simply $\iota_{\Sigma}(\coh_{\Sigma,\xi})=\coh_{\overline{\Sigma},\xi}$. With the structures of the Hilbert spaces $\cH_{\Sigma}$ and involutions $\iota_{\Sigma}$ we satisfy Axioms~(T1) and (T1b) of Appendix~\ref{sec:qobsaxioms}.

Finally, suppose that a hypersurface $\Sigma$ is decomposed into two pieces, $\Sigma_1$ and $\Sigma_2$, i.e., $\Sigma=\Sigma_1\cup\Sigma_2$, either disjointly or along edges. The associated space of germs of solutions then decomposes as a direct sum $L_{\Sigma}=L_{\Sigma_1}\oplus L_{\Sigma_2}$ and the symplectic form decomposes accordingly as $\omega_{\Sigma}=\omega_{\Sigma_1}+\omega_{\Sigma_2}$ (Axiom~(C3) of Appendix~\ref{sec:caxioms}). If this decomposition is to be admissible in the present quantization scheme, we require matching K\"ahler polarizations on each of the hypersurfaces. That is, $L_{\Sigma}^+=L_{\Sigma_1}^+\oplus L_{\Sigma_2}^+$. Then quantization yields an isometric isomorphism of Hilbert spaces $\tau_{\Sigma_1,\Sigma_2;\Sigma}\colon \cH_{\Sigma_1}\tens\cH_{\Sigma_2}\to\cH_{\Sigma}$ as required by Axiom~(T2) and satisfying Axiom~(T2b) of Appendix~\ref{sec:qobsaxioms}. In terms of wave functions we have
\begin{equation*}
 \left(\tau_{\Sigma_1,\Sigma_2;\Sigma}(\psi_1\tens\psi_2)\right)((\phi_1,\phi_2))
 =\psi_1(\phi_1)\psi_2(\phi_2).
\end{equation*}
For coherent states this is\footnote{We use the notation $(\xi_1,\xi_2)$ for the element $\xi_1+\xi_2 \in L_{\Sigma_1}\oplus L_{\Sigma_2}$ with $\xi_1\in L_{\Sigma_1}$ and $\xi_2\in L_{\Sigma_2}$.}
\begin{equation*}
 \tau_{\Sigma_1,\Sigma_2;\Sigma}\big(\coh_{\xi_1}\tens \coh_{\xi_2}\big)=\coh_{(\xi_1,\xi_2)}.
\end{equation*}

In a standard QFT in globally hyperbolic spacetime if $\Sigma$ consists of the disjoint union of spacelike hypersurfaces $\Sigma_1$ and $\Sigma_2$ this notion of decomposition works perfectly well with the usual K\"ahler vacua. However, if in the same context, say $\Sigma$ is a single spacelike hypersurface and~$\Sigma_1$ and $\Sigma_2$ are pieces of it (glued along a boundary) then a serious problem arises, as already mentioned at the end of Section~\ref{sec:intro_gbqftobs} of the introduction. Namely, the Lagrangian subspace $L_{\Sigma}^+\subseteq L_{\Sigma}^\bC$ encoding a reasonable vacuum (e.g., the standard one in Minkowski space) will be non-local along the hypersurface. That is, there are no subspaces $L_{\Sigma_1}^+\subseteq L_{\Sigma_1}^\bC$ and $L_{\Sigma_2}^+\subseteq L_{\Sigma_2}$ such that $L_{\Sigma}^+=L_{\Sigma_1}^+\oplus L_{\Sigma_2}^+$. In terms of the complex structure, the operator $J_{\Sigma}$ is not a differential operator on $L_{\Sigma}$, but only a pseudo-differential operator. We shall see in~Section~\ref{sec:splitmink} how some of the methods developed in the following allow us to deal with this situation.

\subsection{Standard example: states in Klein--Gordon theory}
\label{sec:kgstates}

To connect our notions and notations to those familiar from textbook QFT, we consider the example of massive Klein--Gordon theory in Minkowski space on equal-time hypersurfaces. That is, we return to the context of the beginning of Section~\ref{sec:piwobs}. We shall adopt the conventions of~\cite{Oe:holomorphic}. Thus, we denote by $L_t$ the space of germs of solutions of the Klein--Gordon equation on the spacelike hypersurface $\Sigma_t$, determined by fixing the time $t$. Due to the Cauchy property, this space can be identified with the space of global solutions. We parametrize solutions $\phi$ as usual in terms of plane waves, using coefficient functions $\phi^{\rm a}$ and $\phi^{{\rm b}}$ on momentum space, with $E=\sqrt{k^2+m^2}$,
\begin{equation}
 \phi(t,x)=\int\frac{\xd^3 k}{(2\pi)^3 2E}
 \Big(\phi^{\rm a}(k) {\rm e}^{-\im(E t-k x)}+\overline{\phi^{\rm b}(k)} {\rm e}^{\im(E t-k x)}\Big).
 \label{eq:kgmodes}
\end{equation}
This describes elements of the complexified solution space $L_t^\bC$. Real solutions, i.e., elements of~$L_t$ are characterized by the property $\phi^{\rm b}(k)=\phi^{\rm a}(k)$. The symplectic form on $L_t^\bC$ is\footnote{The conventions for the sign of the symplectic structure are as in~\cite{Oe:holomorphic,Oe:feynobs} and opposite to those in~\cite{CoOe:vaclag}.}
\begin{align}
 \omega_t(\phi_1,\phi_2) & =\frac{1}{2}\int\xd^3 x\,
 \left(\phi_2(t,x) (\partial_0 \phi_1)(t,x) - \phi_1(t,x)(\partial_0\phi_2)(t,x)\right) \label{eq:sfsft} \\
 & =\frac{\im}{2}\int\frac{\xd^3 k}{(2\pi)^3 2E}
 \Big(\phi_2^{\rm a}(k)\overline{\phi_1^{\rm b}(k)}-\phi_1^{\rm a}(k)\overline{\phi_2^{\rm b}(k)}\Big).
 \label{eq:sfkgm}
\end{align}
Here the orientation of the hypersurface $\Sigma_t$ is chosen to correspond to that of the past boundary of a region in the future of $t$. This choice is in accordance with the fact that we want to construct the Hilbert space $\cH_t$ of \emph{initial} states on $\Sigma_t$. In the traditional bra-ket notation these are the ket-states.

It is easy to verify with~(\ref{eq:stdipc}) that the subspace $L_t^+\subseteq L_t^\bC$ of polarized solutions is indeed a~positive-definite Lagrangian subspace,
\begin{equation}
 L^+_t=\big\{\phi\in L_t^\bC \colon \phi^{\rm a}(k)=0\,\forall k\big\}.
 \label{eq:kgpvac}
\end{equation}
This is the subspace of \emph{negative energy} solutions, which recovers the standard past boundary condition for Klein--Gordon theory in Minkowski space, compare Section~\ref{sec:piwobs}. (For \emph{final} states on~the hypersurface $\overline{\Sigma}_t$ we correspondingly get the \emph{positive energy} solutions recovering the standard future boundary condition.) The corresponding complex structure $J_t$ and inner pro\-duct~(\ref{eq:stdjip})~are
\begin{gather*}
 (J_t(\phi))^{\rm a/b}(k) =-\im \phi^{\rm a/b}(k),
 \\
 \{\phi_1,\phi_2\}_t =2\int \frac{\xd^3 k}{(2\pi)^3 2E}\, \phi_1^{\rm a}(k) \overline{\phi_2^{{\rm b}}(k)}.
\end{gather*}

It is common to consider the (singular) momentum modes, with normalization,
\begin{gather*}
 \phi_p(t,x) =\frac{1}{\sqrt{2}} \big({\rm e}^{-\im (E_pt-p x)}+{\rm e}^{\im (E_pt-p x)}\big),
 \\
 \{\phi_p,\phi_{p'}\}_t =(2\pi)^3 2 E_p \delta^3(p-p').
\end{gather*}
In turn we use the momentum modes to define corresponding (singular) single particle momentum eigenstates, with normalization,
\begin{gather*}
 \psi_p(\phi) =\frac{1}{\sqrt{2}}\{\phi_p,\phi\}_t=\overline{\phi^{{\rm b}}(p)},
 \\
 \langle\psi_p,\psi_{p'}\rangle_t =(2\pi)^3 2 E_p \delta^3(p-p').
\end{gather*}
These are the usual textbook momentum eigenstates, often written as $|p\rangle$ instead of $\psi_p$. The corresponding creation and annihilation operators are
\begin{equation*}
 a^\dagger_p=a^\dagger_{\phi_p},\qquad
 a_p=a_{\phi_p}.
\end{equation*}
By multiple application of the creation operator~(\ref{eq:coact}) we see that the wave function for a~nor\-malized $n$-particle state with momenta $p_1,\ldots,p_n$ is
\begin{equation}
 \psi_{p_1,\ldots,p_n}(\phi)=\frac{1}{\sqrt{2^n}}\{\phi_{p_1},\phi\}_t \cdots \{\phi_{p_n},\phi\}_t.
 \label{eq:mms}
\end{equation}

\subsection{Correlation functions for coherent states}\label{sec:corrfack}

Let $M$ be a region with a choice of K\"ahler polarization on the boundary, i.e., a positive-definite Lagrangian subspace $L_{\partial M}^+\subseteq L_{\partial M}^{\bC}$. We recall that the space $L_{\partial M}$ decomposes as a direct sum of real vector spaces as follows, $L_{\partial M}=L_M \oplus J_{\partial M} L_M$ \cite[Lemma~4.1]{Oe:holomorphic}. Here, $J_{\partial M}$ denotes the complex structure corresponding to the K\"ahler polarization. For elements of $L_{\partial M}^\bC$, we~write this decomposition as $\xi=\xi^\rprt+J_{\partial M}\xi^\iprt$, with $\xi^\rprt,\xi^\iprt\in L_M$. Consider the correlation function for an~observable $F$ in~$M$ with a normalized coherent state $\ncoh_\xi$ for $\xi\in L_{\partial M}$ on the boundary~$\partial M$, compare Figure~\ref{fig:boundary-cohs-corr}. If $F$ is a Weyl observable the Feynman path integral~(\ref{eq:obsampl}) leads to the following factorization theorem, see~\cite[Proposition~4.2]{Oe:holomorphic} and~\cite[Proposition~4.3 and equation~(85)]{Oe:feynobs}.

\begin{figure}
 \centering
 \begin{tikzpicture}[scale=1.85]
\draw [red,thick,yshift=0.5cm,fill=white,scale=1.05] plot [smooth cycle] coordinates {(-1,-0.3) (0,-0.7) (1.6,0.22) (2,1.2)(1.1,2)(0,1.6)(-1.1,0.3)};
\draw [thick,yshift=0.5cm,fill=gray!10] plot [smooth cycle] coordinates {(-1,-0.3) (0,-0.7) (1.6,0.22) (2,1.2)(1.1,2)(0,1.6)(-1.1,0.3)};
\draw [thick,xshift=0cm,fill=gray!30] plot [smooth cycle] coordinates {(0,0.3) (1,0.5) (0.8,1.2) (0.2,1.4)  (-0.2,0.8) (-0.1,0.55)};
\node at (0.4,0.8) {$F$};
\node at (1.2,1.8) {$M$};
\draw (2.3,2.2) node [right] {$\partial M$} edge[out=180,in=30,->] (1.85,2);
\draw[red] (-1,2.7) node [right] {${\color{red}k_{\xi}}$} edge[out=0,in=130,->] (0.45,2.5);
\end{tikzpicture}
 \caption{Spacetime region $M$ containing an observable $F$ with normalized coherent state $\ncoh_\xi$ on the boundary.} \label{fig:boundary-cohs-corr}
\end{figure}
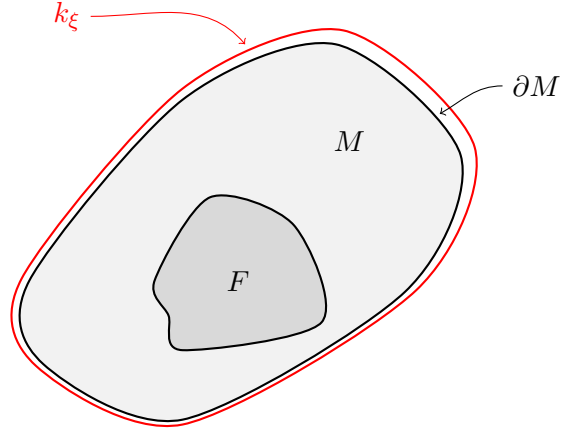

\begin{thm}
 \label{thm:stdcorrfact}
Let $M$ be a region, $D\colon K_M^\bC\to\bC$ a linear observable, and $\xi\in L_{\partial M}$. Set $F\defeq\exp(\im D)$, $\hat{\xi}\defeq\xi^{\rprt}-\im \xi^{\iprt}$. Then,
\begin{gather}
 \rho^F_M(\ncoh_\xi) =\rho_M(\ncoh_\xi) F\big(\hat{\xi}\big) \rho^F_M(\coh_0), \label{eq:wobsfact}
 \qquad\text{with}
 \\
 \rho_M(\ncoh_\xi) = \exp\bigg({-}\frac{\im}{2} g_{\partial M}\big(\xi^{\rprt},\xi^{\iprt}\big)
 -\frac{1}{2} g_{\partial M}\big(\xi^{\iprt},\xi^{\iprt}\big) \bigg), \label{eq:amplk}
 \\
 \rho_M^F(\coh_0) = \exp\bigg( \frac{\im}{2} D(\eta)\bigg). \label{eq:sfqvev}
\end{gather}
Here $\eta\in A_M^{D,\bC}\cap L^+_{\partial M}$ is unique.
\end{thm}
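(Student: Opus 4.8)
The plan is to reduce everything to the path integral formula~\eqref{eq:piwobs}, which the text has already adopted as the definition of the correlation function of a Weyl observable. Equation~\eqref{eq:sfqvev} then comes essentially for free: the vacuum wave function $\coh_0$ is constant equal to $1$, so $\rho_M^F(\coh_0)$ is literally the generalized vacuum correlation function $\rho_M^F(W^\pol)$ of Section~\ref{sec:piwobs} for the polarization $\pol$ defined by the K\"ahler subspace $L_{\partial M}^+\subseteq L_{\partial M}^\bC$. Since a K\"ahler polarization is transversal to $L_{\tilde M}^\bC$ by \cite[Proposition~B.11]{CoOe:vaclag}, the intersection $A_M^{D,\bC}\cap L_{\partial M}^+$ is a single point $\eta$, and \eqref{eq:piwobs} yields \eqref{eq:sfqvev}.

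For arbitrary $\xi\in L_{\partial M}$ the idea is to absorb the coherent state into the observable. The boundary wave function of $\coh_\xi$ is itself a Weyl factor, $\coh_\xi(\phi|_{\partial M})=\exp(\im D_\xi(\phi))$, where $D_\xi$ is the complex linear observable $D_\xi(\phi)\defeq\frac{1}{2\im}\{\xi,\phi|_{\partial M}\}_{\partial M}$. Hence $\rho_M^F(\coh_\xi)$ equals the generalized vacuum correlation function $\rho_M^{\exp(\im(D+D_\xi))}(W^\pol)$ of the product Weyl observable, and \eqref{eq:piwobs} gives $\rho_M^F(\coh_\xi)=\exp\big(\frac{\im}{2}(D+D_\xi)(\eta')\big)$ with $\eta'\in A_M^{D+D_\xi,\bC}\cap L_{\partial M}^+$. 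Running the linearity-in-the-source argument already used to prove uniqueness of such intersection points, one gets $\eta'=\eta+\eta_\xi$, where $\eta_\xi\in A_M^{D_\xi,\bC}\cap L_{\partial M}^+$ is the point attached to the source $D_\xi$. Expanding $(D+D_\xi)(\eta+\eta_\xi)=D(\eta)+D_\xi(\eta_\xi)+\big(D(\eta_\xi)+D_\xi(\eta)\big)$ splits the exponent into three pieces, to be matched with $\rho_M^F(\coh_0)$, with $\rho_M(\coh_\xi)$, and with $F(\hat\xi)$ respectively.

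The first piece is $\frac{\im}{2}D(\eta)=\log\rho_M^F(\coh_0)$ by \eqref{eq:sfqvev}. For the cross term I would first show $D(\eta_\xi)=D_\xi(\eta)$: writing $\eta=\mu+\ell$ and $\eta_\xi=\mu_\xi+\ell_\xi$ with $\mu\in A_M^D$, $\mu_\xi\in A_M^{D_\xi}$, $\ell,\ell_\xi\in L_M^\bC$, this follows from \eqref{eq:obssol}, the symplectic reciprocity $D(\mu_\xi)-D_\xi(\mu)=2\omega_{\partial M}(\mu_\xi,\mu)$ for inhomogeneous solutions, and the isotropy of both $L_M^\bC$ and $L_{\partial M}^+$, which collapses the difference to $2\omega_{\partial M}(\eta_\xi,\eta)=0$. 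Its common value is identified using the decomposition $L_{\partial M}=L_M\oplus J_{\partial M}L_M$ \cite[Lemma~4.1]{Oe:holomorphic}: from $\xi=\xi^\rprt+J_{\partial M}\xi^\iprt$ one computes, via \eqref{eq:projkp}, that $P_{\partial M}^-\xi=P_{\partial M}^-\hat\xi$, and that $\eta_\xi=P_{\partial M}^+\hat\xi$; then, using \eqref{eq:stdjip}, \eqref{eq:obssol} and the isotropy of $L_{\partial M}^+$ (which kills the $P_{\partial M}^+\hat\xi$ contribution), $D_\xi(\eta)=2\omega_{\partial M}(\xi,\eta)=2\omega_{\partial M}(\hat\xi,\eta)=D(\hat\xi)$. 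Thus the exponential of $\frac{\im}{2}\big(D(\eta_\xi)+D_\xi(\eta)\big)=\im D(\hat\xi)$ is $F(\hat\xi)$. Finally, specializing the above to $D=0$ gives $\rho_M(\coh_\xi)=\exp\big(\frac{\im}{2}D_\xi(\eta_\xi)\big)$ with $D_\xi(\eta_\xi)=2\omega_{\partial M}(\xi,P_{\partial M}^+\hat\xi)$; expanding this in $\xi^\rprt,\xi^\iprt$ via \eqref{eq:gip} and the compatibility of $J_{\partial M}$ with $\omega_{\partial M}$, using the isotropy of $L_M$ to drop $\omega_{\partial M}(\xi^\rprt,\xi^\iprt)$, and then multiplying by the normalization $\exp\big(-\frac14\{\xi,\xi\}_{\partial M}\big)$ relating $\ncoh_\xi$ to $\coh_\xi$, produces exactly \eqref{eq:amplk}. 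Assembling the three pieces and multiplying through by $\exp\big(-\frac14\{\xi,\xi\}_{\partial M}\big)$ yields \eqref{eq:wobsfact}.

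The step I expect to be the main obstacle is the cross-term identification --- both that $D(\eta_\xi)=D_\xi(\eta)$ and that this common value equals $D(\hat\xi)$. This is where the Lagrangian (isotropy/coisotropy) properties of $L_{\partial M}^+$ and of $L_{\tilde M}^\bC\cong L_M^\bC$ genuinely enter, and where care is needed because $\eta$ and $\eta_\xi$ lie in $L_{\partial M}^+$ rather than in $L_M^\bC$, so that \eqref{eq:obssol} is not directly applicable and one must pass through a particular inhomogeneous solution and invoke symplectic reciprocity. The remaining steps are routine bilinear bookkeeping, modulo tracking the sign conventions for $\omega_{\partial M}$ flagged in the text.
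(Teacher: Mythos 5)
Your proof is correct, and it is essentially the mechanism the paper itself relies on: Theorem~\ref{thm:stdcorrfact} is only quoted here from \cite{Oe:holomorphic,Oe:feynobs}, but Sections~\ref{sec:bobss} and~\ref{sec:sobsampl} re-derive it by exactly your route --- trade the coherent state for a boundary Weyl source, apply the master formula~\eqref{eq:piwobs} to the composite Weyl observable, and split the exponent into the three factors. The only real difference is bookkeeping: the paper encodes $\coh_\xi$ as a slice observable on $\hat{\partial M}$ and uses the jump condition~\eqref{eq:sobsdiff} (the step $\phi_X=\phi_M-\xi$ in the computation leading to~\eqref{eq:sobsampl}), whereas you treat $D_\xi$ as an ordinary linear source with affine solution space $A_M^{D_\xi,\bC}$, superpose $\eta'=\eta+\eta_\xi$ by linearity and transversality, and settle the cross term via Green's reciprocity $D(\eta_\xi)-D_\xi(\eta)=2\omega_{\partial M}(\eta_\xi,\eta)=0$ together with isotropy of $L_{\partial M}^+$. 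Both routes give the cross term $F\big(\hat\xi\big)$; your identification $\eta_\xi=P^+_{\partial M}\hat\xi$ (equivalently $P^-_{\partial M}\xi+P^+_{\partial M}\hat\xi=\hat\xi\in L_M^\bC$) is the correct vacuum solution for the boundary source, and the final bilinear expansion, including the normalization $\exp\big({-}\tfrac14\{\xi,\xi\}_{\partial M}\big)=\exp\big({-}\tfrac14 g_{\partial M}(\xi^{\rprt},\xi^{\rprt})-\tfrac14 g_{\partial M}(\xi^{\iprt},\xi^{\iprt})\big)$, reproduces~\eqref{eq:amplk} with the stated signs.
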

Viewing this as a definition, we satisfy Axiom~(TO4) of Appendix~\ref{sec:qobsaxioms}.

Note that in particular, the vacuum correlation function~(\ref{eq:sfqvev}) recovers formula~(\ref{eq:piwobs}) of~Sec\-tion~\ref{sec:piwobs}. In our notation we have
\begin{equation}
 \rho_M^F\big(W^\pol\big)=\rho_M^F(\coh_0). \label{eq:vaccorrnot}
\end{equation}
This may be seen as the justification for the definition~(\ref{eq:piwobs}) in the same way as the unconstrained path integral formula~(\ref{eq:stdvev}) in standard QFT may be justified by reproducing the results of formula~(\ref{eq:taobspi}) for the case of initial and final vacuum states. In the special case of Klein--Gordon theory in Minkowski space we thus indeed recover formulas~(\ref{eq:stdpiwobs}) and~(\ref{eq:stdwobsfeyn}) with the usual Feynman propagator as consequences of the quantization laid out in~Section~\ref{sec:kgstates} together with Theorem~\ref{thm:stdcorrfact}.

The notion of \emph{normal ordered quantization} was generalized from the instant-time setting of~standard QFT to GBQFT with K\"ahler polarizations in~\cite{Oe:obsgbf,Oe:feynobs}. In the present setting the normal ordered quantization of the observable $F$ is given by
\begin{equation}
 \rho_M^{\no{F}}(\coh_\xi)\defeq \rho_M(\coh_\xi) F\big(\hat{\xi}\big).
 \label{eq:defnoqk}
\end{equation}
We shall take this formula as a definition. In the case of a Weyl observable we can compare this formula directly with the factorization formula~(\ref{eq:wobsfact}). We observe that it then corresponds precisely to removing the factor representing the vacuum correlation function. For the understanding in terms of ordering of operators, see Section~\ref{sec:qsobsopk}.

Note that using the completeness relation~(\ref{eq:cohcompl}) and reproducing property~(\ref{eq:reprod}), correlation functions for arbitrary states $\psi\in\cH_{\partial M}$ can be expressed as integrals over correlation functions for coherent states as follows
\begin{equation}
 \rho_M^F(\psi) =\int_{\hat{L}_{\partial M}} \langle \coh_\xi,\psi\rangle_{\partial M}\,\rho_M^F(\coh_\xi)\,\xd\nu(\xi)
 =\int_{\hat{L}_{\partial M}} \psi(\xi)\,\rho_M^F(\coh_\xi)\,\xd\nu(\xi).
 \label{eq:kcorrgs}
\end{equation}

\subsection{Semiclassical interpretation}\label{sec:sck}

A remarkable aspect of the Structure Theorem~\ref{thm:stdcorrfact} for correlation functions is its bearing on the semiclassical interpretation of linear bosonic field theory. In particular, it can be used to show that a (normalized) coherent state $\ncoh_{\xi}$, for $\xi\in L_{\partial M}$ a solution of the classical field equation near the boundary $\partial M$, indeed behaves as a semiclassical approximation to that solution.

\subsubsection{Amplitudes}\label{sec:scka}

Firstly, we consider this in the context of pure amplitudes, i.e., in the absence of observables. The amplitude is then given by formula~(\ref{eq:amplk}). This admits an extremely simple and compelling semiclassical interpretation~\cite{Oe:holomorphic}. Crucial is the decomposition $L_{\partial M}=L_M\oplus J_{\partial M} L_M$ of the boundary solution space, in terms of elements written as, $\xi=\xi^{\rprt}+ J_{\partial M}\xi^{\iprt}$. Here, $\xi^{\rprt}$ is the component of the solution $\xi$ that comes from a solution in the interior of $M$. In other words, it is a solution that admits a \emph{classical continuation to the interior} of $M$ in terms of the classical equations of motion. In contrast, $J_{\partial M}\xi^{\iprt}$ is the component that does not admit such a continuation. That is, its occurrence on the boundary $\partial M$ is \emph{classically forbidden}. Now, if the classically forbidden component $J_{\partial M}\xi^{\iprt}$ is not present, i.e., $\xi$ is classically allowed, the amplitude~(\ref{eq:amplk}) is simply unity. On the other hand, if $J_{\partial M}\xi^{\iprt}$ is turned on, a phase factor appears, and more importantly, an exponential suppression term appears (recall that $g_{\partial M}$ is positive-definite), depending on the magnitude of $J_{\partial M}\xi^{\iprt}$ (or equivalently of $\xi^{\iprt}$). This is precisely the tunneling behavior expected in the quantum theory.

We remark that the inner product of coherent states can be seen to arise as a limiting case of the amplitude when the region $M$ is being squeezed to a slice region $\hat{\Sigma}$, compare Axiom~(T3x) of Appendix~\ref{sec:qobsaxioms}. In that case the classically allowed solutions $L_{\hat{\Sigma}}\subseteq L_{\partial \hat{\Sigma}}$ are the pairs $(\xi,\xi)$ with~$\xi\in L_{\Sigma}$. Then, the amplitude is the inner product of two identical normalized coherent states~$\ncoh_{\xi}$ and thus unity. On the other hand, switching on a classically forbidden component means making the two solutions $\xi_1$ and $\xi_2$ on the two sides of $\hat{\Sigma}$ different so that the inner product between the normalized coherent states becomes suppressed. Indeed, from expression~(\ref{eq:ipcohk}) we~obtain
\begin{equation*}
 \big\langle \ncoh_{\xi_1},\ncoh_{\xi_2}\big\rangle_{\Sigma}
 = \exp\bigg(\im\omega_{\Sigma}(\xi_2,\xi_1)-\frac{1}{4}g_{\Sigma}(\xi_2-\xi_1,\xi_2-\xi_1)\bigg).
\end{equation*}

\subsubsection{Observables}\label{sec:scko}

Secondly, we consider correlation functions with observables. Let $\xi$ be a
classically allowed solution in the sense that $\xi\in L_M$. For a Weyl observable $F$ we have formula~(\ref{eq:wobsfact}), where the first factor is unity and the last depends on $F$ only (the vacuum correlation). The remaining factor is simply $F(\xi)$ (observe $\hat{\xi}=\xi$ here). That is, it is precisely the value of the classical observable $F$ evaluated on the classically continued solution $\xi$ in the interior of $M$. The semiclassical interpretation becomes even more clear when we switch to normal ordered quantization, formula~(\ref{eq:defnoqk}). The vacuum correlation factor is removed and, more importantly, the classical value $F(\xi)$ is obtained for arbitrary observables $F$. When the non-classical component $\xi^{\iprt}$ is switched on the boundary, we no longer expect to obtain the value of the observable $F$ on some classical solution. Indeed, in this case the solution $\hat{\xi}$ in the interior becomes complex, a behavior well known in a~tunneling context.

As for pure amplitudes, we can also in the case of observables consider the limit of squeezing a region to a slice region. This leads to the slice observables to be considered in~Section~\ref{sec:sliceobs}. The~semiclassical interpretation in this limit will be considered in~Section~\ref{sec:qsobsopk}.

\subsection{Composition}
\label{sec:compk}

As laid out in~Section~\ref{sec:intro_gbqft}, the notion of composition is central to GBQFT, as manifest in the composition rule for amplitudes~(\ref{eq:stcompax}). It turns out that to make this actually work with path integral quantization, a \emph{gluing anomaly} factor has to be included that only depends on the geometry~\cite{Oe:holomorphic}. At the same time, it is convenient to formulate the composition rule in terms of the self-gluing of a region along matching pieces of its boundary. The composition of two distinct regions is then achieved by first performing a disjoint union, followed by a self-gluing.

We first describe the disjoint composition. Let $M$ and $N$ be regions and $\psi_M\in \cH_{\partial M}$, \mbox{$\psi_N\in \cH_{\partial N}$} states on the boundaries. Let $F,G$ be observables in $M$ and $N$ respectively. Then,
\begin{equation*}
 \rho_{M\sqcup N}^{F\cp G}(\psi_M\tens\psi_N)=\rho_M^F(\psi_M)\rho_N^G(\psi_N).
\end{equation*}
Here $F\cp G$ denotes the spacetime composition of the classical observables. This means the following. $F$ and $G$ are functions on the configuration spaces $K_M$ and $K_N$ respectively. We~extend $F$ to a function on $K_{M\sqcup N}=K_M\times K_N$ trivially, i.e., without dependence on $K_N$, and proceed with $G$ analogously. Then, we multiply the extended observables $F$ and $G$ as functions on $K_{M\sqcup N}$ in the ordinary way. With this we satisfy Axiom~(TO5a) of Appendix~\ref{sec:qobsaxioms}.

\begin{figure}
 \centering
 \input{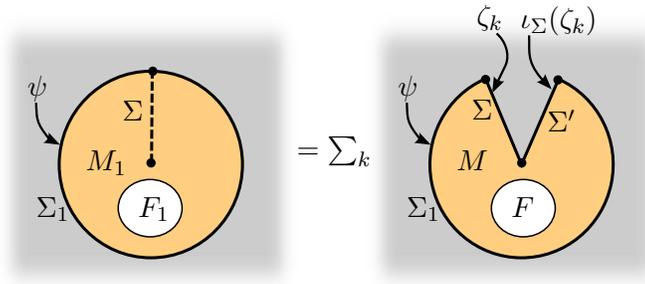}
 \caption{Self-composition of spacetime region with observable.}
 \label{fig:compobs}
\end{figure}

We turn to the self-composition. Let $M$ be a region with its boundary decomposing as $\partial M=\Sigma_1\cup \Sigma\cup \overline{\Sigma'}$. Suppose that $\Sigma$ and $\Sigma'$ are identical so that $M$ can be glued to itself along~$\Sigma$,~$\Sigma'$. Call the resulting manifold $M_1$. Then, $\partial M_1=\Sigma_1$, see Figure~\ref{fig:compobs}. In order for the composition rule to be valid, the gluing anomaly has to be well-defined.

\begin{dfn}
 We call the gluing \emph{admissible} if the \emph{gluing anomaly} $c_{M;\Sigma,\Sigma'}$ given by the following formula is well-defined due to absolute convergence of the sum,
 \begin{equation*}
 c_{M;\Sigma,\Sigma'}=\sum_{k\in I} \rho_M\big(\ncoh_0\tens\zeta_k\tens\iota_{\Sigma'}(\zeta_k)\big).
 \end{equation*}
 Here, $\{\zeta_k\}_{k\in I}$ is an arbitrary orthonormal basis of $\cH_{\Sigma}$.
\end{dfn}

Let $F\colon K_M^\bC\to\bC$ be an observable in $M$. Since field configurations in $M_1$ yield field configurations in $M$, we have a map $K_{M_1}\to K_{M}$. By composition, we obtain from the observable $F$ in~$M$ an observable $F_1\colon K_{M_1}^\bC\to\bC$ in $M_1$. The observables are also illustrated in Figure~\ref{fig:compobs}. If the gluing is admissible, it can be shown that the correlation functions characterized by Theorem~\ref{thm:stdcorrfact} satisfy the following composition rule~\cite[Proposition~4.2]{Oe:feynobs}.
\begin{thm}
 \label{thm:stdcompo}
 Let $\psi\in \cH^\circ_{\partial M_1}$. Then,
 \begin{equation*}
 \rho_{M_1}^{F_1}(\psi) \cdot c_{M;\Sigma,\Sigma'} =\sum_{k\in I} \rho_M^F\big(\psi\tens\zeta_k\tens\iota_{\Sigma'}(\zeta_k)\big).
 \end{equation*}
 Here, $\{\zeta_k\}_{k\in I}$ is an arbitrary orthonormal basis of $\cH_{\Sigma}$.
\end{thm}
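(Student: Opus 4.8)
The plan is to reduce the general case to the case of coherent states, exploit the factorization formula from Theorem~\ref{thm:stdcorrfact}, and then recognize the coherent-state sum on the right-hand side as a (suppressed) completeness relation. First I would note that by the expansion~\eqref{eq:kcorrgs} it suffices to verify the identity when $\psi=\ncoh_{\zeta}$ is a normalized coherent state on $\Sigma_1$ (more precisely, a coherent state labeled by an element of $L_{\Sigma_1}$), since both sides are linear in $\psi$ and coherent states are dense in $\cH_{\partial M_1}$; one must of course check that the interchange of the $\zeta$-integration with the $k$-sum and with the application of the observable is legitimate, which is where the admissibility hypothesis (absolute convergence of $c_{M;\Sigma,\Sigma'}$) does its work.

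Next I would apply the Weyl-observable case first, i.e.\ assume $F=\exp(\im D)$ for a linear observable $D$; the general observable then follows by the generating-function trick~\eqref{eq:polydwva}, differentiating the Weyl identity in the parameters $\lambda_k$ and setting them to zero, the derivatives commuting with the finite sum and the convergent $k$-sum. For the Weyl case, I would invoke the factorization~\eqref{eq:wobsfact}: on the region $M$, with boundary state $\ncoh_{\zeta}\tens\zeta_k\tens\iota_{\Sigma'}(\zeta_k)$, the correlation function factors as $\rho_M(\ncoh_\zeta\tens\zeta_k\tens\iota_{\Sigma'}(\zeta_k))\, F(\widehat{\cdot})\,\rho_M^F(\coh_0)$. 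The key observation is that the vacuum factor $\rho_M^F(\coh_0)$ is independent of $k$ and of the boundary state, and—because field configurations in $M_1$ restrict to field configurations in $M$, with $F_1$ obtained by composition—it equals the analogous vacuum factor $\rho_{M_1}^{F_1}(\coh_0)$ for the glued region. Likewise the argument $\widehat{\xi}$ at which $F$ is evaluated: the unique element of $A_M^{D,\bC}\cap L_{\partial M}^+$ determined by the boundary data, restricted along the gluing, should match the corresponding element for $M_1$ once the $\Sigma,\Sigma'$ components are summed against each other. This is the heart of the matter and will require the classical gluing/composition axioms for solution spaces from Appendix~\ref{sec:caxioms} together with the transversality discussion of Section~\ref{sec:piwobs}.

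What remains is then a purely "free" (observable-independent) identity: that
\[
 \rho_{M_1}(\ncoh_\zeta)\cdot c_{M;\Sigma,\Sigma'}
 =\sum_{k\in I}\rho_M\big(\ncoh_\zeta\tens\zeta_k\tens\iota_{\Sigma'}(\zeta_k)\big),
\]
which is precisely the amplitude composition rule underlying the definition of the gluing anomaly—this is the content of~\cite[Proposition~4.2]{Oe:holomorphic} (the $F=1$ specialization of the very statement we are proving), and I would simply cite it rather than reprove it, having verified that the observable-dependent factors on both sides have been stripped off cleanly and agree. Assembling the three factors (amplitude $\times$ observable value $\times$ vacuum correlation) on each side and comparing termwise with the $c_{M;\Sigma,\Sigma'}$ pulled out gives the claim for coherent states, and then~\eqref{eq:kcorrgs} lifts it to arbitrary $\psi\in\cH^\circ_{\partial M_1}$.

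\textbf{Main obstacle.} The delicate point is matching the observable-dependent data across the gluing: showing that the evaluation point $\widehat{\xi}$ (equivalently, the solution $\eta\in A_M^{D,\bC}\cap L_{\partial M}^+$ of the source-modified equations of motion on $M$) descends correctly to the glued region $M_1$ when the $\Sigma$ and $\overline{\Sigma'}$ boundary components are identified and summed over a complete basis. Concretely one must show that the sum over $k$ of the Gaussian factors $\rho_M(\ncoh_0\tens\zeta_k\tens\iota_{\Sigma'}(\zeta_k))$ weighted by the $k$-dependence hidden in $F(\widehat{\xi})$ reorganizes into $c_{M;\Sigma,\Sigma'}$ times the single term $F_1(\widehat{\xi})$ for $M_1$—i.e.\ that the $k$-dependence of $\widehat{\xi}$ is exactly the $k$-dependence that a completeness insertion on $\cH_\Sigma$ would produce and hence "averages out." This is a bookkeeping-heavy but conceptually clean computation once the factorization theorem and the classical composition axioms are in hand; absolute convergence of $c_{M;\Sigma,\Sigma'}$ is what makes every rearrangement above legitimate.
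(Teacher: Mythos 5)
Your overall architecture --- reduce to coherent states on $\Sigma_1$, reduce to Weyl observables by the generating-function trick, invoke the factorization of Theorem~\ref{thm:stdcorrfact}, and fall back on the pure-amplitude gluing identity for the $F=\one$ part --- is the right skeleton; note that the present paper does not actually reprove this theorem but cites \cite[Proposition~4.2]{Oe:feynobs} for it. However, there is a concrete false step at the heart of your argument: the claim that $\rho_M^F(\coh_0)=\rho_{M_1}^{F_1}(\coh_0)$. These two vacuum correlation functions are governed by \emph{different} boundary-value problems. The first is $\exp\big(\tfrac{\im}{2}D(\eta)\big)$ with $\eta\in A_M^{D,\bC}\cap L_{\partial M}^+$, so $\eta$ obeys the K\"ahler boundary condition separately on $\Sigma$ and on $\overline{\Sigma'}$; the second is $\exp\big(\tfrac{\im}{2}D_1(\eta_1)\big)$ with $\eta_1$ corresponding via Axiom~(C7) to a solution $\eta'$ on $M$ that matches smoothly across the erased cut and is polarized only on $\Sigma_1$. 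These differ whenever the inhomogeneous solution is sensitive to the cut: gluing a rectangle into a cylinder, say, replaces the propagator with vacuum boundary conditions on the cut edges by the periodic one, and even two source points both away from the cut see different propagators. So the $k$-independent factor you pull out of the sum is not the factor that appears on the left-hand side of the composition identity.

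Consequently the $k$-dependence of $F(\hat{\xi})$ cannot simply ``average out'' against a completeness insertion. What a proof along these lines must establish is that the Gaussian integral of $\rho_M\big(\ncoh_{(\zeta,\phi,\phi)}\big)\,F\big(\widehat{(\zeta,\phi,\phi)}\big)$ over the gluing label $\phi$ has the form $\int\exp\left(Q(\phi)+L_D(\phi)\right)\xd\nu(\phi)$ with $Q$ quadratic in $\phi$ and $L_D$ linear in $\phi$ with a $D$-dependent coefficient; completing the square produces, besides the anomaly and the factors $\rho_{M_1}(\ncoh_\zeta)\,F_1(\hat{\zeta})$, an additional term quadratic in $D$ --- and that term is precisely the ratio $\rho_{M_1}^{F_1}(\coh_0)/\rho_M^F(\coh_0)$, i.e.\ the cross-propagator contributions across the cut that your bookkeeping omits. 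Your ``main obstacle'' paragraph gestures at the right computation, but having already declared the two vacuum factors equal, the argument as written yields the wrong answer. The repair is to replace the matching claim by the correct statement $D_1(\eta_1)=D(\eta')$ for the matched solution $\eta'$, and then to show explicitly that the sum over the basis converts the boundary-value problem for $\eta$ into the one for $\eta'$.
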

$\cH^\circ_{\partial M_1}$ denotes a dense subspace of the Hilbert space $\cH_{\partial M_1}$, where the amplitude map is well-defined. Here we take $\cH^\circ_{\partial M_1}$ to be the subspace spanned by the coherent states. With this we satisfy Axiom~(TO5b) of Appendix~\ref{sec:qobsaxioms}.

For the convenience of the reader we spell out the resulting composition rule for regions~$M$ and~$N$, recall Figure~\ref{fig:GBQFT-comp}. Thus, $M$ has boundary $\partial M=\Sigma_M\cup \Sigma$, and~$N$ has boundary $\partial N=\Sigma_N\cup \overline{\Sigma'}$. Here $\Sigma'$ is a copy of $\Sigma$. As before, $F$ is an observable in $M$ and $G$ in $N$ (not depicted in the figure). Then,
\begin{equation}
 \rho_{M\cup N}^{F\cp G}(\psi_M\tens\psi_N) \cdot c_{M,N;\Sigma,\Sigma'} =\sum_{k\in I} \rho_M^F(\psi_M\tens\zeta_k) \,\rho_N^G(\psi_N\tens\iota_{\Sigma'}(\zeta_k)).
 \label{eq:dualcompo}
\end{equation}
This is the anomaly-corrected version of the composition rule~(\ref{eq:stcompax}).

\section{Slice observables}
\label{sec:sliceobs}

\subsection{Classical slice observables}
\label{sec:csobs}

In the present field theoretic setting, observables are functions on field configurations in spacetime regions. The idea of slice observables is simple: These are observables that only depend on the field in a \emph{slice region} $\hat{\Sigma}$, i.e., an infinitesimal neighborhood of a hypersurface $\Sigma$. In~the case of equal-time hypersurfaces these observables are the analogues of the observables of~non-relativistic quantum mechanics. As the latter, their quantized versions form algebras, in~contrast to other observables in GBQFT.

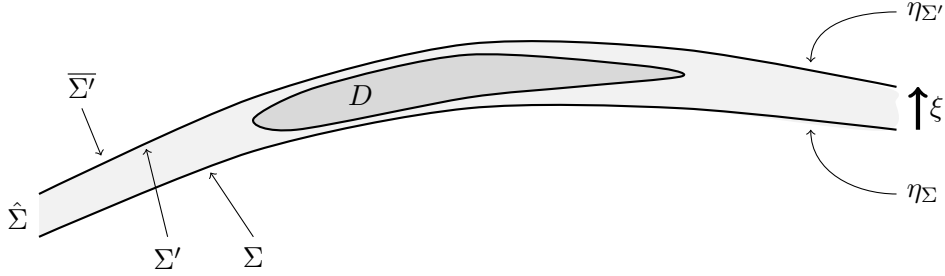
\begin{figure}[h!]
 \centering
 \begin{tikzpicture}[scale=2.85]
\filldraw [thick,yshift=0cm,color=gray!10] plot [smooth] coordinates {(0,0.4) (1,0.85)(2,1.1)(3,1.075) (3.92,0.9) (3.99,0.83)(3.92,0.7)(3,0.8)(2,0.8) (1,0.6)  (0,0.2)  };
\draw [thick,yshift=0cm] plot [smooth] coordinates {(0,0.4) (1,0.85)(2,1.1)(3,1.075) (4,0.9)};
\draw [thick,yshift=0cm] plot [smooth] coordinates {(0,0.2) (1,0.6)(2,0.8)(3,0.8) (4,0.7)};
\draw [thick,xshift=0cm,fill=gray!30] plot [smooth cycle] coordinates {(1,0.75) (1.2,0.7) (2,0.85) (2.5,0.9) (3,0.95) (2.8,1) (2,1.05)(1.3,0.9)};
\node at (1.5,0.86) {$D$};
\node at (-0.1,0.3) {$\hat{\Sigma}$};
\draw[->] (1,0.2) node [below] {$\Sigma$} -- (0.8,0.5);
\draw[->] (0.6,0.2) node [below] {$\Sigma'$} -- (0.5,0.61);
\draw[->] (0.2,0.8) node [above] {$\overline{\Sigma'}$} -- (0.29,0.58);
\draw (4,0.4) node [right] {$\eta_{\Sigma}$} edge[out=180,in=270,->] (3.6,0.7);
\draw (4,1.25) node [right] {$\eta_{\Sigma'}$} edge[out=180,in=90,->] (3.6,1.015);
\draw[ultra thick, ->] (4.1,0.7)  -- (4.1,0.9) node [midway,right] {$\xi$};
\end{tikzpicture}
 \caption{A slice observable $D$ in the slice region $\hat{\Sigma}$.}
 \label{fig:slice-obs}
\end{figure}

Slice observables were introduced to GBQFT in \cite[Section~4.10]{Oe:feynobs}. We recall the setup, with minor adjustments. Consider the slice region $\hat{\Sigma}$ associated to the hypersurface $\Sigma$. We~have the boundary decomposing into two copies of $\Sigma$, called $\Sigma$ and $\Sigma'$, with the second copy oppositely oriented, $\partial\hat{\Sigma}=\Sigma \cup \overline{\Sigma'}$, see Figure~\ref{fig:slice-obs}. Note that $\Sigma$ and $\Sigma'$ are really ``the same'' hypersurface, i.e., ``at the same place''. The separation drawn in Figure~\ref{fig:slice-obs} is purely for purposes of illustration and intuition.
Recall that the space of solution in the interior of the slice region, $L_{\hat{\Sigma}}\subseteq L_{\partial\hat{\Sigma}}=L_{\Sigma}\oplus L_{\overline{\Sigma'}}$ consists of those elements that take the form $(\phi,\phi)\in L_{\Sigma}\oplus L_{\overline{\Sigma'}}$. As for the space $K_{\hat{\Sigma}}$ of field configurations, it turns out that the definition yielding the right properties is $K_{\hat{\Sigma}}\defeq L_{\partial \hat{\Sigma}}=L_{\Sigma}\oplus L_{\overline{\Sigma'}}$. That is, the field configurations are given by no more than the germs on the (two-component) boundary. However, we allow a slice observable $D\colon K_{\hat{\Sigma}}\to\bC$ only to depend on one copy of $L_{\Sigma}$. This ensures that it encodes the right degrees of freedom, achieving the desired correspondence to the non-relativistic setting. Concretely, the \emph{slice observable} $D$ is determined by a map $D'\colon L_{\Sigma}\to\bC$. $D$ takes on $(\phi_{\Sigma},\phi_{\Sigma'})\in K_{\hat{\Sigma}}$ by definition the value
\begin{equation}
 D((\phi_{\Sigma},\phi_{\Sigma'}))\defeq D'\bigg(\frac{1}{2}(\phi_{\Sigma}+\phi_{\Sigma'})\bigg).
 \label{eq:sobseval}
\end{equation}
One can think of this as an ``averaging of the boundary values''. Another way to think about this is in terms of a decomposition $L_{\partial\hat{\Sigma}}=L_{\hat{\Sigma}}\oplus L_{\hat{\Sigma}}^{\rm c}$ into solutions in the interior, $L_{\hat{\Sigma}}$, and a~complementary space $L_{\hat{\Sigma}}^{\rm c}$ of elements of the form $(\phi,-\phi)$. The ``averaging'' in formula~(\ref{eq:sobseval}) is then really a projection onto the subspace $L_{\hat{\Sigma}}\subseteq L_{\partial\hat{\Sigma}}$ with the effect that $D$ only depends on this subspace.

If the slice observable $D$ is real \emph{linear} we can encode it through an element $\xi\in L_{\Sigma}$ such that, for all $\phi\in L_{\Sigma}$,\footnote{There is a difference in convention compared to equation (104) in~\cite{Oe:feynobs} corresponding to a relative minus sign.}
\begin{equation}
 D'(\phi)=2\omega_{\Sigma}(\xi,\phi).
 \label{eq:lsobsdef}
\end{equation}
Let this slice observable now act as a source, localized in $\hat{\Sigma}$. That is, we modify the action $S$ by adding $D$ to obtain $S+D$. As before, we call the affine space of solutions of the modified equations of motion $A_{\hat{\Sigma}}^D$. It is useful to think of these solutions as extending to the ambient spacetime around $\hat{\Sigma}$. Now, since the source is singular in the sense of being concentrated on a~hypersurface, so will be the solutions. In fact, using the relation~(\ref{eq:obssol}) we can calculate exactly how a solution ``jumps'' from one side of $\hat{\Sigma}$ to the other. On each side of $\hat{\Sigma}$ a solution $\eta\in A_{\hat{\Sigma}}^D$ behaves as a solution of the homogeneous equations of motion corresponding to the free action~$S$. Call these partial solutions $\eta_{\Sigma}$ and $\eta_{\Sigma'}$ respectively. Formally we treat these as elements of~$L_{\Sigma}$.

Let $\phi\in L_{\Sigma}$ be arbitrary. Then $(\phi,\phi)\in L_{\hat{\Sigma}}\subseteq K_{\hat{\Sigma}}$. Evaluating the observable $D$ on $(\phi,\phi)$ according to~(\ref{eq:sobseval}) and~(\ref{eq:lsobsdef}) yields
\begin{equation*}
 D((\phi,\phi))=D'(\phi)=2\omega_{\Sigma}(\xi,\phi).
\end{equation*}
On the other hand, formula~(\ref{eq:obssol}) is equally applicable in this case and leads to
\begin{equation*}
 D((\phi,\phi))=2\omega_{\partial\hat{\Sigma}}\left((\phi,\phi),(\eta_{\Sigma},\eta_{\Sigma'})\right) =2\omega_{\Sigma}(\phi,\eta_{\Sigma})+2\omega_{\overline{\Sigma'}}(\phi,\eta_{\Sigma'})=2\omega_{\Sigma}(\phi,\eta_{\Sigma}-\eta_{\Sigma'}).
\end{equation*}
Since $\phi$ was arbitrary we can conclude that $\eta\in A_{\hat{\Sigma}}^D$, i.e., $\eta$ is a solution of the inhomogeneous equations of motion iff
\begin{equation}
 \eta_{\Sigma'}-\eta_{\Sigma}=\xi.
 \label{eq:sobsdiff}
\end{equation}
The ``jump'' is precisely given by $\xi$. This is also illustrated in Figure~\ref{fig:slice-obs}. A particular choice of solution is given by
\begin{equation*}
 \eta=(\eta_{\Sigma},\eta_{\Sigma'})=\bigg({-}\frac{1}{2} \xi,\frac{1}{2} \xi\bigg).
\end{equation*}

Analogous to the previous discussion in~Section~\ref{sec:piwobs}, once we have the result~(\ref{eq:sobsdiff}) we drop the restriction to the real case. That is, we allow $D$ to be complex valued (on real solutions) and correspondingly $\xi$ to be complex, i.e., $\xi\in L_{\Sigma}^\bC$, extending the applicability of formula~(\ref{eq:sobsdiff}). Moreover, we require $D$ to be defined and holomorphic on the complexified configuration space by requiring $D'$ to be defined and holomorphic on $L_{\Sigma}^\bC$.

As an example, it turns out that the field value at a point, besides defining an ordinary (spacetime) observable (as in~Section~\ref{sec:intro_pertlsz}), can also be implemented as a slice observable. We~make this concrete in the previously considered case of Klein--Gordon theory on an equal-time hyper\-sur\-face $\Sigma_t$ in Minkowski space. Thus fix $x\in\Sigma_t$ and in terms of the mode expansion~(\ref{eq:kgmodes}) let\footnote{Strictly speaking the solution $\xi$ is not normalizable in the space of germs of solutions viewed as a Hilbert space. To remedy this one would typically smear it with source functions in spacetime.}
\begin{gather}
 \xi^{\rm a}(k)=\xi^{{\rm b}}(k)=\im \exp(\im(E t-k x)).
 \label{eq:xipoint}
\end{gather}
Then, define the corresponding linear observable $D\colon K_{\hat{\Sigma}_t}\to\R$ by equation~(\ref{eq:lsobsdef}). With the explicit form of the symplectic structure~(\ref{eq:sfkgm}) we find
\begin{equation}
 D'(\phi)=2\omega_{\Sigma_t}(\xi,\phi)=\phi(t,x).
 \label{eq:dpoint}
\end{equation}

\subsection{Quantum algebra of slice observables}
\label{sec:qsobs}

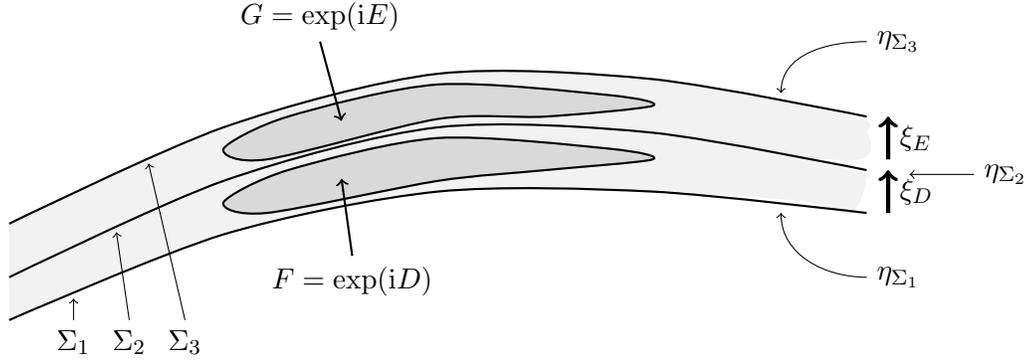
\begin{figure}
 \centering
 \begin{tikzpicture}[scale=2.85]
\filldraw [thick,yshift=0.25cm,color=gray!10] plot [smooth] coordinates {(0,0.4) (1,0.85)(2,1.1)(3,1.075) (3.92,0.9) (4,0.83)(3.925,0.7)(3,0.8)(2,0.8) (1,0.6)  (0,0.1)  };
\draw [thick,yshift=0.25cm] plot [smooth] coordinates {(0,0.4) (1,0.85)(2,1.1)(3,1.075) (4,0.9)};
\draw [thick,yshift=0.25cm,fill=gray!30] plot [smooth cycle] coordinates {(1,0.75) (1.2,0.7) (2,0.889) (2.5,0.9) (3,0.95) (2.8,1) (2,1.05)(1.3,0.9)};
\filldraw [thick,yshift=0cm,color=gray!10] plot [smooth] coordinates {(0,0.4) (1,0.85)(2,1.1)(3,1.078)  (3.92,0.9) (3.99,0.85)(3.92,0.7)(3.7,0.76)(3,0.8)(2,0.8) (1,0.6)  (0,0.2)  };
\draw [thick,yshift=0cm] plot [smooth] coordinates {(0,0.4) (1,0.85)(2,1.1)(3,1.075) (4,0.9)};
\draw [thick,yshift=0cm] plot [smooth] coordinates {(0,0.2) (1,0.6)(2,0.8)(3,0.8) (4,0.7)};
\draw [thick,xshift=0cm,fill=gray!30] plot [smooth cycle] coordinates {(1,0.75) (1.2,0.7) (2,0.85) (2.5,0.9) (3,0.95) (2.8,1) (2,1.05)(1.3,0.9)};
\draw[->] (0.3,0.2) node [below] {$\Sigma_1$} -- (0.3,0.3);
\draw[->] (0.56,0.2) node [below] {$\Sigma_2$} -- (0.5,0.61);
\draw[->] (0.82,0.2) node [below] {$\Sigma_3$} -- (0.65,0.93);
\draw (4,0.4) node [right] {$\eta_{\Sigma_1}$} edge[out=180,in=270,->] (3.6,0.7);
\draw (4,1.5) node [right] {$\eta_{\Sigma_3}$} edge[out=180,in=90,->] (3.6,1.26);
\draw[ultra thick, ->] (4.1,0.7)  -- (4.1,0.9) node [midway,right] {$\xi_D$};
\draw[yshift=0.25cm,ultra thick, ->] (4.1,0.7)  -- (4.1,0.9) node [midway,right] {$\xi_E$};
\draw[thick,<-]  (1.55,1.13) -- (1.45,1.5) node [above] {$G=\exp(\im E)$};
\draw[thick,<-]  (1.55,0.86) -- (1.6,0.5) node [below] {$F=\exp(\im D)$};
\draw (4.5,0.88) node [right] {$\eta_{\Sigma_2}$} edge[out=180,in=0,->] (4.2,0.88);
\end{tikzpicture}
 \caption{Composition of Weyl slice observables $F=\exp(\im D)$ and $G=\exp(\im E)$.}
 \label{fig:slice-obs-comp}
\end{figure}

Classically, slice observables obviously form a commutative algebra, simply by viewing them as holomorphic maps $L_{\Sigma}^\bC\to\bC$ $\big($or equivalently induced maps $K_{\hat{\Sigma}}^\bC\to\bC\big)$. In the quantum theory the only notion of composition is mediated by spacetime composition. A slice region $\hat{\Sigma}$ can be glued to a copy of itself, resulting in the very same slice region. Thus, a slice observable can be composed with another slice observable defined on the same slice region. What is more, we~shall show that the resulting object, which is again associated to the same slice region, is in fact again a slice observable. That is, upon quantization, the slice observables form an algebra. The~product is non-commutative with the order reflecting the spatio-temporal arrangement of the gluing process, see Figure~\ref{fig:slice-obs-comp}. In the particular case of equal-time hypersurfaces, we recover the product reflecting the temporal order of observables, as in non-relativistic quantum mechanics.

The quantization of observables is ultimately governed by formula~(\ref{eq:piwobs}) and its implicit gene\-ra\-li\-za\-tion for non-Weyl observables. Thus, to show that the composition of two slice observables is a specific slice observable, we have to show that inserting the former observables into this formula yields the exact same results as inserting the latter observable. This has to be true not only in the presence of any vacuum (i.e., any polarization on the boundary) but also in the presence of arbitrary other observables, as long as they are localized outside the slice region $\hat{\Sigma}$. Due to their generating nature, it is sufficient to consider this for the case that all observables are Weyl observables.

Thus, consider two copies of the slice region $\hat{\Sigma}$, glued to form another copy, see Figure~\ref{fig:slice-obs-comp}. To keep track of the involved hypersurfaces (all copies of $\Sigma$, but with infinitesimal transversal displacement) we label them $\Sigma_1$, $\Sigma_2$, $\Sigma_3$. Thus, the first component slice region has boundary $\Sigma_1\cup \overline{\Sigma_2}$ and the second one $\Sigma_2\cup\overline{\Sigma_3}$. Now consider linear maps $D'\colon L_{\Sigma}^\bC\to\bC$ and \mbox{$E'\colon L_{\Sigma}^\bC\to\bC$} that determine linear slice observables $D\colon L_{\partial\hat{\Sigma}}^\bC\to\bC$ and $E\colon L_{\partial \hat{\Sigma}}^\bC\to\bC$ via formula~(\ref{eq:sobseval}). We~denote the corresponding elements of $L_{\Sigma}^\bC$ via formula~(\ref{eq:lsobsdef}) by $\xi_D$ and $\xi_E$ respectively. Also, let $F\defeq\exp(\im D)$ and $G\defeq\exp(\im E)$ denote the corresponding Weyl observables. We locate $F$ between $\Sigma_1$ and $\Sigma_2$, and $G$ between~$\Sigma_2$ and~$\Sigma_3$. In line with our previous use of this notation we write~$G\cp F$ for the spacetime composite of the Weyl observables on $K_{\hat{\Sigma}}^\bC\cup K_{\hat{\Sigma}}^\bC$. However, we will wish to identify $K_{\hat{\Sigma}}^\bC\cup K_{\hat{\Sigma}}^\bC$ with $K_{\hat{\Sigma}}^\bC$ itself, in line with the gluing identity $\hat{\Sigma}\cup\hat{\Sigma}=\hat{\Sigma}$ for the underlying slice regions.

Consider now a solution $\eta$ in a neighborhood of $\hat{\Sigma}$, of the equations of motion modified by adding $D$ and $E$ to the action. The analogue of equation~(\ref{eq:sobsdiff}) is now given by two corresponding equations, with obvious notation,
\begin{equation}
 \eta_{\Sigma_2}-\eta_{\Sigma_1}=\xi_D,\qquad\text{and}\qquad
 \eta_{\Sigma_3}-\eta_{\Sigma_2}=\xi_E.
 \label{eq:eta_2}
\end{equation}
An immediate consequence is
\begin{equation}
 \eta_{\Sigma_3}-\eta_{\Sigma_1}=\xi_D+\xi_E=\xi_{D+E}.
 \label{eq:compssol}
\end{equation}

We proceed to evaluate the analogue of the right-hand side of expression~(\ref{eq:piwobs}) using formula~(\ref{eq:sobseval}) for each of $D$ and $E$. Crucially, the following computation arises as a factor in any evaluation of~(\ref{eq:piwobs}), irrespective of the boundary polarization and any additional Weyl obser\-vables present outside $\hat{\Sigma}$,
\begin{align*}
 \exp\bigg(\frac{\im}{2} (E + D)(\eta)\bigg)
 & = \exp\bigg(\frac{\im}{2}\bigg(E'\bigg(\frac{1}{2}(\eta_{\Sigma_2}+\eta_{\Sigma_3})\bigg) +D'\bigg(\frac{1}{2}(\eta_{\Sigma_1}+\eta_{\Sigma_2})\bigg)\bigg)\bigg)
 \\
 & = \exp\big(\im\omega_{\Sigma}(\xi_E,\eta_{\Sigma_2})+\im\omega_{\Sigma}(\xi_D,\eta_{\Sigma_2})\big)
 \\
 &= \exp\big(\im\omega_{\Sigma}(\xi_E+\xi_D,\eta_{\Sigma_2})\big) .
\end{align*}
Substituting in the last expression $\eta_{\Sigma_2}$ with $\frac12 \big( \xi_D + \eta_{\Sigma_1}+ \eta_{\Sigma_3} - \xi_E\big)$, a consequence of \eqref{eq:eta_2}, we~obtain
\begin{align*}
 \exp\bigg(\frac{\im}{2} (E + D)(\eta)\bigg)
 & = \exp(\im\omega_{\Sigma}(\xi_E,\xi_D))
 \exp\bigg(\frac{\im}{2}\omega_{\Sigma}\big(\xi_E+\xi_D,\eta_{\Sigma_1}+\eta_{\Sigma_3}\big)\bigg)
 \\
 & = \exp(\im\omega_{\Sigma}(\xi_E,\xi_D))
 \exp\bigg(\frac{\im}{2} (E'+D')\bigg(\frac{1}{2}\big(\eta_{\Sigma_1}+\eta_{\Sigma_3}\big)\bigg)\bigg).
\end{align*}
We notice that the resulting expression does not contain any explicit reference to the ``intermediate'' value $\eta_{\Sigma_2}$. In particular, we may consistently set $K_{\hat{\Sigma}}^\bC\cup K_{\hat{\Sigma}}^\bC=K_{\hat{\Sigma}}=L_{\partial \hat{\Sigma}}$, as desired. Furthermore, the resulting dependence on the values $\eta_{\Sigma_1}$ and $\eta_{\Sigma_3}$ on the boundary of $\hat{\Sigma}$ is exac\-tly as in formula~(\ref{eq:sobseval}). That is, the composite $G\cp F$ of the Weyl slice observables, upon quantization, i.e., insertion into formula~(\ref{eq:piwobs}) behaves as a single slice observable that we shall call $G\qp F$. In fact, it behaves even as a Weyl slice observable, up to a phase factor. Taking also into account relation~(\ref{eq:compssol}) we note that this resulting Weyl slice observable coincides with the quantization of the classical product of $G$ and $F$, i.e., with $G\cdot F$, up to the phase factor. In~formal notation,
\begin{equation}
 G\qp F=\exp\left(\im\omega_{\Sigma}(\xi_E,\xi_D)\right) G\cdot F.
 \label{eq:weylrel}
\end{equation}
It should not come as a surprise that the relations we obtain are precisely the \emph{Weyl relations}. In particular, the complex vector space spanned by the Weyl observables becomes a~non-commutative algebra in this way. This algebra structure extends to all other slice observables that we can generate by differentiating with respect to linear observables.
The new product on the elements of this algebra viewed as functions on the phase space $L_{\Sigma}$ is also precisely the \emph{Groenewold--Moyal product}~\cite{Gro:principlesqm,Moy:quantstat}.
We denote the quantum algebra of slice observables associated to a hypersurface $\Sigma$ by $\qsoa_{\Sigma}$.

\subsection{Vacuum correlation function of a Weyl slice observable}
\label{sec:vevweylslice}

The vacuum correlation function for a single Weyl slice observable in a slice region is easily obtained. Due to its importance for later considerations, we exhibit it explicitly. We assume the same setting as in~Section~\ref{sec:csobs} in terms of a hypersurface $\Sigma$, the corresponding slice region~$\hat{\Sigma}$ (recall Figure~\ref{fig:slice-obs}), as well as a linear slice observable $D$ determined by a complex linear map $D'\colon L_{\Sigma}^{\bC}\to\bC$ in terms of equation~(\ref{eq:sobseval}), with the latter corresponding to $\xi\in L_{\Sigma}^{\bC}$ via equation~(\ref{eq:lsobsdef}). Let $F=\exp(\im D)$ be the corresponding Weyl slice observable. We assume the vacua on the two sides ($\Sigma$ and $\overline{\Sigma'}$) of the slice region $\hat{\Sigma}$ given in terms of transversal polarizations~$L_{\Sigma}^\pol$ and $L_{\Sigma}^\cpol$. That is, the space $L_{\Sigma}^{\bC}$ of complexified germs of solutions on $\Sigma$ decomposes as a direct sum of Lagrangian subspaces $L_{\Sigma}^{\bC}=L_{\Sigma}^{\pol} \oplus L_{\Sigma}^{\cpol}$. In terms of elements we write this decomposition as $\xi=\xi^\pol+\xi^\cpol$.

To evaluate formula~(\ref{eq:piwobs}) for the vacuum correlation function we need to find the solution $\eta\in A_{\hat{\Sigma}}^{D,\bC}$ of the inhomogeneous equations of motion determined by $D$ satisfying the boundary conditions, recall Section~\ref{sec:csobs}. The latter are given by the transversal polarizations $L_{\Sigma}^\pol$ and $L_{\Sigma}^\cpol$, i.e., $\eta_{\Sigma}\in L^\pol_{\Sigma}$ and $\eta_{\Sigma'}\in L^\cpol_{\Sigma}$. With relation~(\ref{eq:sobsdiff}) we obtain, $\eta_{\Sigma}=-\xi^\pol$ and $\eta_{\Sigma'}=\xi^\cpol$. We may now evaluate~(\ref{eq:piwobs}) to get
\begin{align}
 \rho_{\hat{\Sigma}}^F\big(W^\pol\big)&=\exp\bigg(\frac{\im}{2} D(\eta_{\Sigma},\eta_{\Sigma'})\bigg)
 =\exp\bigg(\frac{\im}{2} D'\bigg(\frac{1}{2} (\eta_{\Sigma}+\eta_{\Sigma'})\bigg)\bigg)\nonumber
 \\
 &=\exp\bigg(\frac{\im}{2} \omega_{\Sigma}\big(\xi,\xi^\cpol -\xi^\pol\big)\bigg)
 =\exp\big({-}\im\, \omega_{\Sigma}\big(\xi^\cpol,\xi^\pol\big)\big).
 \label{eq:vevweyl}
\end{align}
In the case that $L_{\Sigma}^\pol$ is a K\"ahler polarization $L_{\Sigma}^+$ with conjugate $L_{\Sigma}^\cpol=L_{\Sigma}^-$ we can rewrite this in terms of the inner product~(\ref{eq:stdjip}) by using the explicit form~(\ref{eq:projkp}) of the projectors onto the Lagrangian subspaces,
\begin{equation}
 \rho_{\hat{\Sigma}}^F\big(W^\pol\big)=\exp\bigg({-}\frac{1}{4} \{\xi,\xi\}_{\Sigma}\bigg).
 \label{eq:vevweylk}
\end{equation}

\subsection{Vacuum correlation functions of quadratic observables}
\label{sec:propslice}

In order to illustrate the versatility of slice observables we show in the following how they can be used to derive vacuum correlation functions of quadratic observables. We also consider the concrete case of the Feynman propagator of Klein--Gordon quantum field theory.

We are interested in the product observable $D=D_1 \cdot D_2$ of the linear observables $D_1$ and~$D_2$. Its vacuum correlation function is given by formula~(\ref{eq:twopoint}). To evaluate it, we need to find the inhomogeneous solutions $\eta_1$ and~$\eta_2$ of the equations of motion modified by $D_1$ and~$D_2$ respectively and satisfying the boundary conditions corresponding to the vacuum. To this end we shall assume that $D_1$ and~$D_2$ are supported on disjoint spacetime regions, separated by a~hypersurface $\Sigma$. For ease of language we shall refer to the side of $\Sigma$, where $D_1$ is supported as the ``past'' and the side where $D_2$ is supported as the ``future''. This suggests that $\Sigma$ is a~spacelike hypersurface, although this might not be the case. We shall take the orientation of~$\Sigma$ to correspond to its ``past'' side while that of its ``future'' side is given by $\overline{\Sigma}$. Let the pair~$L_{\Sigma}^\pol$ and~$L_{\Sigma}^\cpol$ of transversal Lagrangian subspaces of~$L_{\Sigma}^{\bC}$ determine the vacuum in the ``past'' and ``future'' of $\Sigma$ respectively.

For the moment we assume furthermore that $D_1$ is given by a slice observable on a hypersurface $\Sigma_1$ to the past of $\Sigma$ and such that the spaces $L_{\Sigma}$ and $L_{\Sigma_1}$ of germs of solutions are in one-to-one correspondence as are the vacua. We also make the corresponding assumption for $D_2$. Thus, there are elements $\xi_i\in L_{\Sigma_i}^\bC$ determining the observables $D_1$ and $D_2$ via equations~(\ref{eq:sobseval}) and~(\ref{eq:lsobsdef}). The solution $\eta_1$ then consists of an ``early'' and ``late time'' part $\eta_{1,<}$ and $\eta_{1,>}$, ``before'' and ``after'' hypersurface $\Sigma_1$ respectively. Their relation as homogeneous solutions is given by $\eta_{1,>}=\eta_{1,<}+\xi_1$, compare Section~\ref{sec:csobs}. Moreover, the boundary conditions are given by, $\eta_{1,<}\in L_{\Sigma_1}^{\pol}$ and $\eta_{1,>}\in L_{\Sigma_1}^\cpol$. Thus, $\eta_{1,<}=-\xi_1^\pol$ and $\eta_{1,>}=\xi_1^\cpol$. The considerations for solution $\eta_2$ are analogous, and we obtain, $\eta_{2,<}=-\xi_2^\pol$ and $\eta_{2,>}=\xi_2^\cpol$. At this point we may notice that we can identify all the elements $\eta_1$, $\eta_2$, $\xi_1$, $\xi_2$ as well as their polarized components with cor\-res\-pon\-ding elements of $L_{\Sigma}^\bC$, and shall do so without modifying notation. In this way the locations of the hypersurfaces $\Sigma_1$ and $\Sigma_2$ become irrelevant. Due to linearity we may then replace $D_1$ and~$D_2$ with linear combinations of slice observables located on different hypersurfaces and thus with arbitrary linear observables (still located respectively in the ``past'' and ``future'' of $\Sigma$, however). We obtain for the vacuum correlation function
\begin{align}
 \rho_M^{D_1 D_2}\big(W^\pol\big) &= -\frac{\im}{2}\big(D_1(\eta_{2,<})+D_2(\eta_{1,>})\big) = -\im \Big({-}\omega_{\Sigma}\big(\xi_1,\xi_2^\pol\big)+\omega_{\Sigma}\big(\xi_2,\xi_1^\cpol\big)\Big)\nonumber
 \\
& = 2 \im \omega_{\Sigma}\big(\xi_1^\cpol,\xi_2^\pol\big).
 \label{eq:twopointord}
\end{align}

{\sloppy
We now assume in addition that the polarizations are conjugate K\"ahler polarizations, \mbox{$L_{\Sigma}^\pol=L_{\Sigma}^+$} and $L_{\Sigma}^\cpol=L_{\Sigma}^-$.
Then, in order to re-express $\xi_1$ and $\xi_2$ in terms of $D_1$ and $D_2$ we may use an~orthonormal basis $\{u_k\}_{k\in I}$ of $L_\Sigma^+$ with respect to the inner product~(\ref{eq:stdipc}). Thus,
\begin{align*}
 \rho_M^{D_1 D_2}\big(W^\pol\big)
 &= \sum_k 2 \im \omega_{\Sigma}(\xi_1^-,u_k)\, \big(u_k,\xi_2^+\big)_{\Sigma}
 = \sum_k 2 \im \omega_{\Sigma}(\xi_1^-,u_k)\, 4\im\omega_{\Sigma}\big(\overline{u}_k,\xi_2^+\big)
 \\
& = \sum_k 2 \im \omega_{\Sigma}(\xi_1,u_k)\, 4\im\omega_{\Sigma}(\overline{u}_k,\xi_2)
 = 2 \sum_k D_1(u_k) D_2(\overline{u}_k).
\end{align*}}
This recovers and generalizes well known formulas from the literature, such as DeWitt's vacuum correlation function for the energy-momentum tensor~\cite{Dew:qftcurved}.
Alternatively, we may use other completeness relations. Using~(\ref{eq:cohcompl}) we obtain
\begin{align*}
 \rho_M^{D_1 D_2}\big(W^\pol\big)
 &= \frac{1}{2}\{\xi_1^-,\xi_2^+\}_{\Sigma}
 = \frac{1}{4}\int_{\hat{L}_{\Sigma}} \{\xi_1^-,\phi\}_{\Sigma} \{\phi,\xi_2^+\}_{\Sigma}\,\xd\nu_{\Sigma}(\phi)
 \\
 &= -\!\int_{\hat{L}_{\Sigma}}\!\! 2\omega_{\Sigma}(\xi_1^-,\phi^+)\, 2\omega_{\Sigma}\big(\phi^-,\xi_2^+\big)\,\xd\nu_{\Sigma}(\phi)
 =\! \int_{\hat{L}_{\Sigma}}\!\! 2\omega_{\Sigma}(\xi_1,\phi^+)\, 2\omega_{\Sigma}(\xi_2,\phi^-)\,\xd\nu_{\Sigma}(\phi)
 \\
 &= \int_{\hat{L}_{\Sigma}} D_1(\phi^+)\, D_2(\phi^-)\,\xd\nu_{\Sigma}(\phi).
\end{align*}

We proceed to focus on the particular quadratic observable given by the two-point function of Klein--Gordon theory in Minkowski spacetime. Consider spacetime points $(t_1,x_1)$ and $(t_2,x_2)$ and suppose $t_2> t_1$. We set up linear observables $D_1$ and $D_2$ such that
\begin{equation*}
 D_i(\phi)=\phi(t_i,x_i).
\end{equation*}
For the hypersurface $\Sigma$ we may take any equal-time hypersurface at time $t$ with $t_1<t<t_2$. By~comparison with equation~(\ref{eq:dpoint}) we find that the elements $\xi_1,\xi_2\in L_{\Sigma}$ determining the observables~$D_1$, $D_2$ take the form~(\ref{eq:xipoint}).
Inserting this into expression~(\ref{eq:twopointord}) we obtain the familiar 2-point function and Feynman propagator (for $t_1< t_2$),
\begin{align*}
 \rho_M^{D_1 D_2}\big(W^\pol\big)& = -\im G_F((t_1,x_1),(t_2,x_2)) = \int\frac{\xd^3 k}{(2\pi)^3 2E}\, \xi_1(k) \overline{\xi_2(k)}
 \\
 &= \int\frac{\xd^3 k}{(2\pi)^3 2E}\, {\rm e}^{\im (E(t_1-t_2)-k(x_1-x_2))}.
\end{align*}
Removing the constraint on the time ordering of arguments, we have the usual formula,
\begin{gather*}
 G_F((t_1,x_1),(t_2,x_2))
 \\ \qquad
 {}= \im \int\frac{\xd^3 k}{(2\pi)^3 2E}\, \big(\theta(t_2-t_1) {\rm e}^{\im (E(t_1-t_2)-k(x_1-x_2))} + \theta(t_1-t_2) {\rm e}^{\im (E(t_2-t_1)-k(x_2-x_1))}\big).
\end{gather*}

\section{Slice observables in K\"ahler quantization}
\label{sec:kobs}

\subsection{Quantized slice observables as operators}
\label{sec:qsobsopk}

In GBQFT, observables give rise to correlation functions, but in general not to operators on some particular space. In the K\"ahler quantization setting (Section~\ref{sec:kquant}), the slice observables are an exception. Consider a hypersurface $\Sigma$ with associated space of germs $L_{\Sigma}$, symplectic form $\omega_{\Sigma}$, and K\"ahler polarization $L_{\Sigma}^+\subseteq L_{\Sigma}^\bC$. Denote by $\hat{\Sigma}$ the slice region associated to $\Sigma$, with boundary $\partial\hat{\Sigma}=\Sigma\cup \overline{\Sigma}$. We recall the relation between the inner product on $\cH_{\Sigma}$ and the amplitude map~$\rho_{\hat{\Sigma}}$ (Axiom~(T3x) in Appendix~\ref{sec:qobsaxioms}),
\begin{equation}
\langle\eta, \psi\rangle_{\Sigma}=\rho_{\hat{\Sigma}}(\psi\tens\iota_{\Sigma}(\eta)).
 \label{eq:ipampl}
\end{equation}
An observable $F\colon K_{\hat{\Sigma}}^\bC=L_{\partial\hat{\Sigma}}^\bC\to\bC$ determines an operator $\hat{F}$ on the state space $\cH_{\Sigma}$ via
\begin{equation}
 \big\langle\eta,\hat{F} \psi\big\rangle_{\Sigma}=\rho_{\hat{\Sigma}}^F(\psi\tens\iota_{\Sigma}(\eta)).
 \label{eq:opsobs}
\end{equation}
Recall Figure~\ref{fig:slice-obs} for an illustration.

Crucially, this definition brings into exact correspondence the composition of so defined ope\-ra\-tors with the spacetime composition of the underlying slice observables, compare Figure~\ref{fig:slice-obs-comp}. That is
\begin{align}
 \big\langle\eta,\hat{G}\hat{F} \psi\big\rangle_{\Sigma}
 &=\sum_{k\in I}\big\langle\eta, \hat{G}\zeta_k\big\rangle_{\Sigma} \big\langle\zeta_k,\hat{F}\psi\big\rangle_{\Sigma}
 =\sum_{k\in I} \rho^G_{\hat{\Sigma}}(\zeta_k\tens\iota_{\Sigma}(\eta))
 \rho^F_{\hat{\Sigma}}(\psi\tens\iota_{\Sigma}(\zeta_k))\nonumber
 \\
 &=\rho^{G\cp F}_{\hat{\Sigma}}(\psi\tens\iota_{\Sigma}(\eta)).
 \label{eq:opcomp}
\end{align}
The last equality is due to Theorem~\ref{thm:stdcompo}, in the form of equation~(\ref{eq:dualcompo}), with no anomaly present in this case, i.e., $c=1$. As before, we have adopted the notation $G\cp F$ for the spacetime composite of the slice observables $G$ and $F$, arranged as in Figure~\ref{fig:slice-obs-comp}. Again, $\{\zeta_k\}_{k\in I}$ denotes an orthonormal basis of $\cH_{\Sigma}$.

Suppose that $F$ is a Weyl slice observable $F=\exp(\im D)$ with $D$ linear and thus determined by~equation~(\ref{eq:sobseval}) from a linear map $D'\colon L_{\Sigma}^\bC\to\bC$. We let $D'$ be determined by an element $\xi\in L_{\Sigma}^\bC$ according to equation~(\ref{eq:lsobsdef}).
\begin{prop}
 \label{prop:weylactk}
 The operator $\hat{F}$ on $\cH_{\Sigma}$ acts on coherent states as
 \begin{equation}
 \big(\hat{F} \coh_\tau\big)(\phi)
 =\exp\bigg(\frac12\{\tau,\phi\}_{\Sigma}+\frac12\{\xi,\phi\}_{\Sigma} -\frac12\{\tau,\xi\}_{\Sigma}-\frac14\{\xi,\xi\}_{\Sigma}\bigg).
 \label{eq:weylactk}
 \end{equation}
\end{prop}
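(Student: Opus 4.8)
The plan is to unfold the operator $\hat F$ back into a correlation function on the slice region via its definition, and then read the value off from the factorization Theorem~\ref{thm:stdcorrfact}. First, by the reproducing property~(\ref{eq:reprod}) and the defining relation~(\ref{eq:opsobs}) of $\hat F$,
$\big(\hat F\coh_\tau\big)(\phi)=\langle\coh_\phi,\hat F\coh_\tau\rangle_\Sigma=\rho^F_{\hat\Sigma}\big(\coh_\tau\tens\iota_\Sigma(\coh_\phi)\big)$.
Since $\iota_\Sigma(\coh_{\Sigma,\phi})=\coh_{\overline\Sigma,\phi}$ and the tensor isomorphism carries $\coh_{\Sigma,\tau}\tens\coh_{\overline\Sigma,\phi}$ to the coherent state on $\partial\hat\Sigma=\Sigma\cup\overline\Sigma$ labelled by $(\tau,\phi)\in L_\Sigma\oplus L_{\overline\Sigma}=L_{\partial\hat\Sigma}$, the problem is reduced to evaluating the correlation function of the single Weyl observable $F$ on the slice region with one boundary coherent state.

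The next step is to apply Theorem~\ref{thm:stdcorrfact} with $M=\hat\Sigma$. Since that theorem is phrased for normalized coherent states, I would first rescale, $\coh_{(\tau,\phi)}=\exp\!\big(\frac14\{(\tau,\phi),(\tau,\phi)\}_{\partial\hat\Sigma}\big)\ncoh_{(\tau,\phi)}$, noting that $\{(\tau,\phi),(\tau,\phi)\}_{\partial\hat\Sigma}=\{\tau,\tau\}_\Sigma+\{\phi,\phi\}_\Sigma$ because $\omega_{\overline\Sigma}=-\omega_\Sigma$ and $J_{\overline\Sigma}=-J_\Sigma$. The theorem then produces three factors. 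The vacuum factor $\rho^F_{\hat\Sigma}(\coh_0)$ is exactly the vacuum correlation function of a Weyl slice observable already computed in~(\ref{eq:vevweylk}), namely $\exp\big({-}\frac14\{\xi,\xi\}_\Sigma\big)$. For the other two factors one needs the decomposition $L_{\partial\hat\Sigma}=L_{\hat\Sigma}\oplus J_{\partial\hat\Sigma}L_{\hat\Sigma}$ with $L_{\hat\Sigma}=\{(\chi,\chi)\}$ and $J_{\partial\hat\Sigma}=J_\Sigma\oplus(-J_\Sigma)$, so that $J_{\partial\hat\Sigma}L_{\hat\Sigma}=\{(\mu,-\mu)\}$; writing $(\tau,\phi)$ in this decomposition and identifying $L_{\hat\Sigma}$ with $L_\Sigma$ gives interior part $\frac12(\tau+\phi)$ and ``imaginary'' part $\frac12 J_\Sigma(\phi-\tau)$.

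Feeding this into the amplitude formula~(\ref{eq:amplk}), using $g_{\partial\hat\Sigma}\big((\chi,\chi),(\nu,\nu)\big)=2g_\Sigma(\chi,\nu)$, and multiplying by the rescaling factor above, I expect the $g_\Sigma(\tau,\tau)$ and $g_\Sigma(\phi,\phi)$ contributions to cancel, leaving the amplitude factor $\exp\big(\frac12 g_\Sigma(\tau,\phi)+\im\omega_\Sigma(\tau,\phi)\big)=\exp\big(\frac12\{\tau,\phi\}_\Sigma\big)$. Finally, in the semiclassical factor $F(\hat\zeta)$, the interior solution $\hat\zeta=\zeta^\rprt-\im\zeta^\iprt$ corresponds, via the explicit projectors~(\ref{eq:projkp}), to $P^-_\Sigma(\tau)+P^+_\Sigma(\phi)\in L_\Sigma^\bC$; since $D$ acts on $L_{\hat\Sigma}$ by $D((\chi,\chi))=2\omega_\Sigma(\xi,\chi)$, one gets $F(\hat\zeta)=\exp\big(2\im\omega_\Sigma\big(\xi,P^-_\Sigma(\tau)+P^+_\Sigma(\phi)\big)\big)$, and a short projector computation gives $2\im\omega_\Sigma(\xi,P^+_\Sigma(\phi))=\frac12\{\xi,\phi\}_\Sigma$ and $2\im\omega_\Sigma(\xi,P^-_\Sigma(\tau))=-\frac12\{\tau,\xi\}_\Sigma$. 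Multiplying the three factors then yields~(\ref{eq:weylactk}).

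The main obstacle is sheer bookkeeping: keeping apart the two competing complex structures on $L_\Sigma^\bC$ (the formal complexification versus the intrinsic $J_\Sigma$), the identification of $L_{\hat\Sigma}$ with $L_\Sigma$, the sign changes coming from $\omega_{\overline\Sigma}=-\omega_\Sigma$, and the repackaging of scattered $\omega_\Sigma$- and $g_\Sigma$-terms into the three sesquilinear $\{\cdot,\cdot\}_\Sigma$-pairings. Conceptually, however, everything is transparent: the $-\frac14\{\xi,\xi\}_\Sigma$ term is the vacuum correlation, the $\frac12\{\tau,\phi\}_\Sigma$ term is precisely $\log\langle\coh_\phi,\coh_\tau\rangle_\Sigma$ by~(\ref{eq:ipcohk}) (so that, up to the vacuum normalization, $\hat F$ carries a coherent state to a multiple of another coherent state, the hallmark of a Weyl operator), and the cross terms $\frac12\{\xi,\phi\}_\Sigma-\frac12\{\tau,\xi\}_\Sigma$ record the value of the classical observable on the semiclassical boundary-to-interior solution.
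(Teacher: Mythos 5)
Your proposal is correct and follows essentially the same route as the paper: unfold $\hat{F}$ into the slice-region correlation function via the reproducing property~(\ref{eq:reprod}) and the definition~(\ref{eq:opsobs}), apply the factorization Theorem~\ref{thm:stdcorrfact}, and identify the three factors as the coherent-state inner product~(\ref{eq:ipcohk}), the classical value $F\big(\hat{(\tau,\phi)}\big)$ with $\hat{(\tau,\phi)}=P^-_\Sigma(\tau)+P^+_\Sigma(\phi)$, and the vacuum correlation~(\ref{eq:vevweylk}). The only (harmless) difference is that the paper reads the free-amplitude factor off directly as $\langle\coh_\phi,\coh_\tau\rangle_\Sigma$ via Axiom~(T3x), whereas you recompute it from~(\ref{eq:amplk}) together with the normalization factors; both give $\exp\big(\tfrac12\{\tau,\phi\}_\Sigma\big)$.
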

\begin{proof}
 By equations~(\ref{eq:reprod}),~(\ref{eq:ipampl}) and~(\ref{eq:opsobs}) as well as equations~(\ref{eq:wobsfact}) and~(\ref{eq:sfqvev}) of Theorem~\ref{thm:stdcorrfact} we have
 \begin{equation*}
 \big(\hat{F} \coh_\tau\big)(\phi)
 =\big\langle \coh_\phi,\hat{F}\coh_\tau\big\rangle_{\Sigma}
 =\rho_{\hat{\Sigma}}^F(\coh_\tau\tens\coh_\phi)
 =\langle\coh_\phi,\coh_\tau\rangle_{\Sigma} F\big(\hat{(\tau,\phi)}\big) \exp\bigg(\frac{\im}{2} D(\eta)\bigg).
 \end{equation*}
 (Note that in our implicit notation $\iota_{\Sigma}(\coh_{\phi})=\coh_{\phi}$.)
 Here, $\eta\in A_{\hat{\Sigma}}^{D,\bC}\cap L_{\partial\hat{\Sigma}}^+$ is unique and the last factor is given by equation~(\ref{eq:vevweylk}), compare Section~\ref{sec:vevweylslice},
 \begin{equation*}
 \exp\bigg(\frac{\im}{2} D(\eta)\bigg)=\exp\bigg({-}\frac14\{\xi,\xi\}_{\Sigma}\bigg).
 \end{equation*}
 The first factor is the inner product~(\ref{eq:ipcohk}), here,
 \begin{equation}
 \langle\coh_\phi,\coh_\tau\rangle_{\Sigma}=\exp\bigg(\frac{1}{2}\{\tau,\phi\}_{\Sigma}\bigg).
 \label{eq:ppw1}
 \end{equation}
 As for the remaining factor, we note
 \begin{equation*}
 \hat{(\tau,\phi)}=\frac12 \left(\tau+\phi -\im J_{\Sigma}(\phi-\tau)\right).
 \end{equation*}
 With equation~(\ref{eq:lsobsdef}) we get
 \begin{equation*}
 F\big(\hat{(\tau,\phi)}\big)=\exp\left(\im\,\omega_{\Sigma}\left(\xi,\tau+\phi -\im J_{\Sigma}(\phi-\tau)\right)\right).
 \end{equation*}
 With the definition~(\ref{eq:stdjip}) of the inner product we obtain
 \begin{equation}
 F\big(\hat{(\tau,\phi)}\big)=\exp\bigg(\frac12\{\xi,\phi\}_{\Sigma}-\frac12\{\tau,\xi\}_{\Sigma}\bigg).
 \label{eq:ppw2}
 \end{equation}
 Joining the factors yields the claimed identity.
\end{proof}
As is easily seen, this satisfies the Weyl relation~(\ref{eq:weylrel}) as follows already from~(\ref{eq:opcomp}) when comparing with the results of Section~\ref{sec:qsobs}. This fact was also shown previously in~\cite[Proposition~4.5]{Oe:feynobs}.\footnote{There, more general observables where considered which results in extra terms that are absent here. Also note differences in ordering conventions.}
Since Weyl observables are generators for general observables and coherent states are generators for general states, Proposition~\ref{prop:weylactk} completely characterizes the action of slice observables on~states.

The classical algebra of slice observables (see Section~\ref{sec:csobs}) carries a natural $\sst$-structure. If~we consider such an observable $F$ as merely living on the real configuration space $K_{\Sigma}$, i.e., as a~map $F\colon K_{\Sigma}\to\bC$, then the $\sst$-operation is simply complex conjugation. However, as explained in~Section~\ref{sec:piwobs} we generally consider observables as holomorphic functions on the complexification of the configuration space, here $F\colon K_{\Sigma}^\bC\to\bC$. The natural extension of the notion of complex conjugation to this setting is given by
\begin{equation}
 F^\sst(\phi)\defeq \overline{F\big(\overline{\phi}\big)}.
 \label{eq:stdss}
\end{equation}
In this way, $F^\sst$ is again holomorphic. It is now easy to verify from~(\ref{eq:weylrel}) (and a well-known fact) that this $\sst$-structure is also compatible with the product of the quantum algebra of slice observables (Section~\ref{sec:qsobs}). That is, $(G\qp F)^\sst=F^\sst\qp G^\sst$. We may verify moreover, that it translates precisely to taking the adjoint of the corresponding operator in the sense of relation~(\ref{eq:opsobs}).
\begin{prop} For an arbitrary slice observable $F$, $\hat{F^*}=\hat{F}^\dagger$.
\end{prop}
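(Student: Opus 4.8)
The plan is to reduce the claim to the case of Weyl slice observables, since these generate the whole algebra $\qsoa_\Sigma$ and the operator assignment $F\mapsto\hat F$ is linear and (by differentiation in parameters) compatible with the generation of general slice observables from Weyl ones; thus it suffices to verify $\widehat{F^*}=\hat F^\dagger$ for $F=\exp(\im D)$ with $D$ linear, determined by $\xi\in L_\Sigma^\bC$ via~\eqref{eq:lsobsdef}. First I would compute $F^*$ explicitly. By the definition~\eqref{eq:stdss}, $F^*(\phi)=\overline{F(\overline\phi)}=\overline{\exp(\im D(\overline\phi))}$. Since $D'$ is the complex-linear extension of a real-linear map, $\overline{D'(\overline\phi)}=D'_{\overline\xi}(\phi)$ where $D'_{\overline\xi}$ is the linear observable associated to $\overline\xi\in L_\Sigma^\bC$ (using $\omega_\Sigma$ real on $L_\Sigma$); hence $F^*=\exp(\im D_{\overline\xi})$ is again a Weyl slice observable, now with parameter $\overline\xi$. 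So the task becomes: show that the operator $\widehat{F^*}$ attached to parameter $\overline\xi$ equals the Hilbert-space adjoint of the operator $\hat F$ attached to parameter $\xi$.

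Next I would use Proposition~\ref{prop:weylactk}, which gives the action of $\hat F$ on coherent states in the holomorphic representation. The adjoint $\hat F^\dagger$ is characterized by $\langle\coh_\phi,\hat F^\dagger\coh_\tau\rangle_\Sigma=\overline{\langle\coh_\tau,\hat F\coh_\phi\rangle_\Sigma}$, and by the reproducing property~\eqref{eq:reprod} the left side is $(\hat F^\dagger\coh_\tau)(\phi)$. So I would take the formula~\eqref{eq:weylactk} for $(\hat F\coh_\phi)(\tau)$, swap the roles of $\phi$ and $\tau$, take the complex conjugate, and use the hermiticity property $\overline{\{a,b\}_\Sigma}=\{b,a\}_\Sigma$ of the inner product~\eqref{eq:stdjip} together with the sesquilinearity. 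A short bookkeeping of the four terms $\tfrac12\{\tau,\phi\}$, $\tfrac12\{\xi,\phi\}$, $-\tfrac12\{\tau,\xi\}$, $-\tfrac14\{\xi,\xi\}$ should turn the conjugated-and-swapped expression into exactly~\eqref{eq:weylactk} with $\xi$ replaced by $\overline\xi$, i.e.\ into $(\widehat{F^*}\coh_\tau)(\phi)$. This establishes $\widehat{F^*}=\hat F^\dagger$ on the dense span of coherent states, hence everywhere.

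An alternative, slicker route I would keep in reserve is to invoke~\eqref{eq:opsobs} directly: $\langle\eta,\widehat{F^*}\psi\rangle_\Sigma=\rho^{F^*}_{\hat\Sigma}(\psi\tens\iota_\Sigma(\eta))$, and compare with $\langle\eta,\hat F^\dagger\psi\rangle_\Sigma=\overline{\langle\psi,\hat F\eta\rangle_\Sigma}=\overline{\rho^{F}_{\hat\Sigma}(\eta\tens\iota_\Sigma(\psi))}$; the identity then follows from the behaviour of the slice amplitude $\rho_{\hat\Sigma}$ under orientation reversal of $\Sigma$ (which exchanges the roles of the two boundary copies and the polarizations $\pol$ and $\cpol$) combined with complex conjugation, using $\rho^{\overline F}_{\overline{\hat\Sigma}}=\overline{\rho^F_{\hat\Sigma}\circ(\iota\tens\iota)}$-type reality properties of the quantization. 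The main obstacle I anticipate is precisely this bookkeeping with conjugation and orientation: one must be careful that $\iota_\Sigma$ is complex-conjugate-linear, that $F^*$ as defined in~\eqref{eq:stdss} interacts correctly with the identification $K_{\hat\Sigma}^\bC=L_{\partial\hat\Sigma}^\bC$, and that the sign conventions $\omega_{\overline\Sigma}=-\omega_\Sigma$ and $J_{\overline\Sigma}=-J_\Sigma$ are tracked consistently. For that reason I would in the writeup favour the first, computational route via Proposition~\ref{prop:weylactk}, where everything happens on a single hypersurface $\Sigma$ and the only non-trivial input is the hermitian symmetry of $\{\cdot,\cdot\}_\Sigma$.
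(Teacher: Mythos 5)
Your strategy is the same as the paper's: reduce to Weyl observables and to coherent-state matrix elements, identify $F^\sst$ as another Weyl slice observable, and compare the two sides using the explicit action of Proposition~\ref{prop:weylactk} together with the hermiticity $\overline{\{a,b\}_{\Sigma}}=\{b,a\}_{\Sigma}$ of the inner product. The reduction steps and the use of the reproducing property are all fine.

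There is, however, a concrete sign error in your identification of $F^\sst$ which, as written, makes the final bookkeeping fail. From $F=\exp(\im D)$ the definition~(\ref{eq:stdss}) gives $F^\sst(\phi)=\overline{\exp\big(\im D\big(\overline{\phi}\big)\big)}=\exp\big({-}\im\,\overline{D\big(\overline{\phi}\big)}\big)$: the complex conjugation flips the sign in the exponent. Since $\overline{D'\big(\overline{\phi}\big)}=2\omega_{\Sigma}\big(\overline{\xi},\phi\big)$, this means $F^\sst$ is the Weyl observable with parameter $-\overline{\xi}$, not $\overline{\xi}$ as you assert (the paper states this explicitly: ``replacing $F$ with $F^\sst$ amounts to replacing $D$ by $-D^\sst$ and thus $\xi$ by $-\overline{\xi}$''). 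Correspondingly, conjugating and swapping in~(\ref{eq:weylactk}) yields
$\exp\big(\tfrac12\{\tau,\phi\}_{\Sigma}+\tfrac12\{\tau,\overline{\xi}\}_{\Sigma} -\tfrac12\{\overline{\xi},\phi\}_{\Sigma}-\tfrac14\{\overline{\xi},\overline{\xi}\}_{\Sigma}\big)$,
whose second and third terms carry the \emph{opposite} signs from~(\ref{eq:weylactk}) with $\xi$ replaced by $\overline{\xi}$; they match~(\ref{eq:weylactk}) only with $\xi$ replaced by $-\overline{\xi}$. So the two sides you propose to compare would not agree until the minus sign is restored, after which your computation coincides with the paper's proof. (The alternative route via orientation reversal of the slice amplitude that you hold in reserve is left undeveloped, so I have judged only the main argument.)
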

\begin{proof}
 As coherent states span a dense subspace, it is sufficient to show this for matrix elements of coherent states. That is, it is sufficient to show
 \begin{equation*}
 \big\langle \coh_\phi,\hat{F}^\dagger \coh_\tau\big\rangle_{\Sigma}
 =\big\langle \coh_\phi,\hat{F^*} \coh_\tau\big\rangle_{\Sigma}.
 \end{equation*}
 Moreover, since Weyl observables are generating observables it is sufficient to show this for $F$ a~Weyl obser\-vable. Defining $F$ as above, using the reproducing property~(\ref{eq:reprod}) and Proposition~\ref{prop:weylactk} we~get for the left-hand side
 \begin{align*}
 \big\langle \coh_\phi,\hat{F}^\dagger \coh_\tau\big\rangle_{\Sigma}
 &=\big\langle \hat{F}\coh_\phi,\coh_\tau\big\rangle_{\Sigma}
 =\overline{\big\langle \coh_\tau,\hat{F}\coh_\phi\big\rangle_{\Sigma}}
 \\
& =\exp\bigg(\frac12\overline{\{\phi,\tau\}_{\Sigma}}+\frac12\overline{\{\xi,\tau\}_{\Sigma}} -\frac12\overline{\{\phi,\xi\}_{\Sigma}}-\frac14\overline{\{\xi,\xi\}_{\Sigma}}\bigg)
\\
 &=\exp\bigg(\frac12\{\tau,\phi\}_{\Sigma}+\frac12\{\tau,\overline{\xi}\}_{\Sigma} -\frac12\{\overline{\xi},\phi\}_{\Sigma}-\frac14\big\{\overline{\xi},\overline{\xi}\big\}_{\Sigma}\bigg).
 \end{align*}
 On the other hand, as is easy to see, replacing $F$ with $F^*$ amounts to replacing $D$ by $-D^*$ and thus $\xi$ by $-\overline{\xi}$. So, again using~(\ref{eq:reprod}) and Proposition~\ref{prop:weylactk} we get for the right-hand side
 \begin{equation*}
 \big\langle \coh_\phi,\hat{F^*} \coh_\tau\big\rangle_{\Sigma}
 =\exp\bigg(\frac12\{\tau,\phi\}_{\Sigma}+\frac12\big\{{-}\overline{\xi},\phi\big\}_{\Sigma} -\frac12\big\{\tau,-\overline{\xi}\big\}_{\Sigma} -\frac14\big\{{-}\overline{\xi},-\overline{\xi}\big\}_{\Sigma}\bigg).
 \end{equation*}
 It remains to read off the coincidence between the two sides.
\end{proof}

We return to the case that $F$ is a Weyl observable determined by the linear observable $D$, in~turn determined by $\xi\in L_{\Sigma}^\bC$. Important special cases for $\xi$ in certain subspaces are given by the following Corollary of Proposition~\ref{prop:weylactk}.
\begin{cor}
 \label{cor:weylactsk}
 If $\xi$ is real, i.e., $\xi\in L_{\Sigma}$ we obtain the unitary action
 \begin{equation}
 \hat{F} \ncoh_\tau=\exp\left(\im\omega_{\Sigma}(\xi,\tau)\right) \ncoh_{\tau+\xi}.
 \label{eq:weylactku}
 \end{equation}
 If $\xi\in L_{\Sigma}^-$ we get
 \begin{equation}
 \hat{F} \coh_\tau=\coh_{\tau+\xi+\overline{\xi}}.
 \label{eq:weylactkc}
 \end{equation}
\end{cor}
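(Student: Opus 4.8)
The plan is to derive both identities directly from Proposition~\ref{prop:weylactk}, which already gives the action of $\hat F$ on an arbitrary coherent state wave function; all that remains is to specialize $\xi$ to $L_{\Sigma}$ or to $L_{\Sigma}^-$, watch the four factors in~\eqref{eq:weylactk} collapse, and repackage the result as a (normalized) coherent state. Throughout I take $\{\cdot,\cdot\}_{\Sigma}$, $\omega_{\Sigma}$ and $J_{\Sigma}$ to be extended bilinearly (resp.\ complex-linearly) to $L_{\Sigma}^\bC$, consistent with the conventions under which~\eqref{eq:vevweylk} was established and used inside Proposition~\ref{prop:weylactk}.

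For the first statement I would multiply~\eqref{eq:weylactk} by the normalization $\exp\big({-}\tfrac14\{\tau,\tau\}_{\Sigma}\big)$ to pass from $\coh_\tau$ to $\ncoh_\tau$ on the left, and compare with the wave function of $\ncoh_{\tau+\xi}$, namely $\exp\big({-}\tfrac14\{\tau+\xi,\tau+\xi\}_{\Sigma}+\tfrac12\{\tau,\phi\}_{\Sigma}+\tfrac12\{\xi,\phi\}_{\Sigma}\big)$. The $\phi$-dependent terms cancel exactly, and the residual discrepancy in the exponent is $-\tfrac14\big(\{\tau,\xi\}_{\Sigma}-\{\xi,\tau\}_{\Sigma}\big)$. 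Splitting $\{\cdot,\cdot\}_{\Sigma}$ into its symmetric part $g_{\Sigma}$ and its antisymmetric part $2\im\,\omega_{\Sigma}$ via~\eqref{eq:stdjip} and~\eqref{eq:gip} (these symmetry properties survive the bilinear extension), this discrepancy equals $\im\,\omega_{\Sigma}(\xi,\tau)$, which is precisely the prefactor in~\eqref{eq:weylactku}. Unitarity is then immediate: for $\xi\in L_{\Sigma}$ the observable $D$ is real and $F^\sst$ is the Weyl slice observable with parameter $-\xi$, so the Weyl relation~\eqref{eq:weylrel} together with $\omega_{\Sigma}(\xi,\xi)=0$ gives $F^\sst\qp F=\one$ and $F\qp F^\sst=\one$; quantizing and using $\widehat{F^\sst}=\hat F^\dagger$ (the preceding proposition) and~\eqref{eq:opcomp} yields $\hat F^\dagger\hat F=\id=\hat F\hat F^\dagger$. (Equivalently one simply recognizes the right-hand side of~\eqref{eq:weylactku} as the standard displacement operator.)

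For the second statement I would use that $L_{\Sigma}^-$ is isotropic and is the $-\im$-eigenspace of $J_{\Sigma}$. Concretely, for $\xi\in L_{\Sigma}^-$ one has $\{\xi,\xi\}_{\Sigma}=2\omega_{\Sigma}(\xi,J_{\Sigma}\xi)= -2\im\,\omega_{\Sigma}(\xi,\xi)=0$, and for $\tau\in L_{\Sigma}$ likewise $\{\tau,\xi\}_{\Sigma}=2\omega_{\Sigma}(\tau,J_{\Sigma}\xi)+2\im\,\omega_{\Sigma}(\tau,\xi)=-2\im\,\omega_{\Sigma}(\tau,\xi)+2\im\,\omega_{\Sigma}(\tau,\xi)=0$. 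Hence the last two factors in~\eqref{eq:weylactk} are trivial and $\big(\hat F\coh_\tau\big)(\phi)=\exp\big(\tfrac12\{\tau,\phi\}_{\Sigma}+\tfrac12\{\xi,\phi\}_{\Sigma}\big)$. Finally, decomposing $\phi\in L_{\Sigma}$ as $\phi=P_{\Sigma}^+(\phi)+P_{\Sigma}^-(\phi)$ via~\eqref{eq:projkp} and using isotropy of $L_{\Sigma}^+$ one checks $\{\overline{\xi},\phi\}_{\Sigma}=0$, since $\overline{\xi}\in L_{\Sigma}^+$; so the exponent may be rewritten as $\tfrac12\{\tau+\xi+\overline{\xi},\phi\}_{\Sigma}$, which is the wave function of $\coh_{\tau+\xi+\overline{\xi}}$, and $\tau+\xi+\overline{\xi}\in L_{\Sigma}$ is genuinely real, so this is an honest coherent state. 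This establishes~\eqref{eq:weylactkc}.

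The main obstacle is bookkeeping rather than conceptual: one must be consistent about the (bilinear) extension of $\{\cdot,\cdot\}_{\Sigma}$, $\omega_{\Sigma}$, $J_{\Sigma}$ to $L_{\Sigma}^\bC$, and carefully verify the three vanishing identities $\{\xi,\xi\}_{\Sigma}=0$ and $\{\tau,\xi\}_{\Sigma}=0$ for $\xi\in L_{\Sigma}^-$ and $\{\overline{\xi},\phi\}_{\Sigma}=0$ for $\phi\in L_{\Sigma}$ — these are the only non-formal inputs, and once they are in hand the two formulas drop out of Proposition~\ref{prop:weylactk} immediately.
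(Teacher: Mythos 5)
Your proposal is correct and follows essentially the same route as the paper: both identities are obtained by specializing the four exponential factors of Proposition~\ref{prop:weylactk}, using $g_{\Sigma}$-symmetry/$\omega_{\Sigma}$-antisymmetry to extract the phase $\im\omega_{\Sigma}(\xi,\tau)$ in the real case, and the vanishing identities $\{\xi,\xi\}_{\Sigma}=\{\tau,\xi\}_{\Sigma}=0$ together with $\{\overline{\xi},\phi\}_{\Sigma}=0$ in the case $\xi\in L_{\Sigma}^-$. Your explicit verification of unitarity via $F^{\sst}\qp F=\one$ and $\widehat{F^{\sst}}=\hat{F}^{\dagger}$ is a small addition the paper leaves implicit, but it is correct.
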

\begin{proof}
 If $\xi\in L_{\Sigma}$, we recognize the first two terms in the exponential on the right-hand side of~(\ref{eq:weylactk}) as the wave function~(\ref{eq:cohwf}) of the coherent state $\coh_{\xi+\tau}$. This yields,
 \begin{equation*}
 \hat{F} \coh_\tau
 =\exp\bigg({-}\frac12\{\tau,\xi\}_{\Sigma}-\frac14\{\xi,\xi\}_{\Sigma}\bigg) \coh_{\tau+\xi}.
 \end{equation*}
 Substituting ordinary coherent states with normalized coherent states with the respective normalization factors, recall~(\ref{eq:ipcohk}), leads to~(\ref{eq:weylactku}).

 If $\xi\in L_{\Sigma}^{-}$, then the third and fourth term in the exponential on the right-hand side of~(\ref{eq:weylactk}) vanish. This yields,
 \begin{equation*}
 \big(\hat{F} \coh_\tau\big)(\phi)
 =\exp\bigg(\frac12\{\tau,\phi\}_{\Sigma}+\frac12\{\xi,\phi\}_{\Sigma}\bigg)
 =\exp\bigg(\frac12\big\{\tau+\xi+\overline{\xi},\phi\big\}_{\Sigma}\bigg).
 \end{equation*}
 This is the desired expression~(\ref{eq:weylactkc}) in terms of wave functions.
\end{proof}
The last part of the Corollary, manifest in relation~(\ref{eq:weylactkc}), can be generalized considerably as follows. This follows from the fact that observables holomorphic with respect to $J_{\Sigma}$ are generated by Weyl observables holomorphic with respect to $J_{\Sigma}$.
\begin{prop}
\label{prop:holomactk}
 Suppose the slice observable $F$ is holomorphic with respect to $J_{\Sigma}$
$($invariant under translation in $L_{\Sigma}^-$, i.e., $F(\phi)=F(\phi+\xi)$ for all $\phi\in L_{\Sigma}^\bC$ and $\xi\in L_{\Sigma}^-)$. Then, $\hat{F}$ acts by multiplication of wave functions,
 \begin{equation}
 \big(\hat{F}\psi\big)(\phi)=F(\phi) \psi(\phi).
 \label{eq:holomactk}
 \end{equation}
\end{prop}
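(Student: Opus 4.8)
The plan is to reduce the claim to Weyl slice observables evaluated on coherent states and then read the result off Proposition~\ref{prop:weylactk}. By the remark preceding the statement, every slice observable $F$ holomorphic with respect to $J_\Sigma$ arises from a family of Weyl slice observables $F_{\lambda_1,\ldots,\lambda_n}=\exp(\im D_{\lambda_1,\ldots,\lambda_n})$, each again holomorphic with respect to $J_\Sigma$, by repeated differentiation in the $\lambda_k$ at $\lambda=0$, exactly as in Section~\ref{sec:vcgenobs}. The assignment $F\mapsto\hat F$ defined by~(\ref{eq:opsobs}) is linear and, since correlation functions of general observables are obtained from those of Weyl observables by this same differentiation, it commutes with it; likewise the assignment $F\mapsto(\psi\mapsto F\psi)$ of the multiplication operator is linear, commutes with $\partial_{\lambda_k}$, and a $\partial_{\lambda_k}$-derivative of a product of functions holomorphic with respect to $J_\Sigma$ is again such. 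Hence it suffices to prove $(\hat F\psi)(\phi)=F(\phi)\psi(\phi)$ for $F=\exp(\im D)$ with $D$ linear and holomorphic with respect to $J_\Sigma$, and, since coherent states generate $\cH_\Sigma$, it is enough to check this for $\psi=\coh_\tau$.

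First I would pin down which Weyl slice observables are holomorphic with respect to $J_\Sigma$. Writing $D$ via~(\ref{eq:lsobsdef}) in terms of $\xi\in L_\Sigma^\bC$, so that $D'(\phi)=2\omega_\Sigma(\xi,\phi)$, invariance of $F=\exp(\im D)$ under translations in $L_\Sigma^-$ is equivalent to $D'$ vanishing on $L_\Sigma^-$, i.e.\ $\omega_\Sigma(\xi,\zeta)=0$ for all $\zeta\in L_\Sigma^-$; since $L_\Sigma^-$ is a Lagrangian, hence maximal isotropic, subspace of $L_\Sigma^\bC$, this forces $\xi\in L_\Sigma^-$. We are then in the situation of the second part of Corollary~\ref{cor:weylactsk}: from Proposition~\ref{prop:weylactk}, equation~(\ref{eq:weylactk}), using — as in that corollary's proof — that $\{\tau,\xi\}_\Sigma=0$ and $\{\xi,\xi\}_\Sigma=0$ whenever $\xi\in L_\Sigma^-$ (both immediate from~(\ref{eq:stdjip}) together with $J_\Sigma\xi=-\im\xi$ and the antisymmetry of $\omega_\Sigma$), one gets
\begin{equation*}
 \big(\hat F\coh_\tau\big)(\phi)=\exp\bigg(\frac12\{\tau,\phi\}_\Sigma+\frac12\{\xi,\phi\}_\Sigma\bigg)=\coh_\tau(\phi)\,\exp\bigg(\frac12\{\xi,\phi\}_\Sigma\bigg).
\end{equation*}
It then remains to recognize the last factor as $F(\phi)$: decomposing $\phi=\phi^++\phi^-\in L_\Sigma^+\oplus L_\Sigma^-$ and using $\omega_\Sigma(\xi,\phi^-)=0$ together with~(\ref{eq:stdjip}), one obtains $\frac12\{\xi,\phi\}_\Sigma=2\im\omega_\Sigma(\xi,\phi^+)=2\im\omega_\Sigma(\xi,\phi)=\im D'(\phi)$, so that $\exp(\frac12\{\xi,\phi\}_\Sigma)=F(\phi)$. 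This establishes the identity on coherent states, and the reduction above promotes it to all slice observables holomorphic with respect to $J_\Sigma$ (and all states, by the usual density argument).

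I expect no genuine analytic obstacle here: the computational heart is essentially a one-line consequence of Proposition~\ref{prop:weylactk}. The only point requiring care is the bookkeeping of identifications — between $F$ regarded as a function on $K_{\hat{\Sigma}}^\bC=L_{\partial\hat{\Sigma}}^\bC$ and the induced function $D'$ (or $F'$) on the single copy $L_\Sigma^\bC$ serving as the wave-function argument, and between the $J_\Sigma$-bilinear extension of $\{\cdot,\cdot\}_\Sigma$ to $L_\Sigma^\bC$ used in Proposition~\ref{prop:weylactk} and its sesquilinear restriction to the real space $L_\Sigma$. Matching these conventions correctly is what makes the two small facts doing the real work — the vanishing $\{\tau,\xi\}_\Sigma=\{\xi,\xi\}_\Sigma=0$ for $\xi\in L_\Sigma^-$, and the identity $\frac12\{\xi,\phi\}_\Sigma=\im D'(\phi)$ — come out cleanly; both rest only on $L_\Sigma^-$ being the $(-\im)$-eigenspace of $J_\Sigma$ and Lagrangian for $\omega_\Sigma$.
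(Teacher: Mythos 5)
Your proof is correct and follows essentially the same route as the paper's: reduce to Weyl observables with $\xi\in L_{\Sigma}^-$ acting on coherent states, invoke Proposition~\ref{prop:weylactk}/Corollary~\ref{cor:weylactsk} so that the $\{\tau,\xi\}_{\Sigma}$ and $\{\xi,\xi\}_{\Sigma}$ terms drop out, identify $\exp\big(\tfrac12\{\xi,\phi\}_{\Sigma}\big)=\exp\left(2\im\omega_{\Sigma}(\xi,\phi)\right)=F(\phi)$, and extend by density of coherent states and the derivative/generating argument of Section~\ref{sec:vcgenobs}. Your explicit check that $J_{\Sigma}$-holomorphy of a Weyl slice observable forces $\xi\in L_{\Sigma}^-$ (via maximal isotropy of the Lagrangian subspace) is a detail the paper leaves implicit; the paper also offers an alternative extension to general holomorphic observables via the coherent-state completeness relation, which you do not need.
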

\begin{proof}
 We assume at first that $F$ is a Weyl observable with $\xi\in L_{\Sigma}^{-}$. Then, we can rewrite expression~(\ref{eq:weylactkc}) as,
 \begin{equation*}
 \big(\hat{F}\coh_{\tau}\big)(\phi)=\exp\bigg(\frac12\{\xi,\phi\}_{\Sigma}\bigg)\coh_{\tau}(\phi) = \exp\left(2\im\omega_{\Sigma}(\xi,\phi)\right)\coh_{\tau}(\phi) = F(\phi)\coh_{\tau}(\phi).
 \end{equation*}
 Since the coherent states span a dense subspace, the corresponding equality extends to all states. We may then extend the obtained equality~(\ref{eq:holomactk}) to more general $J_{\Sigma}$-holomorphic observables by derivative methods analogous to those exhibited in~Section~\ref{sec:vcgenobs}. On the other hand, we may take advantage of $J_{\Sigma}$-holomorphicity to treat the observable $F$ as if it was a wave function. (This requires square-integrability of $F$ with respect to the measure $\nu_\Sigma$, recall Section~\ref{sec:stdstate}.) In particular, using the completeness relation~(\ref{eq:cohcompl}) for coherent states and the reproducing property~(\ref{eq:reprod}) we obtain an integral representation of $F$ in terms of Weyl observables. To this end we shall denote the Weyl observable determined by $\xi\in L_{\Sigma}^{\bC}$ by $G_\xi$. We have for $\phi\in L_{\Sigma}$,
 \begin{align*}
 F(\phi)&=\langle \coh_\phi, F\rangle_{\Sigma}
 =\int_{\hat{L}_{\Sigma}} \langle \coh_\phi, \coh_\xi\rangle_{\Sigma}
 \langle\coh_\xi,F\rangle_{\Sigma}\,\xd\nu_{\Sigma}(\xi)
 =\int_{\hat{L}_{\Sigma}} \exp\bigg(\frac12\{\xi,\phi\}_{\Sigma}\bigg)
 F(\xi)\,\xd\nu_{\Sigma}(\xi)
 \\
 &=\int_{\hat{L}_{\Sigma}} \exp\big(2\im\omega_{\Sigma}(P^{-}(\xi),\phi)\big)
 F(\xi)\,\xd\nu_{\Sigma}(\xi)
 =\int_{\hat{L}_{\Sigma}} G_{P^{-}(\xi)}(\phi) F(\xi)\,\xd\nu_{\Sigma}(\xi).
 \end{align*}
 Inserting this integral representation in the quantization map, yields, in terms of wave functions
 \begin{align*}
 \big(\hat{F}\psi\big)(\phi)&=\int_{\hat{L}_{\Sigma}} \left(\hat{G}_{P^{-}(\xi)}\psi\right)(\phi)
 F(\xi)\,\xd\nu_{\Sigma}(\xi)
 =\int_{\hat{L}_{\Sigma}} G_{P^{-}(\xi)}(\phi)\psi(\phi)
 F(\xi)\,\xd\nu_{\Sigma}(\xi)
 \\
 &=F(\phi)\psi(\phi). \tag*{\qed}
\end{align*}
\renewcommand{\qed}{}
\end{proof}

On the other hand, we may specialize to the case that we act on the vacuum, yielding another Corollary of Proposition~\ref{eq:weylactk}.
\begin{cor}
 Let $\nu\defeq P^-(\xi)+\overline{P^-(\xi)}$. Then,
\label{cor:weylactvk}
 \begin{equation}
 \hat{F} \coh_0
 =\exp\left(\im\omega_{\Sigma}(\xi,\nu)\right) \ncoh_{\nu}.
 \label{eq:weylactvk}
 \end{equation}
\end{cor}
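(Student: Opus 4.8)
The plan is to obtain this as a direct Corollary of Proposition~\ref{prop:weylactk}. Setting $\tau=0$ in the wave-function formula~(\ref{eq:weylactk}) gives $\big(\hat F\coh_0\big)(\phi)=\exp\big(\tfrac12\{\xi,\phi\}_{\Sigma}-\tfrac14\{\xi,\xi\}_{\Sigma}\big)$. Comparing with the coherent-state wave function~(\ref{eq:cohwf}) and with the definition $\ncoh_\nu=\exp\big(-\tfrac14\{\nu,\nu\}_{\Sigma}\big)\coh_\nu$, the claimed identity~(\ref{eq:weylactvk}) reduces to two assertions: (i) $\{\xi,\phi\}_{\Sigma}=\{\nu,\phi\}_{\Sigma}$ for every \emph{real} $\phi\in L_{\Sigma}$, so that $\hat F\coh_0$ has the $\phi$-dependence of $\coh_\nu$; and (ii) the residual $\phi$-independent factor $\exp\big(-\tfrac14\{\xi,\xi\}_{\Sigma}+\tfrac14\{\nu,\nu\}_{\Sigma}\big)$ equals $\exp\big(\im\,\omega_{\Sigma}(\xi,\nu)\big)$, which is precisely the factor converting $\coh_\nu$ into $\ncoh_\nu$ with the stated prefactor.

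Both assertions follow from one structural fact about the $\bC$-bilinear extension of $\{\cdot,\cdot\}_{\Sigma}$ to $L_{\Sigma}^{\bC}$, namely $\{\alpha,\beta\}_{\Sigma}=4\im\,\omega_{\Sigma}\big(P_{\Sigma}^-(\alpha),P_{\Sigma}^+(\beta)\big)$ for all $\alpha,\beta\in L_{\Sigma}^{\bC}$. I would derive this by substituting $\alpha=P_{\Sigma}^+(\alpha)+P_{\Sigma}^-(\alpha)$ and likewise for $\beta$ into the defining formula~(\ref{eq:stdjip}), using $J_{\Sigma}P_{\Sigma}^\pm=\pm\im\,P_{\Sigma}^\pm$ from~(\ref{eq:projkp}) and discarding the terms that vanish because $L_{\Sigma}^+$ and $L_{\Sigma}^-$ are isotropic. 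In particular $\{\cdot,\phi\}_{\Sigma}$ annihilates $L_{\Sigma}^+$ and depends on its first argument only through the $L_{\Sigma}^-$-component.

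For (i): since $\overline{P_{\Sigma}^-(\xi)}\in\overline{L_{\Sigma}^-}=L_{\Sigma}^+$, one has $P_{\Sigma}^-(\nu)=P_{\Sigma}^-(\xi)$, hence $\{\xi,\phi\}_{\Sigma}=4\im\,\omega_{\Sigma}\big(P_{\Sigma}^-(\xi),P_{\Sigma}^+(\phi)\big)=\{\nu,\phi\}_{\Sigma}$. For (ii): the same identity gives $\{\xi,\xi\}_{\Sigma}=4\im\,\omega_{\Sigma}\big(P_{\Sigma}^-(\xi),P_{\Sigma}^+(\xi)\big)$ and $\{\nu,\nu\}_{\Sigma}=4\im\,\omega_{\Sigma}\big(P_{\Sigma}^-(\xi),\overline{P_{\Sigma}^-(\xi)}\big)$, while expanding $\omega_{\Sigma}(\xi,\nu)$ into $P_{\Sigma}^\pm$-components and dropping the two summands that vanish by isotropy yields $\omega_{\Sigma}(\xi,\nu)=\omega_{\Sigma}\big(P_{\Sigma}^-(\xi),\overline{P_{\Sigma}^-(\xi)}\big)-\omega_{\Sigma}\big(P_{\Sigma}^-(\xi),P_{\Sigma}^+(\xi)\big)$. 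Comparing the three expressions gives $-\tfrac14\{\xi,\xi\}_{\Sigma}+\tfrac14\{\nu,\nu\}_{\Sigma}=\im\,\omega_{\Sigma}(\xi,\nu)$, as required; assembling (i) and (ii) completes the proof.

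The only genuinely delicate point is the recognition in (i) that the Gaussian wave function carrying the \emph{complex} label $\xi$ delivered by Proposition~\ref{prop:weylactk} is in fact the \emph{normalized} coherent state attached to the \emph{real} element $\nu=P_{\Sigma}^-(\xi)+\overline{P_{\Sigma}^-(\xi)}$; once the bilinear identity above is in hand the rest is routine sign- and conjugation-bookkeeping. An alternative route would split $F$ via the Weyl relation~(\ref{eq:weylrel}) into Weyl observables with labels $P_{\Sigma}^+(\xi)$ and $P_{\Sigma}^-(\xi)$, note that the first acts trivially on $\coh_0$ (its label lies in $L_{\Sigma}^+$, so by Proposition~\ref{prop:weylactk} its vacuum correlation~(\ref{eq:vevweylk}) is unity and it maps $\coh_0$ to $\coh_0$), and apply the second part of Corollary~\ref{cor:weylactsk}, equation~(\ref{eq:weylactkc}), to the second; but tracking the anomalous Weyl phase makes that no shorter than the direct computation.
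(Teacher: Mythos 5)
Your proof is correct and follows essentially the same route as the paper's: set $\tau=0$ in Proposition~\ref{prop:weylactk}, use $P_{\Sigma}^-(\nu)=P_{\Sigma}^-(\xi)$ (equivalently, that $\{\cdot,\phi\}_{\Sigma}$ sees only the $L_{\Sigma}^-$-component of its first argument) to identify the resulting Gaussian with $\coh_{\nu}$, and then match the remaining constant with $\exp\left(\im\omega_{\Sigma}(\xi,\nu)\right)$. The only cosmetic difference is that you verify the prefactor by expanding everything into $P^{\pm}$-components, whereas the paper gets it slightly more quickly from $\{\xi,\xi\}_{\Sigma}=\{\nu,\xi\}_{\Sigma}$, $\{\nu,\nu\}_{\Sigma}=\{\xi,\nu\}_{\Sigma}$ and the antisymmetrization identity $\{\xi,\nu\}_{\Sigma}-\{\nu,\xi\}_{\Sigma}=4\im\omega_{\Sigma}(\xi,\nu)$.
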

\begin{proof}
 If $\tau=0$ the first and third term on the right-hand side of~(\ref{eq:weylactk}) vanish. On the other hand, we notice, $\{\xi,\phi\}_{\Sigma}=\{P^{-}(\xi),\phi\}_{\Sigma}=\big\{P^{-}(\xi) +\overline{P^{-}(\xi)},\phi\big\}_{\Sigma}=\{\nu,\phi\}_{\Sigma}$. This yields
 \begin{align*}
 \hat{F} \coh_0&=\exp\bigg({-}\frac14\{\xi,\xi\}_{\Sigma}\bigg) \coh_{\nu}
 =\exp\bigg({-}\frac14\{\xi,\xi\}_{\Sigma}+\frac14\{\nu,\nu\}_{\Sigma}\bigg) \ncoh_{\nu}
 \\
 &=\exp\bigg({-}\frac14\{\nu,\xi\}_{\Sigma}+\frac14\{\xi,\nu\}_{\Sigma}\bigg) \ncoh_{\nu}.
 \end{align*}
 The exponential term can be rewritten to yield the desired equality~(\ref{eq:weylactvk}).
\end{proof}

We also consider the action of linear slice observables, here $D$.
\begin{prop}
\label{prop:linactk}
 Let $\xi^{\rm c}\defeq P^-(\xi) + \overline{P^-(\xi)}$ and $\xi^{\rm a}\defeq P^+(\xi) + \overline{P^+(\xi)}$. The operator $\hat{D}$ on $\cH_{\Sigma}$ acts as
 \begin{equation}
 \hat{D}=\frac{\im}{\sqrt{2}} \big(a_{\xi^{\rm a}} - a^\dagger_{\xi^{\rm c}}\big).
 \label{eq:linactk}
 \end{equation}
\end{prop}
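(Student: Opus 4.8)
The plan is to obtain $\hat D$ by differentiating the Weyl-observable formula of Proposition~\ref{prop:weylactk}, and then to match the outcome against the right-hand side of~\eqref{eq:linactk} using only the elementary algebra of the form $\{\cdot,\cdot\}_\Sigma$ relative to the splitting $L_\Sigma^\bC=L_\Sigma^+\oplus L_\Sigma^-$.

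For $\lambda\in\R$ write $F_\lambda\defeq\exp(\im\lambda D)$, the Weyl slice observable determined by $\lambda\xi\in L_\Sigma^\bC$. Since $D=-\im\,\partial_\lambda F_\lambda|_{\lambda=0}$ and, exactly as in Section~\ref{sec:vcgenobs}, the quantization map is linear and compatible with generating observables by such a differentiation (here in the operator guise of~\eqref{eq:opsobs}), we get $\hat D=-\im\,\partial_\lambda\hat F_\lambda|_{\lambda=0}$. As the coherent states are dense in $\cH_\Sigma$, it is enough to check~\eqref{eq:linactk} applied to $\coh_\tau$. Applying Proposition~\ref{prop:weylactk} with $\xi$ replaced by $\lambda\xi$ gives
\begin{equation*}
 \big(\hat F_\lambda\coh_\tau\big)(\phi)=\exp\bigg(\frac12\{\tau,\phi\}_\Sigma+\frac\lambda2\{\xi,\phi\}_\Sigma-\frac\lambda2\{\tau,\xi\}_\Sigma-\frac{\lambda^2}4\{\xi,\xi\}_\Sigma\bigg),
\end{equation*}
so differentiating at $\lambda=0$ and multiplying by $-\im$ yields $\big(\hat D\coh_\tau\big)(\phi)=\frac{\im}{2}\big(\{\tau,\xi\}_\Sigma-\{\xi,\phi\}_\Sigma\big)\coh_\tau(\phi)$. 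On the other hand, by~\eqref{eq:coact} and~\eqref{eq:anncoh},
\begin{equation*}
 \bigg(\frac{\im}{\sqrt2}\big(a_{\xi^{\rm a}}-a^\dagger_{\xi^{\rm c}}\big)\coh_\tau\bigg)(\phi)=\frac{\im}{2}\big(\{\tau,\xi^{\rm a}\}_\Sigma-\{\xi^{\rm c},\phi\}_\Sigma\big)\coh_\tau(\phi),
\end{equation*}
so it remains to verify $\{\tau,\xi^{\rm a}\}_\Sigma=\{\tau,\xi\}_\Sigma$ and $\{\xi^{\rm c},\phi\}_\Sigma=\{\xi,\phi\}_\Sigma$ for all $\tau,\phi\in L_\Sigma$.

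For the last step I would extend $\omega_\Sigma$ and $J_\Sigma$ complex-bilinearly and observe, directly from the definition~\eqref{eq:stdjip}, that $\{\zeta,\cdot\}_\Sigma=0$ for $\zeta\in L_\Sigma^+$ (using $J_\Sigma\zeta=\im\zeta$ together with compatibility of $J_\Sigma$ and $\omega_\Sigma$) and, symmetrically, $\{\cdot,\zeta\}_\Sigma=0$ for $\zeta\in L_\Sigma^-$ (using $J_\Sigma\zeta=-\im\zeta$). Since $J_\Sigma$ is real, complex conjugation swaps $L_\Sigma^\pm$, hence $\overline{P^+(\xi)}\in L_\Sigma^-$ and $\overline{P^-(\xi)}\in L_\Sigma^+$; consequently the $L_\Sigma^+$-component of both $\xi^{\rm a}$ and $\xi$ equals $P^+(\xi)$, and the $L_\Sigma^-$-component of both $\xi^{\rm c}$ and $\xi$ equals $P^-(\xi)$. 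Expanding $\{\tau,\xi^{\rm a}\}_\Sigma$, $\{\tau,\xi\}_\Sigma$, $\{\xi^{\rm c},\phi\}_\Sigma$, $\{\xi,\phi\}_\Sigma$ in the $+/-$ decomposition and dropping every term whose first argument lies in $L_\Sigma^+$ or whose second argument lies in $L_\Sigma^-$, the first two collapse to $\{P^-(\tau),P^+(\xi)\}_\Sigma$ and the last two to $\{P^-(\xi),P^+(\phi)\}_\Sigma$, which gives the two required identities and hence~\eqref{eq:linactk}.

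The routine part is the differentiation; the only place needing care is this final bookkeeping, namely that $\{\cdot,\cdot\}_\Sigma$ here means the complex-\emph{bilinear} extension of~\eqref{eq:stdjip} (degenerate on $L_\Sigma^\bC$, annihilated on the left by $L_\Sigma^+$ and on the right by $L_\Sigma^-$) rather than the $J_\Sigma$-sesquilinear inner product on $L_\Sigma$, and that conjugation interchanges the projectors $P^+$ and $P^-$. I expect that to be the main, and essentially only, obstacle.
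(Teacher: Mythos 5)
Your proof is correct and follows essentially the same route as the paper's: differentiate the Weyl-operator formula of Proposition~\ref{prop:weylactk} at $\lambda=0$ and identify the result with the action of creation and annihilation operators via~(\ref{eq:coact}) and~(\ref{eq:anncoh}). The only difference is that you explicitly justify the identities $\{\tau,\xi^{\rm a}\}_\Sigma=\{\tau,\xi\}_\Sigma$ and $\{\xi^{\rm c},\phi\}_\Sigma=\{\xi,\phi\}_\Sigma$ through the degeneracy of the complex-bilinearly extended form (cf.~(\ref{eq:cipspol})) and the fact that conjugation swaps $L_\Sigma^{\pm}$, a step the paper passes over silently.
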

\begin{proof}
 Taking a derivative of expression~(\ref{eq:weylactk}) as in relation~(\ref{eq:polydweyl}), we obtain
 \begin{align*}
 \big(\hat{D} \coh_\tau\big)(\phi)
 & =(-\im)\frac{\partial}{\partial\lambda}\exp\bigg(\frac12\{\tau,\phi\}_{\Sigma} +\frac12\lambda\{\xi,\phi\}_{\Sigma}-\frac12\lambda\{\tau,\xi\}_{\Sigma} -\frac14\lambda^2\{\xi,\xi\}_{\Sigma}\bigg) \bigg|_{\lambda=0} \nonumber
 \\
 & = -\im \exp\bigg(\frac12\{\tau,\phi\}_{\Sigma}\bigg)
 \bigg(\frac12\{\xi,\phi\}_{\Sigma}-\frac12\{\tau,\xi\}_{\Sigma}\bigg)
 \nonumber
 \\
 & = \frac{\im}{\sqrt2}
 \bigg(\frac{1}{\sqrt2}\{\tau,\xi\}_{\Sigma}-\frac{1}{\sqrt2}\{\xi,\phi\}_{\Sigma}\bigg)
 \exp\bigg(\frac12\{\tau,\phi\}_{\Sigma}\bigg) \nonumber
 \\
 & = \frac{\im}{\sqrt2}
 \bigg(\frac{1}{\sqrt2}\{\tau,\xi^{\rm a}\}_{\Sigma} -\frac{1}{\sqrt2}\{\xi^{\rm c},\phi\}_{\Sigma}\bigg)
 \exp\bigg(\frac12\{\tau,\phi\}_{\Sigma}\bigg) \nonumber
 \\
 & = \frac{\im}{\sqrt2}
 \big(a_{\xi^{\rm a}}-a^\dagger_{\xi^{\rm c}}\big) \coh_\tau(\phi).
 \end{align*}
 Here we have used expressions~(\ref{eq:coact}) and~(\ref{eq:anncoh}) for the action of creation and annihilation operators. It remains to observe that by linearity and by denseness of the coherent states this equality applies to all states, yielding~(\ref{eq:linactk}).
\end{proof}

It is useful to define linear slice observables that upon quantization turn into a given creation or annihilation operator. For $\xi\in L_{\Sigma}$ we define the slice observables $A^\dagger_\xi$ and $A_\xi$ on $\Sigma$ via
\begin{equation*}
 {A_\xi^\dagger}'(\phi)\defeq \frac{1}{\sqrt{2}} \{\xi,\phi\}_{\Sigma},\qquad
 A_\xi'(\phi)\defeq \frac{1}{\sqrt{2}} \{\phi,\xi\}_{\Sigma}.
\end{equation*}
With Proposition~\ref{prop:linactk} this yields
\begin{equation*}
 \hat{A}^\dagger_\xi = a^\dagger_\xi,\qquad \hat{A}_\xi = a_\xi.
\end{equation*}
In the example of Klein--Gordon theory in Minkowski space on an equal-time hypersurface $\Sigma_t$ (recall Section~\ref{sec:kgstates}), the observables $A_p^\dagger$ and $A_p$ yielding the creation and annihilation operators~$a_p^\dagger$ and $a_p$ for the momentum modes thus take the form
\begin{equation*}
 {A_p^\dagger}'(\phi)= \overline{\phi^{{\rm b}}(p)},\qquad
 A_p'(\phi)= \phi^{\rm a}(p).
\end{equation*}

We proceed to consider normal ordered quantization as defined by formula~(\ref{eq:defnoqk}) for the case of slice observables. In particular, this gives justification to the name ``normal ordered'' as we recover the usual normal ordering of operators.
\begin{prop}
\label{prop:weylnactk}
 The operator $\no{\hat{F}}$ acts on coherent states as
 \begin{equation}
 \big(\no{\hat{F}} \coh_\tau\big)(\phi)
 =\exp\bigg(\frac12\{\tau,\phi\}_{\Sigma}+\frac12\{\xi,\phi\}_{\Sigma}-\frac12\{\tau,\xi\}_{\Sigma}\bigg).
 \label{eq:weylnactk}
 \end{equation}
\end{prop}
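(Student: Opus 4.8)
The plan is to reduce the statement to Proposition~\ref{prop:weylactk} by observing that, on coherent states, the normal ordered correlation function differs from the ordinary one precisely by the removal of the vacuum correlation factor. Concretely, comparing the definition~(\ref{eq:defnoqk}) of normal ordered quantization with the factorization formula~(\ref{eq:wobsfact}) of Theorem~\ref{thm:stdcorrfact} (after converting between normalized and unnormalized coherent states via $\ncoh_\xi=\exp(-\tfrac14\{\xi,\xi\}_\Sigma)\coh_\xi$) shows that
\begin{equation*}
 \rho_M^{\no{F}}(\coh_\xi)=\rho_M(\coh_\xi)\,F(\hat\xi)=\frac{\rho_M^F(\coh_\xi)}{\rho_M^F(\coh_0)}.
\end{equation*}
Applying this to the slice region $M=\hat\Sigma$, whose boundary is $\Sigma\cup\overline{\Sigma}$, and to the boundary coherent state $\coh_\tau\tens\coh_\phi=\coh_{(\tau,\phi)}$, the operator $\no{\hat F}$ defined as in~(\ref{eq:opsobs}) satisfies, using the reproducing property~(\ref{eq:reprod}),
\begin{equation*}
 \big(\no{\hat F}\coh_\tau\big)(\phi)=\big\langle\coh_\phi,\no{\hat F}\coh_\tau\big\rangle_\Sigma=\rho_{\hat\Sigma}^{\no{F}}(\coh_\tau\tens\coh_\phi)=\frac{\rho_{\hat\Sigma}^F(\coh_\tau\tens\coh_\phi)}{\rho_{\hat\Sigma}^F(\coh_0)}=\exp\Big(\tfrac14\{\xi,\xi\}_\Sigma\Big)\,\big(\hat F\coh_\tau\big)(\phi),
\end{equation*}
where in the last step I would use the vacuum correlation value $\rho_{\hat\Sigma}^F(\coh_0)=\exp(-\tfrac14\{\xi,\xi\}_\Sigma)$ of equation~(\ref{eq:vevweylk}) from Section~\ref{sec:vevweylslice}.

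I would then substitute the explicit form of $\hat F\coh_\tau$ from Proposition~\ref{prop:weylactk}, namely $\exp\big(\tfrac12\{\tau,\phi\}_\Sigma+\tfrac12\{\xi,\phi\}_\Sigma-\tfrac12\{\tau,\xi\}_\Sigma-\tfrac14\{\xi,\xi\}_\Sigma\big)$, so that the $-\tfrac14\{\xi,\xi\}_\Sigma$ term is cancelled by the prefactor and one lands exactly on~(\ref{eq:weylnactk}). Alternatively, and perhaps more directly for the write-up, I would bypass Proposition~\ref{prop:weylactk} and evaluate the two factors $\rho_{\hat\Sigma}(\coh_\tau\tens\coh_\phi)$ and $F(\hat{(\tau,\phi)})$ from scratch: the first equals $\langle\coh_\phi,\coh_\tau\rangle_\Sigma=\exp(\tfrac12\{\tau,\phi\}_\Sigma)$ by~(\ref{eq:ipampl}) and~(\ref{eq:ipcohk}), and the second is $\exp\big(\tfrac12\{\xi,\phi\}_\Sigma-\tfrac12\{\tau,\xi\}_\Sigma\big)$, which is precisely the computation already carried out in equation~(\ref{eq:ppw2}) inside the proof of Proposition~\ref{prop:weylactk}. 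Multiplying the two factors yields~(\ref{eq:weylnactk}) immediately.

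There is no genuine obstacle here: the statement is a one-line corollary of the material in Sections~\ref{sec:vevweylslice} and~\ref{sec:qsobsopk}. The only points that require a little attention are bookkeeping of the normalization constants relating $\coh$ and $\ncoh$, and the fact that the expression is independent of the identification $K_{\hat\Sigma}^\bC\cup K_{\hat\Sigma}^\bC=K_{\hat\Sigma}^\bC$ underlying the self-gluing of the slice region — but both have already been settled, in Section~\ref{sec:qsobs} and Section~\ref{sec:vevweylslice} respectively, so they can simply be cited.
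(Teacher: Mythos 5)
Your proposal is correct, and your ``more direct'' alternative is exactly the paper's own proof: apply the reproducing property and the definition~(\ref{eq:defnoqk}) to write $\big(\no{\hat{F}}\coh_\tau\big)(\phi)=\langle\coh_\phi,\coh_\tau\rangle_{\Sigma}\,F\big(\hat{(\tau,\phi)}\big)$ and then reuse the two factor evaluations~(\ref{eq:ppw1}) and~(\ref{eq:ppw2}) from the proof of Proposition~\ref{prop:weylactk}. Your first route (dividing $\hat{F}\coh_\tau$ by the vacuum correlation factor $\exp\big({-}\tfrac14\{\xi,\xi\}_{\Sigma}\big)$) is an equivalent rearrangement of the same computation, so there is nothing substantive to add.
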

\begin{proof}
 With the reproducing property~(\ref{eq:reprod}), the normal ordered version of relation~(\ref{eq:opsobs}), and the definition~(\ref{eq:defnoqk}) we have
 \begin{equation*}
 \big(\no{\hat{F}} \coh_\tau\big)(\phi)
 =\big\langle \coh_\phi, \no{\hat{F}}\coh_\tau\big\rangle_{\Sigma}
 =\rho_{\hat{\Sigma}}^{\no{F}}(\coh_\tau\tens\coh_\phi)
 =\langle \coh_\phi, \coh_\tau\rangle_{\Sigma} F\big(\hat{(\phi,\tau)}\big).
 \end{equation*}
 We now recall that we have evaluated the two factors on the right-hand side already in the proof of Proposition~\ref{prop:weylactk}, see equations~(\ref{eq:ppw1}) and~(\ref{eq:ppw2}). This recovers the result~(\ref{eq:weylnactk}).
\end{proof}
\begin{prop}
 Let $\xi_1,\dots,\xi_n,\eta_1,\dots,\eta_m\in L_{\Sigma}$. Let $E=A_{\xi_1}\cdots A_{\xi_n}\cdot A_{\eta_1}^\dagger\cdots A_{\eta_m}^\dagger$. Then,
 \begin{equation*}
 \no{\hat{E}}= a_{\eta_1}^\dagger\cdots a_{\eta_m}^\dagger a_{\xi_1}\cdots a_{\xi_n}.
 \end{equation*}
\end{prop}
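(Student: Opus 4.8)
The plan is to verify the stated operator identity on the dense subspace spanned by coherent states, by evaluating both sides explicitly. For the right-hand side, apply the annihilation operators first: every coherent state is a common eigenvector of the $a_{\xi_k}$ by~\eqref{eq:anncoh}, so $a_{\xi_1}\cdots a_{\xi_n}\coh_\tau=\big(\prod_{k=1}^{n}\tfrac{1}{\sqrt2}\{\tau,\xi_k\}_{\Sigma}\big)\coh_\tau$, and each $a^\dagger_{\eta_l}$ then multiplies the wave function by $\tfrac{1}{\sqrt2}\{\eta_l,\phi\}_{\Sigma}$ by~\eqref{eq:coact}; hence
\[
\big(a^\dagger_{\eta_1}\cdots a^\dagger_{\eta_m}a_{\xi_1}\cdots a_{\xi_n}\coh_\tau\big)(\phi)=\Big(\prod_{l=1}^{m}\tfrac{1}{\sqrt2}\{\eta_l,\phi\}_{\Sigma}\Big)\Big(\prod_{k=1}^{n}\tfrac{1}{\sqrt2}\{\tau,\xi_k\}_{\Sigma}\Big)\coh_\tau(\phi).
\]

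For the left-hand side I would use the generating-observable device of Section~\ref{sec:vcgenobs}. With real parameters $s_1,\dots,s_n,t_1,\dots,t_m$ form the linear slice observable $D_{\mathbf s}\defeq\sum_k s_k A_{\xi_k}+\sum_l t_l A^\dagger_{\eta_l}$ and the Weyl slice observable $F_{\mathbf s}\defeq\exp(\im D_{\mathbf s})$. By~\eqref{eq:polydweyl} the classical product is $E=(-\im)^{n+m}\big(\prod_k\partial_{s_k}\big)\big(\prod_l\partial_{t_l}\big)F_{\mathbf s}\big|_{\mathbf s=0}$, and since the normal-ordered quantization $F\mapsto\no{\hat F}$ is linear in the observable (immediate from~\eqref{eq:defnoqk}) and so commutes with the parameter derivatives, $\no{\hat E}=(-\im)^{n+m}\big(\prod_k\partial_{s_k}\big)\big(\prod_l\partial_{t_l}\big)\no{\hat F_{\mathbf s}}\big|_{\mathbf s=0}$. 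Denoting by $\chi_{\mathbf s}\in L_{\Sigma}^{\bC}$ the element of~\eqref{eq:lsobsdef} representing $D_{\mathbf s}$, which depends linearly on $\mathbf s$, Proposition~\ref{prop:weylnactk} together with~\eqref{eq:cohwf} gives $(\no{\hat F_{\mathbf s}}\coh_\tau)(\phi)=\coh_\tau(\phi)\exp\!\big(\tfrac12\{\chi_{\mathbf s},\phi\}_{\Sigma}-\tfrac12\{\tau,\chi_{\mathbf s}\}_{\Sigma}\big)$.

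The decisive point is that this exponent is \emph{affine-linear} in $\mathbf s$: there is no quadratic term, precisely because normal ordering has removed the vacuum-correlation factor $\exp({-}\tfrac14\{\xi,\xi\}_{\Sigma})$ of Proposition~\ref{prop:weylactk}. The $(n+m)$-fold mixed derivative at $\mathbf s=0$ therefore factorizes completely, with no cross (``contraction'') terms, and writing $\chi_{\mathbf s}=\sum_k s_k\beta_k+\sum_l t_l\gamma_l$ one obtains
\[
(\no{\hat E}\coh_\tau)(\phi)=\coh_\tau(\phi)\prod_{l=1}^{m}\Big({-}\tfrac{\im}{2}\big(\{\gamma_l,\phi\}_{\Sigma}-\{\tau,\gamma_l\}_{\Sigma}\big)\Big)\prod_{k=1}^{n}\Big({-}\tfrac{\im}{2}\big(\{\beta_k,\phi\}_{\Sigma}-\{\tau,\beta_k\}_{\Sigma}\big)\Big).
\]
It then remains to identify the coefficients. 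From $\hat A_{\xi_k}=a_{\xi_k}$ and $\hat A^\dagger_{\eta_l}=a^\dagger_{\eta_l}$ together with Proposition~\ref{prop:linactk} — equivalently, directly from $2\omega_{\Sigma}(\beta_k,\phi)=\tfrac1{\sqrt2}\{\phi,\xi_k\}_{\Sigma}$, $2\omega_{\Sigma}(\gamma_l,\phi)=\tfrac1{\sqrt2}\{\eta_l,\phi\}_{\Sigma}$ for all $\phi$ and the Lagrangian property — one gets $\beta_k\in L_{\Sigma}^{+}$ and $\gamma_l\in L_{\Sigma}^{-}$. Since $\{\cdot,\cdot\}_{\Sigma}$ vanishes whenever its first argument lies in $L_{\Sigma}^{+}$ or its second in $L_{\Sigma}^{-}$ (a direct consequence of the $J_{\Sigma}$-eigenspace decomposition), the terms $\{\beta_k,\phi\}_{\Sigma}$ and $\{\tau,\gamma_l\}_{\Sigma}$ drop, and the surviving $\{\gamma_l,\phi\}_{\Sigma}$, $\{\tau,\beta_k\}_{\Sigma}$ reproduce, after multiplication by ${-}\tfrac{\im}{2}$, exactly the factors $\tfrac1{\sqrt2}\{\eta_l,\phi\}_{\Sigma}$ and $\tfrac1{\sqrt2}\{\tau,\xi_k\}_{\Sigma}$ of the right-hand side computation. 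Matching the two expressions and using the denseness of coherent states finishes the proof.

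The step I expect to be the main obstacle is this last bookkeeping of signs, normalizations and projections — it is exactly here that the ``normal ordering'' gets pinned down (all creation operators acting as multiplication, to the left of the annihilation operators which act as eigenvalue factors, with no internal contractions). A cleaner repackaging of the same content: $\no{\hat F_{\mathbf s}}$ equals the normal-ordered exponential ${:}\exp(\im\hat D_{\mathbf s}){:}$ of the operator $\hat D_{\mathbf s}=\sum_k s_k a_{\xi_k}+\sum_l t_l a^\dagger_{\eta_l}$, which factors as $\exp\big(\im\sum_l t_l a^\dagger_{\eta_l}\big)\exp\big(\im\sum_k s_k a_{\xi_k}\big)$ because the $a_{\xi_k}$ commute among themselves and the $a^\dagger_{\eta_l}$ commute among themselves; differentiating at $\mathbf s=0$ then yields $a^\dagger_{\eta_1}\cdots a^\dagger_{\eta_m}a_{\xi_1}\cdots a_{\xi_n}$ at once.
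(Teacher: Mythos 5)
Your proposal is correct and follows essentially the same route as the paper: both form the parametrized Weyl slice observable $\exp\big(\im\sum_k s_k A_{\xi_k}+\im\sum_l t_l A^\dagger_{\eta_l}\big)$, apply Proposition~\ref{prop:weylnactk} to its normal-ordered action on a coherent state, and extract the product by the mixed parameter derivative at zero, the absence of a quadratic term in the exponent being exactly why no contractions appear. Your explicit identification $\beta_k=-\sqrt{2}\im\,\xi_k^+\in L_{\Sigma}^+$, $\gamma_l=\sqrt{2}\im\,\eta_l^-\in L_{\Sigma}^-$ and the resulting vanishing of $\{\beta_k,\phi\}_{\Sigma}$ and $\{\tau,\gamma_l\}_{\Sigma}$ reproduces the paper's computation of $\xi_{\lambda_1,\ldots,\mu_m}$ and the final matching with $a^\dagger_{\eta_l}$ and $a_{\xi_k}$.
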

\begin{proof}
 We proceed as at the beginning of Section~\ref{sec:vcgenobs} to deal with a product of linear obser\-vables. Thus, define
 \begin{gather*}
 E_{\lambda_1,\ldots,\lambda_n,\mu_1,\ldots,\mu_m}
 \defeq \lambda_1 A_{\xi_1}+\cdots+\lambda_n A_{\xi_n}+ \mu_1 A_{\eta_1}^\dagger\cdots \mu_m A_{\eta_m}, \\
 F_{\lambda_1,\ldots,\lambda_n,\mu_1,\ldots,\mu_m}\defeq
 \exp\big(\im\, E_{\lambda_1,\ldots,\lambda_n,\mu_1,\ldots,\mu_m}\big).
 \end{gather*}
 Then, as in relation~(\ref{eq:polydweyl}),
 \begin{equation}
 E=(-\im)^{n+m}\frac{\partial}{\partial \lambda_1}\cdots\frac{\partial}{\partial \lambda_n}\frac{\partial}{\partial \mu_1}\cdots\frac{\partial}{\partial \mu_m} F_{\lambda_1,\ldots,\lambda_n,\mu_1,\ldots,\mu_m} \bigg|_{\lambda_1,\ldots,\lambda_n,\mu_1,\ldots,\mu_m=0}.
 \label{eq:pwn1}
 \end{equation}
 If $\xi_{\lambda_1,\ldots,\lambda_n,\mu_1,\ldots,\mu_m}\in L_{\Sigma}^\bC$ is related to $E_{\lambda_1,\ldots,\lambda_n,\mu_1,\ldots,\mu_m}$ as in equation~(\ref{eq:lsobsdef}), then,
 \begin{equation*}
 \xi_{\lambda_1,\ldots,\lambda_n,\mu_1,\ldots,\mu_m}=\sqrt{2}\im \big({-}\lambda_1\xi_1^+-\cdots-\lambda_n\xi_n^++\mu_1\eta_1^-+\cdots+\mu_m\eta_m^-\big).
 \end{equation*}
 With Proposition~{\ref{prop:weylnactk}} we obtain for the action on the coherent state $\coh_\tau$,
 \begin{gather*}
 \big(\no{\hat{F}_{\lambda_1,\ldots,\lambda_n,\mu_1,\ldots,\mu_m}} \coh_\tau\big)(\phi)
 \\ \qquad
 {}=\exp\bigg(\frac12\{\tau,\phi\}_{\Sigma}+\frac{\im}{\sqrt{2}} \{\mu_1\eta_1+\cdots+\mu_m\eta_m,\phi\}_{\Sigma}+\frac{\im}{\sqrt{2}} \{\tau,\lambda_1\xi_1+\cdots+\lambda_n\xi_n\}_{\Sigma}\bigg).
 \end{gather*}
 With the relation~(\ref{eq:pwn1}) we obtain
 \begin{align*}
 \big(\no{\hat{E}} \coh_\tau\big)(\phi)
 &=\frac{1}{\sqrt{2}^{n+m}}\{\eta_1,\phi\}_{\Sigma}\cdots \{\eta_m,\phi\}_{\Sigma}
 \{\tau,\xi_1\}_{\Sigma}\cdots \{\tau,\xi_n\}_{\Sigma} \exp\bigg(\frac12\{\tau,\phi\}_{\Sigma}\bigg)
 \\
 &=\big(a_{\eta_1}^\dagger\cdots a_{\eta_m}^\dagger a_{\xi_1}\cdots a_{\xi_n}\coh_\tau\big)(\phi).
 \end{align*}
 In the last equality we have used the formulas~(\ref{eq:coact}) and~(\ref{eq:anncoh}) for the action of creation and annihilation operators.
\end{proof}

{\samepage
Finally, we consider the semiclassical interpretation of coherent states and correlation functions for slice observable. This can be seen as a limiting case of the case of general regions treated in~Section~\ref{sec:scko}. Thus, we fix a normalized coherent state $\ncoh_\tau$ and consider the expectation value of the Weyl slice observable $F$ in this state. With relation~(\ref{eq:opsobs}) and the Structure Theorem~\ref{thm:stdcorrfact} we get
\begin{align*}
 \big\langle \ncoh_\tau, \hat{F} \ncoh_\tau\big\rangle_{\Sigma}
 & =\rho_{\hat{\Sigma}}^F\big(\ncoh_\tau\tens\ncoh_\tau\big)
 =\rho_{\hat{\Sigma}}(\ncoh_\tau\tens\ncoh_\tau) F(\hat{(\tau,\tau)})
 \rho_{\hat{\Sigma}}^F(\coh_0\tens \coh_0)
 \\
 & =\langle \ncoh_\tau, \ncoh_\tau\rangle_{\Sigma} F((\tau,\tau))
 \exp\bigg({-}\frac{1}{4}\{\xi,\xi\}_{\Sigma}\bigg)
 =F'(\tau) \exp\bigg({-}\frac{1}{4}\{\xi,\xi\}_{\Sigma}\bigg).
\end{align*}

}
So, up to the vacuum expectation value, this is precisely the value of the classical slice observable~$F$ on the classical solution $\tau$. What is more, moving to normal ordered quantization the vacuum expectation value is removed, compare equation~(\ref{eq:defnoqk}). Thus, we get for arbitrary observables~$F$
\begin{equation*}
 \big\langle \ncoh_\tau, \no{\hat{F}} \ncoh_\tau\big\rangle_{\Sigma} =F'(\tau).
\end{equation*}

\subsection[$*$-structure and GNS construction]{$\boldsymbol\sst$-structure and GNS construction}
\label{sec:ssgns}

It turns out that the same representation of the slice observables on a Hilbert space can be obtained in a rather different manner. Namely, we may start with the quantum algebra of slice observables and construct a Hilbert space representation via the GNS construction. To this end we need the $\sst$-structure on the algebra and a positive $\sst$-functional corresponding to the vacuum correlation function. We detail this in the present section. Fix a hypersurface $\Sigma$.

The vacuum correlation functions of slice observables in the sense of Section~\ref{sec:cgenvac} can be viewed as defining a linear functional $v_{\Sigma}\colon \qsoa_{\Sigma}\to\bC$ on the algebra of slice observables,
\begin{equation}
 v_{\Sigma}(F)\defeq \rho_{\hat{\Sigma}}^F\big(W^\pol\big).
 \label{eq:vacf}
\end{equation}
For Weyl slice observables we have evaluated this in~Section~\ref{sec:vevweylslice} leading to equation~(\ref{eq:vevweylk}). This is in the context where $D'\colon L_{\Sigma}^\bC\to\bC$ defines a linear slice observable determined by an element $\xi\in L_{\Sigma}^\bC$ via formula~(\ref{eq:lsobsdef}), compare Figure~\ref{fig:slice-obs}. $F=\exp(\im D)$ denotes the induced Weyl slice observable. Thus, in our present notation,
\begin{equation}
 v_{\Sigma}(F) =\exp\bigg({-}\frac{1}{4} \{\xi,\xi\}_{\Sigma}\bigg).
 \label{eq:vevweylf}
\end{equation}
Note that the expression $\{\xi,\xi\}_{\Sigma}$ is not necessarily positive or even real for general complex $\xi\in L_{\Sigma}^\bC$. It is strictly positive, however, for non-vanishing real $\xi\in L_{\Sigma}$. The coincidence of the present formula with the results of Section~\ref{sec:qsobsopk} follows here from the fact that the relevant formula~(\ref{eq:sfqvev}) of Theorem~\ref{thm:stdcorrfact} is simply identical to formula~(\ref{eq:piwobs}) underlying equation~(\ref{eq:vevweylf}).

It is easy to see that $v_{\Sigma}$ is compatible with the $\sst$-structure of $\qsoa_{\Sigma}$, i.e., $v_{\Sigma}(F^\sst)=\overline{v_{\Sigma}(F)}$. Similarly, $v_{\Sigma}$ defines in fact a positive functional. We limit ourselves here to remark that the linear observable $D$ is self-adjoint with respect to the $\sst$-structure, $D^\sst=D$, precisely if the corresponding element $\xi\in L_{\Sigma}^\bC$ is real, i.e., $\xi\in L_{\Sigma}$.\footnote{We do not include an independent proof of the positivity of $v_{\Sigma}$ here, because we know already from Section~\ref{sec:qsobsopk} that it is positive, since $v_{\Sigma}(F^\sst \qp F)=\rho_{\hat{\Sigma}}^{F^{\sst}\cp F}(\coh_0\tens\coh_0)=\langle \hat{F}\coh_0,\hat{F}\coh_0\rangle_{\Sigma}\ge 0$ for arbitrary $F$.}
Following the GNS construction we define a hermitian sesquilinear form $[\cdot,\cdot]_{\Sigma}$ on $\qsoa_{\Sigma}$ via
\begin{equation}
 [G,F]_{\Sigma}\defeq v_{\Sigma}(G^\sst \qp F).
 \label{eq:sokip}
\end{equation}
For Weyl slice observables $F=\exp(\im D)$, $G=\exp(\im E)$ determined by linear observables $D$ and~$E$ which in turn are determined by elements $\xi_D,\xi_E\in L_{\Sigma}^\bC$ via equation~(\ref{eq:lsobsdef}) as in~Section~\ref{sec:qsobs}, we~obtain
\begin{align}
 [G,F]_{\Sigma}&=v_{\Sigma}\big(\exp\big(\im\omega_{\Sigma}\big({-}\overline{\xi_E},\xi_D\big)\big) G^{\sst}\cdot F\big) \nonumber
 \\
 &=\exp\big(\im\omega_{\Sigma}\big({-}\overline{\xi_E},\xi_D\big)\big)
 \exp\bigg({-}\frac{1}{4}\big\{\xi_D-\overline{\xi_E},\xi_D-\overline{\xi_E}\big\}_{\Sigma}\bigg) \nonumber
 \\
 &=\exp\bigg({-}\frac{1}{4}\{\xi_D,\xi_D\}_{\Sigma} -\frac{1}{4}\big\{\overline{\xi_E},\overline{\xi_E}\big\}_{\Sigma} +\frac{1}{2}\big\{\xi_D,\overline{\xi_E}\big\}_{\Sigma}\bigg).
 \label{eq:ipgns}
\end{align}
Note that $\{\phi,\eta\}=0$ for any $\eta\in L_{\Sigma}^\bC$ iff $\phi\in L_{\Sigma}^+$.
From this we may deduce that the left ideal $\qsoi_{\Sigma}\subseteq\qsoa_{\Sigma}$ on which the sesquilinear form vanishes is generated by the relation $F\sim \one$ for the Weyl observables $F=\exp(\im D)$ with $\xi_D\in L_{\Sigma}^+$. According to the GNS construction the Hilbert space on which $\qsoa_{\Sigma}$ will be represented, may now be obtained as the completion of the quo\-tient~$\qsoa_{\Sigma}/\qsoi_{\Sigma}$. This Hilbert space turns out to be naturally isomorphic to $\cH_{\Sigma}$. Indeed, for real $\xi\in L_{\Sigma}$ we may identify the normalized coherent state $\ncoh_\xi \in\cH_{\Sigma}$ with the (equivalence class of the) Weyl observable $F=\exp(\im D)$, where the linear observable $D$ is determined by $\xi$ via equation~(\ref{eq:lsobsdef}). Doing so, we realize that~(\ref{eq:ipgns}) precisely recovers the (normalized version) of the inner product of coherent states~(\ref{eq:ipcohk}). What is more, by comparing relation~(\ref{eq:weylrel}) with equation~(\ref{eq:weylactku}) one may appreciate that the induced action of $\qsoa_{\Sigma}$ on $\cH_{\Sigma}$ coincides precisely with the one given by~Proposition~\ref{prop:weylactk}.

\subsection{Boundary observables as states}
\label{sec:bobss}

Consider again the correlation function of a Weyl observable $F$ in a region $M$ with a normalized coherent state $\ncoh_\xi$ on the boundary, compare Section~\ref{sec:corrfack} and Figure~\ref{fig:boundary-cohs-corr}. As we have learned in~Section~\ref{sec:qsobsopk}, the state $\ncoh_\xi$ can be obtained from the vacuum by acting with a Weyl slice observable. In the present case this slice observable will be associated with the slice region~$\hat{\partial M}$ determined by the boundary $\partial M$ of $M$. We call slice observables with the slice region determined by a~boundary also \emph{boundary observables}. Here, let $\xi\in L_{\partial M}$ define the linear boundary observable~$E$ via $E'(\phi)=2\omega_{\partial M}(\xi,\phi)$ and $G=\exp(\im E)$ be the corresponding Weyl observable. Then, by~equation~(\ref{eq:weylactku}), we have $\hat{G} \coh_0= \ncoh_\xi$. This implies
\begin{align*}
 \rho_M^F(\ncoh_\xi)&=
 \rho_M^F\big(\hat{G} \coh_0\big)
 =\sum_{k\in I} \big\langle\zeta_k,\hat{G} \coh_0\big\rangle_{\partial M}
 \rho_M^F(\zeta_k)
 =\sum_{k\in I} \rho_{\hat{\partial M}}^G(\coh_0\tens\iota_{\partial M}(\zeta_k))
 \rho_M^F(\zeta_k)
 \\
 &=\rho_M^{G\cp F}(\coh_0).
\end{align*}
Here, the third equality arises from equation~(\ref{eq:opsobs}) while the forth is the composition rule in~the form~(\ref{eq:dualcompo}), where $c=1$. In other words, the correlation function for a Weyl observable~$F$ on a normalized coherent state $\ncoh_\xi$ is the same as the correlation function of the product Weyl observable $G\cp F$ on the vacuum state. Compare Figure~\ref{fig:boundary-cohs-corr} to Figure~\ref{fig:boundary-obs-corr}. With relation~(\ref{eq:vaccorrnot}), we may write this as
\begin{equation}
 \rho_M^F(\ncoh_\xi)=\rho_M^{G\cp F}\big(W^\pol\big).
 \label{eq:cohsbobs}
\end{equation}
Since $F$ may be an arbitrary Weyl observable, this is valid for any observable $F$.
\begin{figure}
 \centering
 \begin{tikzpicture}[scale=2.15]
\draw [red,thick,yshift=0.5cm,fill=white,scale=1.04] plot [smooth cycle] coordinates {(-1,-0.3) (0,-0.7) (1.6,0.22) (2,1.2)(1.1,2)(0,1.6)(-1.1,0.3)};
\draw [thick,yshift=0.5cm,fill=red!30] plot [smooth cycle] coordinates {(-1,-0.3) (0,-0.7) (1.6,0.22) (2,1.2)(1.1,2)(0,1.6)(-1.1,0.3)};
\draw [dashed, red,thick,fill=gray!10,yshift=0.5cm,scale=0.93] plot [smooth cycle] coordinates {(-1,-0.3) (0,-0.7) (1.6,0.22) (2,1.2)(1.1,2)(0,1.6)(-1.1,0.3)};
\draw [thick,xshift=0cm,fill=gray!30] plot [smooth cycle] coordinates {(0,0.3) (1,0.5) (0.8,1.2) (0.2,1.4)  (-0.2,0.8) (-0.1,0.55)};
\node at (0.4,0.8) {$F$};
\node at (1,1.8) {$M$};
\draw (2.3,2.2) node [right] {$\partial M$} edge[out=180,in=30,->] (1.8,2.02);
\draw[red] (-1,2.7) node [right] {${\color{red}K_{0}}$} edge[out=0,in=130,->] (0.5,2.48);
\draw (2,2.9) node [right] {$G=\exp(\im E)$} edge[out=180,in=80,->] (1.3,2.3);
\draw[very thick,->] (-1.1,0) node [below left] {$\xi$}-- (-0.8,0.3);
\end{tikzpicture}
 \caption{Spacetime region $M$ containing an observable $F$ with Weyl slice observable $G$ determined by $\xi$ on the boundary.}
 \label{fig:boundary-obs-corr}
\end{figure}
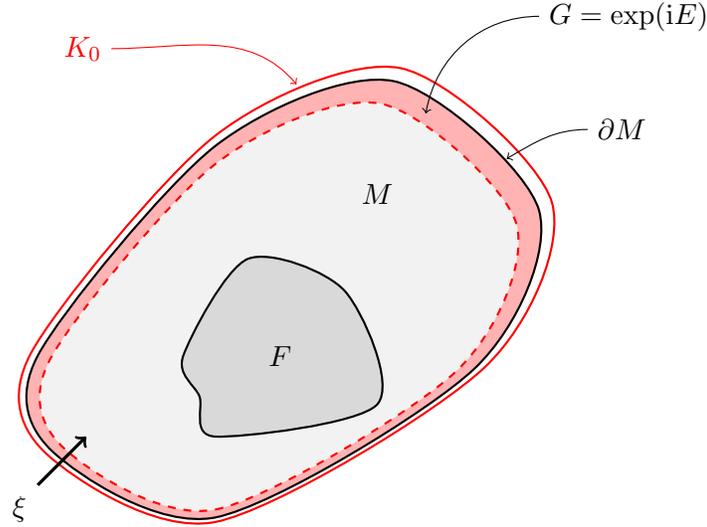

What formula~(\ref{eq:cohsbobs}) implies is that the role played by a normalized coherent state (here~$\ncoh_\xi$) can be completely captured through a corresponding boundary Weyl observable (here~$G$). Moreover, since general states can be constructed from coherent states via derivatives, recall equation~(\ref{eq:msderivcohs}), the role played by any state can be captured through a suitable boundary observable in this way. Remarkably, the right-hand side of equation~(\ref{eq:cohsbobs}) does not make any use of a~Hilbert space of states. In contrast to the left-hand side, it is perfectly well-defined even if the polarization on the boundary $\partial M$ is not a K\"ahler polarization.
This suggests an~approach to quantization where the role of states is played by slice observables and no Hilbert space of states need to be constructed. What is more, this notion of quantization would work for non-K\"ahler polarizations as well. This is the subject of the following section.

\section{Quantization without Hilbert spaces}
\label{sec:genquant}

In this section we develop an approach to quantization based on slice observables instead of~states and applicable to any polarization, as suggested in~Section~\ref{sec:bobss}.

\subsection{Correlation functions with boundary observables}
\label{sec:sobsampl}

Consider a region $M$ and a polarization $L_{\partial M}^\pol \subseteq L_{\partial M}^\bC$ transverse to the polarization $L_{\tilde{M}}^\bC\subseteq L_{\partial M}^\bC$, that is $L_{\partial M}^\bC=L_{\tilde{M}}^\bC\oplus L_{\partial M}^\pol$. For an element $\tau\in L_{\partial M}^\bC$ we denote its unique decomposition by~$\tau=\tau^\ipol+\tau^\epol$, where $\tau^\ipol \in L_{\tilde{M}}^{\bC}$ and $\tau^\epol\in L_{\partial M}^\pol$.
Let $F\colon K_M^\bC\to \bC$ denote an observable in the interior of $M$. Additionally, consider a slice observable $G\colon K_{\hat{\partial M}}^\bC\to\bC$ on the boundary~$\partial M$ of~$M$. In this context we introduce the notation
\begin{equation*}
 \rho_M^F\big(W^\pol_G\big)\defeq \rho_M^{G\cp F}\big(W^\pol\big),
\end{equation*}
where the right-hand side is defined by equation~(\ref{eq:piwobs}).
This suggests that $W^\pol_G$ represents a~``state'' determined by the boundary observable $G$. For the moment, however, this is merely a notation, and we do not take $W^\pol_G$ (or similarly the ``vacuum state'' $W^\pol$) itself to stand for any mathematical object.\footnote{One might of course simply think of $W^\pol_G$ as the pair $\big(L_{\partial M}^\pol, G\big)$, i.e., as an element of the Cartesian product $\pols_{\partial M}\times\qsoa_{\partial M}$, where $\pols_{\partial M}$ is the space of polarizations on $\partial M$, i.e., Lagrangian subspaces of $L_{\partial M}^\bC$.} However, since choosing a polarization and a slice observable also makes sense on an oriented hypersurface that is not necessarily the boundary of a region, we~occa\-sionally use the notation in this more general context.

Suppose $G$ is a Weyl slice observable $G=\exp(\im\, E)$ with the linear slice observable $E$ deter\-mined by $\xi\in L_{\partial M}^\bC$ according to equations~(\ref{eq:sobseval}) and~(\ref{eq:lsobsdef}). This is illustrated in Figure~\ref{fig:boundary-obs-corr} (disregard $\coh_0$ there). We also use the more specific notation
\begin{equation}
 \rho_M^F\big(K^\pol_\xi\big)\defeq \rho_M^{G\cp F}\big(W^\pol\big).
 \label{eq:defcohsg}
\end{equation}
In terms of our symbolic notation we may write $K^\pol_\xi\defeq W^\pol_G$. Note also $K^\pol_0=W^\pol$. This of course suggests a coherent state. Indeed, this is motivated by the identity~(\ref{eq:cohsbobs}), valid in the case of a K\"ahler polarization and when $\xi\in L_{\partial M}$ is real. That is,
\begin{equation}
 \rho_M^F\big(K^\pol_\xi\big)=\rho_M^F(\ncoh_\xi).
 \label{eq:rencohs}
\end{equation}
The coherent state here is normalized. If on the other hand (still in the K\"ahler case) we replace~$\xi$ with the projection $P^-(\xi)$ we obtain instead the ordinary (non-normalized) coherent state, compare Corollary~\ref{cor:weylactsk},
\begin{equation}
 \rho_M^F\big(K^\pol_{P^-(\xi)}\big)=\rho_M^F(\coh_\xi).
 \label{eq:recohs}
\end{equation}
Since this is valid for arbitrary observables $F$ and even arbitrary regions $M$, we write equations~(\ref{eq:rencohs}) and~(\ref{eq:recohs}) informally as $K^\pol_\xi=\ncoh_\xi$ and $K^\pol_{P^-(\xi)}=\coh_\xi$, and extend them thus to hypersurfaces that are not necessarily boundaries of regions. Note, however, that this makes sense only with a fixed choice of hypersurface orientation and K\"ahler polarization.

Finally, formula~(\ref{eq:mswf}) for the wave function of an $n$-particle state in K\"ahler quantization suggests the following definition. Let $\xi_1,\ldots,\xi_n\in L_{\partial M}^\bC$. Define the boundary observable $G$ by
\begin{equation}
 G'(\phi)=\prod_{k=1}^n 4\im\omega_{\partial M}(\xi_k,\phi).
 \label{eq:prodobsxi}
\end{equation}
Then set
\begin{equation}
 Q^\pol_{\xi_1,\dots,\xi_n}\defeq W^\pol_G.
 \label{eq:defmpso}
\end{equation}
In the case of a K\"ahler polarization we may take $\xi_1,\ldots,\xi_n\in L_{\partial M}$ and note $4\im\omega_{\partial M}(P^-(\xi_k),\phi)=\{\xi_k,\phi\}_{\partial M}$ so that by Proposition~\ref{prop:holomactk} we obtain (in our informal notation)
\begin{equation}
 Q^\pol_{P^-(\xi_1),\dots,P^-(\xi_n)}=\psi,
 \label{eq:remps}
\end{equation}
where $\psi$ is the $n$-particle state with wave function~(\ref{eq:mswf}). We also note the general relation
\begin{equation}
 Q^\pol_{\xi_1,\dots,\xi_n}= \bigg(\prod_{k=1}^n 2 \frac{\partial}{\partial \lambda_k}\bigg) K^\pol_{\lambda_1 \xi_1+\cdots+\lambda_n \xi_n} \bigg|_{\lambda_1,\dots,\lambda_n=0}.
 \label{eq:mpsweyl}
\end{equation}
If we replace $\xi_1,\ldots,\xi_n$ by $P^-(\xi_1),\ldots,P^-(\xi_n)$ in the K\"ahler case we recover expression~(\ref{eq:msderivcohs}).

We briefly consider momentum states in the standard example of Klein--Gordon theory on an equal-time hypersurface in Minkowski space. Comparing~(\ref{eq:remps}) with~(\ref{eq:mms}) we have
\begin{equation}
 Q^\pol_{P^-(\phi_{p_1}),\ldots,P^-(\phi_{p_n})}=\sqrt{2^n} \psi_{p_1,\ldots,p_n}.
 \label{eq:mmso}
\end{equation}

Since Weyl observables generate all other observables, and similarly coherent states generate all other states we focus primarily on ``states'' of the form $K^\pol_{\xi}$, i.e., Weyl boundary observables.
If $F$ is also a Weyl observable, we can evaluate $\rho_M^F\big(K^\pol_\xi\big)$ explicitly by using formula~(\ref{eq:piwobs}). Thus, let $F=\exp(\im\, D)$ with $D\colon K_M^\bC\to\bC$ linear. Let $\phi\in A_M^{D+E,\bC}\cap L^\pol_{\partial M}$ be the unique solution of the modified equations of motion satisfying the boundary condition on $\partial M$. To obtain a more explicit form of the amplitude expression we need to know $\phi$ in different pieces of spacetime. In the interior of $M$ we write $\phi_{M}$ while in the exterior we write $\phi_{X}$, see Figure~\ref{fig:boundary-obs-eval}. 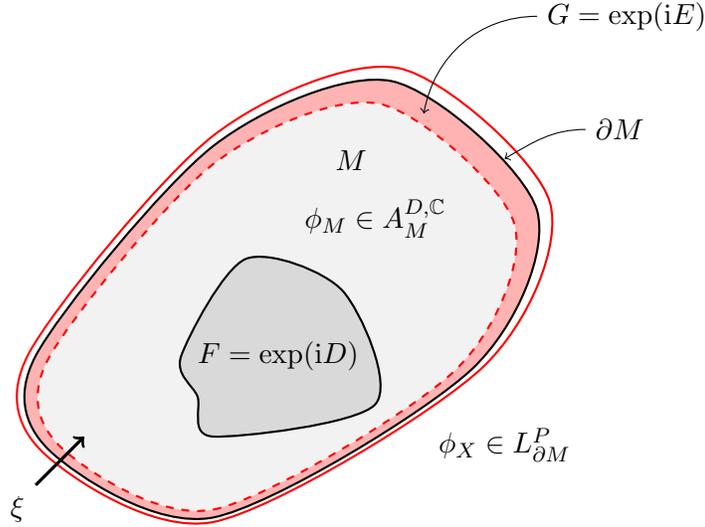
\begin{figure}
 \centering
 \begin{tikzpicture}[scale=2.15]
\draw [red,thick,yshift=0.5cm,fill=white,scale=1.04] plot [smooth cycle] coordinates {(-1,-0.3) (0,-0.7) (1.6,0.22) (2,1.2)(1.1,2)(0,1.6)(-1.1,0.3)};
\draw [thick,yshift=0.5cm,fill=red!30] plot [smooth cycle] coordinates {(-1,-0.3) (0,-0.7) (1.6,0.22) (2,1.2)(1.1,2)(0,1.6)(-1.1,0.3)};
\draw [dashed, red,thick,fill=gray!10,yshift=0.5cm,scale=0.93] plot [smooth cycle] coordinates {(-1,-0.3) (0,-0.7) (1.6,0.22) (2,1.2)(1.1,2)(0,1.6)(-1.1,0.3)};
\draw [thick,xshift=0cm,fill=gray!30] plot [smooth cycle] coordinates {(0,0.3) (1,0.5) (0.8,1.2) (0.2,1.4)  (-0.2,0.8) (-0.1,0.55)};
\node at (0.4,0.8) {$F=\exp (\im D)$};
\node at (0.85,2.01) {$M$};
\node at (1,1.65) {$\phi_M \in A_M^{D,\bC}$};
\node at (1.8,0.25) {$\phi_X \in L_{\partial M}^{P}$};
\draw (2.3,2.2) node [right] {$\partial M$} edge[out=180,in=30,->] (1.8,2.02);
\draw (2,2.9) node [right] {$G=\exp(\im E)$} edge[out=180,in=80,->] (1.3,2.3);
\draw[very thick,->] (-1.1,0) node [below left] {$\xi$}-- (-0.8,0.3);
\end{tikzpicture}
 \caption{Spacetime region $M$ containing a Weyl observable $F$ with Weyl slice observable $G$ determined by $\xi$ on the boundary. The interior solution is marked $\phi_M$ and the exterior one $\phi_X$.}
 \label{fig:boundary-obs-eval}
\end{figure}
From~(\ref{eq:sobsdiff}) we get, $\phi_{X}=\phi_{M}-\xi$. On the other hand we require $\phi_{X}\in L_{\partial M}^\pol$ and $\phi_{M}\in A_M^{D,\bC}$. To satisfy the latter requirement we can set $\phi_{M}=\eta+\Delta$, where $\Delta\in L_{M}^\bC$, while $\eta\in A_M^{D,\bC}\cap L^\pol_{\partial M}$ unique. We get $\phi_{X}=\eta+\Delta-\xi$. $\Delta$ is thus determined by the requirement $\Delta-\xi\in L^\pol_{\partial M}$. Decomposing $\xi=\xi^\ipol+\xi^\epol$ we get $\Delta=\xi^\ipol$.
We obtain
\begin{align}
 \rho_M^F\big(K^\pol_\xi\big) & = \rho_M^{G\cp F}\big(W^\pol\big)
 = \exp\bigg(\frac{\im}{2} (D+E)(\phi)\bigg)
 = \exp\bigg(\frac{\im}{2}D(\phi_{M})
 +\frac{\im}{2}E'\bigg(\frac{1}{2}(\phi_{M}+\phi_{X})\bigg)\bigg) \nonumber
\\
 & = \exp\bigg(\frac{\im}{2}D(\eta+\xi^\ipol)
 +\im\,\omega_{\partial M}\bigg(\xi,\eta+\xi^\ipol-\frac{1}{2}\xi\bigg)\bigg) \nonumber
 \\
 & = \exp\bigg(\frac{\im}{2}D(\eta)+\frac{\im}{2}D\big(\xi^\ipol\big)
 +\im\,\omega_{\partial M}\big(\xi,\eta+\xi^\ipol\big)\bigg) \nonumber
 \\
 & = \exp\bigg(\frac{\im}{2}D(\eta)+\frac{\im}{2}D\big(\xi^\ipol\big)+\frac{\im}{2}D\big(\xi^\ipol\big)
 +\im\,\omega_{\partial M}\big(\xi,\xi^\ipol\big)\bigg) \nonumber
 \\
 & = \exp\bigg(\frac{\im}{2}D(\eta)\bigg) F\big(\xi^\ipol\big)
 \exp\big(\im\,\omega_{\partial M}\big(\xi,\xi^\ipol\big)\big).
 \label{eq:sobsampl}
\end{align}

We can formulate the result analogous to the Factorization Theorem~\ref{thm:stdcorrfact} as follows.
\begin{thm}
 \label{thm:wfac}
 If $F$ is the Weyl observable $\exp(\im D)$, then,
\begin{gather}
 \rho_M^F\big(K^\pol_{\xi}\big) = \rho_M\big(K^\pol_{\xi}\big) F\big(\xi^\ipol\big) \rho_M^F\big(W^\pol\big),\quad\text{with} \label{eq:wobsfactg}
 \\[.5ex]
 \rho_M\big(K^\pol_{\xi}\big) =\exp\left(\im\,\omega_{\partial M}\left(\xi,\xi^\ipol\right)\right), \label{eq:amplg}
 \\
 \rho_M^F\big(W^\pol\big) =\exp\bigg(\frac{\im}{2}D(\eta)\bigg).
\end{gather}
Here $\eta\in A_M^{D,\bC}\cap L_{\partial M}^\pol$.
\end{thm}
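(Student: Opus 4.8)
The statement is essentially a repackaging of the explicit computation already carried out in equation~(\ref{eq:sobsampl}); the plan is to read off the three claimed factors from that identity and to verify that each is what the notation $\rho_M(K^\pol_\xi)$, $F(\xi^\ipol)$, $\rho_M^F(W^\pol)$ is designed to denote. If~(\ref{eq:sobsampl}) were not yet available, I would re-derive it along the following lines.

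First I would unwind the definition: by~(\ref{eq:defcohsg}) one has $\rho_M^F(K^\pol_\xi)=\rho_M^{G\cp F}(W^\pol)$, and, since the self-glued slice region $\hat{\partial M}$ is identified with an infinitesimal thickening of $\partial M$, the composite $G\cp F$ is the Weyl observable $\exp(\im(D+E))$ on $M$, with $E$ the boundary slice source encoded by $\xi$. Then~(\ref{eq:piwobs}) gives $\rho_M^F(K^\pol_\xi)=\exp\big(\tfrac{\im}{2}(D+E)(\phi)\big)$, where $\phi\in A_M^{D+E,\bC}\cap L_{\partial M}^\pol$ is the unique solution of the modified equations. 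The heart of the argument is to pin down $\phi$: split it into its interior part $\phi_M$ (solving the $D$-modified equations) and exterior part $\phi_X$ (solving the free ones). The slice source $E$, concentrated on $\partial M$, forces the jump $\phi_X-\phi_M=-\xi$ by~(\ref{eq:sobsdiff}); writing $\phi_M=\eta+\Delta$ with $\eta\in A_M^{D,\bC}\cap L_{\partial M}^\pol$ unique and $\Delta\in L_M^\bC$, the requirement $\phi_X=\eta+\Delta-\xi\in L_{\partial M}^\pol$, combined with the transversality $L_{\partial M}^\bC=L_{\tilde M}^\bC\oplus L_{\partial M}^\pol$ and the decomposition $\xi=\xi^\ipol+\xi^\epol$, yields $\Delta=\xi^\ipol$.

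With $\phi_M$ and $\phi_X$ in hand I would evaluate $\tfrac{\im}{2}(D+E)(\phi)$ directly: linearity of $D$ splits $D(\phi_M)=D(\eta)+D(\xi^\ipol)$; the averaging rule~(\ref{eq:sobseval}) together with~(\ref{eq:lsobsdef}) turns $E(\phi)$ into $2\omega_{\partial M}(\xi,\,\eta+\xi^\ipol-\tfrac12\xi)$; antisymmetry kills the $\omega_{\partial M}(\xi,\xi)$ term; isotropy of the Lagrangian subspace $L_{\partial M}^\pol$, which contains both $\xi^\epol$ and $\eta$, gives $\omega_{\partial M}(\xi,\eta)=\omega_{\partial M}(\xi^\ipol,\eta)$; and formula~(\ref{eq:obssol}), $D(\xi^\ipol)=2\omega_{\partial M}(\xi^\ipol,\eta)$, converts this cross term back into $\tfrac12 D(\xi^\ipol)$. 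Collecting exponents gives exactly $\exp\big(\tfrac{\im}{2}D(\eta)\big)\,F(\xi^\ipol)\,\exp\big(\im\,\omega_{\partial M}(\xi,\xi^\ipol)\big)$, i.e.~(\ref{eq:sobsampl}).

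The theorem then follows by specialization. Setting $\xi=0$ (trivial boundary observable, $K^\pol_0=W^\pol$) collapses~(\ref{eq:sobsampl}) to $\rho_M^F(W^\pol)=\exp\big(\tfrac{\im}{2}D(\eta)\big)$ with the same $\eta\in A_M^{D,\bC}\cap L_{\partial M}^\pol$ — which is just~(\ref{eq:piwobs}). Setting $D=0$ (trivial interior observable, so $\eta=0$ and $F(\xi^\ipol)=1$) gives $\rho_M(K^\pol_\xi)=\exp\big(\im\,\omega_{\partial M}(\xi,\xi^\ipol)\big)$. Substituting both back into~(\ref{eq:sobsampl}) produces the factorization~(\ref{eq:wobsfactg})--(\ref{eq:amplg}). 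I expect the only delicate point to be the geometric bookkeeping around the self-glued slice region $\hat{\partial M}$ and the sign in the jump relation~(\ref{eq:sobsdiff}): one must correctly match the induced orientations of the two copies of $\partial M$ so that ``interior minus exterior equals $-\xi$'' comes out with the right sign, and apply transversality correctly to isolate $\Delta=\xi^\ipol$. Once that is fixed, the remainder is linear algebra in $L_{\partial M}^\bC$.
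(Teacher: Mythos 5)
Your proposal is correct and follows essentially the same route as the paper: the theorem is stated there as a direct repackaging of the computation in~(\ref{eq:sobsampl}), whose derivation proceeds exactly as you describe (jump relation $\phi_X=\phi_M-\xi$, transversality giving $\Delta=\xi^\ipol$, isotropy of $L_{\partial M}^\pol$ and~(\ref{eq:obssol}) to convert $\omega_{\partial M}(\xi,\eta)$ into $\tfrac12 D(\xi^\ipol)$). Your final identification of the three factors by specializing to $\xi=0$ and $D=0$ is a clean way to confirm the bookkeeping and is consistent with the paper's definitions.
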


We already know that in the K\"ahler case and for $\xi\in L_{\partial M}$ we must recover Theorem~\ref{thm:stdcorrfact}. Nevertheless, it is instructive to see this explicitly. Thus, we have to bring into exact correspondence the three factors on the right-hand side of~(\ref{eq:wobsfact}) with the three factors on the right-hand side of~(\ref{eq:wobsfactg}). We have already commented extensively on the equality of the third factor, see equations~(\ref{eq:cohsbobs}) and~(\ref{eq:defcohsg}). To see the equality for the two remaining factors decompose $\xi=\xi^{\rm R}+J_{\partial M}\xi^{\rm I}$ with $\xi^{\rm R},\xi^{\rm I}\in L_M^\bC$ as in~Section~\ref{sec:corrfack}. Then, it is easily verified that $\xi^\ipol=\xi^{\rm R}-\im \xi^{\rm I}=\hat{\xi}$. In particular, $F\big(\xi^\ipol\big)= F\big(\hat{\xi}\big)$, which shows equality for the second factor. For the remaining factor we have
\begin{align*}
 \rho_M\big(K^\pol_\xi\big) & =\exp\big(\im\,\omega_{\partial M}\big(\xi,\xi^\ipol\big)\big)
 = \exp\big(\im\,\omega_{\partial M}\big(J_{\partial M} \xi^{\rm I},\xi^\ipol\big)\big) \nonumber\\
 & = \exp\bigg({-}\frac{\im}{2}\,g_{\partial M}\big(\xi^{\rm I},\xi^\ipol\big)\bigg)
 = \exp\bigg({-}\frac{\im}{2}\,g_{\partial M}\big(\xi^{\rm I},\xi^{\rm R}-\im \xi^{\rm I}\big)\bigg)
 = \rho_M(\ncoh_{\xi}).
\end{align*}
As expected, we confirm the equality~(\ref{eq:rencohs}) for any Weyl observable $F$ and thus for any obser\-vable~$F$ (since the Weyl observables generate all observables).

Back to the general polarization setting, for later use we consider the special case that the slice observable $G$ (and thus also $E$) satisfies the following \emph{interior translation invariance} condition: $G'(\phi+\delta)=G'(\phi)$ for any $\phi\in L_{\partial M}^\bC$ and any $\delta\in L_{\tilde{M}}^\bC$. Here this is equivalent to $\xi\in L_{\tilde{M}}^\bC$ and thus $\xi=\xi^{\ipol}$. We can then rewrite~(\ref{eq:sobsampl}) as
\begin{equation*}
 \exp\bigg(\frac{\im}{2}D(\eta)\bigg) F\big(\xi^\ipol\big)
 =\exp\bigg(\frac{\im}{2}D(\eta)\bigg) \exp(2\im\,\omega(\xi,\eta))
 =\exp\bigg(\frac{\im}{2}D(\eta)\bigg) G'(\eta).
\end{equation*}
Linearity in $G$ of the right-hand expression shows that this generalizes to arbitrary slice observables $G$ that satisfy the interior translation invariance condition.
\begin{lem}
 \label{lem:sobsit}
 If $F$ is the Weyl observable $\exp(\im D)$, $\eta\in A_M^{D,\bC}\cap L_{\partial M}^\pol$, and $G$ is an interior translation invariant slice observable, then
\begin{equation}
 \rho_M^{G\cp F}\big(W^{\pol}\big)=\rho_M^F\big(W^\pol\big)\, G'(\eta).
 \label{eq:sobsit}
\end{equation}
\end{lem}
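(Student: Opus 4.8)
The plan is to reduce the statement for a general interior translation invariant slice observable $G$ to the case where $G$ is a Weyl slice observable, and for that case to extract the claimed identity directly from the explicit computation already carried out in~(\ref{eq:sobsampl}). First I would recall that, by the generating property of Weyl observables established in Section~\ref{sec:vcgenobs}, any slice observable $G$ on $\partial M$ is obtained from a family of Weyl slice observables by the derivative construction~(\ref{eq:polydweyl})--(\ref{eq:polydwva}); since the interior translation invariance condition $G'(\phi+\delta)=G'(\phi)$ for $\delta\in L_{\tilde M}^\bC$ is linear in $G$ and is preserved under taking such derivatives and linear combinations, it suffices to prove~(\ref{eq:sobsit}) when $G=\exp(\im E)$ is a Weyl slice observable satisfying the invariance condition, i.e.\ when the element $\xi\in L_{\partial M}^\bC$ determining $E$ via~(\ref{eq:sobseval}) and~(\ref{eq:lsobsdef}) lies in $L_{\tilde M}^\bC$, so that $\xi=\xi^\ipol$ and $\xi^\epol=0$.

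For this case I would invoke Theorem~\ref{thm:wfac}: with $F=\exp(\im D)$ we have
\begin{equation*}
 \rho_M^{G\cp F}\big(W^\pol\big)=\rho_M^F\big(K^\pol_\xi\big)
 =\rho_M\big(K^\pol_\xi\big)\, F\big(\xi^\ipol\big)\, \rho_M^F\big(W^\pol\big),
\end{equation*}
where $\rho_M\big(K^\pol_\xi\big)=\exp\big(\im\,\omega_{\partial M}(\xi,\xi^\ipol)\big)$ by~(\ref{eq:amplg}). Substituting $\xi=\xi^\ipol$ gives $\rho_M\big(K^\pol_\xi\big)=\exp\big(\im\,\omega_{\partial M}(\xi,\xi)\big)=1$ by antisymmetry of $\omega_{\partial M}$, so the first factor drops out and
\begin{equation*}
 \rho_M^{G\cp F}\big(W^\pol\big)=F\big(\xi^\ipol\big)\,\rho_M^F\big(W^\pol\big)=F(\xi)\,\rho_M^F\big(W^\pol\big).
\end{equation*}
It then remains to identify $F(\xi)$ with the value $G'(\eta)$ claimed on the right-hand side of~(\ref{eq:sobsit}). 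Here I would observe, exactly as in the displayed computation just before the statement of the Lemma, that for a linear slice observable $E$ determined by $\xi$ one has $E'(\eta)=2\omega_{\partial M}(\xi,\eta)$, hence $G'(\eta)=\exp(\im E'(\eta))=\exp\big(2\im\,\omega_{\partial M}(\xi,\eta)\big)$; on the other hand, using~(\ref{eq:obssol}) with the solution $\eta\in A_M^{D,\bC}$ and the interior linear observable $D$, the factor appearing in~(\ref{eq:sobsampl}) is precisely $F\big(\xi^\ipol\big)\exp\big(\im\,\omega_{\partial M}(\xi,\xi^\ipol)\big)$, which in the present case $\xi=\xi^\ipol$ collapses to $\exp\big(\im\,\tfrac12 D(\xi)+\im\,\omega_{\partial M}(\xi,\xi)\big)$ wait—more cleanly, I would just rerun the chain of equalities in~(\ref{eq:sobsampl}) setting $\xi^\epol=0$ and read off that the product $F\big(\xi^\ipol\big)\exp\big(\im\,\omega_{\partial M}(\xi,\xi^\ipol)\big)$ equals $\exp\big(2\im\,\omega_{\partial M}(\xi,\eta)\big)=G'(\eta)$, using $D(\xi^\ipol)=2\omega_{\partial M}(\xi^\ipol,\eta)$ from~(\ref{eq:obssol}) applied on $\partial M$. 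This is the only place where a small computation is needed, and it is already essentially done in the text; the main (and only mild) obstacle is bookkeeping the symplectic identities and the decomposition $\xi=\xi^\ipol+\xi^\epol$ correctly so that the $\omega_{\partial M}(\xi,\xi^\ipol)$ term and the $D\big(\xi^\ipol\big)$ term combine into $2\im\,\omega_{\partial M}(\xi,\eta)$.

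Finally, having established~(\ref{eq:sobsit}) for Weyl slice observables $G$ obeying the invariance condition, I would conclude by linearity: an arbitrary interior translation invariant slice observable $G$ is a limit of linear combinations (and derivatives in auxiliary parameters) of such Weyl observables, both sides of~(\ref{eq:sobsit}) are linear in $G$ and continuous under the relevant derivative operations, and the assignment $G\mapsto G'(\eta)$ is likewise linear, so the identity propagates to all $G$ with the stated property. This is exactly the \emph{generating} argument alluded to in the sentence preceding the Lemma ("Linearity in $G$ of the right-hand expression shows that this generalizes to arbitrary slice observables $G$ that satisfy the interior translation invariance condition"), so the proof is short.
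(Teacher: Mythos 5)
Your proposal is correct and follows essentially the same route as the paper: specialize to an interior translation invariant Weyl observable $G$ (equivalently $\xi=\xi^\ipol\in L_{\tilde M}^\bC$), use the explicit evaluation~(\ref{eq:sobsampl}) (equivalently Theorem~\ref{thm:wfac}) so that $\exp\big(\im\,\omega_{\partial M}(\xi,\xi^\ipol)\big)=1$ and $F\big(\xi^\ipol\big)=\exp\big(2\im\,\omega_{\partial M}(\xi,\eta)\big)=G'(\eta)$ via~(\ref{eq:obssol}), and then extend to general interior translation invariant $G$ by linearity of both sides. The momentary slip about the exponent of $F\big(\xi^\ipol\big)$ is self-corrected, and the generating/linearity step is stated at the same level of detail as in the paper.
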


\subsection{Normal ordering and semiclassical interpretation}
\label{sec:nordsc}

We recall the generalized notion of normal ordering of the K\"ahler setting, given by the defining relation~(\ref{eq:defnoqk}) in~Section~\ref{sec:stdstate}. There is no difficulty in further generalizing this to the present setting with general polarizations. Thus, for $F\colon K_M^\bC\to\bC$ an observable in $M$ and $\xi\in L_{\partial M}^\bC$ we~define
\begin{equation}
 \rho_M^{\no{F}}\big(K_\xi^\pol\big)\defeq \rho_M\big(K_\xi^\pol\big) F\big(\xi^\ipol\big).
 \label{eq:defnoq}
\end{equation}
In the K\"ahler case take $\xi\in L_{\partial M}$ and note $(P^-(\xi))^\ipol=\xi^\ipol=\hat{\xi}$ to recover~(\ref{eq:defnoqk}).

For the special case of $F$ being a Weyl observable we obtain by comparison of definition~(\ref{eq:defnoq}) with the factorization identity~(\ref{eq:wobsfactg}) the simple relation
\begin{equation*}
 \rho_M^F\big(K_\xi^\pol\big)=\rho_M^{\no{F}}\big(K_\xi^\pol\big)\, \rho_M^F\big(W^\pol\big).
\end{equation*}
By linearity this relation extends to arbitrary states, (i.e., arbitrary boundary observables). To~emphasize this we chose a notation $X^\pol$ for the state, which really stands for $W^\pol_G$ with $G$ an arbitrary boundary observable,
\begin{equation}
 \rho_M^F\big(X^\pol\big)=\rho_M^{\no{F}}\big(X^\pol\big)\, \rho_M^F\big(W^\pol\big).
 \label{eq:Wickgen}
\end{equation}

In analogy to the coherent states in the K\"ahler quantization setting we can ask for a semiclassical interpretation of the Weyl slice observables represented by the notation $K^\pol_\xi$, compare Section~\ref{sec:sck}. As for the semiclassical interpretation of amplitudes (compare Section~\ref{sec:scka}), we do not have here an analogue of the tunneling interpretation afforded by formula~(\ref{eq:amplk}). Instead, with the corresponding formula~(\ref{eq:amplg}) we can merely affirm that the amplitude is unity if $\xi$ is an interior solution, i.e., if $\xi=\xi^\ipol$. If this is not the case, i.e., we consider a solution ``not allowed'' in the interior, there need not in general be a suppression. Since in the absence of an inner product we also do not have a notion of normalization of the ``state'' $K^\pol_\xi$ this need not be a cause for concern.

As for the semiclassical interpretation of correlation functions for $K^\pol_\xi$, the considerations of the K\"ahler case (Section~\ref{sec:scko}) largely still hold. In particular for $\xi=\xi^\ipol$ an interior solution, we obtain for Weyl observables the classical expectation value $F(\xi)$, see formula~(\ref{eq:wobsfactg}). Moreover, for normal ordered quantization this extends to arbitrary observables, compare formula~(\ref{eq:defnoq}). This is quite a strong indication that $K^\pol_\xi$ (or something like $K^\pol_{P^-(\xi)}$, see above) provides a good semiclassical description of the classical solution $\xi$ near the boundary $\partial M$, up to normalization.

\subsection{Changing the vacuum}
\label{sec:chvacso}

It is a natural question to consider what happens when the vacuum, i.e., the polarization deter\-mining the vacuum changes. More precisely, consider a region $M$ and
two polarizations $L_{\partial M}^\pol$ and $L_{\partial M}^{\pol'}$, both transverse to $L_{\tilde{M}}^\bC$. Is there then a ``state'' $Y^{\pol}_{\pol'}$ with respect to the unprimed vacuum that mimics the primed vacuum, i.e., such that all correlation functions coincide? That is, we want
\begin{equation*}
 \rho_M^F\big(W^{\pol'}\big)=\rho_M^F\big(Y^{\pol}_{\pol'}\big)
\end{equation*}
for all observables $F$ in $M$. As usual, it is sufficient to consider this for the special case of Weyl observables $F=\exp(\im D)$ with $D$ linear.

Let us write
\begin{equation}
 \rho_M^F\big(W^{\pol'}\big)=q\cdot \rho_M^F\big(W^{\pol}\big).
 \label{eq:relvacobs}
\end{equation}
Inserting the explicit formula~(\ref{eq:piwobs}) for the vacuum correlation functions yields
\begin{equation*}
 q=\exp\bigg(\frac{\im}{2} D(\eta'-\eta)\bigg).
\end{equation*}
Here, $\eta\in A_M^{D,\bC}\cap L_{\partial M}^\pol$ and $\eta'\in A_M^{D,\bC}\cap L_{\partial M}^{\pol'}$. Note that $\eta'-\eta\in L_M^\bC$, so we can use equation~(\ref{eq:obssol}) to get
\begin{equation*}
 q=\exp\big(\im\,\omega_{\partial M}(\eta'-\eta,\eta)\big)=\exp\big(\im\,\omega_{\partial M}(\eta',\eta)\big).
\end{equation*}
With the obvious notation we notice, $\eta=\eta^\epol$ and $\eta'=(\eta')^{\epol'}=(\eta+(\eta'-\eta))^{\epol'}=\eta^{\epol'}$. Define $G'\colon L_{\partial M}^\bC\to\bC$ by
\begin{equation}
 G'(\phi)=\exp\big(\im\,\omega_{\partial M}\big(\phi^{\epol'},\phi^{\epol}\big)\big).
\label{eq:chvacobs}
\end{equation}
Then, $q=G'(\eta)$. Consider the slice observable $G$ determined by $G'$. As is easily seen, $G$ is interior translation invariant and Lemma~\ref{lem:sobsit} applies. Comparison of the right-hand side of equation~(\ref{eq:relvacobs}) with equation~(\ref{eq:sobsit}) thus yields
\begin{equation*}
 Y^{\pol}_{\pol'}=W^{\pol}_G.
\end{equation*}

\begin{prop}
 For any observable $F$ in $M$ we have
 \begin{equation*}
 \rho_M^F\big(W^{\pol'}\big)=\rho_M^F\big(W^\pol_G\big).
 \end{equation*}
\end{prop}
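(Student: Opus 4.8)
The plan is to reduce the claimed identity directly to Lemma~\ref{lem:sobsit}, which has already been set up for exactly this purpose in the preceding discussion. The key observation, established in the paragraphs just above the statement, is that for a Weyl observable $F=\exp(\im D)$ one has $\rho_M^F(W^{\pol'})=q\cdot\rho_M^F(W^\pol)$ with $q=G'(\eta)$, where $\eta\in A_M^{D,\bC}\cap L_{\partial M}^\pol$ is the unique boundary-polarized solution appearing in formula~(\ref{eq:piwobs}), and $G'$ is the function defined in~(\ref{eq:chvacobs}). So the first step is simply to recall this computation: insert the explicit vacuum correlation formula~(\ref{eq:piwobs}) twice, use that $\eta'-\eta\in L_M^\bC$ together with equation~(\ref{eq:obssol}) to rewrite the phase, and identify $\eta=\eta^\epol$ and $\eta'=\eta^{\epol'}$ via the transversality decomposition, giving $q=\exp(\im\,\omega_{\partial M}(\eta^{\epol'},\eta^\epol))=G'(\eta)$.

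The second step is to verify that the slice observable $G$ determined by $G'$ is \emph{interior translation invariant} in the sense of Section~\ref{sec:sobsampl}, i.e., $G'(\phi+\delta)=G'(\phi)$ for all $\delta\in L_{\tilde{M}}^\bC$. This follows because adding $\delta\in L_{\tilde{M}}^\bC$ does not change the $L_{\partial M}^\pol$-component nor the $L_{\partial M}^{\pol'}$-component of $\phi$ relative to the respective transversal decompositions — more precisely $(\phi+\delta)^\epol=\phi^\epol$ and $(\phi+\delta)^{\epol'}=\phi^{\epol'}$, since $\delta$ lies in the complementary subspace $L_{\tilde{M}}^\bC$ used in \emph{both} decompositions — so the argument of the exponential in~(\ref{eq:chvacobs}) is unchanged.

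With these two facts in hand, the third step invokes Lemma~\ref{lem:sobsit}: since $G$ is interior translation invariant, $\rho_M^{G\cp F}(W^\pol)=\rho_M^F(W^\pol)\,G'(\eta)=q\cdot\rho_M^F(W^\pol)=\rho_M^F(W^{\pol'})$. By the definition $\rho_M^F(W^\pol_G)\defeq\rho_M^{G\cp F}(W^\pol)$ this is exactly the claimed equality for Weyl observables $F$. Finally, since Weyl observables generate all observables by the differentiation procedure of Section~\ref{sec:vcgenobs} and both sides of the identity are linear in $F$, the equality extends to an arbitrary observable $F$ in $M$, completing the proof.

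I do not expect a serious obstacle here: the analytic content has essentially been extracted already in the run-up to the proposition, and the only things to check are the algebraic identification $q=G'(\eta)$ and the interior translation invariance of $G$, both of which are short transversality arguments. The one point that deserves a careful line is that the \emph{same} complement $L_{\tilde{M}}^\bC$ appears in both transversal decompositions $L_{\partial M}^\bC=L_{\tilde{M}}^\bC\oplus L_{\partial M}^\pol=L_{\tilde{M}}^\bC\oplus L_{\partial M}^{\pol'}$, which is what makes $G'$ depend only on the ``exterior'' data and hence be interior translation invariant; this is guaranteed by the standing hypothesis that both $\pol$ and $\pol'$ are transverse to $L_{\tilde{M}}^\bC$.
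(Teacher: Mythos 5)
Your proposal is correct and follows essentially the same route as the paper: the paper's "proof" is precisely the discussion preceding the proposition, which establishes $q=G'(\eta)$ via formula~(\ref{eq:piwobs}), equation~(\ref{eq:obssol}) and the transversality decompositions, notes the interior translation invariance of $G$, and then applies Lemma~\ref{lem:sobsit} before extending from Weyl observables to general ones by linearity. Your explicit justification of the interior translation invariance (both decompositions sharing the complement $L_{\tilde{M}}^\bC$) fills in the one step the paper dismisses with "as is easily seen."
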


By allowing the observable $F$ itself to be a product of a slice observable and a general observable we obtain an apparent generalization.

\begin{cor}
 For any slice observable $H$ in $\partial M$ and any observable $F$ in $M$ we have
 \begin{equation*}
 \rho_M^F\big(W^{\pol'}_H\big)=\rho_M^F\big(W^\pol_{G\qp H}\big).
 \end{equation*}
\end{cor}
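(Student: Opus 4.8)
The plan is to reduce the Corollary directly to the Proposition by exploiting the definitional unfolding of the $W^\pol_G$ notation together with the algebra structure of slice observables established in Section~\ref{sec:qsobs}. Recall that $\rho_M^F\big(W^\pol_H\big)$ is by definition $\rho_M^{H\cp F}\big(W^\pol\big)$, where $H\cp F$ denotes the spacetime composite of the boundary slice observable $H$ with the interior observable $F$. First I would observe that, just as a boundary slice observable composes with an interior observable, two boundary slice observables on $\partial M$ compose with each other and with an interior observable, and the order of composition is unambiguous because the slice region auto-composes, $\hat{\partial M}\cup\hat{\partial M}=\hat{\partial M}$. Thus the left-hand side $\rho_M^F\big(W^{\pol'}_H\big)=\rho_M^{H\cp F}\big(W^{\pol'}\big)$, and now $H\cp F$ is simply an observable in $M$ (with a boundary slice factor absorbed into it).

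The key step is then to apply the Proposition to the observable $H\cp F$ in place of $F$: since the Proposition holds for \emph{any} observable in $M$, we get
\[
 \rho_M^{H\cp F}\big(W^{\pol'}\big)=\rho_M^{H\cp F}\big(W^\pol_G\big)=\rho_M^{G\cp(H\cp F)}\big(W^\pol\big).
\]
It then remains to identify $G\cp(H\cp F)$ with $(G\qp H)\cp F$. Here I would invoke the result of Section~\ref{sec:qsobs} that, upon quantization (i.e., upon insertion into formula~(\ref{eq:piwobs})), the spacetime composite $G\cp H$ of two slice observables on the same slice region behaves exactly as the single slice observable $G\qp H$; associativity of the spacetime composition (gluing) then lets us rebracket $G\cp(H\cp F)=(G\cp H)\cp F$, and the quantized equality turns this into $(G\qp H)\cp F$ inside any correlation function. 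Hence $\rho_M^{G\cp(H\cp F)}\big(W^\pol\big)=\rho_M^{(G\qp H)\cp F}\big(W^\pol\big)=\rho_M^F\big(W^\pol_{G\qp H}\big)$, which is the claim.

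The main obstacle — really the only non-bookkeeping point — is making precise that the two slice observables $G$ and $H$, living a priori on two infinitesimally displaced copies of $\partial M$ flanking the region $M$, can be merged into a single slice observable $G\qp H$ on $\partial M$ without affecting any correlation function in $M$. This is exactly the content of the computation in Section~\ref{sec:qsobs} (culminating in the Weyl relation~(\ref{eq:weylrel}) and the Groenewold--Moyal product), where it was shown that the intermediate boundary value drops out of formula~(\ref{eq:piwobs}) irrespective of the polarization and of any further observables present. Since $G$ here is the specific interior translation invariant observable of~(\ref{eq:chvacobs}), one could alternatively avoid even this by appealing to Lemma~\ref{lem:sobsit}: write $F=\exp(\im D)$ and $H$ as a Weyl slice observable (general $F$, $H$ following by the usual generating/derivative argument of Section~\ref{sec:vcgenobs}), apply the Proposition's computation to produce the factor $G'(\eta)$ with $\eta\in A_M^{D,\bC}\cap L_{\partial M}^\pol$, and note that $H\cp F$ replaces $D$ by $D+E$ and hence $\eta$ by the corresponding $\eta_{D+E}$, so that the factor $G'(\eta_{D+E})$ is precisely what $\rho_M^F\big(W^\pol_{G\qp H}\big)$ contributes; collecting the phase from~(\ref{eq:weylrel}) matches the two sides. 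Either route is routine once the merging step is granted; I would present the short algebraic version and cite Section~\ref{sec:qsobs} for the merging.
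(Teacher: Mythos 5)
Your proposal is correct and takes essentially the same route as the paper, which presents the Corollary with only the one-line remark that it follows ``by allowing the observable $F$ itself to be a product of a slice observable and a general observable''; you have simply made explicit the substitution $F\mapsto H\cp F$ in the Proposition and the identification of $G\cp H$ with $G\qp H$ via the merging computation of Section~\ref{sec:qsobs}. The ordering $G\qp H$ and the unfolding of the $W^\pol_G$ notation are handled correctly.
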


\subsection{Composition via observables}
\label{sec:compobs}

In the setting of Hilbert spaces on hypersurfaces constructed from K\"ahler polarizations, the composition of correlation functions for regions can be accomplished via a sum over a complete basis of the Hilbert space associated to the gluing hypersurface. This was reviewed in~Section~\ref{sec:compk}. Equivalently (and often more conveniently), the sum is replaced by a complete integral over coherent states. However, the present emphasis on slice observables over Hilbert space states suggests using the former also to accomplish composition. This turns out to be not only possible, but natural, induced by thinking of the composition as effected by a change of polarization. What is more, no gluing anomaly occurs for this manner of composing.

\begin{figure}
 \centering
 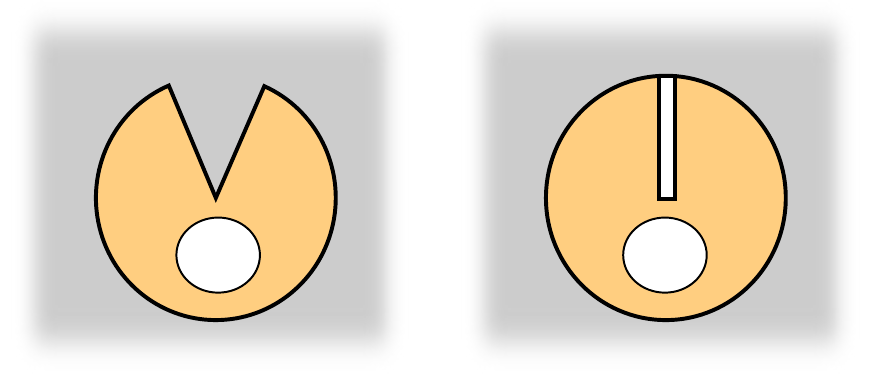
 \caption{Composition via slice observables: Setup for composition (left) with boundary polarization~$L_{\partial M}^\pol$, region composed along slice observable $\hat{\Sigma}$ (right) with boundary polarization $L_{\partial M_1}^{\pol_1}$, and location of boundary observable $G$ within region.}
 \label{fig:compsobs}
\end{figure}

Thus, let $M$ be a region with boundary $\partial M=\Sigma_1\cup\Sigma\cup\overline{\Sigma'}$, where $\Sigma'$ is a copy of $\Sigma$. Let~$M_1$ denote the gluing of $M$ to itself along $\Sigma$ with $\Sigma'$. We can think of the gluing as effected by the formation of a slice region $\hat{\Sigma}$ with boundary $\partial\hat{\Sigma}=\overline{\Sigma}\cup\Sigma'$, see Figure~\ref{fig:compsobs}.
We consider a polarization $L_{\partial M}^\pol\subseteq L_{\partial M}^\bC$ transversal to $L_{\tilde{M}}^\bC\subseteq L_{\partial M}^\bC$. Similarly, we consider a polarization $L_{\partial M_1}^{\pol_1}\subseteq L_{\partial M_1}^\bC$ transversal to $L_{\tilde{M}_1}^\bC\subseteq L_{\partial M_1}^\bC$. It is then clear that the Lagrangian subspace $L_{\partial M}^{(\pol_1,\mpol)}\defeq \big(L_{\partial M_1}^{\pol_1}\oplus L_{\overline{\hat{\Sigma}}}^\bC\big)\subseteq L_{\partial M}^\bC$ intersects $L_{\tilde{M}}^\bC$ only in $0$, but we require full transversality, $L_{\partial M}^\bC=L_{\partial M}^{(\pol_1,\mpol)}\oplus L_{\tilde{M}}^\bC$. In the following we write the decomposition $L_{\partial M}^\bC=L_{\partial M}^\pol\oplus L_{\tilde{M}}^\bC$ for~ele\-ments as $\phi=\phi^\epol + \phi^\ipol$. Similarly, we write the decomposition $L_{\partial M}^\bC= L_{\partial M}^{(\pol_1,\mpol)}\oplus L_{\tilde{M}}^\bC$ for~ele\-ments as $\phi=\phi^{\epol'}+ \phi^{\ipol'}$.

\begin{thm}
 \label{thm:scompo}
 Let $D\colon K_M^\bC\to \bC$ be a linear observable in $M$ and $D_1\colon K_{M_1}^\bC\to\bC$ the induced linear observable in $M_1$. Set $F=\exp(\im\, D)$ and $F_1=\exp(\im\, D_1)$. Then,
 \begin{equation}
 \rho_{M_1}^{F_1}\big(W^{\pol_1}\big)=\rho_{M}^F\big(Y^{\pol}_{(\pol_1,\mpol)}\big).
 \label{eq:omapcomp}
 \end{equation}
\end{thm}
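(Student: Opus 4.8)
The plan is to reduce everything to the defining path-integral formula~(\ref{eq:piwobs}) together with the classical gluing axioms, using the ``changing the vacuum'' result of Section~\ref{sec:chvacso} to dispose of the boundary slice observable that is hidden inside the notation $Y^\pol_{(\pol_1,\mpol)}$. First I would apply the Proposition of Section~\ref{sec:chvacso} with the primed polarization taken to be $(\pol_1,\mpol)$; its hypotheses are met because $L_{\partial M}^\pol$ and $L_{\partial M}^{(\pol_1,\mpol)}$ are both transversal to $L_{\tilde M}^\bC$ by assumption. This yields $\rho_M^F\big(Y^\pol_{(\pol_1,\mpol)}\big)=\rho_M^F\big(W^{(\pol_1,\mpol)}\big)$, so that the theorem is equivalent to the anomaly-free gluing identity $\rho_{M_1}^{F_1}\big(W^{\pol_1}\big)=\rho_M^F\big(W^{(\pol_1,\mpol)}\big)$.

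Both sides are now vacuum correlation functions of Weyl observables, and I would evaluate each by~(\ref{eq:piwobs}). The left-hand side is $\exp\big(\tfrac{\im}{2}D_1(\eta_1)\big)$, with $\eta_1$ the unique element of $A_{M_1}^{D_1,\bC}\cap L_{\partial M_1}^{\pol_1}$ (uniqueness from $L_{\partial M_1}^\bC=L_{\tilde M_1}^\bC\oplus L_{\partial M_1}^{\pol_1}$); the right-hand side is $\exp\big(\tfrac{\im}{2}D(\eta)\big)$, with $\eta$ the unique element of $A_M^{D,\bC}\cap L_{\partial M}^{(\pol_1,\mpol)}$ (uniqueness from the assumed full transversality $L_{\partial M}^\bC=L_{\partial M}^{(\pol_1,\mpol)}\oplus L_{\tilde M}^\bC$). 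Let $\hat\eta_1\in K_M^\bC$ be the image of $\eta_1$ under the canonical map $K_{M_1}^\bC\to K_M^\bC$ induced by the gluing, under which $D_1$ is by construction $D$ precomposed. By the classical gluing axioms of Appendix~\ref{sec:caxioms} (and their extension to modified actions: the gluing of solutions is again a solution, compatibly with the action and with restriction to the non-glued boundary, and the plain gluing here introduces no source on the gluing hypersurface) one has $\hat\eta_1\in A_M^{D,\bC}$. Moreover $\hat\eta_1$ restricts on $\Sigma_1$ to $\eta_1|_{\Sigma_1}\in L_{\partial M_1}^{\pol_1}$, and, arising from a configuration on $M_1$, it has coinciding germs on $\Sigma$ and $\Sigma'$, i.e., $\hat\eta_1|_{\Sigma\cup\overline{\Sigma'}}\in L_{\overline{\hat\Sigma}}^\bC$. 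Hence $\hat\eta_1|_{\partial M}\in L_{\partial M_1}^{\pol_1}\oplus L_{\overline{\hat\Sigma}}^\bC=L_{\partial M}^{(\pol_1,\mpol)}$, so by the uniqueness above $\eta=\hat\eta_1$; since $D(\hat\eta_1)=D_1(\eta_1)$, the two exponentials coincide and the identity follows, completing the proof via the first paragraph.

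The main obstacle is the bookkeeping in this last step: carefully identifying the affine solution spaces $A_{M_1}^{D_1,\bC}$ and $A_M^{D,\bC}$ through the gluing map, tracking orientations so that ``coinciding germs on $\Sigma$ and $\Sigma'$'' is exactly membership in $L_{\overline{\hat\Sigma}}^\bC$, and confirming that ordinary self-gluing along $\Sigma\sim\overline{\Sigma'}$ contributes no delta-type source on $\Sigma$ — which is precisely the structural reason no gluing anomaly arises here, in contrast to Theorem~\ref{thm:stdcompo}. Everything else is a direct appeal to~(\ref{eq:piwobs}), the Section~\ref{sec:chvacso} Proposition, and the classical axioms; conceptually the point is just that composition along $\Sigma$ is the same as passing to the ``real'' (auxiliary) polarization $L_{\overline{\hat\Sigma}}^\bC$ on the glued hypersurface, which is already packaged in those two ingredients.
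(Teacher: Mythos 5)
Your proposal is correct and follows essentially the same route as the paper's proof: reduce $\rho_M^F\big(Y^\pol_{(\pol_1,\mpol)}\big)$ to $\rho_M^F\big(W^{(\pol_1,\mpol)}\big)$ via the Section~\ref{sec:chvacso} proposition, evaluate both sides by~(\ref{eq:piwobs}), and identify the two inhomogeneous solutions through the gluing. Where the paper simply cites Axiom~(C7) of~\cite[Section~4.6]{Oe:feynobs} to obtain $D_1(\eta_1)=D(\eta')$, you unpack that axiom's content explicitly (image of $\eta_1$ lands in $A_M^{D,\bC}$ with boundary germ in $L_{\partial M_1}^{\pol_1}\oplus L_{\overline{\hat\Sigma}}^\bC$, hence equals $\eta'$ by transversality), which is a faithful elaboration rather than a different argument.
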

\begin{proof}
 Define $G'\colon L_{\partial M}^\bC\to\bC$ as
\begin{equation*}
 G'(\phi)=\exp\big(\im\,\omega_{\partial_M}\big(\phi^{\epol'},\phi^{\epol}\big)\big).
\end{equation*}
This defines a slice observable $G$ in $M$ and a state $Y^{\pol}_{(\pol_1,\mpol)}=W^\pol_G$ that encodes a change of polarization from $\pol$ to $(\pol_1,\mpol)$. Now let $\eta'\in A_M^{D,\bC}\cap L_{\partial M_1}^{\pol_1,\mpol}$. Then, we have
\begin{equation*}
 \rho_{M}^F\big(Y^{\pol}_{(\pol_1,\mpol)}\big)=\rho_{M}^F\big(W^{(\pol_1,\mpol)}\big) =\exp\bigg(\frac{\im}{2} D(\eta')\bigg).
\end{equation*}
On the other hand let $\eta_1\in A_{M_1}^{D_1,\bC}\cap L_{\partial M_1}^{\pol_1}$. By Axiom~(C7) of~\cite[Section~4.6]{Oe:feynobs} we have \mbox{$D_1(\eta_1)=D(\eta')$}. But,
\begin{equation*}
 \rho_{M_1}^{F_1}\big(W^{\pol_1}\big)=\exp\bigg(\frac{\im}{2} D_1(\eta_1)\bigg).
\end{equation*}
Thus, we obtain relation~(\ref{eq:omapcomp}).
\end{proof}

One aspect of this composition theorem that is not quite satisfactory is the fact that the slice observable $G$ acts not only on $L_{\Sigma}^\bC$ and $L_{\overline{\Sigma'}}^\bC$, but also on $L_{\partial M_1}^\bC$, where no gluing takes place. Of course with polarizations before and after composition completely independent, this cannot be otherwise. However, in case that the polarization is unchanged on $\partial M_1$, one might hope for the composition to be effectable by a slice observable $G$ not dependent on $L_{\partial M_1}^\bC$. This hope is justified as we show in the following. Thus, suppose that $L_{\partial M}^\pol=L_{\partial M_1}^{\pol_1}\oplus L_{\overline{\partial\hat{\Sigma}}}^{\pol_{\Sigma}}$.

For $\phi\in L_{\partial M}^\bC=L_{\partial M_1}^\bC \oplus L_{\Sigma}^\bC\oplus L_{\overline{\Sigma'}}^\bC$ we use the notation $\phi=(\phi_1,\phi_{\Sigma},\phi_{\Sigma'})$. With this we define a slice observable $G$ on $\partial M$ via
\begin{equation}
 G'(\phi)=\exp\big(\im\,\omega_{\partial M}\big((0,\phi_{\Sigma},\phi_{\Sigma'})^{\epol'},(0,\phi_{\Sigma},\phi_{\Sigma'})^{\epol}\big)\big).
 \label{eq:sobslcomp}
\end{equation}
Note that this slice observable is not interior translation invariant. However, it has two other invariance properties.
\begin{lem}
 $G$ as defined above has the following properties:
 \begin{enumerate}\itemsep=0pt
 \item[$(1)$]
 $G'(\phi+(\delta_1,0,0))=G'(\phi)$ for any $\phi\in L_{\partial M}^\bC$ and any $\delta_1\in L_{\partial M_1}^\bC$.
 \item[$(2)$]
 $G'\big(\phi+\big(0,\delta_{\Sigma},\delta_{\Sigma'}\big)^\ipol\big)=G'(\phi)$ for any $\phi\in L_{\partial M}^\bC$ and any $\big(\delta_\Sigma,\delta_{\Sigma'}\big)\in L_{\Sigma}^\bC \oplus L_{\Sigma'}^\bC$.
 \end{enumerate}
\end{lem}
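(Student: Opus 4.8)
The plan is to verify the two invariance properties directly from the definition~\eqref{eq:sobslcomp}, which reduces everything to understanding how the two projections $\phi\mapsto\phi^{\epol}$ (onto $L_{\partial M}^\pol$ along $L_{\tilde M}^\bC$) and $\phi\mapsto\phi^{\epol'}$ (onto $L_{\partial M}^{(\pol_1,\mpol)}$ along $L_{\tilde M}^\bC$) act on the relevant vectors, and then use bilinearity and antisymmetry of $\omega_{\partial M}$.

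For property $(1)$: I want to show that $(0,\phi_\Sigma,\phi_{\Sigma'})$ --- the vector fed to both projections inside $G'$ --- is unchanged when $\phi$ is shifted by $(\delta_1,0,0)$ with $\delta_1\in L_{\partial M_1}^\bC$. But this is immediate: in the decomposition $L_{\partial M}^\bC=L_{\partial M_1}^\bC\oplus L_\Sigma^\bC\oplus L_{\overline{\Sigma'}}^\bC$ the operation $\phi=(\phi_1,\phi_\Sigma,\phi_{\Sigma'})\mapsto(0,\phi_\Sigma,\phi_{\Sigma'})$ literally kills the first component, so adding $(\delta_1,0,0)$ first and then projecting gives the same result. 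Hence $G'(\phi+(\delta_1,0,0))=G'(\phi)$ with no reference to the polarizations at all. This step is trivial.

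For property $(2)$ the argument is the substantive one. I would argue as follows. Write $\delta=(0,\delta_\Sigma,\delta_{\Sigma'})$ and let $\delta^\ipol\in L_{\tilde M}^\bC$ denote its interior component in the decomposition $L_{\partial M}^\bC=L_{\partial M}^\pol\oplus L_{\tilde M}^\bC$. Because $\delta^\ipol$ lies in $L_{\tilde M}^\bC$ and the projections $(\cdot)^\epol$ and $(\cdot)^{\epol'}$ are both \emph{along} $L_{\tilde M}^\bC$, we have $(\delta^\ipol)^\epol=0$ and $(\delta^\ipol)^{\epol'}=0$. Now I must be careful: the vector that enters $G'$ is $(0,\phi_\Sigma,\phi_{\Sigma'})$ with $\phi$ replaced by $\phi+\delta^\ipol$; I need to know that the ``$(0,\cdot,\cdot)$''-truncation of $\phi+\delta^\ipol$ differs from that of $\phi$ exactly by the ``$(0,\cdot,\cdot)$''-truncation of $\delta^\ipol$, which equals $\delta^\ipol$ \emph{if} $\delta^\ipol$ has vanishing $L_{\partial M_1}^\bC$-component. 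This is where I would need the hypothesis $L_{\partial M}^\pol=L_{\partial M_1}^{\pol_1}\oplus L_{\overline{\partial\hat\Sigma}}^{\pol_\Sigma}$ together with the fact that $\delta$ itself has vanishing $L_{\partial M_1}^\bC$-component: writing the $L_{\tilde M}^\bC$-direct-sum decomposition respects the splitting, one checks that $\delta^\ipol$ also has vanishing $L_{\partial M_1}^\bC$-component, so its truncation is itself. (Concretely: $\delta\in L_\Sigma^\bC\oplus L_{\overline{\Sigma'}}^\bC$; decomposing along $L_{\tilde M}^\bC$ and using that the polarization $L_{\partial M}^\pol$ splits compatibly, both $\delta^\epol$ and $\delta^\ipol$ stay within the $\Sigma,\Sigma'$-block.) Granting this, the truncation of $\phi+\delta^\ipol$ is $(0,\phi_\Sigma,\phi_{\Sigma'})+\delta^\ipol$, and then
\begin{equation*}
 G'\big(\phi+\delta^\ipol\big)
 =\exp\Big(\im\,\omega_{\partial M}\big((v+\delta^\ipol)^{\epol'},(v+\delta^\ipol)^{\epol}\big)\Big)
 =\exp\big(\im\,\omega_{\partial M}(v^{\epol'},v^{\epol})\big)=G'(\phi),
\end{equation*}
where $v=(0,\phi_\Sigma,\phi_{\Sigma'})$ and I used $(\delta^\ipol)^\epol=(\delta^\ipol)^{\epol'}=0$ together with bilinearity.

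The main obstacle is precisely the bookkeeping in the last paragraph: making sure that passing to the interior component $\delta^\ipol$ does not leak a nonzero $L_{\partial M_1}^\bC$-component into $\delta$ before it is truncated. This is exactly the point where the standing assumption $L_{\partial M}^\pol=L_{\partial M_1}^{\pol_1}\oplus L_{\overline{\partial\hat\Sigma}}^{\pol_\Sigma}$ is used, and I would make that dependence explicit. Everything else --- bilinearity, the direct-sum projections, the exponential --- is routine.
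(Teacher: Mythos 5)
Property $(1)$ you handle exactly as the paper does: the truncation $\phi\mapsto(0,\phi_{\Sigma},\phi_{\Sigma'})$ kills the first component, so invariance under shifts by $(\delta_1,0,0)$ is immediate. The problem is in your treatment of property~$(2)$.

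The gap is your claim that $\delta^\ipol=(0,\delta_{\Sigma},\delta_{\Sigma'})^\ipol$ has vanishing $L_{\partial M_1}^\bC$-component. This is false in general. The decomposition $L_{\partial M}^\bC=L_{\partial M}^\pol\oplus L_{\tilde{M}}^\bC$ does \emph{not} respect the splitting $L_{\partial M}^\bC=L_{\partial M_1}^\bC\oplus L_{\Sigma}^\bC\oplus L_{\overline{\Sigma'}}^\bC$: the hypothesis $L_{\partial M}^\pol=L_{\partial M_1}^{\pol_1}\oplus L_{\overline{\partial\hat{\Sigma}}}^{\pol_{\Sigma}}$ makes the \emph{exterior} summand split, but the interior summand $L_{\tilde{M}}^\bC$ consists of boundary restrictions of solutions in $M$, and such a solution generically has nonzero restriction to every boundary component; if $L_{\tilde{M}}^\bC$ split along the components, each factor would have to be Lagrangian in the corresponding symplectic summand, which does not happen for solution spaces. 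What one \emph{can} conclude from $\delta$ having vanishing first component is that the first component of $\delta^\ipol$ equals minus the first component of $\delta^\epol$ and therefore lies in the Lagrangian subspace $L_{\partial M_1}^{\pol_1}$ --- not that it is zero.

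Consequently your one-line finish does not go through. The vector that actually enters $G'$ is the truncation of $\delta^\ipol$ to the $\Sigma,\Sigma'$-block, i.e., $\delta^\ipol$ minus its first component; this is in general no longer an element of $L_{\tilde{M}}^\bC$, so its $\epol$- and $\epol'$-projections do not vanish and bilinearity alone gives nothing. The paper's proof is exactly the bookkeeping you tried to bypass: it expands the exponent to first and second order in $\delta$, uses isotropy of $L_{\tilde{M}}^\bC$ to cancel most terms, and reduces what remains to symplectic pairings in which \emph{both} arguments have first components lying in $L_{\partial M_1}^{\pol_1}$; these vanish because $L_{\partial M_1}^{\pol_1}$ is isotropic. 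That is where the factorization hypothesis on $L_{\partial M}^\pol$ genuinely enters --- not in forcing $\delta^\ipol$ to stay in the $\Sigma,\Sigma'$-block. As written, your key intermediate claim is wrong and the conclusion does not follow from your argument.
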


\begin{proof}
 The first property follows immediately since the definition~(\ref{eq:sobslcomp}) makes no reference to the first component $\phi_1$ of the argument $\phi$. As for the second property, it suffices to show this for the argument of the exponential in~(\ref{eq:sobslcomp}). We evaluate the left-hand side of the property~(2) to be demonstrated order by order in $\delta$. That is, we have to show that both the first and the second order in $\delta$ vanish. We introduce the notation $(\cdot)_{\rm x00}$ to mean projection onto the first component. Similarly, $(\cdot)_{\rm 0xx}$ means projection onto the second and third components. We start with the first order expression in $\delta$:
 \begin{gather*}
 \omega_{\partial M}\big((0,\phi_{\Sigma},\phi_{\Sigma'})^{\epol'}, (((0,\delta_{\Sigma},\delta_{\Sigma'})^\ipol)_{\rm 0xx})^{\epol}\big)
 +\omega_{\partial M}\big((((0,\delta_{\Sigma},\delta_{\Sigma'})^\ipol)_{\rm 0xx})^{\epol'}, (0,\phi_{\Sigma},\phi_{\Sigma'})^{\epol}\big)
 \\ \quad\
 {} = \omega_{\partial M}\big((0,\phi_{\Sigma},\phi_{\Sigma'}) -(0,\phi_{\Sigma},\phi_{\Sigma'})^{\ipol'},((0,\delta_{\Sigma},\delta_{\Sigma'})^\ipol)_{\rm 0xx} -(((0,\delta_{\Sigma},\delta_{\Sigma'})^\ipol)_{\rm 0xx})^{\ipol}\big)
 \\ \quad\ \phantom{=}
 {}-\omega_{\partial M}\big((0,\phi_{\Sigma},\phi_{\Sigma'})-(0,\phi_{\Sigma}, \phi_{\Sigma'})^{\ipol},((0,\delta_{\Sigma},\delta_{\Sigma'})^\ipol)_{\rm 0xx} -(((0,\delta_{\Sigma},\delta_{\Sigma'})^\ipol)_{\rm 0xx})^{\ipol'}\big)
 \\ \quad\
 {} = \omega_{\partial M}\big((0,\phi_{\Sigma},\phi_{\Sigma'}), (0,\delta_{\Sigma},\delta_{\Sigma'})^\ipol\big)
 -\omega_{\partial M}\big((0,\phi_{\Sigma},\phi_{\Sigma'}), (((0,\delta_{\Sigma},\delta_{\Sigma'})^\ipol)_{\rm 0xx})^{\ipol}\big)
 \\ \quad\ \phantom{=}
 {}-\omega_{\partial M}\big((0,\phi_{\Sigma},\phi_{\Sigma'})^{\ipol'},((0,\delta_{\Sigma},\delta_{\Sigma'})^\ipol)_{\rm 0xx}\big) -\omega_{\partial M}\big((0,\phi_{\Sigma},\phi_{\Sigma'}),(0,\delta_{\Sigma},\delta_{\Sigma'})^\ipol\big)
 \\ \quad\ \phantom{=}
 {}+\omega_{\partial M}\big((0,\phi_{\Sigma},\phi_{\Sigma'}),(((0,\delta_{\Sigma}, \delta_{\Sigma'})^\ipol)_{\rm 0xx})^{\ipol'}\big)
 +\omega_{\partial M}\big((0,\phi_{\Sigma},\phi_{\Sigma'})^{\ipol}, ((0,\delta_{\Sigma},\delta_{\Sigma'})^\ipol)_{\rm 0xx}\big)
 \\ \quad\
 {}= -\omega_{\partial M}\big((0,\phi_{\Sigma},\phi_{\Sigma'})^{\epol},((0,\delta_{\Sigma}, \delta_{\Sigma'})^\ipol)_{\rm 0xx}\big)
 -\omega_{\partial M}\big((0,\phi_{\Sigma},\phi_{\Sigma'})^{\ipol'}, ((0,\delta_{\Sigma},\delta_{\Sigma'})^\ipol)_{\rm 0xx}\big)
 \\ \quad\ \phantom{=}
{} +\omega_{\partial M}\big((0,\phi_{\Sigma},\phi_{\Sigma'})^{\epol'}, ((0,\delta_{\Sigma},\delta_{\Sigma'})^\ipol)_{\rm 0xx}\big)
 +\omega_{\partial M}\big((0,\phi_{\Sigma},\phi_{\Sigma'})^{\ipol}, ((0,\delta_{\Sigma},\delta_{\Sigma'})^\ipol)_{\rm 0xx}\big)
 \\ \quad\
 {}= 2\omega_{\partial M}\big((0,\phi_{\Sigma},\phi_{\Sigma'})^{\ipol}, ((0,\delta_{\Sigma},\delta_{\Sigma'})^\ipol)_{\rm 0xx}\big)
 -2\omega_{\partial M}\big((0,\phi_{\Sigma},\phi_{\Sigma'})^{\ipol'}, ((0,\delta_{\Sigma},\delta_{\Sigma'})^\ipol)_{\rm 0xx}\big)
 \\ \quad\
 {} = 2\omega_{\partial M}\big((0,\phi_{\Sigma},\phi_{\Sigma'})^{\ipol}-(0,\phi_{\Sigma}, \phi_{\Sigma'})^{\ipol'},((0,\delta_{\Sigma},\delta_{\Sigma'})^\ipol)_{\rm 0xx}\big)
 \\ \quad\
 {} = 2\omega_{\partial M}\big((0,\phi_{\Sigma},\phi_{\Sigma'})^{\ipol} -(0,\phi_{\Sigma},\phi_{\Sigma'})^{\ipol'}, (0,\delta_{\Sigma},\delta_{\Sigma'})^\ipol-((0,\delta_{\Sigma},\delta_{\Sigma'})^\ipol)_{\rm x00}\big) \\ \quad\
 {} = -2\omega_{\partial M}\big((0,\phi_{\Sigma},\phi_{\Sigma'})^{\ipol} -(0,\phi_{\Sigma},\phi_{\Sigma'})^{\ipol'},((0,\delta_{\Sigma},\delta_{\Sigma'})^\ipol)_{\rm x00}\big) \\ \quad\
{} = 2\omega_{\partial M}\big((0,\phi_{\Sigma},\phi_{\Sigma'})^{\epol} -(0,\phi_{\Sigma},\phi_{\Sigma'})^{\epol'},((0,\delta_{\Sigma},\delta_{\Sigma'})^\ipol)_{\rm x00}\big) \\ \quad\
 {} = 2\omega_{\partial M}\big((0,\phi_{\Sigma},\phi_{\Sigma'})^{\epol} -(0,\phi_{\Sigma},\phi_{\Sigma'})^{\epol'},(-(0,\delta_{\Sigma},\delta_{\Sigma'})^\epol)_{\rm x00}\big) = 0.
 \end{gather*}
 The last equality follows, because both arguments of the symplectic form have first components in the Lagrangian subspace $L_{M_1}^{\pol_1}$.
 It remains to show that the term of second order in $\delta$ also vanishes. This is
 \begin{equation*}
 \omega_{\partial M}\big(\big(\big((0,\delta_{\Sigma},\delta_{\Sigma'})^\ipol\big)_{\rm 0xx}\big)^{\epol'}, \big(\big(\big(0,\delta_{\Sigma},\delta_{\Sigma'}\big)^\ipol\big)_{\rm 0xx}\big)^{\epol}\big).
 \end{equation*}
 However, if we set $(0,\phi_{\Sigma},\phi_{\Sigma'})\defeq \big((0,\delta_{\Sigma},\delta_{\Sigma'})^\ipol\big)_{\rm 0xx}$ this reduces to the previous demonstration. This completes the proof.
\end{proof}
It turns out that this gives us an analogue of Lemma~\ref{lem:sobsit}.

\begin{lem}
 \label{lem:sobst2}
 If $F$ is the Weyl observable $\exp(\im D)$, $\eta\in A_M^{D,\bC}\cap L_{\partial M}^\pol$, and $G$ is a slice observable with the stated invariance properties, then
\begin{equation}
 \rho_M^{G\cp F}\big(W^{\pol}\big)=\rho_M^F\big(W^\pol\big)\, G'(\eta).
 \label{eq:sobst2}
\end{equation}
\end{lem}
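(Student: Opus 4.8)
The plan is to mirror the proof of Lemma~\ref{lem:sobsit} as closely as possible, the only new ingredient being the replacement of interior translation invariance by the two weaker invariance properties. As there, both sides of the asserted identity are linear in $G$, and a slice observable satisfying properties~(1) and~(2) is generated by Weyl slice observables sharing those properties (namely $\exp(\im\, E)$ with $E$ linear and symplectically orthogonal to $L_{\partial M_1}^\bC$ and to the $\ipol$-image of $L_\Sigma^\bC\oplus L_{\overline{\Sigma'}}^\bC$), so it suffices to treat $G=\exp(\im\, E)$ with $E$ determined by some $\xi\in L_{\partial M}^\bC$ via~(\ref{eq:lsobsdef}). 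For such $G$ one has $\rho_M^{G\cp F}(W^\pol)=\rho_M^F(K^\pol_\xi)$, which~(\ref{eq:sobsampl}) evaluates to $\exp\big(\frac{\im}{2}D(\eta)\big)\,F(\xi^\ipol)\,\exp\big(\im\,\omega_{\partial M}(\xi,\xi^\ipol)\big)$; since $\rho_M^F(W^\pol)=\exp\big(\frac{\im}{2}D(\eta)\big)$, the lemma reduces to the identity $F(\xi^\ipol)\,\exp\big(\im\,\omega_{\partial M}(\xi,\xi^\ipol)\big)=G'(\eta)$.

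Next I would dispose of the factor $F(\xi^\ipol)$. Since $\xi^\ipol\in L_{\tilde{M}}^\bC$ is an interior solution, (\ref{eq:obssol}) gives $D(\xi^\ipol)=2\omega_{\partial M}(\xi^\ipol,\eta)$, hence $F(\xi^\ipol)=\exp\big(2\im\,\omega_{\partial M}(\xi^\ipol,\eta)\big)$; meanwhile $G'(\eta)=\exp\big(2\im\,\omega_{\partial M}(\xi,\eta)\big)$, and the difference of exponents equals $2\im\,\omega_{\partial M}(\xi^\epol,\eta)$, which vanishes because $\xi^\epol$ and $\eta$ both lie in the isotropic subspace $L_{\partial M}^\pol$. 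Thus $F(\xi^\ipol)=G'(\eta)$, and the whole lemma comes down to showing $\omega_{\partial M}(\xi,\xi^\ipol)=0$.

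This last identity is the one genuinely new point, and I expect the projection bookkeeping — of the same flavour as the long computation in the preceding lemma — to be the main obstacle; everything else is immediate. Writing $\xi^\ipol=(a_1,a_\Sigma,a_{\Sigma'})$ in the decomposition $L_{\partial M}^\bC=L_{\partial M_1}^\bC\oplus L_\Sigma^\bC\oplus L_{\overline{\Sigma'}}^\bC$, one has $(0,a_\Sigma,a_{\Sigma'})^\ipol=\xi^\ipol-(a_1,0,0)^\ipol$ because $(\xi^\ipol)^\ipol=\xi^\ipol$; pairing with $\xi$ and using property~(2) (which for a Weyl $G$ reads $\omega_{\partial M}(\xi,(0,\delta_\Sigma,\delta_{\Sigma'})^\ipol)=0$) reduces $\omega_{\partial M}(\xi,\xi^\ipol)$ to $\omega_{\partial M}(\xi,(a_1,0,0)^\ipol)$. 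Writing $(a_1,0,0)^\ipol=(a_1,0,0)-(a_1,0,0)^\epol$ and applying property~(1) to the $(a_1,0,0)$ term leaves $-\omega_{\partial M}(\xi,(a_1,0,0)^\epol)$; expanding $\xi=\xi^\ipol+\xi^\epol$ and invoking isotropy of $L_{\partial M}^\pol$ removes the $\xi^\epol$ contribution, and expanding $(a_1,0,0)^\epol=(a_1,0,0)-(a_1,0,0)^\ipol$ and invoking isotropy of $L_{\tilde{M}}^\bC$ removes the $(a_1,0,0)^\ipol$ contribution, so one is left with $\omega_{\partial M}(\xi^\ipol,(a_1,0,0))$. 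This vanishes because $\omega_{\partial M}$ is antisymmetric and block-diagonal with respect to the above decomposition (Axiom~(C3) of Appendix~\ref{sec:caxioms}): the $\Sigma_1$-component $a_1$ of $\xi^\ipol$ pairs trivially with $(a_1,0,0)$ by antisymmetry, and the $\Sigma$- and $\overline{\Sigma'}$-components of $\xi^\ipol$ pair trivially by block-diagonality. Combining this with the previous paragraph gives $\rho_M^{G\cp F}(W^\pol)=\rho_M^F(W^\pol)\,G'(\eta)$ for Weyl $G$, and then the full claim follows by linearity in $G$.
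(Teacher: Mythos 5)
Your proof is correct and follows essentially the same route as the paper's: reduce to a Weyl slice observable $G$ determined by some $\xi\in L_{\partial M}^\bC$, evaluate $\rho_M^{G\cp F}\big(W^\pol\big)$ via~(\ref{eq:sobsampl}), observe that $F\big(\xi^\ipol\big)=G'(\eta)$ by~(\ref{eq:obssol}) and isotropy of $L_{\partial M}^\pol$, and show that the residual phase $\omega_{\partial M}\big(\xi,\xi^\ipol\big)$ vanishes as a consequence of the two invariance properties. The only (cosmetic) difference is in that last step: the paper gets $\omega_{\partial M}\big(\xi,\xi^\ipol\big)=0$ in one line by noting that property~(1) forces the $L_{\partial M_1}^\bC$-component of $\xi$ itself to vanish, so that property~(2) applies directly with $(\delta_\Sigma,\delta_{\Sigma'})$ equal to the remaining components of $\xi$, whereas you reach the same conclusion by a longer but equally valid projection chase on the components of $\xi^\ipol$.
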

\begin{proof}
 As in the proof of Lemma~\ref{lem:sobsit}, we may at first assume that $G$ is a Weyl observable determined by $G'(\phi)=\exp\left(\im\,\omega_{\partial M}(\xi,\phi)\right)$ with $\xi\in L_{\partial M}^{\bC}$. By the first invariance property $\xi_1=0$ and by the second invariance property $\omega_{\Sigma}\left(\xi,(0,\delta_{\Sigma},\delta_{\Sigma'})^\ipol\right)=0$ for any $\big(\delta_\Sigma,\delta_{\Sigma'}\big)\in L_{\Sigma}^\bC \oplus L_{\Sigma'}^\bC$. But this implies $\omega_{\partial M}(\xi,\xi^\ipol)=0$ and thus equation~(\ref{eq:sobst2}) as in the proof of Lemma~\ref{lem:sobsit}.
\end{proof}

With this we are ready to show that the slice observable $G$ given by expression~(\ref{eq:sobslcomp}) effects the desired composition.

\begin{thm}
 \label{thm:sfcompo}
 Let $D\colon K_M^\bC\to \bC$ be a linear observable in $M$ and $D_1\colon K_{M_1}^\bC\to\bC$ the induced linear observable in $M_1$. Set $F=\exp(\im\, D)$ and $F_1=\exp(\im\, D_1)$. Then,
 \begin{equation}
 \rho_{M_1}^{F_1}\big(W^{\pol_1}\big)=\rho_{M}^F\big(W^{\pol}_G\big).
 \label{eq:omaplcomp}
\end{equation}
\end{thm}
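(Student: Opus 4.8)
The plan is to deduce~(\ref{eq:omaplcomp}) from the already established Theorem~\ref{thm:scompo}. Recall that Theorem~\ref{thm:scompo}, together with its proof, gives $\rho_{M_1}^{F_1}(W^{\pol_1})=\rho_M^F(Y^{\pol}_{(\pol_1,\mpol)})=\rho_M^F(W^{\pol}_{G_0})$, where $G_0$ denotes the ``full'' change-of-polarization slice observable used in that proof, determined by $G_0'(\phi)=\exp(\im\,\omega_{\partial M}(\phi^{\epol'},\phi^{\epol}))$. Hence it suffices to show $\rho_M^F(W^{\pol}_{G_0})=\rho_M^F(W^{\pol}_G)$ with $G$ the slice observable of~(\ref{eq:sobslcomp}); and, as always, since Weyl observables generate all observables, it is enough to treat $F=\exp(\im D)$ with $D\colon K_M^\bC\to\bC$ linear.

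For such an $F$ both correlation functions reduce to the \emph{same} vacuum correlation function times the evaluation of the relevant boundary observable on the solution $\eta\in A_M^{D,\bC}\cap L_{\partial M}^{\pol}$. On the one hand $G_0$ is interior translation invariant, so Lemma~\ref{lem:sobsit} yields $\rho_M^F(W^{\pol}_{G_0})=\rho_M^{G_0\cp F}(W^{\pol})=\rho_M^F(W^{\pol})\,G_0'(\eta)$. On the other hand $G$ is not interior translation invariant, but by the Lemma immediately preceding Lemma~\ref{lem:sobst2} it satisfies the two invariance properties required there, so Lemma~\ref{lem:sobst2} applies and gives $\rho_M^F(W^{\pol}_G)=\rho_M^{G\cp F}(W^{\pol})=\rho_M^F(W^{\pol})\,G'(\eta)$ with the same $\eta$. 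Thus the theorem reduces to the single identity $G'(\eta)=G_0'(\eta)$, and it is enough to establish this for an arbitrary $\eta\in L_{\partial M}^{\pol}$.

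To prove $G'(\eta)=G_0'(\eta)$ I would use the extra hypothesis $L_{\partial M}^{\pol}=L_{\partial M_1}^{\pol_1}\oplus L_{\overline{\partial\hat{\Sigma}}}^{\pol_{\Sigma}}$ to split $\eta=\eta^{(1)}+\eta^{(\Sigma)}$ with $\eta^{(1)}\in L_{\partial M_1}^{\pol_1}$ and $\eta^{(\Sigma)}\in L_{\overline{\partial\hat{\Sigma}}}^{\pol_{\Sigma}}$. Comparing components with respect to $L_{\partial M}^\bC=L_{\Sigma_1}^\bC\oplus L_{\Sigma}^\bC\oplus L_{\overline{\Sigma'}}^\bC$ (recall $\partial M_1=\Sigma_1$ and $\partial\hat{\Sigma}=\overline{\Sigma}\cup\Sigma'$) shows that $\eta^{(1)}$ has only a $\Sigma_1$-component while $\eta^{(\Sigma)}=(0,\eta_{\Sigma},\eta_{\Sigma'})$ --- exactly the vector fed into $G'$ in~(\ref{eq:sobslcomp}). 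Since $\eta\in L_{\partial M}^{\pol}$ and $\eta^{(\Sigma)}\in L_{\partial M}^{\pol}$, we have $\eta^{\epol}=\eta$ and $(\eta^{(\Sigma)})^{\epol}=\eta^{(\Sigma)}$, so $G_0'(\eta)=\exp(\im\,\omega_{\partial M}(\eta^{\epol'},\eta))$ and $G'(\eta)=\exp(\im\,\omega_{\partial M}((\eta^{(\Sigma)})^{\epol'},\eta^{(\Sigma)}))$; and since $\eta^{(1)}\in L_{\partial M_1}^{\pol_1}\subseteq L_{\partial M}^{(\pol_1,\mpol)}$ one has $\eta^{\epol'}=\eta^{(1)}+(\eta^{(\Sigma)})^{\epol'}$. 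Expanding $\omega_{\partial M}(\eta^{\epol'},\eta)$ bilinearly, the only surviving term is $\omega_{\partial M}((\eta^{(\Sigma)})^{\epol'},\eta^{(\Sigma)})$: indeed $\omega_{\partial M}(\eta^{(1)},\eta^{(1)})=0$ by antisymmetry; $\omega_{\partial M}(\eta^{(1)},\eta^{(\Sigma)})=0$ because $\omega_{\partial M}=\omega_{\Sigma_1}+\omega_{\Sigma}+\omega_{\overline{\Sigma'}}$ by Axiom~(C3) and the two arguments are supported on complementary hypersurface pieces; and $\omega_{\partial M}((\eta^{(\Sigma)})^{\epol'},\eta^{(1)})=0$ after splitting $(\eta^{(\Sigma)})^{\epol'}$ along $L_{\partial M}^{(\pol_1,\mpol)}=L_{\partial M_1}^{\pol_1}\oplus L_{\overline{\hat{\Sigma}}}^\bC$, using isotropy of the Lagrangian subspace $L_{\partial M_1}^{\pol_1}$ for the first summand and complementary support (note $L_{\overline{\hat{\Sigma}}}^\bC\subseteq L_{\Sigma}^\bC\oplus L_{\overline{\Sigma'}}^\bC$) for the second. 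This gives $G'(\eta)=G_0'(\eta)$, hence~(\ref{eq:omaplcomp}).

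The main obstacle is purely organizational: one must keep straight the several decompositions of $L_{\partial M}^\bC$ in play --- the hypersurface splitting $L_{\Sigma_1}^\bC\oplus L_{\Sigma}^\bC\oplus L_{\overline{\Sigma'}}^\bC$, the two splittings along $L_{\tilde M}^\bC$ (one with $L_{\partial M}^{\pol}$, one with $L_{\partial M}^{(\pol_1,\mpol)}$), and the new hypothesis $L_{\partial M}^{\pol}=L_{\partial M_1}^{\pol_1}\oplus L_{\overline{\partial\hat{\Sigma}}}^{\pol_{\Sigma}}$ --- together with the inclusions $L_{\partial M_1}^{\pol_1}\subseteq L_{\partial M}^{(\pol_1,\mpol)}$ and $L_{\overline{\hat{\Sigma}}}^\bC,L_{\overline{\partial\hat{\Sigma}}}^{\pol_{\Sigma}}\subseteq L_{\Sigma}^\bC\oplus L_{\overline{\Sigma'}}^\bC$; and one must check that the two invariance properties of $G$ are precisely what Lemma~\ref{lem:sobst2} requires. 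None of this is deep; once $G'(\eta)=G_0'(\eta)$ is in hand the theorem is immediate from Lemmas~\ref{lem:sobsit}, \ref{lem:sobst2} and Theorem~\ref{thm:scompo}.
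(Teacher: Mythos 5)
Your proof is correct and follows essentially the same route as the paper: the heart of the matter in both cases is the computation that, for $\eta\in L_{\partial M}^{\pol}=L_{\partial M_1}^{\pol_1}\oplus L_{\overline{\partial\hat{\Sigma}}}^{\pol_{\Sigma}}$, the cross terms between the $(\eta_1,0,0)$ and $(0,\eta_{\Sigma},\eta_{\Sigma'})$ components vanish (by isotropy of $L_{\partial M_1}^{\pol_1}$ and disjoint hypersurface support), so that $G'(\eta)=\exp\big(\im\,\omega_{\partial M}\big(\eta^{\epol'},\eta^{\epol}\big)\big)$, combined with Lemma~\ref{lem:sobst2}. The only difference is organizational: you close the argument by citing Theorem~\ref{thm:scompo}, whereas the paper repeats that theorem's final step (passing to $\eta'\in A_M^{D,\bC}\cap L_{\partial M}^{(\pol_1,\mpol)}$ and invoking Axiom~(C7)) directly.
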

\begin{proof}
Let $\eta\in A_M^{D,\bC}\cap \Big(L_{\partial M_1}^{\pol_1}\oplus L_{\overline{\partial\hat{\Sigma}}}^{\pol_{\Sigma}}\Big)$. By Lemma~\ref{lem:sobst2} we have
\begin{equation*}
 \rho_{M}^F\big(W^{\pol}_G\big)=\exp\bigg(\frac{\im}{2} D(\eta)\bigg) G'(\eta).
\end{equation*}
Now let $\eta'\in A_M^{D,\bC}\cap L_{\partial M}^{(\pol_1,\mpol)}$. Then,
\begin{align*}
 G'(\eta) & =\exp\big(\im\,\omega_{\partial M}\big((0,\eta_{\Sigma},\eta_{\Sigma'})^{\epol'},(0,\eta_{\Sigma},\eta_{\Sigma'})^{\epol}\big)\big)
 \nonumber \\
 & =\exp\big(\im\,\omega_{\partial M}\big((0,\eta_{\Sigma},\eta_{\Sigma'})^{\epol'}+ (\eta_1,0,0)^{\epol'},(0,\eta_{\Sigma},\eta_{\Sigma'})^{\epol}+(\eta_1,0,0)^{\epol}\big)\big) \nonumber \\
 & =\exp\big(\im\,\omega_{\partial M}\big(\eta^{\epol'},\eta^{\epol}\big)\big)
 = \exp\big(\im\,\omega_{\partial M}\big(\eta',\eta\big)\big).
\end{align*}
With this we get
\begin{equation*}
 \rho_{M}^F\big(W^{\pol}_G\big)=\exp\bigg(\frac{\im}{2} D(\eta)\bigg) \exp(\im\,\omega_{\partial M}(\eta',\eta)) = \exp\bigg(\frac{\im}{2} D(\eta')\bigg).
\end{equation*}
On the other hand let $\eta_1\in A_{M_1}^{D_1,\bC}\cap L_{\partial M_1}^{\pol_1}$. By Axiom~(C7) of \cite[Section~4.6]{Oe:feynobs} we have \mbox{$D_1(\eta_1)=D(\eta')$}. But,
\begin{equation*}
 \rho_{M_1}^{F_1}\big(W^{\pol_1}\big)=\exp\bigg(\frac{\im}{2} D_1(\eta_1)\bigg).
\end{equation*}
Thus, we obtain relation~(\ref{eq:omaplcomp}).
\end{proof}

\subsection{Wick's theorem}
\label{sec:genwick}

 {\samepage In the remainder of this Section~\ref{sec:genquant} we explore how various basic tools of textbook QFT generalize to the present framework. We start with Wick's theorem. That is, we consider correlation functions of monomial observables as in~Section~\ref{sec:vcgenobs}, but for general states rather than the vacuum state. To this end, we resort, as usual, to normal ordering, here as defined in~Section~\ref{sec:nordsc}. In particular, we recall equation~(\ref{eq:Wickgen}) for Weyl observables. Its right hand side serves as the generating function of Wick's theorem. Thus, for the product $D_1\cdots D_n$ we find on the arbitrary state $X^\pol$
\begin{align*}
 \rho_M^{D_1 \cdots D_n}\big(X^\pol\big) & = \bigg(\prod_{k=1}^n \bigg({-}\im \frac{\partial}{\partial \lambda_k}\bigg)\bigg) \rho_M^{\no{F}}\big(X^\pol\big)\, \rho_M^F\big(W^\pol\big) \bigg|_{\lambda_1,\dots,\lambda_n=0} \nonumber
 \\
 & = \sum_{m=0}^{\lfloor n/2 \rfloor} \sum_{\sigma\in S^n} \frac{1}{(2m)!\, (n-2m)!} \bigg(\bigg(\prod_{k=2m+1}^n \bigg({-}\im \frac{\partial}{\partial \lambda_{\sigma(k)}}\bigg)\bigg) \rho_M^{\no{F}}\big(X^\pol\big)\bigg) \nonumber
 \\
 &\phantom{=} \times \bigg(\bigg(\prod_{k=1}^{2m} \bigg({-}\im \frac{\partial}{\partial \lambda_{\sigma(k)}}\bigg)\bigg) \rho_M^F\big(W^\pol\big)\bigg) \bigg|_{\lambda_1,\dots,\lambda_n=0} \nonumber
 \\
 & = \sum_{m=0}^{\lfloor n/2 \rfloor} \sum_{\sigma\in S^n} \frac{1}{(2m)!\, (n-2m)!} \rho_M^{\no{D_{\sigma(2m+1)}\cdots D_{\sigma(n)}}}\big(X^\pol\big)\, \rho_M^{D_{\sigma(1)}\cdots D_{\sigma(2m)}}\big(W^\pol\big) \nonumber
\\
 & = \sum_{m=0}^{\lfloor n/2 \rfloor} \sum_{\sigma\in S^n} \frac{1}{2^m\, m!\, (n-2m)!} \rho_M^{\no{D_{\sigma(2m+1)}\cdots D_{\sigma(n)}}}\big(X^\pol\big)\, \prod_{j=1}^m\, \rho_M^{D_{\sigma(2j-1)} D_{\sigma(2j)}}\big(W^\pol\big).
\end{align*}
Here $\lfloor a \rfloor$ denotes the ``floor'' of $a$, i.e., the largest integer smaller or equal to $a$.

}

We remark the particular cases of a linear observable $D$ and a quadratic observable $E$,
\begin{align*}
 \rho_M^{D}\big(X^\pol\big) & = \rho_M^{\no{D}}\big(X^\pol\big),
 \\[.5ex]
 \rho_M^{E}\big(X^\pol\big) & = \rho_M^{\no{E}}\big(X^\pol\big) + \rho_M^{E}\big(W^\pol\big).
\end{align*}

\subsection{Connected amplitude and multi-particle states}
\label{sec:pstates}

We proceed to have a closer look at multi-particle states. We first consider the free amplitude for a multi-particle state. Let $\xi_1,\ldots,\xi_n\in L_{\partial M}^\bC$. Then, recalling~(\ref{eq:mpsweyl}) and~(\ref{eq:amplg}),
\begin{align*}
 \rho_M\big(Q^\pol_{\xi_1,\dots,\xi_n}\big) & = \bigg(\prod_{k=1}^n 2 \frac{\partial}{\partial \lambda_k}\bigg) \rho_M\big(K^\pol_{\lambda_1 \xi_1+\cdots+\lambda_n \xi_n}\big) \bigg|_{\lambda_1,\dots,\lambda_n=0} \nonumber
 \\
 & = \bigg(\prod_{k=1}^n 2 \frac{\partial}{\partial \lambda_k}\bigg) \exp\bigg(\im \sum_{k,l=0}^n \lambda_k\lambda_l \omega_{\partial M}\big(\xi_k,\xi_l^\ipol\big)\bigg) \bigg|_{\lambda_1,\dots,\lambda_n=0}.
\end{align*}
This vanishes if $n$ is odd. Otherwise, set $n=2m$ and we get
\begin{align*}
 \rho_M\big(Q^\pol_{\xi_1,\dots,\xi_n}\big)
 & = \bigg(\prod_{k=1}^{2m} 2 \frac{\partial}{\partial \lambda_k}\bigg) \frac{1}{m!} \bigg(\im \sum_{k,l=0}^n \lambda_k\lambda_l \omega_{\partial M}\big(\xi_k,\xi_l^\ipol\big)\bigg)^m \bigg|_{\lambda_1,\dots,\lambda_n=0} \nonumber\\
 & = \frac{1}{m!} \sum_{\sigma\in S^{2m}} \prod_{j=1}^m 4\im\omega_{\partial M}\big(\xi_{\sigma(2j-1)},\xi_{\sigma(2j)}^\ipol\big).
\end{align*}
This can be interpreted in the usual way, namely that there is no interaction and particles are merely pairwise identified.
We proceed to consider the same amplitude with a Weyl observable $F=\exp(\im D)$ present. This may serve as a generating function for the corresponding amplitude of an interacting theory. Thus, with Theorem~\ref{thm:wfac},
\begin{align*}
 \rho_M^F\big(Q^\pol_{\xi_1,\dots,\xi_n}\big)
 & = \bigg(\prod_{k=1}^n 2 \frac{\partial}{\partial \lambda_k}\bigg) \rho_M^F\big(K^\pol_{\lambda_1 \xi_1+\cdots+\lambda_n \xi_n}\big) \bigg|_{\lambda_1,\dots,\lambda_n=0} \nonumber
 \\
 & = \bigg(\prod_{k=1}^n\! 2 \frac{\partial}{\partial \lambda_k}\bigg) \rho_M\big(K^\pol_{\lambda_1 \xi_1+\cdots+\lambda_n \xi_n}\big) F\big(\lambda_1\xi_1^\ipol\!+\cdots\! +\lambda_n \xi_n^\ipol\big) \rho_M^F\big(W^\pol\big) \bigg|_{\lambda_1,\dots,\lambda_n=0} \nonumber
 \\
 & = \sum_{m=0}^{\lfloor n/2 \rfloor} \sum_{\sigma\in S^n} \frac{1}{(2m)!\, (n-2m)!} \bigg(\bigg(\prod_{k=1}^{2m} 2 \frac{\partial}{\partial \lambda_{\sigma(k)}}\bigg) \rho_M\big(K^\pol_{\lambda_1 \xi_1+\cdots+\lambda_n \xi_n}\big)\bigg)
 \nonumber
 \\
 & \phantom{=}\times \bigg(\bigg(\prod_{k=2m+1}^n 2 \frac{\partial}{\partial \lambda_{\sigma(k)}}\bigg) F\big(\lambda_1\xi_1^\ipol+\cdots +\lambda_n \xi_n^\ipol\big) \rho_M^F\big(W^\pol\big)\bigg) \bigg|_{\lambda_1,\dots,\lambda_n=0} \nonumber\\
 & = \sum_{m=0}^{\lfloor n/2 \rfloor} \sum_{\sigma\in S^n} \frac{1}{(2m)! (n\!-\!2m)!} \rho_M\big(Q^\pol_{\xi_{\sigma(1)},\dots,\xi_{\sigma(2m)}}\big)
 \bigg(\prod_{k=2m+1}^n\!\!\!\! 2\im D\big(\xi^\ipol_{\sigma(k)}\big)\bigg) \rho_M^F\big(W^\pol\big).
\end{align*}
The parameter $m$ in the outermost sum can be interpreted as follows. The contribution with given value $m$ corresponds to $m$ pairs of particles being identified and not interacting, while the other $n-2m$ particles participate in the interaction. In terms of diagrams this may be expressed as $m$ lines that connect $2m$ particles into pairs while the remaining particles are connected to~$F$. This motivates the definition of the \emph{connected} amplitude for coherent states as
\begin{equation*}
 \rho_{{\rm c},M}^F\big(K^\pol_\xi\big)\defeq \frac{\rho_M^F\big(K^\pol_\xi\big)}{\rho_M\big(K^\pol_\xi\big)} =\rho_M^F\big(K^\pol_{\xi^\ipol}\big).
\end{equation*}
We observe that the denominator never vanishes. The second equality is obtained by explicit inspection, using the results of Theorem~\ref{thm:wfac}. Note that by linearity this definition and the equality are valid for arbitrary (not necessarily Weyl) observables $F$ and thus for $F$ possibly defining an interacting theory.

For multi-particle states this yields
\begin{equation*}
 \rho_M^F\big(Q^\pol_{\xi_1,\dots,\xi_n}\big)
 = \sum_{m=0}^{\lfloor n/2 \rfloor} \sum_{\sigma\in S^n} \frac{1}{(2m)!\, (n-2m)!}\, \rho_M\big(Q^\pol_{\xi_{\sigma(1)},\dots,\xi_{\sigma(2m)}}\big)
 \rho_{{\rm c},M}^F\big(Q^\pol_{\xi_{\sigma(2m+1)},\dots,\xi_{\sigma(n)}}\big),
\end{equation*}
as well as its ``inverse'',
\begin{equation*}
 \rho_{{\rm c},M}^F\big(Q^\pol_{\xi_1,\dots,\xi_n}\big)
 = \sum_{m=0}^{\lfloor n/2 \rfloor} \sum_{\sigma\in S^n} \frac{(-1)^m}{(2m)!\, (n-2m)!}\, \rho_M\big(Q^\pol_{\xi_{\sigma(1)},\dots,\xi_{\sigma(2m)}}\big)
 \rho_M^F\big(Q^\pol_{\xi_{\sigma(2m+1)},\dots,\xi_{\sigma(n)}}\big).
\end{equation*}
Moreover, we have
\begin{equation}
 \rho_{{\rm c},M}^F\big(Q^\pol_{\xi_1,\dots,\xi_n}\big) =\rho_{M}^F\big(Q^\pol_{\xi_1^\ipol,\dots,\xi_n^\ipol}\big).
 \label{eq:acamps}
\end{equation}
Again, these expressions are valid for $F$ an arbitrary observable.

It is straightforward to carry out the same steps for \emph{normal ordered quantization}. In particular, we can make the analogous definition of connected amplitudes. In that case an additional simplification occurs, and we get
\begin{equation*}
 \rho_{{\rm c},M}^{\no{F}}\big(K^{\pol}_\xi\big)
 =\rho_{M}^{\no{F}}\big(K^{\pol}_{\xi^\ipol}\big)=F\big(\xi^\ipol\big).
\end{equation*}
For multi-particle states this translates to
\begin{equation*}
 \rho_{{\rm c},M}^{\no{F}}\big(Q^\pol_{\xi_1,\dots,\xi_n}\big)
 =\rho_{M}^{\no{F}}\big(Q^\pol_{\xi_1^\ipol,\dots,\xi_n^\ipol}\big)
 =\bigg(\prod_{k=1}^n 2 \frac{\partial}{\partial \lambda_k}\bigg) F\big(\lambda_1 \xi_1^\ipol+\cdots+\lambda_n \xi_n^\ipol\big) \bigg|_{\lambda_1,\dots,\lambda_n=0}.
\end{equation*}
In the case of a Weyl observable $F=\exp(\im D)$ this specializes to
\begin{equation*}
\rho_{{\rm c},M}^{\no{F}}\big(Q^\pol_{\xi_1,\dots,\xi_n}\big)
=\rho_{M}^{\no{F}}\big(Q^\pol_{\xi_1^\ipol,\dots,\xi_n^\ipol}\big)
=\prod_{k=1}^n 2\im D\big(\xi^\ipol_{k}\big).
\end{equation*}
On the other hand, for Weyl observables $F$ on general states we have the analogue of equation~(\ref{eq:Wickgen}),
\begin{equation*}
 \rho_{{\rm c},M}^{F}\big(X^{\pol}\big)=\rho_{{\rm c},M}^{\no{F}}\big(X^{\pol}\big)\rho_M^F\big(W^\pol\big).
\end{equation*}

\subsection{LSZ reduction}
\label{sec:genlsz}

The LSZ reduction procedure in QFT allows re-expressing transition amplitudes between initial and final particle states in terms of vacuum $n$-point functions, compare Section~\ref{sec:intro_pertlsz}. As previously mentioned, this provides one motivation for the proposal of the present section where we actually define ``states'' in terms of observables in the vacuum. This indeed turns out to make the LSZ reduction formula~(\ref{eq:stdlsz}) essentially tautological. Nevertheless, we have to correctly identify the observables involved.

We begin by recalling from Section~\ref{sec:sobsampl} the way multi-particle states are represented in our setting. Let $M$ be a region with exterior polarization $L_{\partial M}^\pol$ transversal to $L_M^\bC$. Let $\xi_1,\ldots,\xi_n\allowbreak\in L_{\partial M}^\bC$. Let $E_k$ define the linear boundary observable defined by $E_k'(\xi)=4\im\omega_{\partial M}(\xi_k,\phi)$. Let~$G$ be the product boundary observable~(\ref{eq:prodobsxi}), i.e., $G=E_1\cdots E_n$. By definition, we have~(\ref{eq:defmpso}). That is, with an interior observable $F$ representing an interaction we have
\begin{equation*}
 \rho_M^F\big(Q^\pol_{\xi_1,\dots,\xi_n}\big) = \rho_M^{E_1\cdots E_n\cp F}\big(W^\pol\big).
\end{equation*}
In the reduction formula~(\ref{eq:stdlsz}) we care about the connected amplitude, however, signaled by separating off on the right-hand side the ``disconnected terms''. In our setting we obtain the corresponding result by using equation~(\ref{eq:acamps}). To this end we define the boundary observables $\tilde{E}'_k(\phi)=4\im\omega_{\partial M}\big(\xi_k^\ipol,\phi\big)$. Then,
\begin{equation}
 \rho_{{\rm c},M}^F\big(Q^\pol_{\xi_1,\dots,\xi_n}\big)
 =\rho_{M}^F\big(Q^\pol_{\xi_1^\ipol,\dots,\xi_n^\ipol}\big)
 =\rho_M^{\tilde{E}_1\cdots\tilde{E}_n\cp F}\big(W^\pol\big).
 \label{eq:abslsz}
\end{equation}
We claim that this is precisely the generalization of the LSZ reduction formula~(\ref{eq:stdlsz}) to our setting.

In the remainder of this section we show how formula~(\ref{eq:stdlsz}) arises as a special case of~for\-mula~(\ref{eq:abslsz}). As a first step, we take a closer look at the linear slice observables $\tilde{E}_k$. Suppose for the moment that $F$ is a Weyl observable $F=\exp(\im D)$. Take $\eta$ to be a complexified solution of the equations of motion modified by $D$, i.e., $\eta\in A_M^{D,\bC}$. Then,
\begin{equation*}
 \tilde{E}'_k(\eta)=4\im\omega_{\partial M}\big(\xi_k^\ipol,\eta\big)= 2\im D\big(\xi_k^\ipol\big),
\end{equation*}
where the second equality arises from relation~(\ref{eq:obssol}). We assume further that the equations of motion are given in the form
\begin{equation*}
 (\dop \phi)(x)=0,
\end{equation*}
where $\dop$ is a differential operator in spacetime. We let $D$ be given by a source $j\colon M\to\bC$,
\begin{equation*}
 D(\phi)=\int_M \xd x\, \phi(x) j(x).
\end{equation*}
Then, $\eta\in A_M^{D,\bC}$ means that $\eta$ is a solution of the inhomogeneous equations of motion with source $j$. In particular,
\begin{equation*}
 (\dop\eta)(x)=j(x).
\end{equation*}
With this we can write
\begin{equation}
 \tilde{E}'_k(\eta)=2\im\int_M\xd x\, \xi_k^\ipol(x) (\dop\eta)(x).
 \label{eq:bdyobsint}
\end{equation}
Note that this equation makes no longer reference to the source $j$ or the observable $D$. So this is valid for $\eta\in A_M^{D,\bC}$ for all possible $D$. We conclude that it is valid for arbitrary configurations $\eta\in K_M^\bC$.\footnote{One might object that the configuration space $K_M$ might include configurations too irregular (e.g., non-differentiable) for this to make sense. However, we (intentionally) have not provided a formal definition of $K_M$ anywhere. It turns out that a good definition of $K_M$ appears to be precisely to take it to be generated by the spaces $A_M^D$ for all $D$. We will not provide any evidence for this claim in the present paper, however.} This also means that we can drop the requirement for $F$ to be a Weyl observable.

In order to recover expression~(\ref{eq:stdlsz}) we now specialize to Klein--Gordon theory in Minkowski spacetime. Moreover, we take the region $M$ to be determined by a time interval, $M=[t_1,t_2]\times\R^3$, which we extend to infinity, $t_1\to-\infty$, $t_2\to\infty$. Thus, consider a momentum state on the boundary $\partial M=\Sigma_{t_1}\sqcup \overline{\Sigma}_{t_2}$ with incoming momenta $q_1,\ldots,q_n$ and outgoing momenta $p_1,\ldots,p_m$. Set
\begin{equation*}
 \xi_k\defeq \frac{1}{\sqrt{2}} P^-((\phi_{q_k},0)),\qquad\text{and}\qquad
 \tau_l\defeq \frac{1}{\sqrt{2}} P^-((0,\phi_{p_l})).
\end{equation*}
Then, $Q^\pol_{\xi_1,\ldots,\xi_n,\tau_1,\ldots,\tau_m}$ describes the state in question, compare~(\ref{eq:mmso}). With this, the left-hand side of~(\ref{eq:stdlsz}) translates to our notation as
\begin{equation*}
 \langle p_1,\ldots,p_m| q_1,\ldots,q_n \rangle=\rho_M^F\big(Q^\pol_{\xi_1,\ldots,\xi_n,\tau_1,\ldots,\tau_m}\big),
\end{equation*}
where, again, $F$ encodes the interactions. As for the ``disconnected terms'' in~(\ref{eq:stdlsz}),
\begin{equation*}
 \rho_M^F\big(Q^\pol_{\xi_1,\ldots,\xi_n,\tau_1,\ldots,\tau_m}\big)= \text{disconnected terms}
 +\rho_{{\rm c},M}^F\big(Q^\pol_{\xi_1,\ldots,\xi_n,\tau_1,\ldots,\tau_m}\big).
\end{equation*}
It remains to rewrite the connected amplitude in terms of boundary observables. To this end define boundary observables by $B'_k(\phi)\defeq 4\im\omega_{\partial M}\big(\xi_k^\ipol,\phi\big)$ and $C'_l(\phi)\defeq 4\im\omega_{\partial M}\big(\tau_l^\ipol,\phi\big)$. Thus, by the generalized LSZ formula~(\ref{eq:abslsz}) we have
\begin{equation}
 \rho_{{\rm c},M}^F\big(Q^\pol_{\xi_1,\ldots,\xi_n,\tau_1,\ldots,\tau_m}\big)
 =\rho_M^{B_1\cdots B_n\cdot C_1\cdots C_m\cp F}\big(W^\pol\big)
 = \langle 0 | \tord B_1 \cdots B_n C_1\cdots C_m|0\rangle.
 \label{eq:mlsz}
\end{equation}
Here we have used on the right-hand side standard textbook notation. To understand the observables $B_k$ and $C_l$ we note that $(P^-(\xi))^\ipol=\xi^\ipol$ for arbitrary $\xi\in L_{\partial M}^\bC$. Also, as is easily checked,
\begin{equation*}
 (\phi_{q_k},0)^\ipol(t,x)= \frac{1}{\sqrt{2}} {\rm e}^{-\im (E_{q_k} t - q_k x)}\qquad\text{and}\qquad
 (0,\phi_{p_l})^\ipol(t,x)= \frac{1}{\sqrt{2}} {\rm e}^{\im (E_{p_l} t - p_l x)}.
\end{equation*}
Combining this with equation~(\ref{eq:bdyobsint}) we get for the boundary observables
\begin{equation*}
 B'_k(\phi)=\im\int_M\xd t\,\xd^3 x\, {\rm e}^{-\im (E_{q_k} t - q_k x)} (\dop\phi)(x),\qquad
 C'_l(\phi)=\im\int_M\xd t\,\xd^3 x\, {\rm e}^{\im (E_{p_l} t - p_l x)} (\dop\phi)(x).
\end{equation*}
Inserting this into~(\ref{eq:mlsz}) yields by linearity
\begin{gather*}
 \rho_M^{B_1\cdots B_n\cdot C_1\cdots C_m\cp F}\big(W^\pol\big)=
 \im^{n+m}\!\int\! \xd^4 x_1\cdots\xd^4 x_n \xd^4 y_1\cdots\xd^4 y_m
 \exp\bigg(\im \sum_{l=1}^m p_l\cdot y_l - \im \sum_{k=1}^n q_k\cdot x_k \bigg)
 \\ \hphantom{\rho_M^{B_1\cdots B_n\cdot C_1\cdots C_m\cp F}\big(W^\pol\big)=}
{} \times \dop_{x_1}\cdots \dop_{x_n} \dop_{y_1}\cdots \dop_{y_m}
 \langle 0 | \tord \phi(x_1)\cdots\phi(x_n)\phi(y_1)\cdots\phi(y_m)|0\rangle.
\end{gather*}
Here, 4-dimensional notation is used that translates as, $q_k\cdot x=E_{q_k} t - q_k x$ and $p_l\cdot y=E_{p_l} t - p_l y$. Thus, we recover also the right-hand side of the LSZ formula~(\ref{eq:stdlsz}).

\section{Inner products and Hilbert spaces of states}
\label{sec:iprod}

\subsection{General polarizations, real structure and inner product}
\label{sec:modreal}

We recall that a \emph{real structure} on a complex vector space $V$ is a complex conjugate linear involution $\alpha\colon V\to V$. If $V$ arises as the complexification $V=L^\bC$
of a real vector space $L$, then complex conjugation given by
$a+\im b \mapsto a-\im b$ for $a,b\in L$ defines a real structure.
Conversely, given a real structure on $V$, define $V^\alpha$ as the real subspace invariant under $\alpha$. Then $V$ arises as the complexification of $V^\alpha$ with complex conjugation given by $\alpha$. Suppose that $V$ is equipped additionally with a complex bilinear symplectic form $\omega\colon V\times V\to\bC$. We say that the real structure $\alpha$ is \emph{compatible} with the symplectic form $\omega$ if for all $x,y\in V$
\begin{equation}
 \omega(\alpha(x),\alpha(y))=\overline{\omega(x,y)}.
 \label{eq:realcomp}
\end{equation}
In that case, $\omega$ arises as the complexification of a real valued symplectic form on $V^\alpha$.

Let $(V,\omega)$ be a complex symplectic vector space with compatible real structure $\alpha$. Define the sesquilinear form $V\times V\to\bC$ given for $\phi,\eta\in V$ by
\begin{equation}
 (\phi,\eta)^\alpha\defeq 4\im\omega\left(\alpha(\phi),\eta\right).
 \label{eq:alphaip}
\end{equation}
As is easy to verify, this sesquilinear form is non-degenerate and hermitian. If $V=L^\bC$ and~$\alpha$ is given by complex conjugation this is just the sesquilinear form~(\ref{eq:stdipc}). Consider a complex Lagrangian subspace $V^\pol\subseteq V$ on which this sesquilinear form is positive-definite. If $V=L^\bC$ and $\alpha$ is complex conjugation, this defines precisely a K\"ahler polarization. In the general case, we call this an \emph{$\alpha$-K\"ahler polarization}. The conjugate $V^\cpol\defeq \alpha(V^\pol)$ of $V^\pol$ with respect to $\alpha$ is a~Lagrangian subspace of $V$ transversal to $V^\pol$.

Consider a complex symplectic vector space $(V,\omega)$ with a pair of transversal Lagrangian subspaces, $V^\pol$ and $V^\cpol$. That is, we have $V=V^\pol\oplus V^\cpol$. We write $x=x^\pol+x^\cpol$ for the corresponding decomposition of elements of $V$. Let $J\colon V\to V$ be the associated complex structure. That is, $J$ is the complex linear map equal to $\im\id$ on $V^\pol$ and equal to $-\im\id$ on $V^\cpol$. Note that we automatically have compatibility with the symplectic form, $\omega(J x,J y)=\omega(x,y)$. We may then define a complex bilinear form, compare equation~(\ref{eq:stdjip}),
\begin{equation}
 \{x,y\}\defeq 2 \omega(x,J y)+2\im\omega(x,y)=
 4\im\omega\big(x^\cpol,y^\pol\big).
 \label{eq:cipspol}
\end{equation}
This form is sesquilinear with respect to $J$, but it is clearly degenerate.

Suppose the complex symplectic vector space $(V,\omega)$ is equipped with a compatible real structure $\alpha$ and an $\alpha$-K\"ahler polarization $V^\pol\subseteq V$. Note that $J\circ\alpha=\alpha\circ J$. Restricting the bilinear form~(\ref{eq:cipspol}) to the real subspace $V^\alpha$ makes it hermitian and positive-definite. What is more, the positive-definite inner products~(\ref{eq:alphaip}) on $V^\pol$ and~(\ref{eq:cipspol}) are related for $\phi,\eta\in V^\alpha$ as
\begin{equation}
 \big(\phi^\pol,\eta^\pol\big)^\alpha=\{\phi,\eta\}.
 \label{eq:relaip}
\end{equation}
In the case where $V=L^\bC$ is the complexification of $L$ and $\alpha$ is given by complex conjugation we recover precisely the usual K\"ahler polarization setting and the relation between the positive-definite inner products on $L^+=L^\pol$ and on $L=(L^\bC)^\alpha$ given by~(\ref{eq:relaip}) becomes~(\ref{eq:relkip}). This is then the basis for the usual K\"ahler quantization of the phase space $L$.

In general the situation of interest is the following: We are given a real phase space $L$ with symplectic form $\omega$. Moreover, we are given a pair $L^\pol$, $L^\cpol$ of transversal complex Lagrangian subspaces of $L^\bC$, (encoding vacua on the two sides of a hypersurface). The task is then to find a~suitable compatible real structure $\alpha$ on $L^\bC$ so that $L^\pol$ encodes an $\alpha$-K\"ahler polarization. That is, $(i)$ $\alpha$ has to be a compatible real structure, $(ii)$ $\alpha$ has to interchange $L^\pol$ and $L^\cpol$, and~$(iii)$, the bilinear form~(\ref{eq:alphaip}) has to be positive-definite on $L^\pol$.

\subsection[Modified $\sst$-structure and GNS construction]{Modified $\boldsymbol{\sst}$-structure and GNS construction}
\label{sec:modstar}

We suppose in this section that we are given on the complexified symplectic phases space $(L_{\Sigma},\omega_{\Sigma})$ a compatible real structure $\alpha_{\Sigma}\colon L_{\Sigma}\to L_{\Sigma}$ as well as an $\alpha$-K\"ahler polarization $L_{\Sigma}^\pol\subseteq L_{\Sigma}^\bC$. We then show that this induces a modified $\sst$-structure on the algebra $\qsoa_{\Sigma}$ of slice observables. Moreover, the GNS construction, carried out in analogy to Section~\ref{sec:ssgns}, but with respect to the modified $\sst$-algebra, yields a Hilbert space of states.

If the real structure on $L_{\Sigma}^\bC$ is changed, then this affects the $\sst$-structure on the classical algebra of slice observables in a canonical way. Namely, the standard structure~(\ref{eq:stdss}) is replaced by,
\begin{equation}
 F^\alpha(\phi)\defeq \overline{F\left(\alpha(\phi)\right)}.
 \label{eq:alphass}
\end{equation}
Note that we write $F^\alpha$ instead of $F^\sst$ to distinguish this $\sst$-structure from the standard one. This $\sst$-structure also carries over to the quantum algebra $\qsoa_{\Sigma}$ because the defining relation~(\ref{eq:weylrel}) of the latter is compatible with it. This holds true for the present $\sst$-structure in the same way as it does for the usual complex conjugation, since the relation~(\ref{eq:weylrel}) does not refer to any specific compatible real structure. We denote by $\qsoa^\alpha_{\Sigma}$ the quantum algebra of slice observables with $\sst$-structure induced from the compatible real structure $\alpha_{\Sigma}$.

We recall the linear functional $v_{\Sigma}\colon \qsoa^\alpha_{\Sigma}\to\bC$ encoding the vacuum and given by the vacuum correlation function~(\ref{eq:vacf}). Its evaluation on a Weyl observable $F=\exp(\im D)$ was discussed in~Section~\ref{sec:vevweylslice}. The result is given by equation~(\ref{eq:vevweyl}), where the linear observable $D$ is determined by $\xi\in L_{\Sigma}^\bC$ via relation~(\ref{eq:lsobsdef}). In terms of the bilinear form~(\ref{eq:cipspol}) this is
\begin{equation*}
 v_{\Sigma}(F)=\exp\bigg({-}\frac{1}{4}\{\xi,\xi\}_{\Sigma}\bigg),
\end{equation*}
directly generalizing the notation of the K\"ahler case, see equation~(\ref{eq:vevweylk}). If $L_{\Sigma}^\pol$ is not a K\"ahler polarization this is not a positive functional on $\qsoa_{\Sigma}$, but since $L_{\Sigma}^\pol$ is an $\alpha$-K\"ahler polarization it is a positive functional on $\qsoa_{\Sigma}^\alpha$.

The sesquilinear form on $\qsoa_{\Sigma}^\alpha$ induced by $v_{\Sigma}$ is now given by
\begin{equation*}
 [G,F]_{\Sigma}^\alpha\defeq v_{\Sigma}(G^\alpha \qp F),
\end{equation*}
generalizing expression~(\ref{eq:sokip}). Moreover, for Weyl slice observables $F$ and $G$ determined by $\xi_D,\xi_E\in L_{\Sigma}^\bC$ as in~Section~\ref{sec:ssgns}, we get, generalizing~(\ref{eq:ipgns}),
\begin{equation*}
 [G,F]_{\Sigma}^\alpha
 =\exp\bigg({-}\frac{1}{4}\{\xi_D,\xi_D\}_{\Sigma} -\frac{1}{4}\{\alpha_{\Sigma}(\xi_E),\alpha_{\Sigma}(\xi_E)\}_{\Sigma} +\frac{1}{2}\{\xi_D,\alpha_{\Sigma}(\xi_E)\}_{\Sigma}\bigg).
\end{equation*}
In complete analogy to the K\"ahler case (Section~\ref{sec:ssgns}) the left ideal $\qsoi_{\Sigma}^\alpha\subseteq\qsoa_{\Sigma}^\alpha$ on which the sesquilinear form vanishes is generated by the relation $F\sim \one$ for the Weyl observables $F=\exp(\im D)$ with $\xi_D\in L_{\Sigma}^\pol$. The Hilbert space of the GNS construction on which $\qsoa_{\Sigma}^\alpha$ will be represented is then obtained as the completion of the quotient $\qsoa_{\Sigma}^\alpha/\qsoi_{\Sigma}^\alpha$. We denote this Hilbert space by $\cH_{\Sigma}^\alpha$.

As in the K\"ahler case, the Hilbert space $\cH_{\Sigma}^\alpha$ may alternatively be constructed as a Fock space or as a space of square-integrable $J_{\Sigma}$-holomorphic functions on $L_{\Sigma}^\alpha$. All its properties (including particle states and coherent states) are exactly as outlined in~Section~\ref{sec:stdstate}, except that we have to replace $L_{\Sigma}$ everywhere by $L_{\Sigma}^\alpha$ and complex conjugation on $L_{\Sigma}^\bC=(L_{\Sigma}^\alpha)^\bC$ by the map $\alpha_{\Sigma}$. What is more, the whole quantization scheme outlined in~Section~\ref{sec:stdstate} including hypersurface orientation reversal and hypersurface decomposition carries over to the $\alpha$-K\"ahler setting. Note that this requires $\alpha_{\Sigma}$ to be the same for both hypersurface orientations. That is, $\alpha_{\overline{\Sigma}}=\alpha_{\Sigma}$, which we assume from now onwards. It also raises the issue of locality for $\alpha_{\Sigma}$. That is, for a~hyper\-sur\-face decomposition, $\alpha_{\Sigma}$ needs to decompose accordingly, compare the similar discussion for K\"ahler polarizations at the end of Section~\ref{sec:stdstate}. With a consistent assignment of compatible real structures $\alpha_{\Sigma}$ to hypersurfaces $\Sigma$ (in addition to polarizations) we again satisfy Axioms~(T1), (T1b), (T2), (T2b) of Appendix~\ref{sec:qobsaxioms} upon quantization. What is more, since the functional $v_{\Sigma}$ is determined by the correlation function~(\ref{eq:vacf}), the GNS construction automatically ensures that Axiom~(T3x) of Appendix~\ref{sec:qobsaxioms} is satisfied, as in the K\"ahler case.

\subsection{Correlation functions and composition}
\label{sec:corrak}

We proceed to explore the interplay between the quantization based on observables presented in~Section~\ref{sec:genquant} and the Hilbert spaces constructed in this section, at the level of amplitudes and correlation functions. Firstly, it is clear that correlation functions continue to be well-defined by Theorem~\ref{thm:wfac}. We merely have to reinterpret the correlation functions in terms of boun\-dary observables as correlation functions in terms of boun\-dary states, as prescribed by equations~(\ref{eq:rencohs}),~(\ref{eq:recohs}) and~(\ref{eq:remps}). One might then expect to precisely recover Theorem~\ref{thm:stdcorrfact} describing the structure of correlation functions in the K\"ahler polarization case, as long as we replace everywhere $L_{\partial M}$ by $L_{\partial M}^\alpha$. There is a subtlety, however. Namely, there is an implicit assumption of compatibility of complex conjugation with the complexified interior space of solutions $L_M^\bC\subseteq L_{\partial M}^\bC$. More precisely, the assumption is that $L_M^\bC$ is invariant under complex conjugation. Since $L_M^\bC$ is the complexification of $L_M$, this is trivially the case. However, when we generalize to the $\alpha$-K\"ahler polarization, this condition takes the form
\begin{equation}
 \alpha_{\partial M}\big(L_M^\bC\big)=L_M^\bC.
 \label{eq:aintcomp}
\end{equation}
If this holds we may replace everywhere $L_M$ by $L_M^\alpha\defeq \big\{\phi\in L_M^\bC\colon \alpha_{\partial M}(\phi)=\phi\big\}$. Except for the substitutions, Theorem~\ref{thm:stdcorrfact} then holds as stated. Otherwise, even the formulation of~Theo\-rem~\ref{thm:stdcorrfact} as given in~Section~\ref{sec:corrfack} does not necessarily make sense. This is because it relies on the real orthogonal decomposition (with respect to $g_{\partial M}^\alpha$),
\begin{equation*}
 L_{\partial M}^\alpha=L_{M}^\alpha\oplus J_{\partial M} L_M^\alpha.
\end{equation*}
In the K\"ahler case this is \cite[Lemma~4.1]{Oe:holomorphic}. We shall refer to the compatibility condition~(\ref{eq:aintcomp}) as \emph{interior compatibility}.

Let us emphasize again that even without interior compatibility, the correlation functions are perfectly well-defined on the states of $\cH_{\partial M}^\alpha$ with their structure described by Theorem~\ref{thm:wfac}. To~more explicitly relate the boundary observables and states induced by their action, we note that by the $\alpha$-K\"ahler version of Corollary~\ref{cor:weylactvk} we have for $\xi\in L_{\Sigma}^\bC$
\begin{equation}
 K^\pol_\xi \cong \exp\left(\im\omega_{\Sigma}(\xi,\tau)\right) \ncoh^\alpha_{\tau},
 \label{eq:relsoca}
\end{equation}
where $\tau\defeq P^-_{\Sigma}(\xi)+\alpha_{\Sigma}(P^-_{\Sigma}(\xi))$.
This informal notation really stands for an equality of correlation functions,
\begin{equation*}
 \rho_M^F\big(K^\pol_\xi\big) = \exp\left(\im\omega_{\partial M}(\xi,\tau)\right) \rho_M^F(\ncoh^\alpha_{\tau}).
\end{equation*}
Here, $F$ is an arbitrary observable. We write a superscript $\alpha$ for the coherent states in $\cH^\alpha_{\Sigma}$. Recalling the discussion of Section~\ref{sec:nordsc}, relation~(\ref{eq:relsoca}) also clarifies the semiclassical interpretation of the coherent states in the $\alpha$-K\"ahler setting. That is, even though the latter are not labeled by real solutions they can be given a semiclassical interpretation through the mapping
\begin{equation*}
 \xi\mapsto P^-_{\Sigma}(\xi)+\alpha_{\Sigma}(P^-_{\Sigma}(\xi)),
\end{equation*}
when $\xi\in L_{\Sigma}$ is real.

As for the axiomatic system of GBQFT of Appendix~\ref{sec:qobsaxioms} with respect to amplitudes, we note that we satisfy Axiom~(TO4). Also, there is no difficulty in satisfying the disjoint composition Axiom~(TO5a). For the self-composition Axiom~(TO5b), however, we might expect that the corresponding Theorem~\ref{thm:stdcompo} in the K\"ahler case does depend on the interior compatibility condition~(\ref{eq:aintcomp}). There is certainly no difficulty if this is satisfied for all involved regions. Note that in the special case of a slice region, interior compatibility is satisfied as long as $\alpha$ is the same on both boundary hypersurfaces. This implies in particular that Proposition~\ref{prop:weylactk} on the action of slice observables on the Hilbert space of states carries over straightforwardly to the $\alpha$-K\"ahler setting, providing agreement with the GNS construction as in the K\"ahler case.

It turns out that even when the interior compatibility condition is not satisfied, a suitably generalized version of the composition Theorem~\ref{thm:stdcompo} holds, satisfying Axiom~(TO5b). To recall the context (see Section~\ref{sec:corrfack} for the K\"ahler case), consider a region $M$ with boundary $\partial M=\Sigma_1\cup \Sigma\cup \overline{\Sigma'}$, where $\Sigma'$ is a copy of $\Sigma$. Denote by $M_1$ the region obtained by gluing $M$ to itself along $\Sigma$ with $\Sigma'$, see Figure~\ref{fig:compobs}. We have polarizations $L_{\partial M}^\pol\subseteq L_{\partial M}^\bC$ and $L_{\partial M_1}^{\pol_1}\subseteq L_{\partial M_1}^\bC$, transversal to the interior polarizations $L_M^\bC\subseteq L_{\partial M}^\bC$ and $L_{M_1}^\bC\subseteq L_{\partial M_1}^\bC$ respectively.

We shall make the additional assumption that an element in $L_{\partial M_1}^{\pol_1}$ ``extends'' to an element in~$L_{\partial M}^{\pol}$ exactly in the same way as a solution in $L_{M_1}^\bC$ extends to a solution in $L_M^\bC$ by Axiom~(C7), equation~(\ref{eq:solext}) of Appendix~\ref{sec:caxioms}. This is justified by thinking of elements in $L_{\partial M_1}^{\pol_1}$, $L_{\partial M}^{\pol}$ as ``exterior solutions'', compare~\cite{CoOe:vaclag}. Note that this assumption is stronger than in Theorem~\ref{thm:scompo}, where the polarizations $L_{\partial M}^\pol$ and $L_{\partial M_1}^{\pol_1}$ are completely independent. On the other hand it is weaker than in Theorems~\ref{thm:stdcompo} and \ref{thm:sfcompo}, where $L_{\partial M}^\pol$ factorizes in terms of the decomposition of $\partial M$.

We shall not need to assume $\alpha$-K\"ahler polarizations. We shall merely assume that we are given a real structure $\alpha_{\Sigma}\colon L_{\Sigma}^\bC\to L_{\Sigma}^\bC$ on $L_{\Sigma}^\bC$ (not necessarily compatible with $\omega_\Sigma$). Moreover, we~assume we are given a real positive definite inner product $g_{\Sigma}^\alpha\colon L_{\Sigma}^\alpha\times L_{\Sigma}^\alpha\to\R$ making $L_{\Sigma}^\alpha$ into a~real separable Hilbert space. We do not assume that this is related to the symplectic form or a~complex structure as e.g., in equation~(\ref{eq:gip}). Recall that the inner product gives rise to a Gaussian measure $\nu_{\Sigma}^\alpha$, or simply denoted $\nu$, on an extension $\hat{L}_{\Sigma}^\alpha$ of the vector space~$L_{\Sigma}^\alpha$~\cite{Oe:holomorphic}, compare also Section~\ref{sec:stdstate}. We formulate the composition rule in the following not in terms of a sum over a~complete basis (as in Axiom~(TO5b) of Appendix~\ref{sec:qobsaxioms} or as in~Section~\ref{sec:corrfack}), but equivalently in terms of an integral over $\hat{L}_{\Sigma}^\alpha$, using a completeness relation analogous to formula~(\ref{eq:cohcompl}). For~relevant discussion of this point, see~\cite{Oe:holomorphic,Oe:feynobs}. What is more, we formulate the gluing rule only for Weyl observables, analogous to coherent states. As usual this is sufficient.

\begin{dfn}
\label{dfn:aadmiss}
We say the gluing is \emph{admissible} if the function
 \begin{equation*}
 \xi\mapsto \rho_M\big(K^\pol_{(0,\xi,\xi)}\big)\,\exp\bigg(\frac{1}{2} g_{\Sigma}^\alpha(\xi,\xi)\bigg)
 \end{equation*}
 is \emph{integrable}, i.e., is an element of $\rL^1\big(\hat{L_{\Sigma}^\alpha}\big)$. In this case the \emph{gluing anomaly factor} is defined as
 \begin{equation}
 \label{eq:defglanom}
 c\big(M;\Sigma,\overline{\Sigma'}\big)\defeq \int_{\hat{L}_{\Sigma}^\alpha} \rho_M\big(K^\pol_{(0,\xi,\xi)}\big)\,\exp\bigg(\frac{1}{2} g_{\Sigma}^\alpha(\xi,\xi)\bigg)\, \xd\nu(\xi).
 \end{equation}
\end{dfn}

\begin{thm}
\label{thm:acompo}
 In the above context, suppose the gluing is admissible. Let $D\colon K_M^\bC\to \bC$ be a~linear observable in $M$ and $D_1\colon K_{M_1}^\bC\to\bC$ the induced linear observable in $M_1$. Set $F=\exp(\im\, D)$ and $F_1=\exp(\im\, D_1)$. Then, for any $\phi\in L_{\partial M_1}^\bC$,
\begin{equation*}
 \rho_{M_1}^{F_1}\big(K^{\pol_1}_{\phi}\big)\, c\big(M;\Sigma,\overline{\Sigma'}\big) =\int_{\hat{L}_{\Sigma}^\alpha} \rho_M^F\big(K^\pol_{(\phi,\xi,\xi)}\big)\,\exp\bigg(\frac{1}{2} g_{\Sigma}^\alpha(\xi,\xi)\bigg)\, \xd\nu(\xi).
 \end{equation*}
\end{thm}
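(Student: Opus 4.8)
The strategy is to reduce the integral identity to the explicit formula for $\rho_M^F\big(K^\pol_{(\phi,\xi,\xi)}\big)$ provided by Theorem~\ref{thm:wfac}, perform the Gaussian integral over $\hat{L}_\Sigma^\alpha$ explicitly, and then identify the result with $\rho_{M_1}^{F_1}\big(K^{\pol_1}_\phi\big)$ using Axiom~(C7) exactly as in the proofs of Theorems~\ref{thm:scompo} and~\ref{thm:sfcompo}. First I would apply Theorem~\ref{thm:wfac} to each side: writing $\psi\defeq(\phi,\xi,\xi)\in L_{\partial M}^\bC$ and decomposing $\psi=\psi^\ipol+\psi^\epol$ with respect to $L_{\partial M}^\bC=L_{\tilde M}^\bC\oplus L_{\partial M}^\pol$, we get
\begin{equation*}
 \rho_M^F\big(K^\pol_{(\phi,\xi,\xi)}\big)=\exp\big(\im\,\omega_{\partial M}\big(\psi,\psi^\ipol\big)\big)\, F\big(\psi^\ipol\big)\,\rho_M^F\big(W^\pol\big),
\end{equation*}
so the integrand factors into the amplitude piece $\rho_M\big(K^\pol_{(\phi,\xi,\xi)}\big)$ times $F\big(\psi^\ipol\big)$ times the $\xi$-independent vacuum correlation $\rho_M^F\big(W^\pol\big)$. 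The Gaussian weight $\exp\big(\tfrac12 g_\Sigma^\alpha(\xi,\xi)\big)$ against $\xd\nu(\xi)$ is designed precisely to cancel the suppression implicit in $\rho_M\big(K^\pol_{(0,\xi,\xi)}\big)$, so convergence is exactly the admissibility hypothesis of Definition~\ref{dfn:aadmiss}.

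The key computational step is the Gaussian integration. Here I would use the linearity of the boundary decomposition in $\xi$: the map $\xi\mapsto(\phi,\xi,\xi)^\ipol$ is affine in $\xi$ (linear plus the $\phi$-dependent shift), and likewise $\xi\mapsto\omega_{\partial M}\big((\phi,\xi,\xi),(\phi,\xi,\xi)^\ipol\big)$ is a quadratic polynomial in $\xi$. Completing the square in the exponent—combining the quadratic form in $\xi$ coming from $\rho_M\big(K^\pol_{(0,\xi,\xi)}\big)$ with $\tfrac12 g_\Sigma^\alpha(\xi,\xi)$—the integral reduces to a shifted version of $c\big(M;\Sigma,\overline{\Sigma'}\big)$, and the stationary point $\xi_*$ of the $\xi$-dependence is precisely the value such that the glued configuration $(\phi,\xi_*,\xi_*)$ matches across $\Sigma$ and $\Sigma'$; by the jump relation~(\ref{eq:sobsdiff}) applied to the two copies of $\Sigma$ this says $\xi_*$ is the restriction to $\Sigma$ of the unique exterior solution $\eta'\in A_M^{D,\bC}\cap L_{\partial M}^{(\pol_1,\mpol)}$-type subspace. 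The remaining factor after completing the square is $c\big(M;\Sigma,\overline{\Sigma'}\big)$ and the shifted exponent evaluates to $\exp\big(\tfrac{\im}{2}D(\eta')\big)$ times the $\phi$-dependent term $\exp\big(\im\,\omega_{\partial M_1}(\phi,\phi^{\ipol_1})\big)F\big(\phi^{\ipol_1}\big)$ — which is exactly $\rho_{M_1}^{F_1}\big(K^{\pol_1}_\phi\big)$ by Theorem~\ref{thm:wfac} applied on $M_1$, once we invoke Axiom~(C7) of~\cite[Section~4.6]{Oe:feynobs} to identify $D(\eta')=D_1(\eta_1)$ with $\eta_1\in A_{M_1}^{D_1,\bC}\cap L_{\partial M_1}^{\pol_1}$ and the "extension" hypothesis on the polarizations to identify the $\phi$-components of the exterior solution on $\partial M$ and $\partial M_1$.

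The main obstacle I anticipate is the bookkeeping in the Gaussian integral when interior compatibility~(\ref{eq:aintcomp}) fails. Without it, the decomposition $L_{\partial M}^\bC=L_{\tilde M}^\bC\oplus L_{\partial M}^\pol$ is not adapted to the real structure $\alpha_\Sigma$, so $\psi^\ipol$ is not obtained by an orthogonal projection with respect to $g_\Sigma^\alpha$, and the quadratic form appearing in the exponent of the integrand need not be the one diagonalized by the measure $\nu$. One must therefore carefully split $\xi\in L_\Sigma^\alpha$ using both the polarization-induced and the $\alpha$-induced structures and verify that the cross terms assemble correctly; this is where the hypothesis that $L_{\partial M_1}^{\pol_1}$ "extends" to $L_{\partial M}^\pol$ in the manner of Axiom~(C7) does the real work, ensuring that the $\phi$-$\xi$ cross terms in the exponent reproduce precisely the interior-solution extension that relates the two regions. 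Modulo this, the identity follows by the same template as Theorems~\ref{thm:scompo} and~\ref{thm:sfcompo}, and as usual it suffices to prove it for Weyl observables since they generate all observables by the derivative construction of~Section~\ref{sec:vcgenobs}.
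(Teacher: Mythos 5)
Your overall architecture matches the paper's: reduce to Weyl observables, use the factorization of Theorem~\ref{thm:wfac}, strip the observable by translating along the affine solution space $A_M^{D,\bC}$ and invoking Axiom~(C7) to identify $D((\eta,\eta_\Sigma,\eta_\Sigma))=D_1(\eta)$, then shift the gluing variable $\xi$ so that the integral factors into $c\big(M;\Sigma,\overline{\Sigma'}\big)$ times the $M_1$-amplitude. However, there is a genuine gap at the central analytic step. The shift you need (your ``stationary point'' $\xi_*$, built from the exterior solution and from $\phi_\Sigma^\ipol+\phi_\Sigma^\epol$) is a \emph{complex} element of $L_\Sigma^\bC$, whereas $\nu$ is a Gaussian measure on the real space $\hat{L}_\Sigma^\alpha$ and the weight $\exp\big(\tfrac12 g_\Sigma^\alpha(\xi,\xi)\big)$ is only controlled there. ``Completing the square'' is therefore not available off the shelf: translating an infinite-dimensional Gaussian integral by a vector with a nonzero imaginary part ($\eta^{-\alpha}\neq 0$) is exactly the point that needs proof. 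The paper handles it by first establishing invariance of the integral under \emph{real} translations (citing Proposition~3.11 of~\cite{Oe:holomorphic}), then proving that the integrand is holomorphic in a complex parameter $t$ multiplying the imaginary part of the shift (Lemma~\ref{lem:iphol}), and concluding constancy for all $t\in\bC$ from constancy on $\R$; setting $t=\im$ delivers the complex translation. Without some such analytic-continuation argument your reduction to ``a shifted version of $c$'' is unjustified, and this is the heart of the theorem, not bookkeeping.

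A second, smaller gap: after the shift you assert that the remaining factor ``is exactly'' $\rho_{M_1}^{F_1}\big(K^{\pol_1}_\phi\big)$ times the anomaly, but this requires showing that
$\rho_M\big(K^\pol_{(\phi,\xi+\phi_\Sigma^\ipol+\phi_\Sigma^\epol,\,\xi+\phi_\Sigma^\ipol+\phi_\Sigma^\epol)}\big)
=\rho_M\big(K^\pol_{(0,\xi,\xi)}\big)\,\rho_{M_1}\big(K^{\pol_1}_{\phi}\big)$,
i.e.\ that all $\phi$--$\xi$ cross terms in $\omega_{\partial M}\big(\,\cdot^\epol,\cdot^\ipol\big)$ cancel. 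This is where the hypothesis that elements of $L_{\partial M_1}^{\pol_1}$ extend to $L_{\partial M}^\pol$ in the manner of Axiom~(C7) does its work, and it needs the explicit computation with $(\phi^\ipol,\phi_\Sigma^\ipol,\phi_\Sigma^\ipol)\in L_{\tilde M}^\bC$ and $(\phi^\epol,\phi_\Sigma^\epol,\phi_\Sigma^\epol)\in L_{\partial M}^\epol$ that the paper carries out. You correctly flag that the extension hypothesis is what makes the cross terms work out, but you do not verify it; also note that your worry about interior compatibility~(\ref{eq:aintcomp}) is misplaced here --- the proof never uses it, which is precisely why the theorem holds without that condition.
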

The proof is provided in Appendix~\ref{sec:compip}.

\subsection{Wave function of vacuum change}
\label{sec:wfvac}

If there is a Hilbert space of states as in the traditional K\"ahler quantization scheme (Section~\ref{sec:kquant}) or~the present generalized $\alpha$-K\"ahler quantization scheme, it is natural to ask what state encodes a~change of vacuum. The corresponding question for slice observables was answered in~Sec\-tion~\ref{sec:chvacso}. Thus, to get this state we would apparently just have to let this observable, determined by equation~(\ref{eq:chvacobs}), act on the vacuum state $\coh_0$ of the Hilbert space in the sense of Section~\ref{sec:qsobsopk} or in the generalized sense of the $\alpha$-K\"ahler setting of the present section. Since the observable is the exponential of a quadratic form this is not quite as straightforward as considering a Weyl observable for example. One difficulty arises from the fact that we cannot in general expect a~vacuum to live as a state in the Hilbert space build on a different vacuum. This manifests itself in that the ``state'' we are looking for is not normalizable in general. However, it turns out that it is well-defined as a (generally non-square-integrable) holomorphic wave function. We~shall call it a \emph{pseudo-state}.

The basis for our considerations will be the relation between vacuum states and amplitudes~\cite{CoOe:vaclag}. Consider a hypersurface $\Sigma$ and choose transversal polarizations $L_{\Sigma}^\pol$ and $L_{\Sigma}^\cpol$. We take $L_{\Sigma}^\pol$ to encode the vacuum on the side $\Sigma$ of the hypersurface while $L_{\Sigma}^\cpol$ encodes it on the opposite side, given by $\overline{\Sigma}$. Suppose we have a compatible positive-definite real structure $\alpha$ and have performed the corresponding $\alpha$-K\"ahler quantization yielding a Hilbert space $\cH_{\Sigma}^\alpha$. We now consider a~pola\-ri\-za\-tion $L_{\Sigma}^{\pol'}\subseteq L_{\Sigma}^\bC$ on $\Sigma$, different from $L_{\Sigma}^\pol$, but such that it is still transversal to $L_{\Sigma}^\cpol$.
Denote the pseudo-state encoding the polarization $L_{\Sigma}^{\pol'}$ on $\Sigma$ by $Y^\pol_{\pol'}$, in analogy to the notation in~Section~\ref{sec:chvacso} for the corresponding slice observable. We now imagine that $\overline{\Sigma}$ is the boundary~$\partial X$ of a region~$X$ whose space $L_X^\bC$ of interior solutions is precisely given by the polarization~$L_{\partial X}^{\pol'}$ on~$\overline{\Sigma}$, i.e., $L_X^\bC\defeq L_{\partial X}^{\pol'}$.\footnote{We recall that although the notation seems to suggest it, it is not meant to imply that $L_X^\bC$ is the complexification of some real subspace $L_X$.} For any state $\psi\in \cH_{\overline{\Sigma}}^\alpha$ we then have\footnote{Strictly speaking the inner product might not exist for all states $\psi$. It does exist, however, for coherent states~$\psi$, which is sufficient.}
\begin{equation*}
 \rho_X(\psi)=\big\langle \iota_{\overline{\Sigma}}(Y^{\pol}_{\pol'}),\psi\big\rangle_{\overline{\Sigma}}=\big\langle \iota_{\Sigma}(\psi), Y^{\pol}_{\pol'}\big\rangle_{\Sigma}.
\end{equation*}
This relation follows from the duality between amplitudes and vacuum wave functions~\cite{CoOe:vaclag}. Alter\-na\-ti\-vely, we can also see this as an implication of the composition rule (Theorem~\ref{thm:acompo} together with its disjoint version) if we imagine $\psi$ replaced by a region on the other side of $X$ with an~arbitrary observable in it.

To obtain the wave function of $Y^\pol_{\pol'}$ we use the reproducing property~(\ref{eq:reprod}) of coherent states,
\begin{equation*}
 Y^\pol_{\pol'}(\xi)=\big\langle \coh^\alpha_{\xi}, Y^\pol_{\pol'}\big\rangle_{\Sigma}
 = \big\langle \iota_{\Sigma}\big(\coh^\alpha_{\xi}\big), Y^\pol_{\pol'}\big\rangle_{\Sigma}
 = \rho_X\big(\coh_\xi^\alpha\big).
\end{equation*}
Note that we use the same notation $\coh^\alpha_{\xi}$ for the corresponding coherent state on either side (i.e., orientation) of the hypersurface $\Sigma$. We proceed to evaluate the amplitude $\rho_X$ with equation~(\ref{eq:amplg}) of Theorem~\ref{thm:wfac}, taking into account relation~(\ref{eq:recohs}). Denote the decomposition $L_{\Sigma}^{\bC}=L_{\Sigma}^\pol\oplus L_{\Sigma}^{\cpol}$ by $\xi=\xi^++\xi^-$ and $L_{\Sigma}^{\bC}=L_{\Sigma}^{\pol'}\oplus L_{\Sigma}^{\cpol}$ by $\xi=\xi^{\mathrm{X}}+\xi^{\mathrm{M}}$. Then, for $\xi\in L_{\Sigma}^\alpha$,
\begin{align*}
 \rho_X\big(\coh_\xi^\alpha\big)&
 =\rho_X\big(K^\cpol_{\xi^+}\big)=\exp\big(\im\,\omega_{\partial X}\big(\xi^+,(\xi^+)^X\big)\big) =\exp\big(\im\,\omega_{\partial X}\big(\xi^+,\xi^X\big)\big)
 \\
 & = \exp\big(\im\,\omega_{\Sigma}\big(\xi^X,\xi^+\big)\big).
\end{align*}
We note that the right-hand side does not explicitly depend on the region $X$. That is, it makes sense for any hypersurface $\Sigma$, depending only on the polarizations $L_{\Sigma}^{\pol}$, $L_{\Sigma}^{\pol'}$ and $L_{\Sigma}^{\cpol}$, with the condition that each of the first two is transversal to the third. We take this to mean that the pseudo-state $Y^\pol_{\pol'}$ and its wave function make sense on hypersurfaces that are not necessarily boundaries of a region. We obtain the wave function
\begin{equation}
 Y^\pol_{\pol'}(\xi)
 = \exp\big(\im\,\omega_{\Sigma}\big(\xi^X,\xi^+\big)\big)
 = \exp\bigg(\frac{1}{4}\big\{\xi^X,\xi\big\}_{\Sigma}\bigg).
 \label{eq:chvacwf}
\end{equation}
This formula is even independent of the choice of $\alpha_{\Sigma}$. However, given a compatible choice of $\alpha_{\Sigma}$, it is easily seen to define a holomorphic function on $L_{\Sigma}^\alpha$ with respect to the complex structure~$J_{\Sigma}$.
By construction, the wave function~(\ref{eq:chvacwf}) is normalized to yield unity when taking its inner product with the standard vacuum. Whether it can be normalized to correspond to an~ordinary state in the Hilbert space depends on the polarizations. The standard criterion in the case of~K\"ahler polarizations is that the difference between the corresponding complex structures has to be a Hilbert--Schmidt operator~\cite{Sha:linsymboson}.

To facilitate the interpretation of formula~(\ref{eq:chvacwf}) in terms of a superposition of multi-particle states we use the completeness relation~(\ref{eq:lcomplint}) for $L_{\Sigma}^\alpha$. With this we can rewrite the wave function~(\ref{eq:chvacwf}) as
\begin{gather}
 Y^\pol_{\pol'}(\xi)
 =\exp\bigg(\frac{1}{8}\int_{\hat{L}_{\Sigma}^\alpha}\big\{\xi^X,\phi\big\}_{\Sigma} \{\phi,\xi\}_{\Sigma}\xd\nu_{\Sigma}(\phi)\bigg)\nonumber\\
 \hphantom{Y^\pol_{\pol'}(\xi)}{}
 =\exp\bigg({-}\frac{1}{8}\int_{\hat{L}_{\Sigma}^\alpha}\big\{\tilde{\phi},\xi\big\}_{\Sigma} \{\phi,\xi\}_{\Sigma}\xd\nu_{\Sigma}(\phi)\bigg).
 \label{eq:wvfintexp}
\end{gather}
Here, $\tilde{\phi}\defeq (\phi^+)^M+\alpha\big((\phi^+)^M\big)$.
The significance of the integrand of the right-hand expression is that we can interpret it plainly as a 2-particle wave function with quantum numbers $\phi$ and $\tilde{\phi}$ respectively, compare expression~(\ref{eq:mswf}). It is also instructive to rewrite this in terms of creation operators, compare expression~(\ref{eq:coact}),
\begin{equation*}
 Y^\pol_{\pol'} =\exp\bigg({-}\frac{1}{4}\int_{\hat{L}_{\Sigma}^\alpha}
 a^\dagger_{\tilde{\phi}}\, a^\dagger_{\phi}\, \xd\nu_{\Sigma}(\phi)\bigg) \coh^\alpha_0.
\end{equation*}

\subsection{Vacuum change and Bogoliubov coefficients}
\label{sec:bogvac}

The traditional approach to quantization in curved spacetime relies on choices of basis~\cite{BiDa:qftcurved}. Accordingly, a change of vacuum is parametrized in terms of \emph{Bogoliubov coefficients} that relate the relevant basis. We examine in the following how the traditional language and results are recovered from the present setting.
At the same time we generalize these results to the $\alpha$-K\"ahler setting.

We suppose the same setting as in the previous Section~\ref{sec:wfvac}. That is, we have a hyper\-sur\-face~$\Sigma$ and transversal polarizations $L_{\Sigma}^\pol$ and $L_{\Sigma}^\cpol$. We also assume that we have a compatible positive-definite real structure $\alpha_{\Sigma}$. Let $\{u_k\}_{k\in I}$ be an orthonormal basis of $L_{\Sigma}^\pol$ with respect to the inner product~(\ref{eq:alphaip}). Then, $\{\alpha_{\Sigma}(u_k)\}_{k\in I}$ is an orthonormal basis of~$L_{\Sigma}^\cpol$ with respect to the negative of~the inner product. We have
\begin{equation*}
 (u_k,u_l)_{\Sigma}^\alpha=\delta_{k,l},\qquad (\alpha_{\Sigma}(u_k),\alpha_{\Sigma}(u_l))_{\Sigma}^\alpha=-\delta_{k,l}, \qquad (u_k,\alpha_{\Sigma}(u_l))_{\Sigma}^\alpha=0.
\end{equation*}
This is in complete analogy to the K\"ahler case~(\ref{eq:modeip}), which we recover when $\alpha_{\Sigma}$ is ordinary complex conjugation. Also, as in the K\"ahler case~(\ref{eq:kbaserel}), the orthonormal basis $\{u_k\}_{k\in I}$ is in~one-to-one correspondence to an orthonormal basis $\{w_k\}_{k\in I}$ of $L_{\Sigma}^\alpha$ with respect to~(\ref{eq:cipspol}) via $w_k=u_k+\alpha_{\Sigma}(u_k)$.
We suppose that we have another polarization $L_{\Sigma}^{\pol'}$ on $\Sigma$. We require that the inner product~(\ref{eq:alphaip}) is also positive-definite on $L_{\Sigma}^{\pol'}$. This implies that $L_{\Sigma}^{\pol'}$ is transversal to~$L_{\Sigma}^{\cpol}$, an assumption made in~Section~\ref{sec:wfvac}. Let $\{u_k'\}_{\in I}$ be an orthonormal basis of $L_{\Sigma}^{\pol'}$.

We may use the basis $\{w_k\}_{k\in I}$ to expand the wave function of $Y^\pol_{\pol'}$ similar to the continuous expansion~(\ref{eq:wvfintexp}),
\begin{equation*}
 Y^\pol_{\pol'}(\xi)
 =\exp\bigg(\frac{1}{4}\sum_{i\in I}\big\{\xi^X,w_i\big\}_{\Sigma}\{w_i,\xi\}_{\Sigma}\bigg)
 =\exp\bigg({-}\frac{1}{4}\sum_{i\in I}\big\{u_i^M,\xi\big\}_{\Sigma}\{w_i,\xi\}_{\Sigma}\bigg).
\end{equation*}
Consider an $n$-particle state $\psi_{i_1,\dots,i_n}\in \cH_{\Sigma}^\alpha$ parametrized in terms of the same basis via
\begin{equation*}
 \psi_{i_1,\dots,i_n}(\xi)\defeq\{w_{i_1},\xi\}_{\Sigma}\cdots \{w_{i_n},\xi\}_{\Sigma}.
\end{equation*}
In our conventions (compare Section~\ref{sec:stdstate}) the inner product between such states is given by
\begin{equation*}
 \langle\psi_{j_1,\dots,j_n},\psi_{i_1,\dots,i_n}\rangle_{\Sigma}
 =2^n \sum_{\sigma\in S^n} \delta_{{j_1},i_{\sigma(1)}}\cdots\delta_{{j_n},i_{\sigma(n)}}.
\end{equation*}
We now consider the inner product between such a generic $n$-particle state and $Y^\pol_{\pol'}$. If $n$ is odd this vanishes, so suppose $n=2m$. As is easy to see we get
\begin{equation*}
 \big\langle Y^\pol_{\pol'},\psi_{i_1,\dots,i_{2m}}\big\rangle_{\Sigma}
 =\frac{(-1)^m}{m!} \sum_{\sigma\in S^{2m}} \Lambda_{i_{\sigma(1)},i_{\sigma(2)}}\cdots \Lambda_{i_{\sigma(2m-1)},i_{\sigma(2m)}}.
\end{equation*}
Here we have defined
\begin{equation*}
 \Lambda_{k,l}\defeq \sum_{i\in I} \big\{w_k,\alpha\big(u_i^M\big)\big\}_{\Sigma} \{w_l, w_i\}_{\Sigma}.
\end{equation*}
However, using orthogonality and further properties of the objects involved we get
\begin{align*}
 \Lambda_{k,l}
 &= \big\{\alpha_{\Sigma}(u_k),\alpha_{\Sigma}\big(u_{l}^M\big)\big\}_{\Sigma}
 = 4\im\omega_{\Sigma}\big(\alpha_{\Sigma}(u_k),\alpha_{\Sigma}\big(u_{l}^M\big)\big)
 = 4\im\overline{\omega_{\Sigma}\big(u_k,u_l^M\big)}
 = 4\im\overline{\omega_{\Sigma}\big(u_k^X,u_l\big)}
 \\
 &= 4\im\omega_{\Sigma}\big(\alpha_{\Sigma}\big(u_k^X\big),\alpha_{\Sigma}(u_l)\big)
 = \big(u_k^X,\alpha_{\Sigma}(u_l)\big)^\alpha_{\Sigma}.
\end{align*}
We expand $u_k^X$ in terms of the basis $\{u'_i\}_{i\in I}$ via
\begin{equation*}
 u_k^X=\sum_j c_{k,j} u_j'.
\end{equation*}
The coefficients $c_{k,j}$ can be related to the Bogoliubov coefficients in the expansion
\begin{equation*}
 u_i'=\sum_{j\in I} a_{i,j} u_j + \sum_j b_{i,j} \alpha_{\Sigma}(u_j),
\end{equation*}
where $a_{i,j}=(u_j,u_i')^\alpha_{\Sigma}$ and $b_{i,j}=-(\alpha_{\Sigma}(u_j),u_i')^\alpha_{\Sigma}$.
To this end, consider
\begin{equation*}
 \delta_{i,k}=(u_i,u_k)^\alpha_{\Sigma}=\big(u_i,u_k^X\big)^\alpha_{\Sigma}=\sum_{j\in I} c_{k,j} (u_i,u_j')^\alpha_{\Sigma}
 =\sum_{j\in I} c_{k,j} a_{j,i}.
\end{equation*}
That is, $c$ as a matrix is inverse to $a$, $c_{k,j}=\big(a^{-1}\big)_{k,j}$. With this we may rewrite $\Lambda_{k,l}$ as follows,
\begin{equation*}
 \Lambda_{k,j}=\sum_{j\in I} \overline{c_{k,j}} (u_j',\alpha_{\Sigma}(u_l))^\alpha_{\Sigma}
 = -\sum_{j\in I} \overline{(a^{-1})_{k,j} b_{j,l}}.
\end{equation*}
Up to minor changes of conventions, this recovers in the K\"ahler case the well known formulas of DeWitt~\cite{Dew:qftcurved}.

\subsection{Return to reality}
\label{sec:retrel}

Working with the space $L_{\Sigma}^\alpha$ instead of with $L_{\Sigma}$ might appear as an additional complication. This motivates looking for an identification of the two spaces in order to ``pull back'' all the structures from $L_{\Sigma}^\alpha$ to $L_{\Sigma}$. This would allow expressing everything again in terms of $L_{\Sigma}$, as if we were working with a K\"ahler polarization. We shall assume here that we are given such an~identification, i.e., a linear real bijection $I_{\Sigma}\colon L_{\Sigma}\to L_{\Sigma}^\alpha$.
Define
\begin{gather}
 \tilde{\omega}_{\Sigma}\colon \qquad L_{\Sigma}\times L_{\Sigma}\to\R,\quad
 \text{by}
 \quad\tilde{\omega}_{\Sigma}(\phi,\eta)\defeq \omega_{\Sigma}(I_{\Sigma}(\phi),I_{\Sigma}(\eta)), \label{eq:pbsym}
 \\
 \tilde{J}_{\Sigma}\colon \qquad L_{\Sigma}\to L_{\Sigma},\qquad\quad
 \text{by}
 \quad \tilde{J}\phi \defeq I_{\Sigma}^{-1}(J_{\Sigma}I_{\Sigma}(\phi)), \label{eq:pbcs}
 \\
 \tilde{g}_{\Sigma}\colon \qquad L_{\Sigma}\times L_{\Sigma}\to\R,\quad
 \,\text{by}
 \quad\tilde{g}_{\Sigma}(\phi,\eta)\defeq 2\tilde{\omega}_{\Sigma}(\phi,\tilde{J}_{\Sigma}\eta)=g_{\Sigma}(I_{\Sigma}(\phi),I_{\Sigma}(\eta)),
 \\
 \{\cdot,\cdot\}^{\tilde{}}_{\Sigma}\colon \ \ L_{\Sigma}\times L_{\Sigma}\to\bC,\quad
\text{by}
 \quad \{\phi,\eta\}^{\tilde{}}_{\Sigma}\defeq \tilde{g}_\Sigma(\phi,\eta) +2\im\tilde{\omega}_{\Sigma}(\phi,\eta)=\{I_{\Sigma}(\phi),I_{\Sigma}(\eta)\}_{\Sigma}.
\end{gather}
Then, $\tilde{\omega}_{\Sigma}$ is a symplectic form on $L_{\Sigma}$, $\tilde{J}_{\Sigma}$ is a compatible complex structure, $\tilde{g}_{\Sigma}$ is a real inner product and $\{\cdot,\cdot\}^{\tilde{}}_{\Sigma}$ a complex inner product with respect to $\tilde{J}_{\Sigma}$. (We shall assume completeness for the inner products.) In the following we shall use the same notation $I_{\Sigma}$ to denote the complexification of the map, i.e., its complex linear extension $L_{\Sigma}^\bC\to L_{\Sigma}^\bC$. Pulling back the polarization with $I_{\Sigma}$ leads to an ordinary K\"ahler polarization $L_{\Sigma}^{\tilde{\pol}}\defeq I_{\Sigma}^{-1}\big(L_{\Sigma}^\pol\big)$.

With these ingredients we may construct a Hilbert space $\tilde{\cH}_{\Sigma}$ based on $L_{\Sigma}$, and the modified structures $\tilde{\omega}_{\Sigma}$, etc., exactly in the usual way as outlined in~Section~\ref{sec:stdstate}. This Hilbert space is then equivalent to $\cH_{\Sigma}^\alpha$ with the unitary equivalence given by the map $\tilde{U}_{\Sigma}\colon \tilde{\cH}_{\Sigma}\to\cH_{\Sigma}^\alpha$ which on~coherent states takes the simple form
\begin{equation*}
 \tilde{U}_{\Sigma}\big(\tilde{\coh}_{\xi}\big)= \coh_{I_{\Sigma}(\xi)}^\alpha.
\end{equation*}
Here we denote the coherent states in $\tilde{\cH}_{\Sigma}$ with a tilde and those in $\cH_{\Sigma}^\alpha$ with a superscript $\alpha$.
This equivalence extends to the creation and annihilation operators and their actions in the obvious way. It is also straightforward to express $\tilde{U}_{\Sigma}$ in terms of holomorphic wave functions, mapping those on $L_{\Sigma}$ (holomorphic with respect to $\tilde{J}_{\Sigma}$) to those on $L_{\Sigma}^\alpha$ (holomorphic with respect to $J_{\Sigma}$). For $\tilde{\psi}\in\tilde{\cH}_{\Sigma}$ and $\xi\in L_{\Sigma}$ we have
\begin{equation*}
 \tilde{U}_{\Sigma}\big(\tilde{\psi}\big)(I_{\Sigma}(\xi))=\tilde{\psi}(\xi).
\end{equation*}

We proceed to consider quantization in regions, i.e., the construction of amplitudes. Specifically, we are looking to reproduce an analogue of relation~(\ref{eq:amplk}) of Theorem~\ref{thm:stdcorrfact}, expressing the amplitude for coherent states in a region $M$ in terms of the inner product $\tilde{g}_{\partial M}$. As already discussed in~Section~\ref{sec:corrak} to work at the level of the spaces $L_{\Sigma}^\alpha$ this requires the interior compatibility condition for $\alpha_{\Sigma}$,~(\ref{eq:aintcomp}). To additionally work at the level of the pull-back considered here, we also need $I_{\partial M}$ to be compatible with the inclusion $L_M^\bC\subseteq L_{\partial M}^\bC$. That is, we need the \emph{interior compatibility condition} for $I_{\partial M}$,
\begin{equation*}
 I_{\partial M}\big(L_{M}^\bC\big)=L_M^\bC.
\end{equation*}
The induced amplitudes are, for $\tilde{\psi}\in \tilde{\cH}_{\partial M}$,
\begin{equation*}
 \tilde{\rho}_M\big(\tilde{\psi}\big)=\rho_M\big(\tilde{U}_{\partial M}\big(\tilde{\psi}\big)\big).
\end{equation*}
With this, we have the analogue of relation~(\ref{eq:amplk}), for $\xi\in L_{\partial M}$,
\begin{equation}
 \tilde{\rho}_M\big(\tilde{\ncoh}_\xi\big)
 = \exp\bigg({-}\frac{\im}{2} \tilde{g}_{\partial M}\big(\xi^{\tilde{\mathrm{R}}},\xi^{\tilde{\mathrm{I}}}\big)
 -\frac{1}{2} \tilde{g}_{\partial M}\big(\xi^{\tilde{\mathrm{I}}},\xi^{\tilde{\mathrm{I}}}\big)\bigg).
 \label{eq:ampli}
\end{equation}
Here $\xi=\xi^{\tilde{\mathrm{R}}}+ \tilde{J}_{\Sigma}\xi^{\tilde{\mathrm{I}}}$ with $\xi^{\tilde{\mathrm{R}}},\xi^{\tilde{\mathrm{I}}}\in L_{M}$. Equation~(\ref{eq:ampli}) arises from~(\ref{eq:amplk}) because we have $I_{\partial M}(\xi)^{\textrm{R}}=I_{\partial M}\big(\xi^{\tilde{\textrm{R}}}\big)$ and $I_{\partial M}(\xi)^{\textrm{I}}=I_{\partial M}\big(\xi^{\tilde{\textrm{I}}}\big)$. In particular, equation~(\ref{eq:ampli}) recovers precisely equation~(\ref{eq:amplk}), but with the original structures replaced by the pulled back ones.
In essence, we obtain an equivalence between the original quantum theory with $\alpha$-K\"ahler polarizations and a quantum theory with genuine K\"ahler polarizations $L_{\Sigma}^{\tilde{\pol}}$ and complex structures $\tilde{J}_{\Sigma}$ not only at the level of Hilbert spaces, but also of amplitudes. What is more, the theory with Hilbert spaces~$\tilde{\cH}_{\Sigma}$ and amplitudes $\tilde{\rho}_M$ arises by applying precisely the original K\"ahler quantization prescription of Section~\ref{sec:kquant}, but to the pulled back data.

We proceed to consider observables and their correlation functions. Pulling back correlation functions for observables $F$ simply amounts to
\begin{equation*}
 \tilde{\rho}_M^F\big(\tilde{\psi}\big)=\rho_M^F\big(\tilde{U}_{\partial M}\big(\tilde{\psi}\big)\big).
\end{equation*}
We turn to the structure Theorem~\ref{thm:stdcorrfact} for correlation functions in its entirety. In the factorization identity~(\ref{eq:wobsfact}) of correlation functions of Weyl observables, we have already discussed the first factor, the amplitudes~(\ref{eq:amplk}) and its replacement by relation~(\ref{eq:ampli}). As for the third factor, the vacuum correlation function~(\ref{eq:sfqvev}), there is no change as this does not depend on the choice of coherent state. However, it remains determined by the original polarization $L_{\partial M}^\pol$ and not by the pulled back one $L_{\partial M}^{\tilde{\pol}}$.
The second factor is of most interest. It takes the form
$F\big(I_{\partial M}\big(\hat{\tilde{\xi}}\big)\big)$ with $\hat{\tilde{\xi}}\defeq\xi^{\tilde{\mathrm{R}}}-\im \xi^{\tilde{\mathrm{I}}}$. In total we get
\begin{equation}
 \tilde{\rho}^F_M\big(\tilde{\ncoh}_\xi\big) =\tilde{\rho}_M\big(\tilde{\ncoh}_\xi\big) F\big(I_{\partial M}\big(\hat{\tilde{\xi}}\big)\big) \rho^F_M(\coh_0).
 \label{eq:wobsfacti}
\end{equation}
Unsurprisingly, this is structurally different from equation~(\ref{eq:wobsfact}). That is, at the level of obser\-vab\-les the actual theory (satisfying equation~(\ref{eq:wobsfacti})) is not equivalent to the theory obtained by directly quantizing, according to Section~\ref{sec:kquant}, the pulled back data (which would satisfy equation~(\ref{eq:wobsfact})).

The factorization equation~(\ref{eq:wobsfacti}), and in particular the second factor, also make it clear that a pull back map $I_{\Sigma}$ is in general not suitable to achieve a semiclassical interpretation of coherent states. That is, the state $\coh_{I_{\Sigma}(\xi)}^\alpha$ is not in general a reasonable candidate for a~semi\-clas\-si\-cal counterpart of the solution $\xi$. However, in~Section~\ref{sec:corrak} we already saw what a~suit\-able counterpart might look like, compare equation~(\ref{eq:relsoca}).

A natural candidate for the identification map $I_{\Sigma}$ is given by the projection operator $P_{\Sigma}^\alpha$: $L_{\Sigma}^\bC\to L_{\Sigma}^\alpha$, that is,
\begin{equation}
 P_{\Sigma}^\alpha(\phi)=\frac{1}{2}(\phi+\alpha_{\Sigma}(\phi)).
 \label{eq:proja}
\end{equation}
Suppose that the restriction of $P_{\Sigma}^\alpha$ to the real subspace $L_{\Sigma}\subseteq L_{\Sigma}^\bC$ is a bijection $I_{\Sigma}\colon L_{\Sigma}\to L_{\Sigma}^\alpha$ as~required. Then, if the interior compatibility condition is satisfied for the real structure $\alpha$, it is automatically also satisfied for $I_{\Sigma}$.

\subsection{A case of real polarizations}
\label{sec:realpol}

As shown in~\cite{CoOe:vaclag}, polarizations encoding vacua on different hypersurfaces and in different contexts in quantum field theory often decompose into a direct sum of a K\"ahler polarization and a real polarization. It is thus of particular interest to understand the latter in terms of $\alpha$-K\"ahler polarizations. To this end we shall assume that we are given what we will call a \emph{positive-definite reflection map}.

Suppose that we have on the hypersurface $\Sigma$ a real polarization, i.e., a pair of transversal complexified real Lagrangian subspaces $L^\pol_{\Sigma}\subseteq L_{\Sigma}^\bC$ and $L^\cpol_{\Sigma}\subseteq L_{\Sigma}^\bC$. Moreover, say $\gamma_{\Sigma}\colon L_{\Sigma}^\bC\to L_{\Sigma}^\bC$ is the complexification of a real-linear map $L_{\Sigma}\to L_{\Sigma}$ with the following properties:
\begin{itemize}\itemsep=0pt
\item $\gamma_{\Sigma}^2=\id$.
\item $\omega_{\Sigma}(\gamma_{\Sigma}(\phi),\gamma_{\Sigma}(\eta))=-\omega_{\Sigma}(\phi,\eta)$.
\item $\gamma_{\Sigma}$ exchanges polarizations, i.e., $\gamma_{\Sigma}\big(L^\pol_{\Sigma}\big)=L^\cpol_{\Sigma}$ and $\gamma_{\Sigma}\big(L^\cpol_{\Sigma}\big)=L^\pol_{\Sigma}$.
\end{itemize}
We then say that $\gamma_{\Sigma}$ is a \emph{reflection map} on $\Sigma$.
By construction $\gamma_{\Sigma}(\im \phi)=\im\gamma_{\Sigma}(\phi)$ and $\gamma_{\Sigma}\circ J_{\Sigma}=-J_{\Sigma}\circ\gamma_{\Sigma}$. Now define
\begin{equation}
 \alpha_{\Sigma}(\phi)\defeq -\im \overline{\gamma(\phi)}\qquad\text{for} \quad
 \phi\in L_{\Sigma}^\bC.
 \label{eq:alpharef}
\end{equation}
Then $\alpha_{\Sigma}$ is a compatible real structure on $L_{\Sigma}^\bC$. Moreover, we assume the inner product~(\ref{eq:alphaip}) is positive-definite on $L_{\Sigma}^\pol$. More directly, we can write this condition as
\begin{equation}
 \omega_{\Sigma}\big(\overline{\gamma_{\Sigma}(\phi)},\phi\big)>0\qquad
 \forall \phi\in L_{\Sigma}^\pol\setminus\{0\}.
 \label{eq:pdref}
\end{equation}
We then call $\gamma_{\Sigma}$ a \emph{positive-definite reflection map} and $L_{\Sigma}^\pol$ is an $\alpha$-K\"ahler polarization with conjugate $L_{\Sigma}^\cpol$. Moreover, in this situation the projector $P^\alpha_{\Sigma}\colon L_{\Sigma}^\bC\to L_{\Sigma}^\alpha$, see equation~(\ref{eq:proja}), restricted to $L_{\Sigma}$ becomes an isomorphism, as is easily verified. What is more, if we define the identification $I_{\Sigma}\colon L_{\Sigma}\to L_{\Sigma}^\alpha$ of Section~\ref{sec:retrel} to be
\begin{equation*}
 I_{\Sigma}=\sqrt{2} P_\Sigma^\alpha,
\end{equation*}
we get, compare~(\ref{eq:pbsym}),
\begin{equation*}
 \tilde{\omega}_{\Sigma}(\phi,\eta)=\omega_{\Sigma}(\phi,\eta).
\end{equation*}
That is, the pulled back symplectic form is identical to the original symplectic form. In particular, as long as no observables are considered, this means that the pulled back theory of~Section~\ref{sec:retrel} defined in this way differs from the original theory merely by replacing the real polarizations $L_{\Sigma}^\pol$ with K\"ahler polarizations given in terms of the complex structures $\tilde{J}_\Sigma$ given by~(\ref{eq:pbcs}).

We proceed to consider a more concrete way to construct positive-definite reflection maps. To this end we suppose that we have a decomposition of the space $L_\Sigma$ of germs of solutions on $\Sigma$ in terms of a direct sum of (real) Lagrangian subspaces $L_{\Sigma}=N_{\Sigma} \oplus M_{\Sigma}$. These need not coincide with polarizations determining the vacuum. For intuition, we shall think of $N_{\Sigma}$ as encoding ``positions'' (field values) and of $M_{\Sigma}$ as encoding ``momenta'' (normal derivatives). There is an important class of examples where this designation is indeed meaningful. We now define
\begin{equation}
 \gamma_{\Sigma}(n+m)\defeq n-m, \qquad\forall n\in N_{\Sigma},\quad \forall m\in M_{\Sigma}.
 \label{eq:nmreflect}
\end{equation}
Intuitively speaking, what this map does is it inverts field derivatives on the hypersurface. That is, it ``reflects'' initial data. We shall assume that it also interchanges the polarizations determining the vacuum on the two sides of $\Sigma$, i.e., $L^\pol_{\Sigma}$ and $L^\cpol_{\Sigma}$. It is then easy to check that $\gamma_{\Sigma}$ is a reflection map according to our previous definition.

Even though Klein--Gordon theory does not require a reflection map as it admits a perfectly fine K\"ahler quantization we shall consider it here to illustrate the notion of reflection. On an equal-time hypersurface at time $t$ the subspaces $N_t$ and $M_t$ of $L_t$ are given by
\begin{gather}
 M_t=\big\{\phi\in L_t^\bC\colon \phi^{\rm a}(k)=-{\rm e}^{2\im Et} \overline{\phi^{\rm b}(-k)}\,\forall k\big\}, \nonumber
 \\
 N_t=\big\{\phi\in L_t^\bC\colon \phi^{\rm a}(k)={\rm e}^{2\im Et} \overline{\phi^{\rm b}(-k)}\,\forall k\big\}.
 \label{eq:kgmndec}
\end{gather}
With this we obtain
\begin{equation}
 (\gamma_t(\phi))^{\rm a}(k)={\rm e}^{2\im Et} \overline{\phi^{\rm b}(-k)},\qquad
 (\gamma_t(\phi))^{\rm b}(k)={\rm e}^{2\im Et} \overline{\phi^{\rm a}(-k)}.
 \label{eq:kggamma}
\end{equation}
Viewing elements of $L_t^\bC$ as global solutions this is in spacetime terms,
\begin{equation}
 (\gamma_t(\phi))(t',x)=\phi(2t-t',x).
 \label{eq:kgreflect}
\end{equation}
That is, we obtain precisely a reflection of the solution on the hyperplane at time $t$. It is also straightforward to check that $\gamma_t$ interchanges the subspaces $L_t^+$ and $L_t^-$ of negative and positive energy solutions determining the vacuum, compare~(\ref{eq:kgpvac}). That is, $\gamma_t$ is a reflection map in the sense of our definition. On the other hand, $\gamma_t$ is not positive-definite in the sense of~(\ref{eq:pdref}). In fact, the would-be inner product defined by $\gamma_t$ is indefinite. This is not surprising given that $\gamma$ is designed for vacua given by real polarizations rather than K\"ahler polarizations. In~the following Section~\ref{sec:refpos} we exhibit a genuine example of a positive-definite reflection map in a~context motivating the developments of the present section.

\section[Reflection positivity and $\alpha$-K\"ahler quantization]
{Reflection positivity and $\boldsymbol\alpha$-K\"ahler quantization}\label{sec:refpos}

The Euclidean approach to quantum field theory~\cite{Nel:constqft,Sym:euclqft1} is based on relating quantum field theoretic correlation functions in Minkowski space to corresponding correlation functions in~Eucli\-dean space via a version of \emph{Wick rotation}. The \emph{Euclidean correlation functions}, also known as \emph{Schwinger functions}, have better analytic properties than their Lorentzian counterparts and can be naturally related to a statistical path integral, which is mathematically more tractable than the Feynman path integral. Osterwalder and Schrader~\cite{OsSc:axeucl2} have given conditions for a~family of Schwinger functions to correspond to a quantum field theory in the sense of the \emph{Wightman axi\-oms}~\cite{StWi:pct}. One of these conditions is \emph{reflection positivity} and it can be understood to correspond to a realness condition for Lorentzian correlators or a Hermiticity condition for operators. It is also used to construct a Hilbert space directly from the Schwinger functions~\cite{OsSc:axeucl}. This construction is very different from the standard quantization constructions of quantum (field) theory which are always directly or indirectly tied to a K\"ahler polarization. In this case there is no K\"ahler polarization. This provided crucial inspiration for the construction presented in~Section~\ref{sec:iprod} based on the notion of $\alpha$-K\"ahler polarizations. The purpose of the present section is to elucidate this by showing (in the simplest possible setting) how the Hilbert space construction based on reflection positivity is recovered as a special case of what we shall call $\beta$-K\"ahler quantization, a~Euclidean counterpart of $\alpha$-K\"ahler quantization.

\subsection{Statistical path integral}
\label{sec:statpi}

Since it is of central interest in the Euclidean approach to field theory, we consider a replacement of the Feynman path integral with a \emph{statistical path integral} as it appears for example in classical statistical field theory. That is, we replace the path integral formula~(\ref{eq:amplgvac}) with a statistical integral defined as
\begin{equation*}
 \sigma^F_{M}\big(W^\pol\big)=\int_{K_M} \xD\phi\, F(\phi) \exp(-S(\phi)).
\end{equation*}
It is straightforward to accommodate for this change throughout the framework presented in this work. (Of course, the straightforward interpretation of the constructed mathematical objects in~terms of a quantum theory is then lost.) We shall not develop this in detail, but focus on a few essential structures. We start with the analogue of formula~(\ref{eq:piwobs}) for Weyl observables. Consider a linear observable $D$, and its negative exponentiation $G=\exp(-D)$ and let $\eta\in A_M^D\cap L_{\partial M}^\pol$. Thus,
\begin{equation}
 \sigma_{M}^{G}\big(W^\pol\big)=\exp\bigg({-}\frac{1}{2} D(\eta)\bigg).
 \label{eq:pixobse}
\end{equation}
Replacing $G$ with the usual Weyl observable $F=\exp(\im D)$ yields
\begin{equation*}
 \sigma_{M}^{F}\big(W^\pol\big)=\exp\bigg(\frac{1}{2} D(\eta)\bigg).
\end{equation*}
Note that to arrive at this expression we have to replace on the right-hand side of~(\ref{eq:pixobse}) $D$ with~$-\im D$ as well as $\eta$ with $-\im\eta$. We can read off from this that for arbitrary observables the relation between the Feynman path integral and the statistical path integral can be expressed as follows. Let $F$ be an observable and define the observable $\tilde{F}$ as
\begin{equation}
 \tilde{F}(\phi)\defeq F\big(\sqrt{\im}\,\phi\big).
 \label{eq:insobs}
\end{equation}
(For definiteness we choose to define the root of $\im$ here as $\sqrt{\im}\defeq \exp(\im\pi/4)$.) Then,
\begin{equation}
 \rho_M^F\big(W^\pol\big)=\sigma_M^{\tilde{F}}\big(W^\pol\big).
 \label{eq:relfs}
\end{equation}
In particular, the relations~(\ref{eq:weylrel}) of the Weyl algebra of slice observables are modified as follows,
\begin{gather}
 G\qp^{\rm S} F=\exp\big(\omega_{\Sigma}(\xi_E,\xi_D)\big) G\cdot F.
 \label{eq:weylrels}
\end{gather}

Proceeding to a ``quantization'' by analogy with the quantum theory, the role of the inner product~(\ref{eq:stdipc}) in $L_\Sigma^\bC$ is taken by the expression
\begin{equation}
 (\phi,\eta)^{\rm S}_{\Sigma}\defeq 4\omega_{\Sigma}\big(\overline{\phi},\eta\big).
 \label{eq:statipc}
\end{equation}
This is clear from the modification that the vacuum correlation function~(\ref{eq:vevweylf}) of a Weyl obser\-vable undergoes by transition to the statistical path integral. We get
\begin{equation*}
 v^{\rm S}_{\Sigma}(F)\defeq \sigma_{\hat{\Sigma}}^F\big(W^\pol\big)
 =\exp\bigg({-}\frac{1}{4} \{\xi,\xi\}^{\rm S}_{\Sigma}\bigg).
\end{equation*}
Here we define
\begin{equation*}
 \{\xi,\xi\}^{\rm S}_{\Sigma}\defeq - \im \{\xi,\xi\}_{\Sigma}=4\omega\big(x^\cpol,y^\pol\big).
\end{equation*}
A ``quantization'' with a construction of Hilbert spaces in the standard K\"ahler sense (Sections~\ref{sec:kquant} and \ref{sec:kobs}) is clearly not possible as the sesquilinear form~(\ref{eq:statipc}) is not even hermitian.
On the other hand, we can write down immediately the analogue of Theorem~\ref{thm:wfac}.
\begin{thm}
 If $F$ is the Weyl observable $\exp(\im D)$, then,
\begin{align*}
 \sigma_M^F\big(K^\pol_{\xi}\big) & = \sigma_M\big(K^\pol_{\xi}\big) F\big({-}\im \xi^\ipol\big) \sigma_M^F\big(W^\pol\big),\qquad\text{with}
 \\
 \sigma_M\big(K^\pol_{\xi}\big) & =\exp\big(\omega_{\partial M}\big(\xi,\xi^\ipol\big)\big),
 \\
 \sigma_M^F\big(W^\pol\big) & =\exp\bigg(\frac{1}{2}D(\eta)\bigg).
\end{align*}
Here $\eta\in A_M^{D,\bC}\cap L_{\partial M}^\pol$.
\end{thm}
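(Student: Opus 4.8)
The plan is to derive this statement as a straightforward translation of Theorem~\ref{thm:wfac} through the substitution rule encoded by equations~(\ref{eq:insobs}) and~(\ref{eq:relfs}). The strategy mirrors the derivation of formula~(\ref{eq:pixobse}) from formula~(\ref{eq:piwobs}): the statistical path integral is obtained from the Feynman path integral by replacing $\im S$ with $-S$, which at the level of modified equations of motion amounts to replacing $D$ with $-\im D$ and correspondingly the distinguished solution $\eta$ with $-\im\eta$. First I would note that the ``state'' $K^\pol_\xi$ is, by definition~(\ref{eq:defcohsg}), the insertion of a Weyl boundary observable $G=\exp(\im E)$ with $E$ determined by $\xi\in L_{\partial M}^\bC$; in the statistical setting this boundary observable becomes $\tilde G$, and by~(\ref{eq:relfs}) we have $\sigma_M^{G\cp F}(W^\pol)=\rho_M^{\widetilde{G\cp F}}(W^\pol)=\rho_M^{\tilde G\cp\tilde F}(W^\pol)$, since the substitution $\phi\mapsto\sqrt{\im}\,\phi$ commutes with spacetime composition of observables.

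The key computational steps are then: (i) identify $\tilde F=\exp(\im \tilde D)$ where $\tilde D(\phi)=D(\sqrt{\im}\,\phi)=\sqrt{\im}\,D(\phi)$, so that $\tilde F$ is again a Weyl observable with linear part scaled by $\sqrt{\im}$; similarly $\tilde G$ corresponds to $\sqrt{\im}\,\xi$; (ii) apply Theorem~\ref{thm:wfac} to the pair $(\tilde F,\tilde G)$, yielding the three-factor decomposition $\rho_M^{\tilde F}(K^\pol_{\sqrt{\im}\xi})=\rho_M(K^\pol_{\sqrt{\im}\xi})\,\tilde F((\sqrt{\im}\xi)^\ipol)\,\rho_M^{\tilde F}(W^\pol)$, using that the decomposition $\tau=\tau^\ipol+\tau^\epol$ is complex linear so $(\sqrt{\im}\xi)^\ipol=\sqrt{\im}\,\xi^\ipol$; (iii) unwind each factor. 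The amplitude factor becomes $\exp(\im\,\omega_{\partial M}(\sqrt{\im}\xi,\sqrt{\im}\xi^\ipol))=\exp(\im\cdot\im\,\omega_{\partial M}(\xi,\xi^\ipol))=\exp(-\omega_{\partial M}(\xi,\xi^\ipol))$; wait — here one must be careful about the direction of the scaling, since $\sigma_M(K^\pol_\xi)$ should correspond to $\rho_M(K^\pol_{\sqrt{\im}\,\cdot})$ evaluated so that the claimed $\exp(\omega_{\partial M}(\xi,\xi^\ipol))$ emerges; the resolution, as with~(\ref{eq:pixobse}), is that passing from $\sigma$ to $\rho$ requires inserting $-\im D$ and $-\im\eta$, i.e.\ the clean bookkeeping is $\sigma_M^F(K^\pol_\xi)=\rho_M^{\tilde F}(K^\pol_{\tilde\xi})$ with the tilde on $\xi$ also meaning $\xi\mapsto\sqrt{\im}\,\xi$, and then $\im\,\omega(\sqrt{\im}\xi,\sqrt{\im}\xi^\ipol)=-\omega(\xi,\xi^\ipol)$, whereas the stated formula has a plus sign. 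I would reconcile the sign by tracking conventions exactly through~(\ref{eq:insobs})--(\ref{eq:relfs}) and the remark immediately after, which already flags the needed replacement $D\to-\im D$, $\eta\to-\im\eta$; the net effect on the vacuum factor is $\exp(\frac{\im}{2}D(\eta))\to\exp(\frac12 D(\eta))$ as stated, and the analogous tracking fixes the amplitude and second factors.

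The second factor $\tilde F((\sqrt{\im}\xi)^\ipol)=F(\sqrt{\im}\cdot\sqrt{\im}\,\xi^\ipol)=F(\im\,\xi^\ipol)$; again the stated answer $F(-\im\,\xi^\ipol)$ suggests that the relevant insertion is $\phi\mapsto\sqrt{\im}\,\phi$ applied so that the composite with the $-\im$ rescaling of the source yields $-\im$; I would simply verify this by direct substitution into~(\ref{eq:sobsampl}) rather than via the $\tilde F$ formalism — in fact the cleanest route is to redo the computation of~(\ref{eq:sobsampl}) with $\exp(\im S)$ replaced by $\exp(-S)$ from the start, i.e.\ use~(\ref{eq:pixobse}) together with~(\ref{eq:sobsdiff}) exactly as in Section~\ref{sec:sobsampl}, producing $\phi_X=\phi_M-\xi$, $\phi_M=\eta+\xi^\ipol$, and $(D+E)(\phi)$ evaluated with the statistical weight. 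The third factor $\rho_M^{\tilde F}(W^\pol)=\sigma_M^F(W^\pol)=\exp(\frac12 D(\eta))$ is precisely~(\ref{eq:pixobse}) up to sign of $D$, already recorded in the text.

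I expect the main obstacle to be purely bookkeeping of the factors of $\im$ and $\sqrt{\im}$, together with the two independent places where they enter — the bulk observable $F$ and the boundary observable $G$ defining $K^\pol_\xi$ — since the root of $\im$ is chosen as a specific branch in~(\ref{eq:insobs}) and the interior/exterior decomposition is complex linear; there is no conceptual difficulty, the analytic content being entirely inherited from Theorem~\ref{thm:wfac} (equivalently from formula~(\ref{eq:piwobs}) taken as a definition). So I would present the proof as: (1) recall that $\sigma_M^F(K^\pol_\xi)$ is defined via $\sigma_M^{G\cp F}(W^\pol)$ with $G$ the Weyl boundary observable for $\xi$; (2) repeat the chain of equalities of~(\ref{eq:sobsampl}) verbatim but with $\frac{\im}{2}(D+E)(\phi)$ replaced by $-\frac12(D+E)(\phi)$ throughout, using~(\ref{eq:sobsdiff}) which is unchanged; (3) collect the three resulting exponential factors and read off the claimed identities, noting $\eta\in A_M^{D,\bC}\cap L_{\partial M}^\pol$ is unique by transversality. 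This is short and mechanical, and I would keep it to a few lines.
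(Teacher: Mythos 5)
Your overall strategy is the right one, and it is essentially what the paper intends (the paper offers no explicit proof, asserting the result ``can be written down immediately'' as the statistical analogue of Theorem~\ref{thm:wfac}; the content is exactly a rerun of the chain of equalities~(\ref{eq:sobsampl}) with the statistical weight). However, your final recipe in step~(2) contains a sign error that, taken literally, produces the negative of every exponent in the theorem. You propose to replace $\frac{\im}{2}(D+E)(\phi)$ by $-\frac{1}{2}(D+E)(\phi)$ ``throughout,'' keeping the same solution $\phi$ and the unchanged jump condition~(\ref{eq:sobsdiff}). But the coefficient $-\frac12$ in~(\ref{eq:pixobse}) belongs to the observable $\exp(-(D+E))$, whereas the theorem concerns the \emph{Weyl} observables $F=\exp(\im D)$ and $G=\exp(\im E)$. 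The total insertion is $\exp(\im(D+E))\exp(-S)=\exp\big({-}(S-\im(D+E))\big)$, whose modified action has source $-\im(D+E)$; its distinguished stationary solution satisfying the exterior boundary condition is $-\im\phi$, with $\phi$ the Feynman-case solution of Section~\ref{sec:sobsampl}. Evaluating~(\ref{eq:pixobse}) then gives
\begin{equation*}
-\tfrac12\big({-}\im(D+E)\big)(-\im\phi)=+\tfrac12(D+E)(\phi),
\end{equation*}
so the correct net replacement is $\frac{\im}{2}\to+\frac12$ with $\phi$ and~(\ref{eq:sobsdiff}) indeed unchanged (the two factors of $-\im$ cancel). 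With your $-\frac12$ one obtains $\sigma_M^F\big(W^\pol\big)=\exp\big({-}\frac12 D(\eta)\big)$, $\sigma_M\big(K^\pol_\xi\big)=\exp\big({-}\omega_{\partial M}\big(\xi,\xi^\ipol\big)\big)$ and a middle factor $F\big(\im\xi^\ipol\big)$, all contradicting the statement (and the first contradicting the paper's own remark following~(\ref{eq:pixobse})).

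The irony is that your middle paragraph already contains the correct diagnosis --- the replacement $D\to-\im D$, $\eta\to-\im\eta$ turning $\exp\big(\frac{\im}{2}D(\eta)\big)$ into $\exp\big(\frac12 D(\eta)\big)$ --- so the fix is only to carry that same bookkeeping into step~(2). Once the coefficient is $+\frac12$, reading off~(\ref{eq:sobsampl}) gives $\frac12(D+E)(\phi)=\frac12 D(\eta)+D\big(\xi^\ipol\big)+\omega_{\partial M}\big(\xi,\xi^\ipol\big)$, and the three exponentials are precisely $\sigma_M^F\big(W^\pol\big)$, $F\big({-}\im\xi^\ipol\big)=\exp\big(D\big(\xi^\ipol\big)\big)$, and $\sigma_M\big(K^\pol_\xi\big)$ as claimed. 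Your first route via $\tilde F(\phi)=F\big(\sqrt{\im}\,\phi\big)$ can also be made to work, but as you noticed it doubles the branch-of-$\sqrt{\im}$ bookkeeping (once for $F$, once for the boundary observable defining $K^\pol_\xi$) without any gain; the direct route is preferable provided the sign above is corrected.
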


With regard to the construction of Hilbert spaces it turns out that we can adapt the construction of Section~\ref{sec:iprod} to the present setting and achieve the analog of a quantization in this way. Indeed, the considerations of Sections~\ref{sec:modreal} and \ref{sec:modstar} carry over completely to the present setting with just one fundamental change. That is, we consider a real structure $\beta$ on the vector space of complexified germs of solutions $L_{\Sigma}^\bC$. In contrast to the real structure $\alpha$ introduced in~Section~\ref{sec:modreal}, $\beta$ does not satisfy the compatibility condition~(\ref{eq:realcomp}), but rather the modified compatibility condition
\begin{equation*}
 \omega(\beta(x),\beta(y))=-\overline{\omega(x,y)}.
\end{equation*}
We call this condition \emph{anti-compatibility} between $\beta$ and $\omega$.
In this way the bilinear form
\begin{equation*}
 (\phi,\eta)^\beta\defeq 4\omega\left(\beta(\phi),\eta\right)
\end{equation*}
is not only non-degenerate and sesquilinear, but also hermitian. Thus, all steps described in~Sections~\ref{sec:modreal} and \ref{sec:modstar} can be performed. In particular, the quantization with the modified GNS construction succeeds with $\beta$ determining the $\sst$-structure of the observable algebra in analogy to~(\ref{eq:alphass}),
\begin{equation*}
 F^\beta(\phi)\defeq \overline{F\left(\beta(\phi)\right)}.
\end{equation*}
This leads to a notion of $\beta$-K\"ahler quantization in analogy to $\alpha$-K\"ahler quantization. In particular, the construction of Hilbert spaces and states can be preformed in accordance with Section~\ref{sec:stdstate} by replacing complex conjugation with the map $\beta$. In contrast to the $\alpha$-case, additionally the inner product $\{\cdot,\cdot\}$ is replaced by $\{\cdot,\cdot\}^{\rm S}$ of~(\ref{eq:statipc}) and equation~(\ref{eq:stdjip}) is replaced by
\begin{equation*}
 \{\phi,\eta\}^{\rm S}_{\Sigma}= -2\im\omega_{\Sigma}\big(\phi,J_{\Sigma} \eta\big)+2\omega_{\Sigma}(\phi,\eta).
\end{equation*}
We denote the Hilbert space constructed in this way by $\cH_{\Sigma}^\beta$. Note that wave functions are defined on $L_{\Sigma}^\beta$, the subspace of $L_{\Sigma}^\bC$ invariant under $\beta_{\Sigma}$. Similarly, coherent states are labeled by~elements of $L_{\Sigma}^\beta$. In order to identify states in this space with the action of slice observables on~the vacuum, the most straightforward consistent choice is to set~(\ref{eq:rencohs}) \big(with $\xi\in L_{\Sigma}^\beta\big)$ and~(\ref{eq:recohs}) as in the genuine quantum case.
We obtain instead of~(\ref{eq:relsoca}) the relation
\begin{equation*}
 K^\pol_\xi \cong \exp\left(\omega_{\Sigma}(\xi,\tau)\right) \ncoh^\beta_{\tau},
\end{equation*}
where $\tau=P^-_\Sigma(\xi)+\beta_{\Sigma}\big(P^-_\Sigma(\xi)\big)$. Otherwise, we proceed analogous to Section~\ref{sec:corrak}. This determines the ``quantization'' completely.

As can be seen from the formulas, there is a close similarity between the $\alpha$-K\"ahler and $\beta$-K\"ahler quantizations. In fact, there is a straightforward one-to-one correspondence given by
\begin{gather}
 \beta=\im\alpha,
 \label{eq:bia}
\end{gather}
which makes the relevant inner products equal,
\begin{equation*}
 (\phi,\eta)^\beta = (\phi,\eta)^\alpha.
\end{equation*}
This means in particular that given polarizations on a hypersurface $\Sigma$ that admit a conventional K\"ahler quantization, the choice $\beta(\xi)\defeq \im \overline{\xi}$ will yield a corresponding ``quantization'' based on~the statistical path integral.

\subsection{Wick rotated Klein--Gordon theory}

We consider in the following a Wick rotation of the Klein--Gordon theory in Minkowski space as follows. We rotate the time coordinate $t$ in the complex plane to an imaginary value. This can be considered as a substitution $t=-\im\tau$, where $\tau$ is subsequently taken to be real. In this way the Klein--Gordon equation
\begin{equation*}
 \bigg(\partial_t^2 -\sum_i \partial_i^2+m^2\bigg)\phi=0
\end{equation*}
turns into the elliptic partial differential equation
\begin{equation*}
 \bigg({-}\partial_\tau^2 -\sum_i \partial_i^2+m^2\bigg)\phi=0.
\end{equation*}
We refer to the theory defined in this way as the \emph{Euclidean theory} and consider it to live in 4-dimensional Euclidean space with coordinates $(\tau,x)$.

We recall the space $L_{0}^\bC$ of germs of complexified solutions of the Klein--Gordon equation on the hypersurface $t=0$ from Section~\ref{sec:kgstates}, with the solutions parametrized in terms of plane waves~(\ref{eq:kgmodes}). Applying the substitution $t=-\im\tau$ we obtain
\begin{equation}
 \phi(\tau,x)=\int\frac{\xd^3 k}{(2\pi)^3 2E}
 \Big(\phi^{\rm a}(k) {\rm e}^{- E \tau + \im k x}+\overline{\phi^{\rm b}(k)} {\rm e}^{E \tau - \im k x}\Big).
 \label{eq:kgemodes}
\end{equation}
This is indeed a parametrization of the complexified solutions of the Euclidean theory near the $\tau=0$ hypersurface in Euclidean space. We denote this space by $L_{0}^{{\rm E},\bC}$. Via the substitution $t=-\im\tau$ we have a canonical identification of the spaces $L_{0}^\bC$ and $L_{0}^{{\rm E},\bC}$. Note, however, that while the elements of $L_{0}^\bC$ may be viewed as \emph{global} solutions in~Klein--Gordon theory, the elements of $L_{0}^{{\rm E},\bC}$ do not extend to sensible global solutions in Euclidean space due to their exponential behavior in the Euclidean time direction $\tau$. Moreover, the subspaces of real solutions $L_0\subseteq L_{0}^\bC$ in~Klein--Gordon theory and $L_0^{{\rm E}}\subseteq L_0^{{\rm E},\bC}$ in the Euclidean theory are different. Whereas the reality condition in~Klein--Gordon theory is $\phi^{\rm a}(k)=\phi^{\rm b}(k)$, the reality condition in the Euclidean theory is given by, $\phi^{\rm a}(k)=\overline{\phi^{\rm a}(-k)}$ and $\phi^{\rm b}(k)=\overline{\phi^{\rm b}(-k)}$. Expressed in terms of Euclidean solutions the reality condition of the Klein--Gordon theory takes the form
\begin{equation*}
 \overline{\phi(\tau,x)}=\phi(-\tau,x).
\end{equation*}

An action for the Euclidean theory can also be obtained by Wick rotation from the Klein--Gordon theory, therefore inducing a symplectic form $\omega_0^{\rm E}$ on $L_0^{\rm E}$,
\begin{align*}
 \omega_0^{\rm E}(\phi_1,\phi_2) & =-\frac{1}{2}\int\xd^3 x\,
 \big(\phi_2(\tau,x) (\partial_\tau \phi_1)(\tau,x) - \phi_1(\tau,x)(\partial_\tau\phi_2)(\tau,x)\big)
 \\
 & =-\frac{1}{2}\int\frac{\xd^3 k}{(2\pi)^3 2E}
 \Big(\phi_2^{\rm a}(k)\overline{\phi_1^{\rm b}(k)}-\phi_1^{\rm a}(k)\overline{\phi_2^{\rm b}(k)}\Big).
\end{align*}
We are using here the analogous convention for the orientation of the hypersurface as in the case of the symplectic form~(\ref{eq:sfkgm}) of the Klein--Gordon theory. That is, we consider the $\tau=0$ hypersurface carrying the orientation of the past boundary of a region to the (here Euclidean) future $\tau>0$. Comparing with the symplectic form $\omega_0$,~(\ref{eq:sfkgm}) of the Klein--Gordon theory via the canonical identification of $L_{0}^{\bC}$ with $L_{0}^{{\rm E},\bC}$ we see
\begin{equation}
 \omega_0=-\im\, \omega_0^{\rm E}.
 \label{eq:relsympl}
\end{equation}
Note in particular that the notion of (complex) Lagrangian subspace is the same for both symplectic forms.

Viewing the Euclidean theory as a classical theory in its own right we can consider its quantization along the lines of the presented framework. The first step is the determination of the vacuum in the sense of~\cite{CoOe:vaclag}. As manifest in the expansion~(\ref{eq:kgemodes}), solutions behave exponentially in the (Euclidean) time direction $\tau$. Thus, the vacuum is simply given by imposing decaying boundary conditions, encoded in terms of (complexifications of) real Lagrangian subspaces of~$L_0^{{\rm E},\bC}$ which we shall denote $L_0^{{\rm E},+}$ and $L_0^{{\rm E},-}$ respectively. They are
\begin{equation*}
 L_0^{{\rm E},+}=\big\{\phi\in L_0^{{\rm E},\bC}\colon \phi^{\rm a}(k)=0\, \forall k\big\},
 \qquad
 L_0^{{\rm E},-}=\big\{\phi\in L_0^{{\rm E},\bC}\colon \phi^{\rm b}(k)=0\, \forall k\big\}.
\end{equation*}
Under the canonical identification of $L_{0}^{\bC}$ with $L_{0}^{{\rm E},\bC}$ these subspaces are identical to the corresponding Lagrangian subspaces of the Klein--Gordon theory, see~(\ref{eq:kgpvac}). However, while they are positive-definite Lagrangian subspaces with respect to the real structure $L_0\subseteq L_0^\bC$ and symplectic form $\omega_0$, they are instead (complexifications of) real Lagrangian subspaces with respect to the real structure $L_0^{{\rm E}}\subseteq L_0^{{\rm E},\bC}$ and symplectic form $\omega_0^{{\rm E}}$. In particular, a traditional K\"ahler quantization to construct a Hilbert space (as recalled in~Section~\ref{sec:kquant}) cannot be performed.

We proceed to consider the vacuum correlation functions, comparing the Euclidean theory to the Klein--Gordon theory. Since the correlation functions for polynomial observables are determined by those of quadratic observables (recall Section~\ref{sec:vcgenobs}) it is sufficient to consider the latter. Moreover, since the theories are related by Wick rotation in spacetime we limit ourselves to observables that correspond to field values at spacetime points. That is, we consider the Feynman propagator and its analog of the Euclidean theory, compare equation~(\ref{eq:fpropsrc}). While it is well known that they are essentially related by Wick rotation, we briefly review in the following how this relation arises from the present perspective. To this end recall the derivation of the 2-point function in Klein--Gordon theory using slice observables, compare Section~\ref{sec:propslice}.
We~can proceed for the Euclidean theory in exactly the same manner, where time ordering is now in terms of the Euclidean time, with $\tau_2>\tau_1$. The slice observables yielding with~(\ref{eq:sobseval}) and~(\ref{eq:lsobsdef}) field evaluation at the point $(\tau,x)$ are now parametrized by the solutions
\begin{equation*}
 \big(\xi^{\rm E}_i\big)^{\rm a}(k)={\rm e}^{E \tau_i - \im k x_i},\qquad
 \big(\xi^{\rm E}_i\big)^{\rm b}(k)=-{\rm e}^{- E \tau_i - \im k x_i}.
\end{equation*}
With the identification of $L^{{\rm E},\bC}$ with $L^\bC$ via the substitution $t=-\im\tau$ we obtain the relation $\xi_i=\im \xi_i^{\rm E}$ for the solutions determining the observables between the Klein--Gordon and Euclidean theory. Due to the definition~(\ref{eq:lsobsdef}) this relation follows in fact immediately from the relation~(\ref{eq:relsympl}) between the symplectic forms. Since for the Euclidean theory formula~(\ref{eq:twopointord}) is equally valid, adapting notation, we obtain for the relation between the 2-point functions
\begin{equation}
 \rho_M^{{\rm E}, D_1 D_2}\big(W^\pol\big)
 = 2 \im \omega^{{\rm E}}\big(\xi^{{\rm E},-}_1,\xi^{{\rm E},+}_2\big)
 = 2 \omega(\xi_1^-,\xi_2^+)
 = -\im \rho_M^{D_1 D_2}\big(W^\pol\big).
 \label{eq:rel2p}
\end{equation}
That is, the 2-point functions (and propagators) are indeed related by analytical continuation, up to a relative factor of $\im$.
For completeness we also write the explicit form of the 2-point functions and propagator of the Euclidean theory,
\begin{equation*}
 \rho_M^{{\rm E}, D_1 D_2}\big(W^\pol\big) = -\im G_F^{\rm E}((\tau_1,x_1),(\tau_2,x_2)) = -\im \int\frac{\xd^3 k}{(2\pi)^3 2E}\, {\rm e}^{ E(\tau_1-\tau_2)- \im k(x_1-x_2)}.
\end{equation*}
From the relation~(\ref{eq:rel2p}) between the 2-point functions we can infer the relation between arbitrary vacuum correlation functions. With the definition~(\ref{eq:insobs}) we can write this as
\begin{equation}
 \rho_M^{F}\big(W^\pol\big)=\rho_M^{{\rm E},\tilde{F}}\big(W^\pol\big).
 \label{eq:relkge}
\end{equation}
Recall that the observables $F$ between the Euclidean theory and Klein--Gordon theory are to be identified via the Wick rotation. We do not distinguish them notationally.

We proceed to consider the construction of Hilbert spaces for the Euclidean Theory. Since the polarizations $L_0^{{\rm E},\pm}\in L_0^\bC$ determining the vacuum are real, K\"ahler quantization cannot be applied. Instead, we apply an $\alpha$-K\"ahler quantization through a positive-definite reflection map~$\gamma_0$ based on a position/momentum decomposition $L_0^{{\rm E}}=N_0^{{\rm E}}\oplus M_0^{{\rm E}}$ (Section~\ref{sec:realpol}). Here we have
\begin{equation*}
 M_0^{\rm E}=\big\{\phi\in L_0^{\rm E}\colon \phi^{\rm a}(k)=-\overline{\phi^{\rm b}(-k)}\,\forall k\big\}, \qquad
 N_0^{\rm E}=\big\{\phi\in L_0^{\rm E}\colon \phi^{\rm a}(k)=\overline{\phi^{\rm b}(-k)}\,\forall k\big\}.
\end{equation*}
Under the identification of $L_0^{\rm E}$ and $L_0$ this is the same as for the Klein--Gordon theory at $t=0$, compare~(\ref{eq:kgmndec}). Similarly, $\gamma_0$ takes the same form as for the Klein--Gordon theory (at $t=0$),
\begin{equation*}
 (\gamma_0(\phi))^{\rm a}(k)=\overline{\phi^{\rm b}(-k)},\qquad
 (\gamma_0(\phi))^{\rm b}(k)=\overline{\phi^{\rm a}(-k)},
\end{equation*}
compare~(\ref{eq:kggamma}).
In spacetime terms this is a reflection at the $\tau=0$ hypersurface, analogous to~(\ref{eq:kgreflect}),
\begin{equation*}
 (\gamma_0(\phi))(\tau,x)=\phi(-\tau,x).
\end{equation*}
It is also clear that $\gamma_0$ interchanges the polarization $L_0^{{\rm E}\pm}\subseteq L_0^{{\rm E},\bC}$ determining the past and future vacuum. What is more, we easily verify that $\gamma_0$ satisfies condition~(\ref{eq:pdref}). That is,~$\gamma_0$ is a positive-definite reflection map. It therefore defines an $\alpha$-K\"ahler quantization with $\alpha_0\colon L_0^{{\rm E},\bC}\to L_0^{{\rm E},\bC}$ given by equation~(\ref{eq:alpharef}), i.e.,
\begin{equation}
 (\alpha_0(\phi))^{\rm a}(k)=-\im \phi^{\rm b}(k),\qquad
 (\alpha_0(\phi))^{\rm b}(k)=\im\phi^{\rm a}(k).
 \label{eq:alphae}
\end{equation}
In spacetime terms the map $\alpha_0$ takes the form
\begin{equation*}
 (\alpha_0(\phi))(\tau,x)=-\im \overline{\phi(-\tau,x)}.
\end{equation*}

We return now to the perspective of the Euclidean approach to quantum field theory, where a Wick rotation is used to relate the quantum theory to a theory in Euclidean space. The latter has properties that make it mathematically more accessible than the original theory, such as the decay properties on correlation functions. In the Euclidean theory considered above this is manifest in the boundary conditions encoded in the Lagrangian subspaces $L_0^{{\rm E},\pm}\in L_0^\bC$. Moreover, in the Euclidean approach the theory in Euclidean space is not treated as a quantum theory in its own right, but rather as a theory with a behavior consistent with a statistical path integral. This was precisely the subject of Section~\ref{sec:statpi}. Combining its results with the previous treatment of the Euclidean theory, we are able to relate the latter viewed as defined via a statistical path integral to the ordinary Klein--Gordon quantum field theory. Up to conventions of signs and orientations, this relation reproduces the one that has been of interest in the literature on Euclidean methods and puts it into a more general context. In particular, we shall see that the prescription for the construction of Hilbert spaces for the Euclidean statistical theory using \emph{reflection positivity}~\cite{OsSc:axeucl} is precisely an example of a $\beta$-K\"ahler quantization. Indeed, reflection positivity served as an inspiration for the development of the $\alpha$-K\"ahler quantization as laid out in~Section~\ref{sec:iprod}.

Combining~(\ref{eq:relfs}) and~(\ref{eq:relkge}) we obtain with $\check{F}(\phi)\defeq F(\im \phi)$
\begin{equation*}
 \rho_M^{F}\big(W^\pol\big)=\sigma_M^{{\rm E},\check{F}}\big(W^\pol\big).
\end{equation*}
Recall that the observables $F$ between the Klein--Gordon and Euclidean statistical theories are related moreover by Wick rotation. Note that when considering observables of even degree the factor of $\im$ just gives rise to a sign. In particular, for the 2-point functions we have
\begin{equation*}
 \rho_M^{D_1 D_2}\big(W^\pol\big)=-\sigma_M^{{\rm E},D_1 D_2}\big(W^\pol\big).
\end{equation*}
Indeed, the factor of $\im$ and the signs are related to our conventions. For example, changing the sign of $\omega_0^{\rm E}$ would make these disappear and recover the usual relation between $n$-point functions just being related by Wick rotation. In order to construct a Hilbert space for the statistical Euclidean theory we follow the steps laid out in~Section~\ref{sec:statpi}. The most important ingredient is the anti-compatible real structure $\beta_0$ on $L_0^{{\rm E},\bC}$ exchanging future and past boundary conditions. However, we have already constructed a corresponding compatible real structure $\alpha_0$ for the quantum theory above, see~(\ref{eq:alphae}). Thus, with the relation~(\ref{eq:bia}) we obtain $\beta_0$ given by
\begin{equation*}
 (\beta_0(\phi))^{\rm a}(k)= \phi^{\rm b}(k),\qquad
 (\beta_0(\phi))^{\rm b}(k)=\phi^{\rm a}(k).
\end{equation*}
In spacetime terms the map $\beta_0$ takes the form
\begin{equation}
 (\beta_0(\phi))(\tau,x)=\overline{\phi(-\tau,x)}.
 \label{eq:betaest}
\end{equation}
We recognize this precisely as corresponding to the ordinary complex conjugation of the Klein--Gordon theory under the identification of $L_0^{{\rm E},\bC}$ with $L_0^{\bC}$.

The role of $\beta_0$ is to provide the $\sst$-structure of the algebra of slice observables which now carries the product~(\ref{eq:weylrels}). In particular, the product $F^\beta \qp F$ for a slice observable $F$ is a \emph{positive} element of the algebra and its (statistical) vacuum correlation function satisfies by construction
\begin{gather}
 v_0^{\rm S}\big(F^\beta \qp F\big)\ge 0.
 \label{eq:svcorrpos}
\end{gather}
As we have seen in~Section~\ref{sec:propslice} and in the present section the 2-point (and similarly $n$-point) correlation functions can be constructed by (time-)ordered composition of slice observables. In~this way, we can carry over the identity~(\ref{eq:svcorrpos}) to spacetime observables that encode (possibly smeared) $n$-point functions. More specifically, let the spacetime observable $F$ be defined in terms of finitely many smearing functions $f_k$ with compact support,
\begin{equation*}
 F(\phi)\defeq \sum_{k=1}^n \int \xd\tau_1 \cdots \xd\tau_n \xd x_1 \cdots \xd x_n f_k(\tau_1,x_1,\ldots,\tau_n,x_n) \phi(\tau_1,x_1)\cdots \phi(\tau_n,x_n).
\end{equation*}
Here we require temporal ordering in the sense that $f_k(\tau_1,x_1,\ldots,\tau_n,x_n)$ vanishes, except if \mbox{$\tau_1< \cdots < \tau_n$}. Recalling the spacetime form~(\ref{eq:betaest}) of $\beta_0$ it is now easy to recognize that equation~(\ref{eq:svcorrpos}) is precisely the \emph{reflection positivity} condition $E2$ of Osterwalder and Schrader~\cite{OsSc:axeucl}.

\section{Application to Minkowski and Rindler space}
\label{sec:minkrind}

In the present section we review two topics involving a change of vacuum for the Klein--Gordon theory in flat spacetime. The first concerns the question of splitting the standard state space of the equal-time hyperplane in Minkowski space into two pieces, corresponding to half-hyperplanes. The second concerns the Minkowski vacuum as seen from an embedded Rindler space. Both topics involve the embedding of one or two Rindler wedges into Minkowski space. The purpose of the inclusion of these topics here is on the one hand to show that and how standard results of quantum field theory are correctly recovered. On the other hand it is to add new insights and perspectives that go beyond the traditional methods.

\subsection{Splitting the Minkowski vacuum}
\label{sec:splitmink}

Consider the standard equal-time hyperplane $\Sigma$ in Minkowski space at $t=0$. Cut this hyperplane into two half-hyperplanes $\Sigma_{\rm L}$, $\Sigma_{\rm R}$ along the $x_1$-direction with $x_1<0$ and $x_1>0$ respectively. In other words, space at $t=0$ is divided into two half-spaces. It has long been known that the Hilbert space $\cH$ of states associated to $\Sigma$ does not allow for a tensor product decomposition $\cH=\cH_{\rm L}\tens \cH_{\rm R}$ in such a way that $\cH_{\rm L}$ encodes states on $\Sigma_{\rm L}$ and $\cH_{\rm R}$ states on $\Sigma_{\rm R}$. For example, such a decomposition would contradict the Reeh--Schlieder theorem~\cite{ReSc:unitequiv}. Recall the discussion at the end of Section~\ref{sec:intro_gbqftobs} of the Introduction.

On the other hand, reaping the full potential of GBQFT in the context of ordinary QFT would require being able to describe amplitudes and correlators in compact pieces of Minkowski space (such as hypercubes or diamonds) together with providing the powerful identities that allow to compose them. In terms of the axioms of Appendix~\ref{sec:qobsaxioms}, a necessary ingredient would be the splitting of Hilbert spaces of states on hypersurfaces into localized pieces. The previously mentioned splitting of $\cH$ in terms of the decomposition of $\Sigma$ into $\Sigma_{\rm L}$ and $\Sigma_{\rm R}$ would be but the simplest example of this. Its failure appears to doom efforts to make GBQFT work effectively in ordinary QFT, beyond the restriction to very particular hypersurfaces and regions. Note that less realistic (e.g., Euclidean or 2-dimensional) versions of QFT are not necessarily affected~\cite{Oe:2dqym}. The answer proposed in the present paper is to abandon the axioms of Appendix~\ref{sec:qobsaxioms} and replace them with the framework developed principally in~Section~\ref{sec:genquant}, supplementing with (generalized, see Section~\ref{sec:iprod}) Hilbert spaces only were necessary or convenient.

It is nevertheless instructive to consider the example of the decomposition of $\Sigma$ into $\Sigma_{\rm L}$ and~$\Sigma_{\rm R}$ and the properties of the Minkowski vacuum with respect to this decomposition. Since the equations of motions are local, i.e., they are partial differential equations, the space of germs of~solutions (i.e., the space of initial data) on $\Sigma$ simply decomposes as a direct sum of germs of~solutions on $\Sigma_{\rm L}$ and $\Sigma_{\rm R}$. That is, $L_{\Sigma}=L_{\rm L}\oplus L_{\rm R}$. Similarly, the symplectic form, originating from a hypersurface integral and involving differentials, is local. That is, $\omega_{\Sigma}=\omega_{\rm L} + \omega_{\rm R}$, with~$\omega_{\rm L}$ and $\omega_{\rm R}$ being the symplectic forms on $\Sigma_{\rm L}$ and $\Sigma_{\rm R}$ respectively. Recall that the Minkowski vacuum is determined by the Lagrangian subspace $L_{\Sigma}^+\subseteq L_{\Sigma}^\bC$ of negative energy solutions in the past,~(\ref{eq:kgpvac}), and the conjugate subspace $L_{\Sigma}^-=\overline{L_{\Sigma}^+}$ of positive-energy solutions in the future. Now the simple fact is that $L_{\Sigma}^+$ (or $L_{\Sigma}^-$) is not local with respect to the decomposition of~$\Sigma$ into~$\Sigma_{\rm L}$ and $\Sigma_{\rm R}$. That is, there do not exist individual subspaces $L_{\rm L}^{(+)}\subseteq L_{\rm L}^{\bC}$ and $L_{\rm R}^{(+)}\subseteq L_{\rm R}^{\bC}$ such that $L_{\Sigma}^+=L_{\rm L}^{(+)}\oplus L_{\rm R}^{(+)}$. Equivalently, the complex structure $J_{\Sigma}$ is not a differential operator on $L_{\Sigma}$, but merely a pseudo-differential operator and thus non-local. If the subspaces $L_{\rm L}^{(+)}$ and $L_{\rm R}^{(+)}$ existed, that would give us upon quantization precisely a decomposition of the type $\cH=\cH_{\rm L}\tens \cH_{\rm R}$. We might say that the Minkowski vacuum is ``indivisible'' or ``non-local''. This is in fact the typical situation for vacua in quantum field theory which is not surprising given that they are usually encoding asymptotic boundary conditions.

The parametrization of $L_{\Sigma}^\bC$ in terms of plane wave modes as in~(\ref{eq:kgmodes}) is not convenient in order to make manifest the locality of $L_{\Sigma}^\bC$ and of $\omega_{\Sigma}$. To this end we switch in a first step to the \emph{Minkowski Bessel modes} which are complex eigenmodes of the Lorentz boost operator~\cite{Ger:minkbessel}. For simplicity, we consider (1+1)-dimensional Minkowski space. (Additional spatial dimensions are easily incorporated.) We denote the modes by $B_\kappa$ and $\overline{B_\kappa}$ (complex conjugates), where $\kappa$ is the eigenvalue of the boost generator. $\kappa$ takes values on the real line. The modes $B_\kappa$ admit a~representation in terms of an integral over plane waves parametrized by the rapidity $q$,
\begin{equation}
 B_\kappa(t,x)=\int_{-\infty}^{\infty}\frac{\xd q}{2\pi}\, \exp(-\im\, m (t \cosh q- x \sinh q)-\im \kappa q).
 \label{eq:minkbessel}
\end{equation}
We thus parametrize elements $\phi\in L_{\Sigma}^\bC$ via coefficient functions $\phi^{\rm c}$ and $\phi^{\rm d}$ on $(-\infty,\infty)$, as follows,
\begin{gather}
 \phi(t,x)=\int_{-\infty}^\infty\xd\kappa\, \Big(\phi^{\rm c}(\kappa) B_\kappa(t,x) + \overline{\phi^{\rm d}(\kappa) B_\kappa(t,x)}\Big).
 \label{eq:kgmbmodes}
\end{gather}
From the integral representation in terms of plane waves we can read off immediately that the modes $B_\kappa$ are positive-energy modes, while the modes $\overline{B_\kappa}$ are negative-energy modes. Since the Lagrangian subspace determining the Minkowski vacuum (to the past) consists precisely of the negative-energy solutions, we have, in analogy to~(\ref{eq:kgpvac}),
\begin{equation}
 L^+_{\Sigma}=\big\{\phi\in L_{\Sigma}^\bC \colon \phi^{\rm c}(\kappa)=0\,\forall \kappa\big\}.
 \label{eq:kgpvacmb}
\end{equation}
We note that the symplectic form~(\ref{eq:sfsft}) in the present parametrization~(\ref{eq:kgmbmodes}) results in
\begin{equation}
 \omega_{\Sigma}(\phi_1,\phi_2)=\im \int_{-\infty}^{\infty}\xd \kappa\,
 \Big(\phi^{\rm c}_2(\kappa) \overline{\phi^{\rm d}_1(\kappa)}
 - \overline{\phi^{\rm d}_2(\kappa)} \phi^{\rm c}_1(\kappa)\Big).
 \label{eq:symfmb}
\end{equation}
The inner product~(\ref{eq:stdipc}) is thus
\begin{equation}
 (\phi_1,\phi_2)_{\Sigma}= 4 \int_{-\infty}^{\infty}\xd \kappa\,
 \Big(\phi^{\rm d}_1(\kappa) \overline{\phi^{\rm d}_2(\kappa)}
 - \overline{\phi^{\rm c}_1(\kappa)} \phi^{\rm c}_2(\kappa)\Big).
 \label{eq:ipmb}
\end{equation}
From these expressions it is again manifest that $L^+_{\Sigma}\subseteq L^{\bC}_{\Sigma}$ is a positive-definite Lagrangian subspace, this time in terms of the Minkowski Bessel modes.

\begin{figure}
 \centering
 \begin{tikzpicture}[scale=1]
\draw[thick] (-3,3) -- (3,-3);
\draw[thick] (-3,-3) -- (3,3);
\draw[blue,very thick](-3.4,0) -- (3.4,0);
\node at (-1.5,-0.35) {$\Sigma_L$};
\node at (1.5,-0.35) {$\Sigma_R$};
\node at (0,2.15) {F};
\node at (0,-2.15) {P};
\node at (2.5,.75) {R};
\node at (-2.5,.75) {L};
\end{tikzpicture}
 \caption{Minkowski space divided into four wedge regions: Future (\text{F}), Past (\text{P}), Left ({\rm L}), Right (\text{R}). Drawn in blue is the $t=0$ hyperplane with left and right halves, $\Sigma_{\rm L}$ and $\Sigma_{\rm R}$ respectively.}
 \label{fig:minkpflr}
\end{figure}
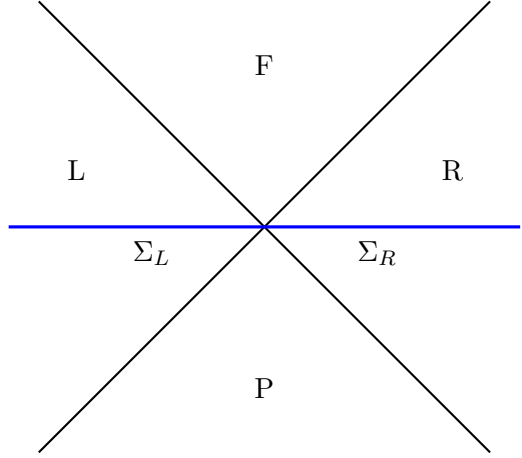

We proceed to introduce another set of modes, arising as simple linear combinations of the Minkowski Bessel modes. We define for $\kappa>0$
\begin{gather}
 B^{\rm L}_\kappa(t,x) = \frac{1}{\sqrt{2\sinh(\pi\kappa)}}
 \Big({\rm e}^{\pi\kappa/2} \overline{B_{-\kappa}(t,x)}
 - {\rm e}^{-\pi\kappa/2} B_{\kappa}(t,x)\Big), \nonumber
 \\[.5ex]
 B^{\rm R}_\kappa(t,x) = \frac{1}{\sqrt{2\sinh(\pi\kappa)}}
 \Big({\rm e}^{\pi\kappa/2} B_{\kappa}(t,x)
 - {\rm e}^{-\pi\kappa/2} \overline{B_{-\kappa}(t,x)}\Big).
 \label{eq:rindboost}
\end{gather}
Evidently, these modes, together with their complex conjugates and restricted to $\kappa>0$ only, provide a parametrization of $L_{\Sigma}^\bC$ equivalent to
that in terms of the Minkowski Bessel modes (with~$\kappa\in\R$). However, in contrast to the latter, they exhibit remarkable localization properties. To~discuss this, we divide Minkowski space into four wedge regions, $\text{F}$, $\text{P}$, ${\rm L}$ and $\text{R}$, see Figure~\ref{fig:minkpflr}. These are the causal future of the origin, its causal past and points spacelike separated from it to the left and to the right respectively. The regions {\rm L} and \text{R} are also called the left and the right \emph{Rindler wedge}, respectively, as they are isometric to \emph{Rindler space}~\cite{Rin:kruskalsuniframe}. The modes~$B^{\rm R}_\kappa$ vanish identically in the region {\rm L}, while the modes $B^{\rm L}_\kappa$ vanish identically in the region \text{R}. In~particular, as germs in $L_{\Sigma}^\bC=L_{\rm L}^\bC\oplus L_{\rm R}^\bC$ the modes $B^{\rm R}_\kappa$ vanish in the component $L_{\rm R}^\bC$ while the modes $B^{\rm L}_\kappa$ vanish in the component $L_{\rm L}^\bC$. That is, we have at hand a set of modes that makes manifest the decomposition of the space of germs of solutions on $\Sigma$ into a left and right half. We will refer to these modes as \emph{Unruh modes} as their relevant properties were first discussed by Unruh~\cite{Unr:bhevap}.
By~inspection of the relations~(\ref{eq:rindboost}) we can strengthen our earlier statement on the ``indivisibility'' of the Minkowski vacuum. Viewing $L_{L}^\bC$ and $L_{R}^\bC$ as subspaces of $L_{\Sigma}^\bC$ as suggested by writing $L_{\Sigma}^\bC=L_{\rm L}^\bC\oplus L_{\rm R}^\bC$, we can now affirm that $L_{\Sigma}^{\pm}\cap L_{\rm L/R}^\bC=\{0\}$. That is, there are no non-zero germs of solutions that simultaneously vanish either on $\Sigma_{\rm L}$ or $\Sigma_{\rm R}$ and are also either purely of positive or negative energy.

We introduce the following parametrization of $\phi\in L_{\Sigma}^{\bC}$ in terms of the Unruh modes, using coefficient functions $\phi^{\rm L,a}$, $\phi^{\rm L,b}$, $\phi^{\text{R,a}}$, $\phi^{\text{R,b}}$ on $[0,\infty)$,
\begin{gather}
 \phi(t,x)=\int_0^\infty\xd\kappa\, \Big(\phi^{\rm L,a}(\kappa) B^{\rm L}_\kappa(t,x) + \overline{\phi^{\rm L,b}(\kappa) B^{\rm L}_\kappa(t,x)}
 + \phi^{\text{R,a}}(\kappa) B^{\rm R}_\kappa(t,x) \nonumber
 \\ \hphantom{\phi(t,x)=\int_0^\infty\xd\kappa\, \Big(}
 {}+ \overline{\phi^{\text{R,b}}(\kappa) B^{\rm R}_\kappa(t,x)} \Big).
 \label{eq:kgumodes}
\end{gather}
This is related to our previous parametrization~(\ref{eq:kgmbmodes}), with $\kappa>0$ as\footnote{There is a subtlety here connected to the fact that the conversion coefficients become singular in the limit $\kappa\to 0$. To really obtain an equivalence between the relevant Hilbert spaces built on the Minkowski Bessel modes~(\ref{eq:minkbessel}) and the Unruh modes~(\ref{eq:rindboost}) respectively, this has to be taken into account appropriately when defining the measure in the integral expansion~(\ref{eq:kgumodes}) near $\kappa=0$. Since we wish to keep our presentation simple, we refer the interested reader to the appropriate literature~\cite{NFKMB:repcombunruh}.}
\begin{gather}
 \phi^{\rm c}(\kappa) =\frac{1}{\sqrt{2\sinh(\pi\kappa)}}
 \Big(\phi^{\text{R,a}}(\kappa) {\rm e}^{\pi\kappa/2}
 -\phi^{\rm L,a}(\kappa) {\rm e}^{-\pi\kappa/2}\Big), \nonumber
 \\[.5ex]
 \overline{\phi^{\rm d}(\kappa)} =\frac{1}{\sqrt{2\sinh(\pi\kappa)}}
 \Big(\overline{\phi^{\text{R,b}}(\kappa)} {\rm e}^{\pi\kappa/2}
 -\overline{\phi^{\rm L,b}(\kappa)} {\rm e}^{-\pi\kappa/2}\Big), \nonumber
 \\[.5ex]
 \phi^{\rm c}(-\kappa) =\frac{1}{\sqrt{2\sinh(\pi\kappa)}}
 \Big(\overline{\phi^{\rm L,b}(\kappa)} {\rm e}^{\pi\kappa/2}
 -\overline{\phi^{\text{R,b}}(\kappa)} {\rm e}^{-\pi\kappa/2}\Big), \nonumber
 \\[.5ex]
 \overline{\phi^{\rm d}(-\kappa)} =\frac{1}{\sqrt{2\sinh(\pi\kappa)}}
 \Big(\phi^{\rm L,a}(\kappa) {\rm e}^{\pi\kappa/2}
 -\phi^{\text{R,a}}(\kappa) {\rm e}^{-\pi\kappa/2}\Big).
 \label{eq:bmuparam}
\end{gather}
{\samepage
For the symplectic form~(\ref{eq:symfmb}) we obtain the manifest decomposition, $\omega_{\Sigma} =\omega_{\rm L}+\omega_{\rm R}$, with
\begin{gather}
 \omega_{\rm L}(\phi_1,\phi_2) = \im \int_{0}^{\infty}\xd \kappa\,
 \Big(\phi^{\rm L,a}_1(\kappa) \overline{\phi^{\rm L,b}_2(\kappa)}
 - \overline{\phi^{\rm L,b}_1(\kappa)} \phi^{\rm L,a}_2(\kappa)\Big),\nonumber
 \\
 \omega_{\rm R}(\phi_1,\phi_2) = \im \int_{0}^{\infty}\xd \kappa\,
 \Big(\phi^{\text{R,a}}_2(\kappa) \overline{\phi^{\text{R,b}}_1(\kappa)}
 - \overline{\phi^{\text{R,b}}_2(\kappa)} \phi^{\text{R,a}}_1(\kappa)\Big).
 \label{eq:symfum}
\end{gather}}
The inner product~(\ref{eq:ipmb}), decomposed as $(\cdot,\cdot)_{\Sigma}=(\cdot,\cdot)_{\rm L}+(\cdot,\cdot)_{\rm R}$, takes the form
\begin{align}
 (\phi_1,\phi_2)_{\rm L} & =
 4 \int_{0}^{\infty}\xd \kappa\,
 \Big(\overline{\phi^{\rm L,a}_1(\kappa)} \phi^{\rm L,a}_2(\kappa)
 - \phi^{\rm L,b}_1(\kappa) \overline{\phi^{\rm L,b}_2(\kappa)}\Big),\nonumber
 \\
 (\phi_1,\phi_2)_{\rm R} & =
 4 \int_{0}^{\infty}\xd \kappa\,
 \Big(\overline{\phi^{\text{R,b}}_2(\kappa)} \phi^{\text{R,b}}_1(\kappa)
 - \phi^{\text{R,a}}_2(\kappa) \overline{\phi^{\text{R,a}}_1(\kappa)}\Big).
 \label{eq:ipum}
\end{align}

\begin{figure}
 \centering
 \begin{tikzpicture}[scale=1]
\draw[->] (0,-2) -- (0,2) node [left] {$t$};
\draw[->] (-1,0) -- (4,0) node [right] {$x, \eta=0$};
\draw (0,0) -- (4,1.7)node [right] {$\eta=\eta_2>0$};
\draw (0,0) -- (4,-1.7)node [right] {$\eta=\eta_1<0$};
\draw plot[domain=-1.23:1.23] ({1.3*cosh(\x)}, {1.3*sinh(\x)});
\draw plot[domain=-0.634:0.634] ({3*cosh(\x)}, {3*sinh(\x)});
\draw (2,-2) --(0,0) node [midway, below, sloped] {$\rho=0,\eta=-\infty$};
\draw (0,0) -- (2,2) node [midway, above, sloped] {$\rho=0,\eta=\infty$};
\node at (2.6,-2.2) {$\rho_1$};
\node at (3.8,-2.2) {$\rho_2$};
\end{tikzpicture}
 \caption{Coordinates $(\eta,\rho)$ in the right Rindler wedge in Minkowski space.}
 \label{fig:minkrindco}
\end{figure}
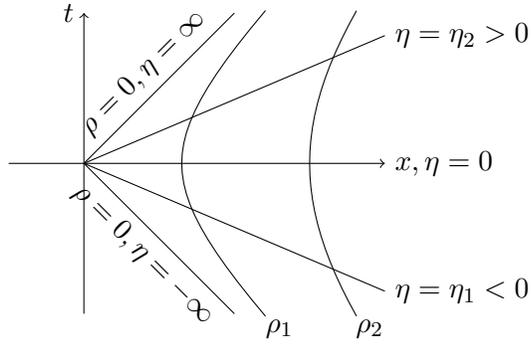

For comparison, we proceed to consider a different vacuum on the hypersurface $\Sigma$ that does split into local component vacua on $\Sigma_{\rm L}$ and $\Sigma_{\rm R}$. For Minkowski space this was first described by Fulling~\cite{Ful:nonuniquequant}. This \emph{Fulling--Rindler vacuum} arises by considering the left and right Rindler wedges as spacetimes in their own right, namely copies of Rindler space. Convenient coor\-di\-na\-tes~$(\eta,\rho)$ for Rindler space as identified with the right wedge \text{R} can be introduced as follows~\cite{Rin:kruskalsuniframe}, see Figure~\ref{fig:minkrindco},
\begin{equation*}
 t=\rho\sinh\eta,\qquad x=\rho\cosh\eta.
\end{equation*}
The coordinate $\eta\in\R$ plays the role of a time coordinate while $\rho>0$ is a spatial coordinate. The Unruh modes restricted to the right Rindler wedge recover the \emph{Fulling modes}~\cite{Ful:nonuniquequant}. For~$\rho>0$ we~have
\begin{equation*}
 B_\kappa^{\rm R}(\rho,\eta)= \frac{\sqrt{2\sinh(\kappa\pi)}}{\pi} K_{\im\kappa}(m\rho) {\rm e}^{-\im\kappa\eta}.
\end{equation*}
Here, $K_{\im\kappa}$ denotes the modified Bessel function of the second kind of order $\im\kappa$. The Fulling--Rindler vacuum arises by taking the time coordinate $\eta$ as the natural evolution parameter. We can then read off from the formula above that the modes $B_\kappa^{\rm R}$ are the respective positive-energy modes. Indeed, from~(\ref{eq:symfum}) and~(\ref{eq:ipum}) we can read off immediately that the corresponding subspace
\begin{equation}
 L_{\rm R}^+=\big\{\phi\in L_{\rm R}^{\bC}\colon \phi^\text{R,a}(\kappa)=0\,\forall\kappa\big\}
 \label{eq:kgrfvacr}
\end{equation}
is a positive-definite Lagrangian subspace~\cite{CoOe:vaclag}. The corresponding vacuum on $\Sigma_{\rm L}$ in the left Rindler wedge is given by
\begin{equation*}
 L_{\rm L}^+=\big\{\phi\in L_{\rm R}^{\bC}\colon \phi^\text{L,b}(\kappa)=0\,\forall\kappa\big\}.
\end{equation*}

Denote the Hilbert spaces of states obtained by K\"ahler quantizing the vacua corresponding to $L_{\rm L}^+$, $L_{\rm R}^+$ and $L_{\Sigma}^{+'}\defeq L_{\rm L}^+ \oplus L_{\rm R}^+$ by $\cH_{\rm L}$, $\cH_{\rm R}$, and $\cH_{\Sigma}'$ respectively. Then, we canonically have $\cH_{\Sigma}'=\cH_{\rm L} \tens \cH_{\rm R}$. In other words, if we choose two copies of the Fulling--Rindler vacuum instead of the Minkowski vacuum, the Hilbert space of states associated to the hyperplane $\Sigma$ does decompose into Hilbert spaces associated to the half-hyperplanes $\Sigma_{\rm L}$ and $\Sigma_{\rm R}$. (In the algebraic framework, where states are functionals on algebras of observables, the corresponding statement was worked out in detail by Kay via a double-wedge algebra construction~\cite{Kay:dwedgessmink}.) One may wonder whether the present example could be generalized to a procedure for constructing vacua localized on a larger class of hypersurfaces with boundaries. It seems a key ingredient could be for solutions in the would-be Lagrangian subspace to approximate the behavior of Unruh modes near the boundary with respect to the light-sheets emanating from it. However, these considerations are far outside the scope of the present paper.

Supposing we had chosen vacua localized on hypersurfaces with boundaries in order to allow for gluing in the sense of the old axioms (Appendix~\ref{sec:qobsaxioms}), then we would still need to compare to the physical vacua that typically do not localize along boundaries. For the simple example at hand, the comparison between the Minkowski vacuum encoded by $L_{\Sigma}$ and the double Fulling--Rindler vacuum encoded by $L_{\Sigma}'$ was already performed by Unruh~\cite{Unr:bhevap}. More precisely, we are interested in the pseudo-state $Y$ that represents the Minkowski vacuum in the Hilbert space~$\cH_{\Sigma}'$. As discussed in~Section~\ref{sec:wfvac}, the general form of its holomorphic wave function is given by equation~(\ref{eq:chvacwf}). Here, we obtain for $\xi\in L_{\Sigma}$
\begin{equation}
 Y(\xi)=\exp\bigg(2\int_0^\infty\xd\kappa\, {\rm e}^{-\pi\kappa}\,
 \xi^{\rm L,a}(\kappa) \overline{\xi^{\text{R,b}}(\kappa)}\bigg).
 \label{eq:minktodfrwv}
\end{equation}
This makes explicit how the Minkowski vacuum entangles the data on the two half-hyperpla\-nes~$\Sigma_{\rm L}$ and $\Sigma_{\rm R}$. On the other hand, the wave function is not square-integrable, corresponding to the lack of unitary equivalence of the Hilbert spaces $\cH_{\Sigma}$ and $\cH_{\Sigma}'$ as representations of the algebra of slice observables on $\Sigma$.
In order to facilitate a particle interpretation we may rewrite the wave function in a factorized form (recall equation~(\ref{eq:mswf})),
\begin{equation}
 Y(\xi)=\exp\bigg(
 \frac{1}{2}\int_0^\infty\xd\kappa\, {\rm e}^{-\pi\kappa}\,
 \big\{\tilde{B}_\kappa^{\rm L},\xi\big\}_{\Sigma} \big\{\tilde{B}_\kappa^{\rm R},\xi\big\}_{\Sigma}\bigg).
 \label{eq:minktodfrwvp}
\end{equation}
Here, $\tilde{B}_\kappa^{\rm L}$ and $\tilde{B}_\kappa^{\rm R}$ are the real parts of $B_\kappa^{\rm L}$ and $B_\kappa^{\rm R}$ respectively. We can roughly interpret this as a sea of particles, consisting of pairs. A pair consists of one particle localized in the left half-space, the other localized in the right half-space, with matching quantum number $\kappa$. Moreover, the amplitude for the pairs is exponentially suppressed with $\kappa$.
From the previous expression we can also read off a representation in terms of creation operators on the double Fulling--Rindler vacuum,
\begin{equation}
 Y=\exp\bigg(\int_0^\infty\xd\kappa\, {\rm e}^{-\pi\kappa}\,
 a^\dagger_{{\rm L},\kappa} a^\dagger_{\text{R},\kappa}\bigg)
 \coh_{{\rm L},0}\tens\coh_{\text{R},0}.
 \label{eq:minktodfro}
\end{equation}
Here, we use the notation $a^\dagger_{{\rm L},\kappa}=a^\dagger_{\tilde{B}_\kappa^{\rm L}}$ and $a^\dagger_{\text{R},\kappa}=a^\dagger_{\tilde{B}_\kappa^{\rm R}}$. We recover (a more precise version of) Unruh's original formula~\cite{Unr:bhevap}. However, as already mentioned, this formula does not define a~normalizable state in the Hilbert space $\cH_{\Sigma}'$ and therefore needs to be interpreted with caution. For mathematical clarity, the wave function representations~(\ref{eq:minktodfrwv}) and~(\ref{eq:minktodfrwvp}) are preferable.

Given a slice observable $F$ on $L_{\Sigma}$ we can evaluate its expectation value in the Minkowski vacuum in the conventional way by using its action on the Hilbert space $\cH_{\Sigma}$ (see Section~\ref{sec:kobs}),
\begin{equation*}
 \langle F \rangle = \big\langle \coh_0, \hat{F} \coh_0\big\rangle_{\Sigma}
 = \tr_{\cH_\Sigma}\big(\hat{F}\, \tilde{\coh}_0 \big)\qquad\text{with}\quad
 \tilde{\coh}_0\defeq |\coh_0\rangle\langle\coh_0|.
\end{equation*}
On the other hand we might use the Hilbert space $\cH_{\Sigma}'$ instead with the pseudo-state $Y$,
\begin{equation*}
 \langle F \rangle = \big\langle Y, \hat{F} Y\big\rangle_{\Sigma}'
 = \tr_{\cH_\Sigma'}\big(\hat{F}\, \tilde{Y}\big)\qquad\text{with}\quad
 \tilde{Y}\defeq|Y\rangle\langle Y|.
\end{equation*}
Let us now suppose that the slice observable $F$ is localized on the right half-hyperplane $\Sigma_{\rm R}$. That is, $F$ depends only on the subspace $L_{\rm R}\subseteq L_{\Sigma}$. We may then separate the trace over $\cH_{\Sigma}'$ into a trace over $\cH_{\rm R}$ and one over $\cH_{\rm L}$ and perform the latter independently of $F$.
\begin{equation*}
 \langle F \rangle
 = \tr_{\cH_{\rm R}}\big(\hat{F}\, \tilde{Y}_{\rm R}\big)\qquad\text{with}\quad
 \tilde{Y}_{\rm R}\defeq \tr_{\cH_{\rm L}}\big(\tilde{Y}\big).
\end{equation*}
In this sense the (non-normalizable) mixed pseudo-state $\tilde{Y}_{\rm R}$ can be thought of as encoding the Minkowski vacuum on the right half-hyperplane $\Sigma_{\rm R}$, i.e, on $\cH_{\rm R}$. In the literature this is sometimes expressed by saying that the Minkowski vacuum ``is'' a mixed state in the (right) Rindler wedge. With~(\ref{eq:comrelkip}),~(\ref{eq:relaip}),~(\ref{eq:ipum}) we obtain the commutation relations
\begin{equation*}
 \big[a_{{\rm L},\kappa},a^\dagger_{{\rm L},\kappa'}\big]
 =\big\{\tilde{B}^{\rm L}_{\kappa'},\tilde{B}^{\rm L}_{\kappa}\big\}_{\rm L}
 =\frac{1}{4}\big(B^{\rm L}_{\kappa'},B^{\rm L}_{\kappa}\big)_{\rm L}
 =\delta(\kappa-\kappa').
\end{equation*}
With these we obtain $\tilde{Y}_{\rm R}$ from~(\ref{eq:minktodfro}),
\begin{gather*}
\begin{split}
& \tilde{Y}_{\rm R} =\sum_{n=0}^\infty \frac{1}{n!}\int\xd\kappa_1\cdots\xd\kappa_n\, {\rm e}^{-2\pi(\kappa_1+\cdots +\kappa_n)}
 a^\dagger_{\text{R},\kappa_1}\cdots a^\dagger_{\text{R},\kappa_n} \tilde{\coh}_{\text{R},0}\,
 a_{\text{R},\kappa_1}\cdots a_{\text{R},\kappa_n} \nonumber
 \\
&\hphantom{\tilde{Y}_{\rm R}}{} =\exp\bigg(\int_0^\infty\xd\kappa\, {\rm e}^{-2\pi\kappa} A_{\text{R},\kappa} \bigg) \tilde{\coh}_{\text{R},0}.\end{split}\label{eq:minktofrmix}
\end{gather*}
Here we have defined the super-operator
\begin{equation*}
 A_{\text{R},\kappa}(\sigma)=a^\dagger_{\text{R},\kappa}\sigma\, a_{\text{R},\kappa}.
\end{equation*}

\subsection{Minkowski vacuum in Rindler space}

In the present section we consider what the Minkowski vacuum as a boundary condition looks like in Rindler space as identified with the right wedge $\text{R}$, recall Figure~\ref{fig:minkpflr}. This question was first successfully addressed by Rätzel and one of the authors in the language of states and wave functions encoding vacuum change~\cite{CoRa:unruh}. We complement their analysis here with a first-principles approach.

\begin{figure}
 \centering
 \begin{tikzpicture}[scale=1]
\draw[white,fill=gray!20] (-3,-3) rectangle (3,3);
\draw[white,fill=white] (0,0) -- (3,2.2) -- (3,-2.2) -- cycle;
\draw[dashed] (-3,0) -- (0,0);
\draw[blue,thick,-] (0,0) -- (3,3);
\draw[blue,thick,-] (0,0) -- (3,-3);
\draw[very thick] (0,0) -- (3,2.2) node [right] {$\eta_2$};
\draw[very thick] (0,0) -- (3,-2.2) node [right] {$\eta_1$};
\node at (1.7,0) {$M$};
\node at (2.5,1.4) {$\Sigma_2$};
\node at (2.5,-1.4) {$\Sigma_1$};
\node at (-1,1) {$X$};
\end{tikzpicture}
 \caption{Minkowski space with embedded right Rindler wedge (delimited by blue lines). Inside the Rindler wedge a region $M$ is marked between the initial hypersurface $\Sigma_1$ at Rindler time $\eta_1$ and the final one $\Sigma_2$ at Rindler time $\eta_2$.}
 \label{fig:rindreginmink}
\end{figure}
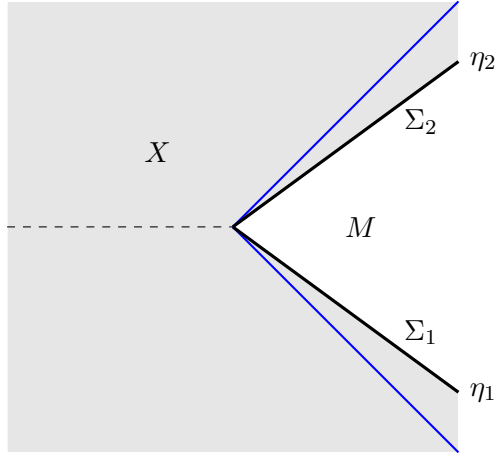

Choose initial and final Rindler times $\eta_1<\eta_2$. Consider the region $M$ enclosed between the corresponding initial and final Rindler-time hypersurfaces $\Sigma_1$ and $\Sigma_2$, see Figure~\ref{fig:rindreginmink}. (We~orient the hypersurfaces as boundaries of regions to the Rindler-future.) In order to evaluate the vacuum correlation functions of observables in $M$ we need to specify the relevant vacuum. Within the present framework this specification is in terms of a Lagrangian subspace $L_{\partial M}^+\subseteq L_{\partial M}^{\bC}=L_{\Sigma_1}^{\bC}\oplus L_{\overline{\Sigma_2}}^{\bC}$ of the complexified space of germs of solutions on the boundary $\partial M$ of $M$. The vacuum encodes boundary conditions in the region~$X$, exterior to $M$. Within Rindler space the region~$X$ restricts to $X\cap R$ and decomposes into two components, one to the past of $\Sigma_1$ and one to the future of $\Sigma_2$. Correspondingly, the boundary conditions for the standard Fulling--Rindler vacuum are independent in the two components. Consequently, the corresponding Lagrangian subspace decomposes as, $L_{\partial M}^+=L_{\Sigma_1}^+ \oplus L_{\overline{\Sigma_2}}^+$, where $L_{\Sigma_1}^+\subseteq L_{\Sigma_1}^{\bC}$ and $L_{\overline{\Sigma_2}}^+\subseteq L_{\overline{\Sigma_2}}^{\bC}$ are individually Lagrangian subspaces. Indeed,
$L_{\Sigma_1}^+$ and $L_{\overline{\Sigma_2}}^+=\overline{L_{\Sigma_2}^+}$ are just (ordinary respectively conjugate) copies of $L_{\rm R}^+$, see~(\ref{eq:kgrfvacr}).
Moreover, all these Lagrangian subspaces are positive-definite, admitting a standard K\"ahler quantization. Furthermore, the future and past vacua are complex conjugate, allowing to speak of just ``the'' vacuum. This is precisely analogous to the standard situation for a~time-interval region in Minkowski space.

In contrast, we are interested here in the standard Minkowski vacuum as seen from the right Rindler wedge. We thus have to work out the Lagrangian subspace $L_{\partial M}^{\mathcal{M}}\subseteq L_{\partial M}^{\bC}$ on $\partial M$ induced from the Minkowski boundary conditions in $X$. To this end we parametrize the solutions in $X$ in terms of the Unruh modes~(\ref{eq:rindboost}) as follows, using coefficient functions $\phi^{\rm L,a}$, $\phi^{\rm L,b}$, $\phi^{1,{\rm a}}$, $\phi^{1,{\rm b}}$, $\phi^{2,{\rm a}}$, $\phi^{2,{\rm b}}$ on $[0,\infty)$,
\begin{gather*}
 \phi(t,x)=\int_0^\infty\xd\kappa\, \Big(\phi^{\rm L,a}(\kappa) B^{\rm L}_\kappa(t,x) + \overline{\phi^{\rm L,b}(\kappa) B^{\rm L}_\kappa(t,x)}+ \phi^{1,{\rm a}}(\kappa) B^{1}_\kappa(t,x)
 + \overline{\phi^{1,{\rm b}}(\kappa) B^1_\kappa(t,x)}
 \\ \hphantom{\phi(t,x)=\int_0^\infty\xd\kappa\, \Big(}
{} + \phi^{2,{\rm a}}(\kappa) B^{2}_\kappa(t,x) + \overline{\phi^{2,{\rm b}}(\kappa) B^{2}_\kappa(t,x)}
 \Big).
\end{gather*}
The solutions $B^1_\kappa$ and $B^2_\kappa$ are suitably cut-off versions of the solutions $B^{\rm R}_{\kappa}$. This is to be understood as follows. Recall that the solutions $B^{\rm R}_{\kappa}$ vanish in the left wedge ${\rm L}$ (compare Figure~\ref{fig:minkpflr}). Therefore, their support within $X$ has two connected components, one is the region $\text{P}$ together with the past of $\eta_1$ in the right wedge $\text{R}$. The other is the region $\text{F}$ together with the future of $\eta_2$ in the right wedge $\text{R}$. $B^1_\kappa$ and $B^2_\kappa$ are versions of $B^{\rm R}_{\kappa}$ with their support restricted to the respective component. Near the hypersurface $\partial M$ the solutions $B^{\rm L}_{\kappa}$ vanish so that they disappear from the parametrization of $L_{\partial M}^{\bC}$. However, they do occur in $X$ and are essential for the asymptotic boundary conditions determining the Minkowski vacuum. The latter are expressible as past and future boundary conditions. To fix them it is sufficient to consider the behavior of~solutions in the regions $\text{P}$ and $\text{F}$ respectively. In the region $\text{P}$ the modes $B^2_\kappa$ are not present and the vacuum condition is given by the subspace~(\ref{eq:kgpvacmb}) in terms of the parametrization~(\ref{eq:kgmbmodes}). With the reparametrization identities~(\ref{eq:bmuparam}) this yields the conditions
\begin{equation}
 \phi^{1,{\rm a}}(\kappa) {\rm e}^{\pi\kappa/2}
 =\phi^{\rm L,a}(\kappa) {\rm e}^{-\pi\kappa/2},\qquad
 \overline{\phi^{\rm L,b}(\kappa)} {\rm e}^{\pi\kappa/2}
 =\overline{\phi^{1,{\rm b}}(\kappa)} {\rm e}^{-\pi\kappa/2}.
 \label{eq:mvp}
\end{equation}
Similarly, implementing the future boundary condition, which is conjugate to~(\ref{eq:kgpvacmb}), in $\text{F}$, yields
\begin{equation}
 \overline{\phi^{2,{\rm b}}(\kappa)} {\rm e}^{\pi\kappa/2}
 =\overline{\phi^{\rm L,b}(\kappa)} {\rm e}^{-\pi\kappa/2} \qquad
 \phi^{\rm L,a}(\kappa) {\rm e}^{\pi\kappa/2}
 =\phi^{2,{\rm a}}(\kappa) {\rm e}^{-\pi\kappa/2}.
 \label{eq:mvf}
\end{equation}
We may now eliminate the coefficients corresponding to the solutions $B^{\rm L}_{\kappa}$ from the system~(\ref{eq:mvp}), (\ref{eq:mvf}) to arrive at the equations determining the subspace $L_{\partial M}^{\mathcal{M}}\subseteq L_{\partial M}^{\bC}$ encoding the Minkowski vacuum on $\partial M$,
\begin{equation*}
 L_{\partial M}^{\mathcal{M}}=\Big\{\phi\in L_{\partial M}^{\bC}\colon
 \phi^{1,{\rm a}}(\kappa) {\rm e}^{\pi\kappa}
 =\phi^{2,{\rm a}}(\kappa) {\rm e}^{-\pi\kappa},\;
 \overline{\phi^{2,{\rm b}}(\kappa)} {\rm e}^{\pi\kappa}
 =\overline{\phi^{1,{\rm b}}(\kappa)} {\rm e}^{-\pi\kappa}\; \forall\kappa \Big\}.
\end{equation*}
We can see immediately that the Minkowski vacuum correlates the modes between the initial and final hypersurfaces, as expected. In stark contrast to the Fulling--Rindler vacuum it does not decompose into independent vacua (and thus subspaces) for each of the two hypersurfaces.

We proceed to consider the pseudo-state representing the Minkowski vacuum on the Hilbert space $\cH_{\partial M}=\cH_{\Sigma_1}\tens\cH_{\overline{\Sigma_2}}$ of states on $\partial M$ based on the Fulling--Rindler vacuum. The corresponding holomorphic wave function $Y_{\partial M}$ takes the form~(\ref{eq:chvacwf}),
\begin{equation*}
 Y_{\partial M}(\xi)=\exp\bigg(2\int_0^\infty\xd\kappa\, {\rm e}^{-2\pi\kappa}\,
 \xi^{2,{\rm a}}(\kappa) \overline{\xi^{1,{\rm b}}(\kappa)}\bigg).
\end{equation*}
This makes explicit how the correlations between past and future boundary conditions inherent in the Minkowski vacuum as seen from Rindler space translate to an entangled state. Here, crucially, the entanglement is not between different subsystems (at the same time), but between the past and future of the same system.

This wave function was obtained first (in a slightly different form) in~\cite{CoRa:unruh}. There it was shown (via what is here formula~(\ref{eq:kcorrgs})) that it yields for spacetime observables localized in $M$ precisely the same vacuum correlation functions as those arising from the standard quantization in Minkowski space. Here, we have derived the wave function from first principles, guaranteeing by construction the coincidence of correlation functions. The ansatz for the wave function~$Y_{\partial M}$ in~\cite{CoRa:unruh} was based on the relation to the mixed pseudo-state $\tilde{Y}_{\rm R}$ (formula~(\ref{eq:minktofrmix})). Indeed, for the special case of slice observables on $\Sigma_{\rm R}$, both by construction yield the same vacuum correlation functions. However, conceptually they are very different objects. For example, the mixed pseudo-state $\tilde{Y}_{\rm R}$ cannot be used to directly evaluate spacetime observables in $M$. What is more, in contrast to the pseudo-state $Y_{\partial M}$, the existence of $\tilde{Y}_{\rm R}$ hinges on the time-reversal symmetry of the vacuum. (That is, the Lagrangian subspaces for past and future vacuum are complex conjugates.)

We rewrite the wave function of $Y_{\partial M}$ to clarify its particle interpretation
\begin{equation*}
 Y_{\partial M}(\xi)=\exp\bigg(
 \frac{1}{2}\int_0^\infty\xd\kappa\, {\rm e}^{-2\pi\kappa}\,
 \big\{\tilde{B}_\kappa^{1},\xi\big\}_{\partial M} \big\{\tilde{B}_\kappa^{2},\xi\big\}_{\partial M}\bigg).
\end{equation*}
In terms of creation operators we may write
\begin{equation*}
 Y_{\partial M}=\exp\bigg(\int_0^\infty\xd\kappa\, {\rm e}^{-2\pi\kappa}\,
 a^{\dagger}_{1,\kappa} a^{\dagger}_{2,\kappa}\bigg) \coh_{\Sigma_1,0}\tens\coh_{\overline{\Sigma_2},0}.
\end{equation*}
Here, $a^{\dagger}_{1,\kappa}\defeq a^{\dagger}_{\tilde{B}_\kappa^1}$ and $a^{\dagger}_{2,\kappa}\defeq a^{\dagger}_{\tilde{B}_\kappa^2}$. We may interpret this formula as saying that the Minkowski vacuum appears as a sea of particle pairs. Each pair consists of one initial particle and one final particle, with matching quantum numbers $\kappa$. The contribution of pairs is exponentially suppressed with increasing $\kappa$.

\section{Discussion and outlook}

\looseness=-1 The present work represents an important step in the program of \emph{general boundary quantum field theory $($GBQFT$)$}~\cite{Oe:gbqft}. This program is based on the premise that the physics in a spacetime region can be described by recourse exclusively to quantities associated to the region and its (infinitesimal) boundary. What is more, the physics in a region can be recovered completely by combining (through an operation of \emph{composition}) the physics of subregions into which this region may have been divided.
In previous work~\cite{Oe:feynobs}, the description of physics on hypersurfaces (through Hilbert spaces of states) and in spacetime regions (through amplitudes and correlation functions) was limited by the requirement that the vacuum on the hypersurface or outside of the region in question had to be a K\"ahler vacuum. However, as was shown by the authors in~\cite{CoOe:vaclag}, the~vacuum of standard QFTs on hypersurfaces that are not spacelike is in general not of K\"ahler type.

In the present work, the limitation to K\"ahler vacua in the quantization of field theory, present in virtually all of the literature on curved spacetime QFT, is removed. Crucially, the generalization beyond K\"ahler vacua succeeds not only at the level of individual hypersurfaces or regions, but also at the level of composition (Sections~\ref{sec:compobs} and \ref{sec:corrak}) and thus for the framework as a whole. In the following, we aim to shed additional light on some of the results and their context.

An important ingredient in achieving the generalization beyond K\"ahler vacua is a shift in focus from (Hilbert space) states to observables. In this respect we are bringing GBQFT closer to \emph{algebraic quantum field theory $($AQFT$)$}~\cite{Haa:lqp}. In this version of GBQFT (developed in~Section~\ref{sec:genquant}) no Hilbert spaces are constructed. Rather, the role that states usually play is taken over by a special type of observable, the \emph{slice observables} (developed in~Sections~\ref{sec:sliceobs} and \ref{sec:kobs}). As states would be, these are localized on hypersurfaces. Given a K\"ahler vacuum, there is moreover a~direct correspondence between slice observables and ordinary Hilbert space states that can be mediated via the GNS construction, see Section~\ref{sec:kobs}. Crucial for this is a notion of algebra structure for slice observables. In~GBQFT there is in general no notion of product between observables in a given spacetime region. Composition between observables is only possible by composing underlying spacetime regions. However, the composition of two slice regions of the same type yields a slice region, which, moreover, is again of the same type. This induces the algebra structure on the space of slice observables associated to a given hypersurface and thus slice region, see Section~\ref{sec:qsobs}. At the same time this provides a bridge to the algebra structure present in AQFT for the observables in a given spacetime region. On the AQFT side an important ingredient in this respect is the \emph{time slice axiom}~\cite{Haa:lqp}. This implies that there is an isomorphism between the space, call it $A(M)$, of observables in a spacetime region $M$, and the space $A(\Sigma)$ of observables in an infinitesimal neighborhood of a hypersurface $\Sigma$ in $M$ that is Cauchy for $M$. From the AQFT perspective the algebra structures of $A(M)$ and $A(\Sigma)$ are both fundamental and the isomorphism is an isomorphism of algebras. From the GBQFT perspective on the other hand only $A(\Sigma)$, being essentially the space $\qsoa_{\Sigma}$ of slice observables on $\Sigma$, carries an algebra structure. The algebra structure on $A(M)$ appears thus as inherited from that of $A(\Sigma)$ via the isomorphism (here of vector spaces). It should be emphasized at this point that the space $A(M)$ considered in~AQFT is not the same as the space of observables in the spacetime region $M$ in GBQFT. Rather, $A(M)$ arises as a quotient of the latter. Two Weyl observables give rise to the same object in~$A(M)$ if their associated affine spaces of solutions (compare Section~\ref{sec:piwobs}) coincide outside of~$M$. This is also the key ingredient for constructing the mentioned isomorphism between~$A(M)$ and~$A(\Sigma)$ from a~GBQFT perspective. The relevant equation is equation~(\ref{eq:obssol}). There are some caveats, and we leave the details to the reader.

As explained, the methods developed in the present work should finally allow to provide a truly local description of realistic QFTs (recall Figure~\ref{fig:st-decomp}). As laid out in~Section~\ref{sec:genquant}, such a description can be achieved purely based on slice observables, without any Hilbert space of states. However, a Hilbert space of states, rather than merely an algebra of slice observables is for certain purposes desirable or possibly even necessary. For example, the standard notion of particle makes reference to a Hilbert space. Remarkably, as we have laid out in~Section~\ref{sec:iprod}, with just one additional ingredient (a real structure), a Hilbert space of states can be constructed even for hypersurfaces where the polarization encoding the vacuum is not K\"ahler, and where therefore standard quantization prescriptions fail. What is more, as shown in~Section~\ref{sec:corrak} (and proven in Appendix~\ref{sec:compip}) a TQFT-style gluing prescription extends to these Hilbert spaces raising the prospect of quantizing realistic field theories to obtain a local quantum field theory that would satisfy the full-blown GBQFT axioms (Appendix~\ref{sec:qobsaxioms}). For example, in this way it should be possible to extend existing GBQFT quantizations of Klein--Gordon theory that are limited to a narrow class of infinitely-extended regions~\cite{Col:desitterpaper,CoOe:smatrixgbf,DoOe:complexads,Oe:timelike,Oe:kgtl,Oe:holomorphic} to much more general regi\-ons, crucially including compact ones of arbitrary size, thus realizing full locality. Results on massless Klein--Gordon theory in Lorentzian spacetimes~\cite{CaMn:waverel} indicate that the classical axi\-oms (Appendix~\ref{sec:caxioms}) required for a successful quantization in the present sense are satisfied quite generically at least in 1+1 dimensions. Also, it should be possible to carry over the present methods to affine theories~\cite{Oe:affine} and theories with (at least abelian) gauge symmetries~\cite{DiOe:qabym}, achieving quantizations that implement full locality in physically realistic settings.

The quantization problem of constructing Hilbert spaces of states associated to hypersurfaces is a~generic part of TQFT-type approaches to QFT. Usually, this has been addressed by producing (explicitly or implicitly) a~K\"ahler polarization and quantizing in the standard way (as in~Section~\ref{sec:stdstate}). In hindsight~\cite{CoOe:vaclag} it is clear that this may lead to a~``wrong'', i.e., unphysical quantization, if the K\"ahler polarization used does not correspond to the physical vacuum. This problem occurs in particular for timelike hypersurfaces in spacetimes with Lorentzian signature, and generically in spacetimes with Euclidean signature, where the physical vacuum does not correspond to any K\"ahler polarization on the hypersurface. An example for an ad~hoc choice (motivated by analytic continuation) of a~K\"ahler polarization on timelike hypersurfaces in Klein--Gordon theory can be found in~\cite{Oe:holomorphic}. An example of addressing the problem by classifying K\"ahler polarizations (in terms of complex structures) can be found in~\cite{DoOe:complexads}. For field theory in spacetime with Euclidean signature, a~rather general prescription for constructing a~K\"ahler polarization has been given by Graeme Segal \cite[equation~(3.0.5)]{Seg:qft2notes}. This has been used for example in~\cite{Dia:dtonopab,DiOe:qabym} in terms of the Dirichlet-to-Neumann map. From the present point of view, a~``physically correct'' construction of Hilbert spaces in these cases should be based instead on the $\alpha$-K\"ahler quantization prescription as exhibited in~Section~\ref{sec:iprod}. So,~are the previously mentioned K\"ahler quantization prescriptions indeed simply wrong? Fortunately, the answer is ``no'', in the following sense. As shown in~Section~\ref{sec:retrel}, we can pull back an $\alpha$-K\"ahler polarization to an ordinary K\"ahler polarization via a~suitable identification map between the real and the $\alpha$-subspace of the complexified phase space on the hypersurface. This also brings into correspondence the associated quantization prescriptions, including Hilbert spaces, amplitudes and correlation functions. The bad news is that this alters in general the symplectic form and also the observables. So,~quantization with this ordinary K\"ahler structure would still be wrong if we used it with the standard symplectic form on phase space and with standard observables. However, there is a~particular class of real polarizations that admit a~positive-definite reflection map which in turn yields an $\alpha$-K\"ahler polarization, see Section~\ref{sec:realpol}. What is special about this class is that it admits a~pull-back to an ordinary K\"ahler polarization which preserves the symplectic form. A~K\"ahler quantization based on this pulled back polarization and the standard symplectic form might thus be considered ``correct'', at least if one is not to include observables. Even more specifically, a~positive-definite reflection map yielding such an $\alpha$-K\"ahler polarization might be obtained as a~rather specific map based on a~position-momenta decomposition of the phase space, see equation~(\ref{eq:nmreflect}). It~turns out that all the mentioned examples can be obtained precisely in this way. In~particular, the Segal polarization is precisely the pull-back of the $\alpha$-K\"ahler polarization induced by the positive-definite reflection map~(\ref{eq:nmreflect}), where ``positions'' correspond to field values and ``momenta'' correspond to their normal derivatives at the hypersurface. In~this sense, our methods and results provide an a~posteriori justification for the Segal polarization and certain other quantization prescriptions in TQFT-style approaches to QFT. At the same time, they clarify how the integration of observables into such prescriptions would have to be performed.\looseness=-1

As laid out in detail in~Section~\ref{sec:refpos}, the $\alpha$-K\"ahler quantization prescription allowing for the construction of Hilbert spaces of states for non-K\"ahler polarizations was inspired by the notion of \emph{reflection positivity} in the \emph{Euclidean approach} to QFT. Indeed, the content of Section~\ref{sec:iprod} may be seen as a providing a vast generalization of the concept of reflection positivity. This might foster hope that our methods in turn contribute to progress on the long-standing problem of generalizing the Euclidean approach to curved spacetime.

\appendix
\section{Axioms for classical linear field theory}
\label{sec:caxioms}

We provide here the axiomatization of classical linear field theory underlying the GBQFT quantization scheme. The present version is taken from~\cite{CoOe:vaclag}, but with notation adapted for the present paper. It generalizes the version provided in~\cite{Oe:holomorphic}, but without including choices of complex structures. Also, it does not include classical observables. For an axiomatization of the latter, which we assume to hold throughout the present work, see~\cite{Oe:feynobs}. The following axiomatic system relies on encoding spacetime in terms of \emph{regions} and \emph{hypersurfaces}, forming a \emph{spacetime system}. For a complete definition of the latter see \cite[Section~2.1]{Oe:feynobs}.
\begin{itemize}\itemsep=0pt
\item[{(C1)}] Associated to each hypersurface $\Sigma$ is a real vector space $L_{\Sigma}$. $L_{\Sigma}$ is equipped with a~non-degenerate symplectic form $\omega_{\Sigma}$.
\item[{(C2)}] Associated to each hypersurface $\Sigma$ there is an (implicit) linear involution $L_\Sigma\to L_{\overline\Sigma}$, such that $\omega_{\overline{\Sigma}}=-\omega_{\Sigma}$.
\item[{(C3)}] Suppose the hypersurface $\Sigma$ decomposes into a union of hypersurfaces $\Sigma=\Sigma_1\cup\cdots\cup\Sigma_n$. Then, there is an (implicit) isomorphism $L_{\Sigma_1}\oplus\cdots\oplus L_{\Sigma_n}\to L_\Sigma$. The isomorphism preserves the symplectic form.
\item[{(C4)}] Associated to each region $M$ is a complex vector space $L_M^\bC$.
\item[{(C5)}] Associated to each region $M$ there is a complex linear map $r_M\colon L_M^\bC\to L_{\partial M}^\bC$. The image $r_M\big(L_M^\bC\big)$ is a Lagrangian subspace of $L_{\partial M}^\bC$.
\item[{(C6)}] Let $M_1$ and $M_2$ be regions and $M= M_1\sqcup M_2$ be their disjoint union. Then $L_M^\bC$ is the direct sum $L_{M}^\bC=L_{M_1}^\bC\oplus L_{M_2}^\bC$. Moreover, $r_M=r_{M_1} + r_{M_2}$.
\item[{(C7)}] Let $M$ be a region with its boundary decomposing as a union $\partial M=\Sigma_1\cup\Sigma\cup \overline{\Sigma'}$, where $\Sigma'$ is a copy of $\Sigma$. Let $M_1$ denote the gluing of $M$ to itself along $\Sigma$, $\overline{\Sigma'}$ and suppose that~$M_1$ is a region. Then, there is an injective complex linear map $r_{M;\Sigma,\overline{\Sigma'}}\colon L_{M_1}^\bC\toi L_{M}^\bC$ such that
\begin{equation*}
 L_{M_1}^\bC\toi L_{M}^\bC\rightrightarrows L_\Sigma^\bC
\end{equation*}
is an exact sequence. Here the arrows on the right-hand side are compositions of the map $r_M$ with the complexified projections of $L_{\partial M}$ to $L_\Sigma$ and $L_{\overline{\Sigma'}}$ respectively (the latter identified with $L_\Sigma$). Moreover, the following diagram commutes, where the bottom arrow is the projection:
\begin{gather}
\begin{split}
&\xymatrix{
 L_{M_1}^\bC \ar[rr]^{r_{M;\Sigma,\overline{\Sigma'}}} \ar[d]_{r_{M_1}} & & L_{M}^\bC \ar[d]^{r_{M}}\\
 L_{\partial M_1}^\bC & & L_{\partial M}^\bC. \ar[ll]}
\end{split}
\label{eq:solext}
\end{gather}
\end{itemize}

\section{Axioms for quantum field theory with observables}
\label{sec:qobsaxioms}

A GBQFT with observables is a model satisfying the axioms listed in the following~\cite{Oe:holomorphic, Oe:feynobs}. Concerning the notions of \emph{region} and \emph{hypersurface}, see the comments at the beginning of~Appen\-dix~\ref{sec:caxioms}.

\begin{itemize}\itemsep=0pt
\setlength{\leftskip}{0.47cm}
\item[{(T1)}] Associated to each hypersurface $\Sigma$ is a complex
 separable Hilbert space $\cH_\Sigma$, called the \emph{state space} of
 $\Sigma$. We denote its inner product by
 $\langle\cdot,\cdot\rangle_\Sigma$. If $\Sigma$ is the empty set, $\cH_{\Sigma}=\bC$.
\item[{(T1b)}] Associated to each hypersurface $\Sigma$ is a conjugate linear
 isometry $\iota_\Sigma\colon \cH_\Sigma\to\cH_{\overline{\Sigma}}$. This map
 is an involution in the sense that $\iota_{\overline{\Sigma}}\circ\iota_\Sigma$
 is the identity on $\cH_\Sigma$.
\item[{(T2)}] Given a hypersurface $\Sigma$ decomposing into a
 union of hypersurfaces $\Sigma=\Sigma_1\cup\cdots\cup\Sigma_n$, there is an isometric isomorphism of Hilbert spaces $\tau_{\Sigma_1,\dots,\Sigma_n;\Sigma}\colon \cH_{\Sigma_1}\tens\cdots\tens\cH_{\Sigma_n}\to\cH_\Sigma$. This is required to be associative in the obvious way.
\item[{(T2b)}] The involution $\iota$ is compatible with the above decomposition. That is,
 \begin{equation*}
 \tau_{\overline{\Sigma}_1,\dots,\overline{\Sigma}_n;\overline{\Sigma}}
 \circ(\iota_{\Sigma_1}\tens\cdots\tens\iota_{\Sigma_n})
 =\iota_\Sigma\circ\tau_{\Sigma_1,\dots,\Sigma_n;\Sigma}.
 \end{equation*}
\item[{(TO4)}] Associated to each region $M$ there is a complex vector space $\corr_M$ of complex linear maps from a dense subspace $\cH_{\partial M}^\ds\subseteq \cH_{\partial M}$ to the complex numbers. These are called \emph{correlation functions}. There is a special element $\rho_M\in\corr_M$, called the \emph{amplitude} map.
\item[{(T3x)}] Let $\Sigma$ be a hypersurface. The boundary $\partial\hat{\Sigma}$ of the associated slice region $\hat{\Sigma}$ decomposes into the union $\partial\hat{\Sigma}=\overline{\Sigma}\cup\Sigma'$, where $\Sigma'$ denotes a second copy of $\Sigma$. Then, $\tau_{\overline{\Sigma},\Sigma';\partial\hat{\Sigma}}\big(\cH_{\overline{\Sigma}}\tens\cH_{\Sigma'}\big) \subseteq\cH_{\partial\hat{\Sigma}}^\ds$. Moreover, $\rho_{\hat{\Sigma}}\circ\tau_{\overline{\Sigma},\Sigma';\partial\hat{\Sigma}}$ restricts to a bilinear pairing $(\cdot,\cdot)_\Sigma\colon \cH_{\overline{\Sigma}}\times\cH_{\Sigma'}\to\bC$ such that $\langle\cdot,\cdot\rangle_\Sigma=(\iota_\Sigma(\cdot),\cdot)_\Sigma$.
\item[{(TO5a)}] Let $M_1$ and $M_2$ be regions and $M= M_1\sqcup M_2$ be their disjoint union. Then $\partial M=\partial M_1\sqcup \partial M_2$ is also a disjoint union and $\tau_{\partial M_1,\partial M_2;\partial M}(\cH_{\partial M_1}^\ds\tens \cH_{\partial M_2}^\ds)\subseteq \cH_{\partial M}^\ds$. Moreover, there is a bilinear map $\qcomp\colon \corr_{M_1}\times\corr_{M_2}\to \corr_M$ such that for all $\psi_1\in\cH_{\partial M_1}^\ds$, $\psi_2\in\cH_{\partial M_2}^\ds$ and~$\rho_{M_1}^{F_1}\in \corr_{M_1}$, $\rho_{M_2}^{F_2}\in \corr_{M_2}$,
\begin{equation*}
 \big(\rho_{M_1}^{F_1}\qcomp\rho_{M_2}^{F_2}\big)\circ\tau_{\partial M_1,\partial M_2;\partial M}(\psi_1\tens\psi_2)= \rho_{M_1}^{F_1}(\psi_1)\rho_{M_2}^{F_2}(\psi_2).
\end{equation*}
What is more, $\rho_M=\rho_{M_1}\qcomp\rho_{M_2}$.
\item[{(TO5b)}] Let $M$ be a region with its boundary decomposing as a union $\partial M=\Sigma_1\cup\Sigma\cup \overline{\Sigma'}$, where~$\Sigma'$ is a copy of $\Sigma$. Let~$M_1$ denote the gluing of $M$ with itself along $\Sigma,\overline{\Sigma'}$ and suppose that~$M_1$ is a region. Then, $\tau_{\Sigma_1,\Sigma,\overline{\Sigma'};\partial M}(\psi\tens\xi\tens\iota_\Sigma(\xi))\in\cH_{\partial M}^\ds$ for all $\psi\in\cH_{\partial M_1}^\ds$ and $\xi\in\cH_\Sigma$. Moreover, there is a linear map $\qcomp\colon \corr_{M}\to\corr_{M_1}$ such that for any ON-basis $\{\zeta_k\}_{k\in I}$ of $\cH_\Sigma$, we have for all $\rho_M^F\in\corr_M$ and $\psi\in\cH_{\partial M_1}^\ds$
\begin{equation*}
 \qcomp\left(\rho_{M}^F\right)(\psi)\cdot c\big(M;\Sigma,\overline{\Sigma'}\big)
 =\sum_{k\in I}\rho_M^F\circ\tau_{\Sigma_1,\Sigma,\overline{\Sigma'};\partial M}\left(\psi\tens\zeta_k\tens\iota_\Sigma(\zeta_k)\right).
\end{equation*}
Here, $c\big(M;\Sigma,\overline{\Sigma'}\big)\in\bC\setminus\{0\}$ is called the \emph{gluing anomaly factor} and depends only on the geometric data. What is more, $\qcomp(\rho_M)=\rho_{M_1}$.
\end{itemize}

\section{Composition via inner product}
\label{sec:compip}

In this section we provide the proof for Theorem~\ref{thm:acompo}. Recall the context of Section~\ref{sec:corrak}.

An important ingredient in the proof of Theorem~\ref{thm:stdcompo} is \cite[Lemma~4.1]{Oe:feynobs}. It tuns out there is an analog in the present setting. Again, this is a useful auxiliary result for showing composition.

\begin{lem}
 Let $\xi\in A_M^{D,\bC}$ arbitrary and $\tau\in L_{\partial M}^\bC$. Then,
 \begin{equation*}
 \rho_M^F\big(K^{\pol}_{\xi+\tau}\big) = \rho_M\big(K^\pol_{\tau}\big)\exp\bigg(\frac{\im}{2}D(\xi)+\im\, \omega_{\partial M}(\tau,\xi)\bigg).
 \end{equation*}
\end{lem}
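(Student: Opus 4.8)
The plan is to compute both sides directly from the explicit amplitude formula of Theorem~\ref{thm:wfac} and check they agree. The key observation is that the ``state'' $K^\pol_{\xi+\tau}$ corresponds to a Weyl boundary observable $G$ determined by the element $\xi+\tau\in L_{\partial M}^\bC$, so by definition $\rho_M^F\big(K^\pol_{\xi+\tau}\big)=\rho_M^{G\cp F}\big(W^\pol\big)$ and Theorem~\ref{thm:wfac} applies verbatim with $\xi$ there replaced by $\xi+\tau$. The subtlety is that $\xi$ is not an arbitrary element of $L_{\partial M}^\bC$ but the restriction to the boundary of an element of $A_M^{D,\bC}$; I expect this to be exactly what makes the cross-terms collapse nicely.

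First I would write out $\rho_M^F\big(K^\pol_{\xi+\tau}\big)$ using~\eqref{eq:wobsfactg}--\eqref{eq:amplg}: it equals $\exp\big(\im\,\omega_{\partial M}(\xi+\tau,(\xi+\tau)^\ipol)\big)\, F\big((\xi+\tau)^\ipol\big)\,\rho_M^F\big(W^\pol\big)$. The crucial simplification comes from the decomposition $L_{\partial M}^\bC=L_{\tilde M}^\bC\oplus L_{\partial M}^\pol$: since $\xi\in A_M^{D,\bC}$, its image on $\partial M$ differs from the unique $\eta\in A_M^{D,\bC}\cap L_{\partial M}^\pol$ by an element of $L_M^\bC=L_{\tilde M}^\bC$, hence $\xi^\ipol=\xi-\eta\in L_{\tilde M}^\bC$ and in fact $\xi^\ipol$ equals $\xi$ minus its $L_{\partial M}^\pol$-component. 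More simply: $\xi^\epol=\eta$, so $(\xi+\tau)^\ipol=\xi^\ipol+\tau^\ipol$ and $(\xi+\tau)^\epol=\eta+\tau^\epol$. Then I would expand $\omega_{\partial M}(\xi+\tau,\xi^\ipol+\tau^\ipol)$ into four terms and use isotropy of $L_{\tilde M}^\bC$ (so $\omega_{\partial M}(\xi^\ipol,\xi^\ipol)=\omega_{\partial M}(\xi^\ipol,\tau^\ipol)=\omega_{\partial M}(\tau^\ipol,\tau^\ipol)$ pairings among interior elements vanish as needed) together with $\omega_{\partial M}(\tau,\tau^\ipol)$ being the exponent in $\rho_M\big(K^\pol_\tau\big)$ by~\eqref{eq:amplg}.

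The main computation is then organizing the exponent so that it splits as $\omega_{\partial M}(\tau,\tau^\ipol)$ (giving the $\rho_M\big(K^\pol_\tau\big)$ factor) plus a remainder that must equal $\tfrac12 D(\xi)+\omega_{\partial M}(\tau,\xi)$. For the $D(\xi)$ piece I would invoke equation~\eqref{eq:obssol}, $D(\phi)=2\omega_{\partial M}(\phi,\eta)$ for $\phi\in L_M^\bC$, applied to $\phi=\xi^\ipol=\xi-\eta\in L_M^\bC$, giving $D(\xi^\ipol)=2\omega_{\partial M}(\xi^\ipol,\eta)$; combined with $F\big((\xi+\tau)^\ipol\big)=\exp\big(\im D(\xi^\ipol+\tau^\ipol)\big)$ and linearity of $D$, the $F$-factor contributes $\exp\big(\tfrac{\im}{2}D(\xi)+\cdots\big)$ after absorbing $D$-terms back into $\rho_M^F(W^\pol)$ appropriately (noting $D$ evaluated on interior solutions versus the reference solution $\eta$). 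The cross term $\omega_{\partial M}(\tau,\xi)$ should emerge from combining $\omega_{\partial M}(\tau,\xi^\ipol)$ with the contribution of $\omega_{\partial M}(\xi,\tau^\ipol)=\omega_{\partial M}(\xi^\epol+\xi^\ipol,\tau^\ipol)$ and using that $\omega_{\partial M}(\xi^\ipol,\tau^\ipol)=0$ by isotropy while $\omega_{\partial M}(\eta,\tau^\ipol)$ pairs with the symplectic-form antisymmetry against $\omega_{\partial M}(\tau^\epol,\cdot)$ terms.

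The step I expect to be the main obstacle is the careful bookkeeping of which $D$-terms and symplectic cross-terms get absorbed into $\rho_M^F\big(W^\pol\big)$ versus which survive into the claimed remainder $\tfrac{\im}{2}D(\xi)+\im\,\omega_{\partial M}(\tau,\xi)$; in particular one must track that $\rho_M^F\big(W^\pol\big)=\exp\big(\tfrac{\im}{2}D(\eta)\big)$ already contains a $D(\eta)$ that interacts with $D(\xi^\ipol)=D(\xi)-D(\eta)$. Getting the signs right in $\omega_{\overline\Sigma}=-\omega_\Sigma$ style manipulations and in~\eqref{eq:obssol} (whose sign convention is flagged in the footnote after that equation) will require care. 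Once the exponents are matched term by term, the identity follows; by linearity in the boundary observable the Weyl case suffices, exactly as in the other proofs in Section~\ref{sec:genquant}.
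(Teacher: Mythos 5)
Your proposal is correct and follows essentially the same route as the paper's proof: apply Theorem~\ref{thm:wfac} with argument $\xi+\tau$, use $\xi^\epol=\eta\in A_M^{D,\bC}\cap L_{\partial M}^\pol$ so that $(\xi+\tau)^\ipol=\xi^\ipol+\tau^\ipol$, convert the $F$-factor via equation~(\ref{eq:obssol}) into symplectic pairings with $\eta$, and reorganize using isotropy so that $\frac{\im}{2}D(\eta)+\im\,\omega_{\partial M}(\xi^\ipol,\eta)=\frac{\im}{2}D(\xi)$ and the leftover $-\im\,\omega_{\partial M}(\tau^\ipol,\tau)$ reassembles into $\rho_M\big(K^\pol_\tau\big)$. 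The bookkeeping you flag as the main obstacle is exactly the content of the paper's six-line chain of equalities, and your outline contains all the identities needed to reproduce it.
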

\begin{proof}
 Note $\xi=\eta+\xi^\ipol$ with $\eta\in A_M^{D,\bC}\cap L_{\partial M}^\pol$ and the previous conventions:
 \begin{align*}
 \rho_M^F(K^\pol_{\xi+\tau}) & = \exp\bigg(\frac{\im}{2}D(\eta)\bigg)
 F\big(\xi^\ipol+\tau^\ipol\big)
 \exp\big(\im\,\omega_{\partial M}\big(\xi+\tau,\xi^\ipol+\tau^\ipol\big)\big)
 \\
 & = \exp\bigg(\frac{\im}{2}D(\eta)+2\im\, \omega_{\partial M}(\xi^\ipol+\tau^\ipol,\eta)
 -\im\,\omega_{\partial M}\big(\xi^\ipol+\tau^\ipol,\eta+\tau\big)\bigg)
 \\
 & = \exp\bigg(\frac{\im}{2}D(\eta)+\im\, \omega_{\partial M}(\xi^\ipol+\tau^\ipol,\eta)
 -\im\,\omega_{\partial M}\big(\xi^\ipol+\tau^\ipol,\tau\big)\bigg)
 \\
 & = \exp\bigg(\frac{\im}{2}D(\xi)+\im\, \omega_{\partial M}(\tau,\eta)
 -\im\,\omega_{\partial M}\big(\xi^\ipol+\tau^\ipol,\tau\big)\bigg)
 \\
 & = \exp\bigg(\frac{\im}{2}D(\xi)+\im\, \omega_{\partial M}(\tau,\xi)
 -\im\,\omega_{\partial M}\big(\tau^\ipol,\tau\big)\bigg)
 \\
 & = \exp\bigg(\frac{\im}{2}D(\xi)+\im\, \omega_{\partial M}(\tau,\xi)\bigg)
 \rho_M\big(K^\pol_{\tau}\big).\tag*{\qed}
\end{align*}
\renewcommand{\qed}{}
\end{proof}

\begin{lem}
 \label{lem:iphol}
 Let $L$ be a real symplectic vector space with symplectic form $\omega$. Let $L^\bC$ be the complexification and denote the complex bilinear extension of $\omega$ also by $\omega$. Let $L^\pol$ and $L^\cpol$ be complementary Lagrangian subspaces of $L^\bC$. For $\xi\in L^\bC$ we write $\xi=\xi^\pol+\xi^\cpol$, where $\xi^\pol\in L^\pol$ and $\xi^\cpol\in L^\cpol$. Let $\eta,\tau\in L^\bC$. Then, the following functions $L^\bC\to\bC$ are holomorphic:
 \begin{equation*}
 \xi\mapsto \omega\Big((\eta+\xi)^\pol,(\tau)^\cpol\Big),\qquad
 \xi\mapsto \omega\Big((\eta+\xi)^\pol,(\tau+\xi)^\cpol\Big).
 \end{equation*}
\end{lem}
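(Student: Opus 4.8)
\textbf{Proof plan for Lemma~\ref{lem:iphol}.}
The strategy is to recall the definition of holomorphicity adopted in Section~\ref{sec:piwobs}: a function $f\colon V\to\bC$ on a complex vector space is holomorphic iff for every pair $a,b\in V$ the map $z\mapsto f(a+zb)$ from $\bC$ to $\bC$ is entire. So I would fix arbitrary $a,b\in L^\bC$ and examine the composite $z\mapsto f(a+zb)$ for each of the two functions in turn. The key observation is that the decomposition $\xi=\xi^\pol+\xi^\cpol$ along the complementary Lagrangian subspaces $L^\pol\oplus L^\cpol=L^\bC$ is effected by the two complex-linear projection operators $P^\pol$ and $P^\cpol$; hence $\xi\mapsto(\eta+\xi)^\pol=P^\pol(\eta)+P^\pol(\xi)$ is a complex-\emph{affine} map in $\xi$, and likewise $\xi\mapsto(\tau+\xi)^\cpol$ is complex-affine, while $\xi\mapsto(\tau)^\cpol=P^\cpol(\tau)$ is constant.

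Then for the first function, $g(\xi)\defeq\omega\big((\eta+\xi)^\pol,(\tau)^\cpol\big)$, substituting $\xi=a+zb$ gives $g(a+zb)=\omega\big(P^\pol(\eta+a)+zP^\pol(b),P^\cpol(\tau)\big)$, which by bilinearity of $\omega$ is of the form $c_0+z\,c_1$ with $c_0,c_1\in\bC$ constants — a polynomial in $z$, hence entire. For the second function, $h(\xi)\defeq\omega\big((\eta+\xi)^\pol,(\tau+\xi)^\cpol\big)$, the substitution yields
\[
 h(a+zb)=\omega\big(P^\pol(\eta+a)+zP^\pol(b),\,P^\cpol(\tau+a)+zP^\cpol(b)\big),
\]
which upon expanding by bilinearity is a polynomial of degree at most two in $z$ (with the $z^2$ coefficient $\omega(P^\pol(b),P^\cpol(b))$), hence again entire. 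Since $a,b$ were arbitrary, both functions are holomorphic. That finishes the argument.

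There is no real obstacle here; the only thing to be slightly careful about is to invoke the \emph{complex} linearity of the projections $P^\pol,P^\cpol\colon L^\bC\to L^\bC$ (which holds because $L^\pol$ and $L^\cpol$ are complex subspaces) together with the complex \emph{bi}linearity of the extended form $\omega$ — both of which are given in the hypotheses — so that the composite becomes an honest polynomial in $z$ rather than merely a real-analytic function. One could also phrase the whole thing more slickly by simply noting that finite sums, products, and compositions of complex-affine maps and complex-bilinear forms are manifestly holomorphic in the sense defined, but spelling out the one-variable restriction $z\mapsto f(a+zb)$ keeps it aligned with the precise definition used in the paper.
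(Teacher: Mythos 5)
Your proof is correct and is exactly the ``simple proof left to the reader'' that the paper intends: the complex linearity of the projections $P^\pol$, $P^\cpol$ onto the complementary Lagrangian subspaces together with the complex bilinearity of the extended $\omega$ make each restriction $z\mapsto f(a+zb)$ a polynomial in $z$ of degree at most one (first function) or two (second function), hence entire in the sense of the definition of holomorphicity given in Section~\ref{sec:piwobs}. No gaps.
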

We leave the simple proof to the reader.

Recall Definition~\ref{dfn:aadmiss} of admissibility.

\begin{thm}
 Suppose that the gluing is admissible. Let $\phi\in L_{\partial M_1}^\bC$ and $\tau\in L_{\Sigma}^\bC$. Then,
 \begin{equation}
 \label{eq:cohglueampl}
 \rho_{M_1}\big(K^{\pol_1}_{\phi}\big) \cdot c\big(M;\Sigma,\overline{\Sigma'}\big) =\int_{\hat{L}_{\Sigma}^\alpha} \rho_M\big(K^\pol_{(\phi,\xi+\tau,\xi+\tau)}\big)\,
 \exp\bigg(\frac{1}{2} g_{\Sigma}^{\alpha}(\xi,\xi)\bigg)\, \xd\nu(\xi).
 \end{equation}
\end{thm}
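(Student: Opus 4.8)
The statement to prove is the ``coherent-state'' version of the gluing identity, equation~(\ref{eq:cohglueampl}), which with $\tau=0$ specializes to the admissibility identity and more generally feeds into the proof of Theorem~\ref{thm:acompo}. The natural strategy is to reduce everything to the interior-solution picture: write $K^\pol_{(\phi,\xi+\tau,\xi+\tau)}$ on the left via Theorem~\ref{thm:wfac}, and $K^{\pol_1}_\phi$ on the right via the same theorem but applied to $M_1$ instead of $M$. The bridge between the two is Axiom~(C7) of Appendix~\ref{sec:caxioms}, equation~(\ref{eq:solext}), which tells us exactly how an interior solution on $M_1$ lifts to an interior solution on $M$, together with the assumption (introduced in Section~\ref{sec:corrak}) that an element of $L_{\partial M_1}^{\pol_1}$ ``extends'' to an element of $L_{\partial M}^{\pol}$ in the same way. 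Concretely, first I would fix the unique $\eta \in A_M^{0,\bC}\cap L_{\partial M}^{\pol}$ (i.e.\ $\eta=0$ here since $D$ plays no role for the pure amplitude in this lemma), so that by (\ref{eq:amplg}) the integrand on the right becomes $\exp(\im\,\omega_{\partial M}((\phi,\xi+\tau,\xi+\tau),(\phi,\xi+\tau,\xi+\tau)^{\ipol}))\,\exp(\tfrac12 g_\Sigma^\alpha(\xi,\xi))$, and I would compute the $\ipol$-component of $(\phi,\xi+\tau,\xi+\tau)$ explicitly in terms of the decomposition $L_{\partial M}^\bC = L_{\tilde M}^\bC \oplus L_{\partial M}^\pol$.

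\textbf{Key steps.} (1) Expand the left side using (\ref{eq:amplg}) applied to $M_1$: $\rho_{M_1}(K^{\pol_1}_\phi)=\exp(\im\,\omega_{\partial M_1}(\phi,\phi^{\ipol_1}))$ where $\phi^{\ipol_1}\in L_{M_1}^\bC$; and unpack $c(M;\Sigma,\overline{\Sigma'})$ from Definition~\ref{dfn:aadmiss} as the same integral with $\phi=0$, $\tau=0$. (2) Expand the right side's integrand using (\ref{eq:amplg}) for $M$: it is $\exp(\im\,\omega_{\partial M}((\phi,\xi+\tau,\xi+\tau), (\phi,\xi+\tau,\xi+\tau)^{\ipol}))\exp(\tfrac12 g_\Sigma^\alpha(\xi,\xi))$. (3) Use the extension assumption plus Axiom~(C7) to identify $(\phi,\cdot,\cdot)^{\ipol}$ with the lift of $\phi^{\ipol_1}$; this will split the exponent into a piece that is exactly the left-side exponent $\im\,\omega_{\partial M_1}(\phi,\phi^{\ipol_1})$ (pulled out of the integral since it does not depend on $\xi$), a cross term linear in $\xi$, and a term quadratic in $\xi$. (4) Recognize, via the bilinearity/holomorphicity facts of Lemma~\ref{lem:iphol}, that the $\xi$-dependent part of the integrand is precisely (a translate of) $\rho_M(K^\pol_{(0,\xi+\tau,\xi+\tau)})\exp(\tfrac12 g_\Sigma^\alpha(\xi,\xi))$ up to the $\phi$-dependent prefactor, reducing the identity to the $\phi=0$ case. (5) For the $\phi=0$ case, shift the integration variable $\xi \mapsto \xi-\tau$ (a translation) and use the standard Cameron--Martin / quasi-invariance behavior of the Gaussian measure $\nu$ on $\hat L_\Sigma^\alpha$ together with holomorphy of the integrand (Lemma~\ref{lem:iphol} again) to conclude that the integral is independent of $\tau\in L_\Sigma^\bC$; at $\tau=0$ it is by definition $c(M;\Sigma,\overline{\Sigma'})$, giving the claim.

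\textbf{Anticipated main obstacle.} The bookkeeping in step~(3)--(4) is delicate: one must show that the $\xi$-quadratic term produced by $(\phi,\xi+\tau,\xi+\tau)^{\ipol}$, when added to the explicit $\tfrac12 g_\Sigma^\alpha(\xi,\xi)$ coming from the measure-correction factor, reassembles correctly into $\rho_M(K^\pol_{(0,\xi+\tau,\xi+\tau)})\exp(\tfrac12 g_\Sigma^\alpha(\xi,\xi))$ without leftover terms. This is where the precise compatibility of the symplectic form $\omega_{\partial M}$ with the direct-sum decomposition $\partial M = \Sigma_1\cup\Sigma\cup\overline{\Sigma'}$ (Axiom~(C3)) and the sign $\omega_{\overline{\Sigma'}}=-\omega_\Sigma$ (Axiom~(C2)) must be used carefully, and it is the analog of the corresponding computation in \cite[Lemma~4.1 and Proposition~4.2]{Oe:feynobs}. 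The second delicate point is the justification in step~(5) that the Gaussian integral is genuinely $\tau$-independent: this is a holomorphy-plus-translation-invariance argument on the extended space $\hat L_\Sigma^\alpha$, and one has to be slightly careful because $\tau$ ranges over the complexification $L_\Sigma^\bC$ rather than the real space $L_\Sigma^\alpha$ on which $\nu$ lives, so the shift is a complex shift and one invokes analytic continuation of a (convergent, by admissibility) Gaussian integral rather than a literal change of variables. Once these two points are handled, assembling the final equality is routine.
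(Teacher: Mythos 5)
Your proposal is correct and uses essentially the same two ingredients as the paper's proof — the Cameron--Martin translation invariance of the Gaussian integral extended to complex shifts by the holomorphy of Lemma~\ref{lem:iphol}, and the lifting of the decomposition $\phi=\phi^\ipol+\phi^\epol$ via Axiom~(C7) together with the exterior extension assumption, with the cross terms killed by $\omega_{\overline{\Sigma'}}=-\omega_{\Sigma}$ and the Lagrangian property. The only difference is ordering (you factor first and then remove the leftover shift, whereas the paper establishes shift-invariance first and then factors at the convenient shift $\phi_\Sigma^\ipol+\phi_\Sigma^\epol$), which is immaterial.
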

\begin{proof}
 Let $\eta\in L_{\Sigma}^\bC$ and decompose $\eta=\eta^{\alpha}+\im \eta^{-\alpha}$ uniquely with $\eta^{\alpha},\eta^{-\alpha}\in L_{\Sigma}^\alpha$.
 For $\xi\in L_{\Sigma}^{\alpha}$ consider the function $\bC\to\bC$
 \begin{align*}
 t&\mapsto \rho_M\big(K^\pol_{(\phi,\xi+\eta^\alpha+t \eta^{-\alpha}, \xi+\eta^\alpha+ t \eta^{-\alpha})}\big)
 \\
 &= \exp\big(\im\, \omega_{\partial M}\big((\phi,
 \xi+\eta^\alpha+ t \eta^{-\alpha},\xi+\eta^\alpha\!+ t \eta^{-\alpha})^\epol, (\phi,
 \xi+\eta^\alpha+ t \eta^{-\alpha},\xi+\eta^\alpha\!+ t \eta^{-\alpha})^\ipol\big)\big).
 \end{align*}
 By Lemma~\ref{lem:iphol} this function is holomorphic in $t$. By Proposition~3.11 of~\cite{Oe:holomorphic} we have for $t\in\R$ the equality
 \begin{gather*}
 \int_{\hat{L}_{\Sigma}^\alpha} \rho_M\big(K^\pol_{(\phi,\xi,\xi)}\big)\,\exp\bigg(\frac{1}{2} g_{\Sigma}^{\alpha}(\xi,\xi)\bigg)\, \xd\nu(\xi)
 \\ \qquad
 {}= \int_{\hat{L}_{\Sigma}^\alpha} \rho_M\big(K^\pol_{(\phi,\xi+\eta^\alpha + t \eta^{-\alpha},\xi+\eta^\alpha+ t \eta^{-\alpha})}\big)\,\exp\bigg(\frac{1}{2} g_{\Sigma}^{\alpha}(\xi,\xi)\bigg)\, \xd\nu(\xi).
 \end{gather*}
 In particular, the integral on the right-hand side is constant for $t\in\R$. On the other hand, the integrand is holomorphic in $t$, so the integral is also holomorphic in $t$. But constancy for $t\in\R$ then implies constancy for $t\in\bC$. In particular, choosing $t=\im$ implies the equality
 \begin{gather*}
 \int_{\hat{L}_{\Sigma}^\alpha} \rho_M\big(K^\pol_{(\phi,\xi,\xi)}\big)\,\exp\bigg(\frac{1}{2} g_{\Sigma}^{\alpha}(\xi,\xi)\bigg)\, \xd\nu(\xi)
 \\ \qquad
 {}= \int_{\hat{L}_{\Sigma}^\alpha} \rho_M\big(K^\pol_{(\phi,\xi+\eta,\xi+\eta)}\big)\,\exp\bigg(\frac{1}{2} g_{\Sigma}^{\alpha}(\xi,\xi)\bigg)\, \xd\nu(\xi).
 \end{gather*}

 Decompose $\phi=\phi^\ipol+\phi^\epol$ uniquely with $\phi^\ipol\in L_{\partial M_1}^\ipol=L_{\tilde{M}_1}^\bC$ and $\phi^\epol\in L_{\partial M_1}^\epol$. By Axiom~(C7) of Appendix~\ref{sec:caxioms} there exists $\phi_{\Sigma}^\ipol\in L_{\Sigma}^\bC$ such that $(\phi^\ipol,\phi_{\Sigma}^\ipol,\phi_{\Sigma}^\ipol)\in L_{\partial M}^\ipol=L_{\tilde{M}}^\bC$. Applying the corresponding assumption to the ``exterior'' of $M$ and $M_1$ (compare Section~\ref{sec:corrak}), there exists $\phi_{\Sigma}^\epol\in L_{\Sigma}^\bC$ such that $(\phi^\epol,\phi_{\Sigma}^\epol,\phi_{\Sigma}^\epol)\in L_{\partial M}^\epol$. The result just obtained implies
 \begin{gather*}
 \int_{\hat{L}_{\Sigma}^\alpha} \rho_M\big(K^\pol_{(\phi,\xi+\tau,\xi+\tau)}\big)\,
 \exp\bigg(\frac{1}{2} g_{\Sigma}^{\alpha}(\xi,\xi)\bigg)\, \xd\nu(\xi) \nonumber
 \\ \qquad
 {}= \int_{\hat{L}_{\Sigma}^\alpha} \rho_M\big(K^\pol_{\left(\phi,\xi+\phi_{\Sigma}^\ipol+\phi_{\Sigma}^\epol,\xi +\phi_{\Sigma}^\ipol+\phi_{\Sigma}^\epol\right)}\big)\,
 \exp\bigg(\frac{1}{2} g_{\Sigma}^{\alpha}(\xi,\xi)\bigg)\, \xd\nu(\xi).
 \end{gather*}
 We can rewrite part of the integrand as follows
 \begin{gather*}
 \rho_M\big(K^\pol_{\left(\phi,\xi+\phi_{\Sigma}^\ipol+\phi_{\Sigma}^\epol,\xi +\phi_{\Sigma}^\ipol+\phi_{\Sigma}^\epol\right)}\big)
 \\ \qquad
 {}= \exp\big(\im\,\omega_{\partial M}\big(\big(\phi^\ipol+\phi^\epol,\xi+\phi_{\Sigma}^\ipol +\phi_{\Sigma}^\epol,\xi+\phi_{\Sigma}^\ipol+\phi_{\Sigma}^\epol\big)^\epol,
 \\ \qquad \hphantom{=\exp\big(\im\,\omega_{\partial M}\big(}
 \big(\phi^\ipol+\phi^\epol,\xi+\phi_{\Sigma}^\ipol +\phi_{\Sigma}^\epol,\xi +\phi_{\Sigma}^\ipol+\phi_{\Sigma}^\epol\big)^\ipol \big)\big)
 \\ \qquad
 {} = \exp\big(\im\,\omega_{\partial M}\big(\big(\phi^\epol,\xi+\phi_{\Sigma}^\epol,\xi+\phi_{\Sigma}^\epol\big)^\epol,
 \big(\phi^\ipol,\xi+\phi_{\Sigma}^\ipol,\xi+\phi_{\Sigma}^\ipol\big)^\ipol
 \big)\big)
 \\ \qquad
 {} = \exp\big(\im\,\omega_{\partial M}\big(\big(0,\xi,\xi\big)^\epol,
 \big(0,\xi,\xi\big)^\ipol
 \big)+\im\,\omega_{\partial M}\big(\big(\phi^\epol,\phi_{\Sigma}^\epol,\phi_{\Sigma}^\epol\big),
 \big(\phi^\ipol,\phi_{\Sigma}^\ipol,\phi_{\Sigma}^\ipol\big)
 \big)
 \\ \qquad \hphantom{=\exp\big(}
 {} +\im\,\omega_{\partial M}\big(\big(\phi^\epol,\phi_{\Sigma}^\epol,\phi_{\Sigma}^\epol\big),
 \big(0,\xi,\xi\big)\big)
 +\im\,\omega_{\partial M}\big(\big(0,\xi,\xi\big),
 \big(\phi^\ipol,\phi_{\Sigma}^\ipol,\phi_{\Sigma}^\ipol\big)
 \big)\big)
 \\ \qquad
 {} = \exp\big(\im\,\omega_{\partial M}\big(\big(0,\xi,\xi\big)^\epol\!,
 \big(0,\xi,\xi\big)^\ipol
 \big)\big) \exp\!\big(\im\,\omega_{\partial M_1}\big(\phi^\epol,\phi^\ipol\big)\big)
 \! =\! \rho_M\big(K^\pol_{(0,\xi,\xi)}\big) \rho_{M_1}\big(K^\pol_{\phi}\big).
 \end{gather*}
 Combining results and using the definition~(\ref{eq:defglanom}) of the gluing anomaly factor, we obtain the desired equality~(\ref{eq:cohglueampl}).
\end{proof}

We are now ready to provide the proof of Theorem~\ref{thm:acompo}.
\begin{proof}[Proof of Theorem~\ref{thm:acompo}]
 Let $\eta\in A^{D_1,\bC}_{M_1}$ arbitrary. By Diagram~(60) of Axiom~(C7) of \cite[Section~4.6]{Oe:feynobs} there exists $\eta_\Sigma\in L_\Sigma^\bC$ such that $(\eta,\eta_{\Sigma},\eta_{\Sigma})\in A^{D,\bC}_M$. By the same axiom we have $D((\eta,\eta_{\Sigma},\eta_{\Sigma}))=D_1(\eta)$:\footnote{Compare the proof of Proposition~4.2 in~\cite{Oe:feynobs} for more detailed explanations.}
\begin{gather*}
\int_{\hat{L}_{\Sigma}^\alpha} \rho_M^F\big(K^\pol_{(\phi,\xi,\xi)}\big)
\exp\bigg(\frac{1}{2} g_{\Sigma}^{\alpha}(\xi,\xi)\bigg)\, \xd\nu(\xi)
\\ \qquad
{} = \int_{\hat{L}_{\Sigma}^\alpha} \rho_M\big(K^\pol_{(\phi-\eta,\xi-\eta_{\Sigma},\xi-\eta_{\Sigma})}\big)
\exp\bigg(\frac{1}{2} g_{\Sigma}^{\alpha}(\xi,\xi)\bigg)
\\ \qquad\phantom{=}
{}\times \exp\bigg(\frac{1}{2}D((\eta,\eta_{\Sigma},\eta_{\Sigma}))+\im\, \omega_{\partial M}((\phi-\eta,\xi-\eta_{\Sigma},\xi-\eta_{\Sigma}),(\eta,\eta_{\Sigma},\eta_{\Sigma}))\bigg)\, \xd\nu(\xi)
\\ \qquad
{} = \int_{\hat{L}_{\Sigma}^\alpha} \! \rho_M\big(K^\pol_{(\phi-\eta,\xi-\eta_{\Sigma},\xi-\eta_{\Sigma})}\big)
\exp\bigg(\frac{1}{2} g_{\Sigma}^{\alpha}(\xi,\xi)\bigg)
\exp\bigg(\frac{1}{2}D_1(\eta)+\im\, \omega_{\partial M_1}(\phi-\eta,\eta)\bigg)\, \xd\nu(\xi)
\\ \qquad
{} = \rho_{M_1}\big(K^{\pol_1}_{\phi-\eta}\big) \exp\bigg(\frac{1}{2}D_1(\eta)+\im\, \omega_{\partial M_1}(\phi-\eta,\eta)\bigg)\, c(M;\Sigma,\overline{'})
\\ \qquad
{} = \rho_{M_1}^{F_1}\big(K^{\pol_1}_{\phi}\big)\, c\big(M;\Sigma,\overline{\Sigma'}\big).
\tag*{\qed}
\end{gather*}
\renewcommand{\qed}{}
\end{proof}

\subsection*{Acknowledgements}

The authors would like to thank the anonymous referees for contributing to improving the clarity of exposition of the present article. This work was partially supported by CONACYT project grant 259258 and UNAM-PAPIIT project grant IA-106418. This publication was made possible through the support of the ID\# 61466 grant from the John Templeton Foundation, as part of the ``Quantum Information Structure of Spacetime (QISS)'' Project (qiss.fr). The opinions expressed in this publication are those of the authors and do not necessarily reflect the views of the John Templeton Foundation.

\addcontentsline{toc}{section}{References}
\LastPageEnding

\end{document}